\newcommand{\finofex}{\hfill\ding{115}}
\newtheorem{theorem}{Theorem} 
\newtheorem{lemma}[theorem]{Lemma} 
\newtheorem{definition}[theorem]{Definition}
\newtheorem{corollary}[theorem]{Corollary}
\newtheorem{exmp}[theorem]{Example}
\newtheorem{observation}[theorem]{Observation}
\newtheorem{proposition}[theorem]{Proposition}
\journal{Artificial Intelligence}
\begin{document}

\begin{frontmatter}



\title{Logical Separability of Labeled Data Examples under Ontologies}


\author[hi]{Jean Christoph Jung\corref{cor1}}
\ead{jungj@uni-hildesheim.de}

\author[br]{Carsten Lutz}
\ead{clu@informatik.uni-leipzig.de}

\author[liv]{Hadrien Pulcini}
\ead{h.pulcini@liverpool.ac.uk}

\author[liv]{Frank Wolter}
\ead{wolter@liverpool.ac.uk}

\address[hi]{Institute of Computer Science, University of Hildesheim,
Germany}
 
\address[br]{Institute for Computer Science,
  Leipzig University, Germany}
\address[liv]{Department of Computer Science,
  University of Liverpool, UK}

\begin{abstract}
Finding a logical formula that separates positive and
negative examples given in the form of labeled data items is
fundamental in applications such as concept learning,
reverse engineering of database queries, generating
referring expressions, and entity comparison in knowledge graphs. 
In this paper, we investigate the
existence of a separating formula for data in the
presence of an ontology. Both for the ontology language and
the separation language, we concentrate on first-order logic
and the following important fragments thereof: the description logic
$\mathcal{ALCI}$, the guarded fragment, the two-variable
fragment, and the guarded negation fragment. For separation, we also consider (unions of) conjunctive queries. We consider several forms of separability that
differ in the treatment of negative examples and in
whether or not they admit the use of additional
helper symbols to achieve separation.  
Our main results are model-theoretic characterizations of (all variants
of) separability, the comparison of the separating power of different
languages, and the investigation of the computational complexity of
deciding separability. 
\end{abstract}

\begin{keyword}
  Logical Separability, Decidable Fragments of First-Order Logic,
  Description Logic, Learning from Examples, Complexity, Ontologies

\end{keyword}

\end{frontmatter}

\section{Introduction}

There are many scenarios 
in which the aim is to find some kind of logical expression that
separates positive from negative examples given in the form of labeled
data items within a data set.  For instance, in \emph{entity
comparison} the positive and negative examples are single entities
within a knowledge graph and one aims to explore the relationship
between them by searching for relevant logical features that
distinguish them from each other. In another application scenario, the
positive and negative examples have been derived using a classifier
whose behaviour one aims to understand and explain by means of a
logical expression that applies to the positive examples, but not to the
negative ones. Even more ambitiously, one might be in a supervised
learning scenario and aim at a logical expression that generalizes the
positive examples but does not apply to any negative example and that
can potentially serve as a classifier for future prediction tasks.
Indeed, in \emph{concept learning in description logic}, the aim is to
automatically construct a concept description from examples that can
then be used for various purposes including classifier explanation,
prediction, and as a building block in ontology engineering. 
In yet another application area called \emph{reverse engineering of
database queries} or \emph{query by example}, the examples are answers
and non-answers to a query that a user who is not familiar with the underpinning
query language aims to construct. While the user is unable to
construct the query, they may be able to provide such examples, and
a separating query is then a reconstruction, or at least
approximation, of the original query. Finally, in \emph{generating
referring expressions} in computational linguistics and data
management, the aim is to find a meaningful logical description of a
real world object whose name is  meaningless to the typical user. In
this case the positive examples consist of a single individual and the
negative examples of all remaining individuals in the domain.
In all
these settings it is often the case that, in addition to the data set,
some background knowledge in the form of an ontology is given.  We return
to these application scenarios and research areas below, discuss
them in more detail, and also provide links to
related work.

In this article, we investigate the 
separation of 
positive and negative
examples in the presence of an ontology. 
As usual when data and ontologies are combined, we assume that the data is
incomplete and adopt an open world semantics.  More precisely, we assume that a
labeled knowledge base (KB) $(\Kmc,P,N)$ is given with $\Kmc=(\Omc,\Dmc)$,
where \Omc is an ontology defined in a fragment of first-order logic (FO) or
a description logic, \Dmc is a set of facts, $P$ is a set of positive examples,
and $N$ is a set of negative examples. All examples are tuples of constants
from $\Dmc$ of the same length. As usual, we write $\Kmc\models \vp(\vec{a})$
for a tuple $\vec{a}$ in $\Dmc$ and FO formula $\varphi$ in case
that $\varphi(\vec{a})$ is true in every model of $\Omc\cup \Dmc$.

Due to the open
world semantics, different choices
are possible regarding the definition of a formula $\vp$ that
separates $(\Kmc,P,N)$. While it is uncontroversial to demand that
$\Kmc \models \vp(\vec{a})$ for all $\vec{a} \in P$, for negative
examples $\vec{b} \in N$ it makes sense to demand that
$\Kmc \not\models \vp(\vec{b})$, but also that
$\Kmc \models \neg \vp(\vec{b})$. When $\vp$ is formulated in logic
$\Lmc_S$, we refer to the former as \emph{weak $\Lmc_S$-separability} and to
the latter because \emph{strong $\Lmc_S$-separability}. The following example illustrates the difference between the two choices.

\begin{exmp}
{\em	Consider the KB $\Kmc=(\Omc,\Dmc)$ with $\Omc$ empty and $\Dmc$ stating that $c_{1},c_{2}$ are football clubs competing in the Bundesliga and that $c_{3},c_{4}$ compete in the Premier League: 
\begin{eqnarray*}
	\Dmc & = & \{{\sf compete\_in}(c_{1},b), {\sf compete\_in}(c_{2},b), 
	{\sf compete\_in}(c_{3},p), {\sf compete\_in}(c_{4},p),\\
	 & & {\sf Bundesliga}(b), {\sf PremierLeague}(p), {\sf Footballclub}(c_{1}),
	{\sf Footballclub}(c_{2})\}
\end{eqnarray*}
Let $P=\{c_{1},c_{2}\}$ and $N=\{c_{3},c_{4}\}$. Then ${\sf Footballclub}(x)$ weakly separates
$P$ from $N$ but it does not strongly separate $P$ from $N$. Indeed, as we adopt an open world semantics, no formula in FO strongly separates $P$ from $N$ as $\Kmc$ does not contain any negative information. In this example, ${\sf Footballclub}(x)$ is a rather misleading separator because clubs that compete in the PremierLeague are also football clubs and so it only weakly separates because the KB is incomplete. To exclude ${\sf Footballclub}(x)$ as a weak separator, we add this information to the ontology, obtaining
$$
\Omc'= \{ \forall x (\exists y ({\sf compete\_in}(x.y) \wedge {\sf PremierLeague}(y))\rightarrow {\sf Footballclub}(x))\},
$$
and consider the KB $\Kmc'=(\Omc',\Dmc)$. Then ${\sf Footballclub}(x)$ no longer weakly separates $P$ and $N$. In fact, weak separability is now only witnessed by more meaningful separators such as 
$$
\varphi(x) = \exists y ({\sf compete\_in}(x,y) \wedge {\sf Bundeliga}(y)).
$$
As $\Kmc'$ still lacks negative information, $P$ and $N$ are still not strongly separable in FO. They become strongly separable by $\varphi(x)$ after adding the disjointness axiom
$$
\forall x((\exists y {\sf compete\_in}(x,y) \wedge {\sf Bundesliga}(y))
                  \rightarrow \neg \exists y ({\sf compete\_in}(x,y) \wedge {\sf Premierleague}(y)))
$$
to $\Omc'$. We next consider separability in FO of the singleton $P_{1}=\{c_1\}$ from the remaining constants $N_{1}= \{c_{2},c_{3},c_{4},b,p\}$. As we know already how to separate $c_{1}$ from 
$c_{3}$ and $c_{4}$, we can focus on $b$, $p$ and $c_{2}$. Weak separation from $b$ and $p$ is straightforward by taking $\exists y\;{\sf compete\_in}(x,y)$. To obtain strong separation one might add the disjointness axioms ${\sf Bundesliga}(x) \rightarrow \neg \exists y\;{\sf compete\_in}(x,y)$ and ${\sf Premierleague}(x) \rightarrow \neg \exists y\;{\sf compete\_in}(x,y)$. The constants $c_{1}$ and $c_{2}$, however, are neither weakly nor strongly separable in any fragment of FO as one can easily see that swapping $c_{1}$ and $c_{2}$ does not change the KB (except for the naming of $c_{1}$ and $c_{2}$).
Hence $P_{1}$ and $N_{1}$ are not weakly nor strongly separable in FO either.\finofex
}
\end{exmp} 

In this article we study both weak and strong separability in
labeled KBs. Both notions depend on
the languages used for separation and to formulate the ontology. Hence, we
aim to investigate the role of ontologies and the impact of the
ontology language, to compare the separating power
of different separation languages and, finally, to determine the
decidability and complexity of separability as a decision problem.
Our main tool are model-theoretic characterizations of separability
tailored towards the considered ontology and separation languages. 
Given two logical languages \Lmc and $\Lmc_S$,  with
\emph{$(\Lmc,\Lmc_S)$-separability} we mean $\Lmc_S$-separability of labeled
\Lmc-knowledge bases. 
 
%
%
%

FO is the most powerful language for formulating ontologies and separating
examples that we consider in this article. This choice reflects the fact that
the large majority of ontologies deployed in the real world are formulated in
(fragments of) FO and that fragments of FO also cover many natural languages
for separation. While one cannot expect decidability results for full FO as
the ontology language, it turns out that one can obtain very powerful
characterization results for labeled KBs with unrestricted FO-ontologies
that are useful also for many fragments of FO. In
practice, most ontologies are formulated in description logics and the second
main ontology and separation language that we consider is the expressive
description logic \ALCI, 
a rather representative and frequently used description logic. Separation using
$\ALCI$-concepts is of direct practical interest for concept learning in
description logic, entity comparison, and generating referring expressions. We
also discuss the behavior of a few other description logics and comment on
interesting description logics left for future investigation. We further
investigate in full detail two extensions of $\ALCI$: the guarded fragment, GF, and
the two-variable fragment, FO$^2$, of FO.  Both languages are generally
regarded as fundamental decidable fragments of FO that still enjoy many of the
desirable properties of description logics, and their investigation has led to a
better understanding of the reasons for their good computational behavior. We
consider them here to gain a better understanding of the computational and
semantic properties of separability in general, depending on the expressive
power of the separation language. As separating formulas, we further
consider conjunctive queries, CQs, defined as FO-formulas constructed from
atoms using conjunction and existential quantification, and unions of
conjunctive queries, UCQs, defined as disjunctions of CQs. Separation using CQs
and UCQs is of direct interest for query by example, but also enables us to
characterize separation in the languages discussed above. Finally, we formulate
a few results about the guarded negation fragment of FO, GNFO, which is a more
recent decidable fragment of FO containing both GF and UCQ that still enjoys
many of the desirable properties of GF.

In connection with the application scenarious discussed at the beginning of the
introduction, we define the following special cases of $(\Lmc,\Lmc_{S})$-separability:
\begin{itemize}
	\item $(\Lmc,\Lmc_{S})$-\emph{definability} is 
	$(\Lmc,\Lmc_{S})$-separability for labeled KBs where $P$ and $N$ partition the
	example space; that is, inputs are labeled $\Lmc$-KBs $(\Kmc,P,N)$ such that
	$N$ is defined as the set of $n$-tuples in $\Dmc$ that are not in $P$, with $n$ the length of example tuples. 
	\item $(\Lmc,\Lmc_{S})$-\emph{referring expression existence}  is $(\Lmc,\Lmc_{S})$-definability for labeled KBs where $P$
	is a singleton set.
	\item $(\Lmc,\Lmc_{S})$-\emph{entity distinguishability}
	is $(\Lmc,\Lmc_{S})$-separability for labeled KBs where $P$ and $N$ are both singleton sets.
\end{itemize}
We show that, with only very few exceptions, the special cases behave in exactly the same way as general separability and therefore mostly focus on separability in the remainder of the introduction. We start the discussion of our results with weak separability. 
Our first main result provides a
characterization of weak $(\text{FO},\text{FO})$-separability in
terms of homomorphisms. It implies that
\begin{itemize}

  \item[{\sf (wA)}] weak $(\text{FO},\Lmc_S)$-separability coincides
    for all FO-fragments $\Lmc_S$ situated between UCQ and FO.

\end{itemize}
Thus, somewhat surprisingly, UCQ, GNFO, and FO all have the same separating power.
This result shows that separability under the open world semantics
adopted in this article is significantly more cautious than separability under the
closed world semantics, which is typically adopted in the database literature. Under the latter semantics, every database can be described in FO up to isomorphisms, hence separability using FO-formulas corresponds to ``being non-isomorphic'' and is clearly much more powerful than separability using UCQs. Our characterization also implies a close link between separability and the evaluation of rooted UCQs (unions of CQs in which every variable is reachable from an answer variable) on FO-KBs. In fact, we show that
\begin{itemize}

  \item[{\sf (wB)}] there are mutual polynomial time Turing reductions
    between weak $(\Lmc,\Lmc_{S})$-separability and the complement of
    rooted UCQ-evaluation on $\Lmc$-KBs for all FO-fragments $\Lmc_S$
    between UCQ and FO and all FO-fragments~$\Lmc$.

\end{itemize}
For the special cases
of referring expression existence and entity distinguishability in which
$P$ is a singleton set, UCQ can be replaced by CQ. In particular, in
the mutual reductions, rooted UCQ-evaluation can be replaced by rooted CQ-evaluation.
As a first application of ${\sf (wB)}$, it follows from the fact that rooted
UCQ-evaluation on GNFO-KBs is \TwoExpTime-complete that
$(\text{GNFO},\text{GNFO})$-separability and, equivalently,
$(\text{GNFO},\text{UCQ})$-separability, are decidable and \TwoExpTime-complete
in combined complexity, where the KB $\Kmc=(\Omc,\Dmc)$ and the sets $P$ and
$N$ of examples are regarded as the input.

We then proceed to study weak $(\Lmc,\Lmc)$-separability for the fragments
$\Lmc \in \{ \ALCI, \text{GF},\text{FO}^2 \}$. Note that these
fragments do not contain UCQ nor CQ and thus the above results do not apply.
For these fragments the following distinction turns out to be of fundamental importance when defining weak separability: one might or
might not admit the use of \emph{helper symbols}, that is, symbols that do not occur
in the KB, in the separating formula. 
\begin{exmp}\label{exm:new}
	{\em 
Consider the KB $\Kmc=(\Omc,\Dmc)$ with $\Omc= \{ \forall x (\exists y \;{\sf shaves}(x,y) \wedge \exists y \;{\sf shaves}(y,x)\}$
stating that everybody shaves somebody and is shaved by somebody and 
$
\mathcal{D}=\{{\sf shaves}(p,p),{\sf shaves}(n,n')\}$. Let $P=\{p\}$ and $N=\{n\}$. Then ${\sf shaves}(x,x)$ weakly separates $P$ from $N$ in FO but there is no $\mathcal{ALCI}$-concept using only the symbol ${\sf shaves}$ that weakly separates $P$ from $N$ as under $\Kmc$ every such concept is either valid or unsatisfiable. However, by admitting a fresh unary relation symbol, say ${\sf Barber}$, we obtain the formula $\varphi(x)={\sf Barber}(x) \rightarrow \exists y ({\sf shaves}(x,y)\wedge {\sf Barber}(y))$ which weakly separates $P$ and $N$ and is equivalent to an $\mathcal{ALCI}$-concept. Observe that indeed $\Kmc\models \varphi(p)$ since if $p$ is a barber, then, by ${\sf shaves}(p,p)$, $p$ shaves a barber.\finofex}
\end{exmp}
Motivated by Example~\ref{exm:new} we distinguish between \emph{projective}
(with helper symbols) and \emph{non-projective} (without helper symbols)
versions of separability. While we show that projective and non-projective
weak $(\text{FO},\Lmc_S)$-separability coincide for all FO-fragments $\Lmc_S$
situated between UCQ and FO, and therefore for the cases considered in {\sf
  (wA)} and {\sf (wB)}, Example~\ref{exm:new} shows that the projective and
non-projective version do not coincide for $\ALCI$, and similar examples will
be given for $\text{GF}$ and $\text{FO}^2$.  We then start the investigation
with $\ALCI$ and show the rather unexpected result that
\begin{itemize}
  \item[{\sf (wC)}] projective weak $(\ALCI,\ALCI)$-separability is the same as
(both projective and non-projective) weak $(\ALCI,$ $\text{UCQ})$-separability and
thus, by {\sf (wA)}, also as (projective and non-projective) weak $(\ALCI,\text{FO})$-separability. 
\end{itemize}
It then follows from {\sf (wB)} and the known result that rooted UCQ-evaluation on $\ALCI$-KBs is co\NExpTime-complete in combined
complexity that projective weak $(\ALCI,\ALCI)$-separability is \NExpTime-complete in combined complexity. 

By Example~\ref{exm:new} the equivalence stated in {\sf (wC)} depends on
admitting helper symbols in separating $\ALCI$-concepts and its proof relies on
a characterization of projective weak $(\ALCI,\ALCI)$-inseparability using
functional bisimulations. We next turn to the technically rather intricate case
of non-projective weak $(\ALCI,\ALCI)$-separability and characterize it using a mix of homomorphisms, bisimulations, and types. Intuitively, one has to understand when a helper symbol in a separating concept can be equivalently replaced by a (possibly compound) concept in the language of the KB. 
The characterization allows us to show that non-projective weak $(\ALCI,\ALCI)$-separability is also \NExpTime-complete in
combined complexity.

For projective and non-projective
weak $(\text{GF},\text{GF})$-separability, we establish characterizations
that parallel those for $\ALCI$ except that bisimulations are replaced
with guarded bisimulations. As in the \ALCI-case, projective
$(\text{GF},\text{GF})$-separability coincides with
(projective and non-projective) $(\text{GF},\text{UCQ})$-separability,
and thus also with $(\text{GF},\text{FO})$-separability. For referring expression existence and entity distinguishability, UCQ can be replaced by CQ. We additionally observe that projective
and non-projective $(\text{GF},\text{GF})$-separability
also coincide with projective and, respectively, non-projective,
$(\text{GF},\text{openGF})$-separability, where openGF is a `local'
version of GF that only speaks about the neighbourhoods of the positive
examples. While openGF has the same separating power as GF and enforces
arguably more natural separating expressions than GF, we also show that
separating expressions are sometimes less succinct. Finally, our main
complexity result concerning GF is that 
\begin{itemize}
  \item[{\sf (wD)}] projective and non-projective weak $(\text{GF},\text{GF})$-separability are
both \TwoExpTime-complete in combined complexity.
\end{itemize} 
The proofs use the same approach as those given for $\ALCI$ and the complexity result relies on the fact that UCQ-evaluation of rooted UCQs on GF-KBs is \TwoExpTime-complete. The proofs are technically more challenging, however, as one
has to work with guarded bisimulations rather than bisimulations for $\ALCI$.
We next show that, in contrast, 
\begin{itemize}
  \item[{\sf (wE)}] projective and non-projective weak $(\text{FO}^2, \text{FO}^2)$-separability and
$(\text{FO}^2, \text{FO})$-separability are both undecidable.
\end{itemize}
Moreover, they coincide neither in the projective nor in the
non-projective case. The latter result is proved, under mild conditions, for all fragments $\Lmc$ of FO that enjoy the finite model property (every satisfiable ontology is satisfied in a finite model) but are not finitely
controllable for rooted UCQs (evaluating rooted UCQs on finite models
does not coincide with evaluating them on arbitrary models). 

As ontologies are typically small compared with databases, we also
consider the data complexity of deciding separability where only the
database and the sets of positive and negative examples are regarded
as the input, but the ontology is fixed. Some care is needed when
using the aforementionted link with rooted UCQ-evaluation to analyze
the data complexity of separability since, in the mutual reductions,
the input database of the separability problem is used to construct
the query of the evaluation problem, and vice versa. We show that,
under mild conditions on the FO-fragment $\Lmc$, one can construct for
any $\Lmc$-ontology $\Omc$ an $\Lmc$-ontology $\Omc'$ such that one
obtains a mutual polynomial time Turing reduction between rooted
UCQ-evaluation on KBs with ontology $\Omc$ and the complement of
UCQ-separability of labeled KBs with ontology $\Omc'$, and vice
versa. We then construct an \ALCI-ontology $\Omc$ such that rooted
UCQ-evaluation on KBs with ontology $\Omc$ is
\coNExpTime-complete. Hence, 
\begin{itemize}
  \item[{\sf (wF)}] projective weak $(\ALCI,\ALCI)$-separability
is \NExpTime-complete in data complexity.   
\end{itemize}
Thus, remarkably, in this case data and combined
complexity coincide. This lower bound applies to
$(\text{GNFO},\text{GNFO})$ and $(\text{GF},\text{GF})$-separability,
but we conjecture that the latter problems are actually
\TwoExpTime-hard in data complexity so that 
also in these cases data and combined complexity
coincide.  While the combined complexity of
special cases such as referring expression existence coincides with
that of separability for the languages considered, this remains open
for data complexity.

We then switch to strong separability, first observing that in marked
contrast to the weak case, projective strong
$(\Lmc,\Lmc_S)$-separability coincides with non-projective strong
$(\Lmc,\Lmc_S)$-separability for all choices of \Lmc and $\Lmc_S$
relevant to this paper. We establish a characterization of strong
$(\text{FO}, \text{FO})$-separability in terms of KB unsatisfiability
and show that 
\begin{itemize}
  \item[{\sf (sA)}] strong $(\text{FO}, \text{FO})$-separability coincides
with strong $(\text{FO}, \text{UCQ})$-separability and consequently
also with strong $(\text{FO}, \Lmc_S)$-separability for all $\Lmc_S$
situated between UCQ and FO. 
\end{itemize}
For strong referring expression existence and entity distinguishability UCQs can be replaced by CQs. We next consider the same FO-fragments $\ALCI,
\text{GF}, \text{FO}^2$ as before and show that for each of these
fragments \Lmc, strong $(\Lmc,\Lmc)$-separability coincides with
strong $(\Lmc,\text{FO})$-separability and thus the connection to KB
unsatisfiability applies. This allows us to derive tight complexity
bounds for strong $(\Lmc,\Lmc)$-separability: 
\begin{itemize}
  \item[{\sf (sB)}] for $\ALCI$, $\text{GF}$, and
	FO$^2$ strong separability is \ExpTime, \TwoExpTime, and, respectively, \NExpTime-complete in combined complexity. It is \coNPclass-complete in data complexity in all three cases. 
\end{itemize}
Note that strong $(\text{FO}^2,
\text{FO}^2)$-separability thus turns out to be decidable, in contrast
to the weak case. On the other hand, we show that the relationship between GF and openGF is the same as in the weak case: strong $(\text{GF},\text{GF})$-separability coincides with strong $(\text{GF},\text{openGF})$-separability but separating expressions can be more succinct. Finally, we show that in contrast to weak separability one can bound the size of strongly separating expressions independently from the size of the underlying database in terms of the size of the ontology, for $\mathcal{ALCI}$, GF, and FO$^{2}$.

The following table gives an overview of our most important results,
focussing on weak projective $(\Lmc,\Lmc)$-separability and strong
$(\Lmc,\Lmc)$-separability, for $\Lmc\in
\{\ALCI,\text{GF},\text{FO}^{2}\}$. For both weak and strong
separability, we list (in columns 2 and 4) settings with the same
separating power and (in columns 3 and 5) the combined complexity of
deciding the respective version of 
separability. All complexity results are completeness results. 

\begin{center}
	\begin{tabular}{|c|c|c|c|c|}
		\hline
		& \multicolumn{2}{c|}{Weak Separability} &
	      \multicolumn{2}{c|}{Strong Separability} \\
		\hline
		KB language $\mathcal{L}$ & same separating power &
		complexity & same separating power & complexity\\
		\hline                     
		$\mathcal{ALCI}$ & proj $\mathcal{ALCI}$, UCQ, FO & \textsc{NExpTime} & $\mathcal{ALCI}$, UCQ, FO & $\ExpTime$\\
		GF & proj GF, proj openGF, UCQ, FO & \textsc{2ExpTime} & GF, openGF, UCQ, FO & \textsc{2ExpTime}\\ 
		FO$^{2}$ & proj FO$^{2}$ & undecidable &
		FO$^{2}$, UCQ, FO & co$\NExpTime$\\
		\hline
	\end{tabular}
\end{center}
Our results rely on three important assumptions: we assume that different constants can denote the same element and so we do not make the unique name assumption, we do not admit constants (nor function symbols) in ontologies and separating formulas, and we do not admit any restrictions on the relations symbols from the KB that can be used in separating formulas. In all three cases it is of interest to explore what happens if one drops or relaxes the assumption. In the final section of this article we provide some results in this direction.
  
\paragraph{Structure of the Paper} We discuss related work
in Section~\ref{sec:rel}. In Section~\ref{sec:prelim}, we introduce the
basic notions that are used in the rest of the paper. In
Section~\ref{sec:weak}, we introduce weak separability and make
fundamental observations in the context of full first-order logic (as
ontology and separating language). We then investigate in
Section~\ref{sec:weakdecidable} weak separability for decidable
fragments of first-order logic. In Sections~\ref{sec:strong}
and~\ref{sec:strongdecidable}, we study strong separability starting
again with fundamental results. Finally, in
Section~\ref{sec:discussion}, we discuss some variations and
extensions of the problems investigated in this article and conclude
with possible future work. Some proofs are deferred to an appendix. 

\section{Related Work and Applications}\label{sec:rel}
This article extends the conference paper~\cite{KR}. It also generalizes, corrects, and puts into context some results first presented in~\cite{DBLP:conf/ijcai/FunkJLPW19}. For example, after establishing the link between projective weak $(\ALCI,\ALCI)$-separability and rooted UCQ-evaluation on $\mathcal{ALCI}$-KBs, it is proved in \cite{DBLP:conf/ijcai/FunkJLPW19} already that projective weak $(\ALCI,\ALCI)$-separability is \NExpTime-complete in combined complexity. It is also claimed to be $\Pi_{2}^{p}$-complete in data complexity. In this article (and~\cite{KR} already) this error is fixed. As mentioned above, in this article (and~\cite{KR}), we adopt the standard semantics of first-order logic and description logic and hence do not make the unique name assumption (UNA), but it
is made in~\cite{DBLP:conf/ijcai/FunkJLPW19}. If $\mathcal{ALCI}$ or a fragment is used as the ontology and separation language, the UNA does not affect weak separability nor strong
separability, but for FO and extensions of $\mathcal{ALCI}$ with
number restrictions such as $\mathcal{ALCQI}$ it does. We refer the
reader to Section~\ref{sec:discussion} for a detailed discussion of
the role of the UNA in the context of separability. We also note
that in~\cite{DBLP:conf/ijcai/FunkJLPW19} only the projective case of
weak separability is considered (without using the term projective).
The results obtained in~\cite{DBLP:conf/ijcai/FunkJLPW19} and also the
recent~\cite{kr2021restr} (which considers a more flexible version of
separarability in which the symbols that are used in separating
expressions can be restricted) are also discussed in more detail in Section~\ref{sec:discussion}.

We next discuss related work and applications of our
results in concept learning in DL, query by example, generating referring expressions, and entity comparison. We start with concept learning in DL as first proposed
in~\cite{DBLP:conf/ilp/BadeaN00}. Inspired by inductive logic
programming, refinement operators are used to construct a concept that
generalizes positive examples while not encompassing any negative
ones. An ontology may or may not be present. 
There has been significant interest in this approach, both for weak
separation~\cite{DBLP:conf/ilp/LehmannH09,DBLP:journals/ml/LehmannH10,Lisi15,DBLP:conf/aaai/SarkerH19}
and strong
separation~\cite{DBLP:conf/ilp/FanizzidE08,DBLP:conf/dlog/Lisi12,DBLP:journals/fgcs/RizzoFd20}.
Prominent systems include the {\sc DL Learner}~\cite{DBLP:conf/www/BuhmannLWB18,DBLP:journals/ws/BuhmannLW16}, {\sc
DL-Foil}~\cite{DBLP:conf/ekaw/Fanizzi0dE18} and its extension {\sc DL-Focl} \cite{DBLP:conf/ekaw/0001FdE18}, SPaCEL~\cite{DBLP:journals/jmlr/TranDGM17}, {\sc
YinYang}~\cite{DBLP:journals/apin/IannonePF07}, and {\sc pFOIL-DL}~\cite{Straccia15}.
A method for generating strongly separating concepts based on
bisimulations has been developed
in~\cite{DBLP:conf/soict/HaHNNST12,DBLP:journals/vjcs/TranNH15,DBLP:journals/vjcs/TranNH15b}
and an approach based on answer set programming was proposed
in~\cite{DBLP:conf/aiia/Lisi16}. Algorithms for DL concept learning
typically aim to be complete, that is, to find a separating concept
whenever there is one. Complexity lower bounds for separability as
studied in this paper then point to an inherent complexity that no
such algorithm can avoid. Undecidability even means that there can be
no learning algorithm that is both terminating and complete. 
We note that computing least
common subsumers (LCS) and most specific concepts (MSC) can be viewed
as DL concept learning in the case that only positive, but no negative
example are
available~\cite{DBLP:conf/aaai/CohenBH92,Nebel91,DBLP:conf/ijcai/BaaderKM99,DBLP:conf/ijcai/ZarriessT13}.
A recent study of LCS and MSC from a separability angle is
in~\cite{aaaithis}.

Query by example is an active topic in database research for
many years,
see e.g.~\cite{DBLP:conf/sigmod/TranCP09,DBLP:conf/sigmod/ZhangEPS13,DBLP:conf/pods/WeissC17,kalashnikov2018fastqre,deutch2019reverse,DBLP:conf/icdt/StaworkoW12}
%
and~\cite{martins2019reverse} for a recent survey. In this context,
separability has also received attention
\cite{DBLP:journals/tods/ArenasD16,DBLP:conf/icdt/Barcelo017,DBLP:journals/tods/KimelfeldR18,DBLP:conf/icdt/CateD21}. A
crucial difference to the present paper is that QBE in classical
databases uses a closed world semantics under which there
is a unique natural way to treat negative examples: simply demand
that the separating formula evaluates to false there. Thus, the
distinction between weak and strong separability, and also between
projective and non-projective separability does not arise. 
QBE for ontology-mediated
querying \cite{GuJuSa-IJCAI18,DBLP:conf/gcai/Ortiz19,DBLP:journals/corr/abs-2108-10021} and for SPARQL
queries~\cite{DBLP:conf/www/ArenasDK16}, in contrast, use an open
world semantics. The former is captured by the framework studied in
the current article. In fact, our results imply that the existence of
a separating UCQ is decidable for ontology languages such as \ALCI and
the guarded fragment. The corresponding problem for
CQs is undecidable even when the ontology is formulated in the
inexpressive description logic \ELI
\cite{DBLP:conf/ijcai/FunkJLPW19,aaaithis}, we refer the reader to Section~\ref{sec:discussion} for further discussion.

Generating referring expressions (GRE) has originated from linguistics
\cite{DBLP:journals/coling/KrahmerD12} and has also become an important challenge 
in data management. In fact, very often in applications the individual names in ontologies, knowledge graphs, or data sets are insufficient ``to allow humans to figure 
out what real objects they refer to''~\cite{DBLP:conf/ijcai/BorgidaTW17}. 
In logic-based approaches to GRE, a referring expression for an individual is a formula that distinguishes that individual from all other relevant individuals. 
GRE fits into the framework used in this paper since a formula that
separates a single data item from all other items in a KB can serve
as a referring expression for the former.
Both weak and strong
separability are conceivable: weak separability means that the
positive data item is the only one that we are certain to satisfy the
separating formula and strong separability means that in addition we
are certain that the other data items do not satisfy the
formula. Approaches to GRE such as the ones in
\cite{DBLP:conf/kr/BorgidaTW16,DBLP:journals/corr/abs-2106-15513} aim for stronger guarantees, for instance by demanding that $\Kmc\models \forall x\; ((a=x)\leftrightarrow \varphi(x))$ for any referring expression $\varphi$ for $a$ under $\Kmc$. 
We note that in a closed world context description logic concepts 
have also been proposed for singling out 
a domain element in an interpretation~\cite{DBLP:conf/inlg/ArecesKS08}.
The computation of referring expressions has recently also received 
interest in the context of ontology-mediated querying
\cite{DBLP:conf/kr/BorgidaTW16,DBLP:conf/ausai/TomanW19}. 

In \emph{entity comparison}, one aims to compare two selected data
items, highlighting both their similarities and their differences. An
approach to entity comparison in RDF graphs is presented
in~\cite{DBLP:conf/semweb/PetrovaSGH17,DBLP:conf/semweb/PetrovaKGH19}.
There, SPARQL queries are used to describe both similarities and
differences, under an open world semantics. The `computing
similarities' part of this approach is closely related to the LCS and
MSC mentioned above. The `computing differences' is closely related to
QBE and fits into the framework studied in this paper. In fact, it
corresponds to entity distinguishability with an empty ontology.

\section{Preliminaries}
\label{sec:prelim}
Let $\Sigma_{\text{full}}$ be a set of \emph{relation symbols} 
that contains countably many symbols of every arity $n\geq 1$ and let
$\text{Const}$ be a countably infinite set of \emph{constants}.  A
\emph{signature} is a set of relation symbols $\Sigma \subseteq
\Sigma_{\text{full}}$.  We write $\vec{a}$ for a tuple
$(a_{1},\ldots,a_{n})\in \text{Const}^{n}$ and set
$[\vec{a}]=\{a_{1},\ldots,a_{n}\}$.  A \emph{database} $\Dmc$ is a
finite set of \emph{ground atoms} $R(\vec{a})$, where $R\in
\Sigma_{\text{full}}$ and $\vec{a}\in \text{Const}^{n}$ with $n$ the arity of $R$. We use $\text{cons}(\Dmc)$ 
to denote the set of constant symbols in $\Dmc$.

Denote by FO the set of first-order (FO) formulas constructed from
constant-free atomic formulas $x=y$ and $R(\vec{x})$,
$R\in \Sigma_{\text{full}}$, 
using conjunction, disjunction, negation, and existential and
universal quantification.
As usual, we write $\vp(\vec{x})$ to indicate that the free variables
in the FO-formula $\vp$ are all from $\vec{x}$ and call a formula
\emph{open} if it has at least one free variable and a \emph{sentence}
otherwise. Note that we do not admit constants in FO-formulas. While
many results presented in this paper should lift to the case
with constants, dealing with constants introduces significant
technical issues that are outside the scope of this article.
We refer the reader to Section~\ref{sec:discussion} for a discussion
of separability for languages with constants.

We consider fragments of FO that are closed under conjunction in the sense that if $\varphi_{1}$ and $\varphi_{2}$ are in the fragment, then a formula logically equivalent to $\varphi_{1} \wedge \varphi_{2}$ is in the fragment.
We consider various such fragments. A \emph{conjunctive
	query (CQ)} takes the form $q(\vec{x})=\exists \vec{y}\, \varphi$
where $\varphi$ is a conjunction of atomic formulas $x=y$ and
$R(\vec{y})$. We assume w.l.o.g.\ that if a CQ contains an equality
$x=y$, then $x$ and $y$ are free variables. 
A \emph{union of conjunctive queries (UCQ)} is a disjunction of CQs
that all have
the same free variables. In the context of CQs and UCQs, we speak of
\emph{answer variables} rather than of free variables. Observe that
UCQ is closed under conjunction. A \emph{unary} UCQ is a UCQ with a single answer variable. CQs and UCQs play an important role
in database theory~\cite{DBLP:books/aw/AbiteboulHV95}.

In the \emph{guarded fragment (GF)} of
FO~\cite{ANvB98,DBLP:journals/jsyml/Gradel99}, formulas are
built from atomic formulas $R(\vec{x})$ and $x=y$ by applying the Boolean connectives and \emph{guarded quantifiers} of
the form
$$
\forall \vec{y}(\alpha(\vec{x},\vec{y})\rightarrow \varphi(\vec{x},\vec{y}))
\text{ and }
\exists \vec{y}(\alpha(\vec{x},\vec{y})\wedge \varphi(\vec{x},\vec{y}))
$$
where $\varphi(\vec{x},\vec{y})$ is a guarded formula 
and $\alpha(\vec{x},\vec{y})$ is an
atomic formula or an equality $x=y$ that contains all variables in
$[\vec{x}] \cup [\vec{y}]$. The formula $\alpha$ is called the \emph{guard of the quantifier}. GF generalizes many of the fundamental properties of modal logic and of description logics, such as decidability, the finite model property, and the tree-model property~\cite{DBLP:journals/eatcs/Gradel99}.
An even more expressive extension of GF with those properties is the \emph{guarded negation
	fragment} GNFO of FO which contains both GF and UCQ. GNFO is
obtained by imposing a guardedness condition on negation instead of on
quantifiers, details can be found
in~\cite{DBLP:journals/jacm/BaranyCS15}. Another fragment of FO that generalizes modal and description logic is the \emph{two-variable
	fragment} FO$^{2}$ of FO that contains every formula in FO that uses 
only two fixed variables, say $x$ and $y$~\cite{DBLP:journals/bsl/GradelKV97}. 
We assume that FO$^{2}$-formulas use unary and binary relation symbols
only. FO$^2$ is also decidable and enjoys the finite model property,
but it is generally regarded as a less robust generalization than the
guarded fragment as it does not enjoy a natural generalization of the tree-model property~\cite{DBLP:journals/eatcs/Gradel99}. We show in this article that this has significant repercussions also for separability.  

For \Lmc an FO-fragment, an \emph{$\Lmc$-ontology} is a finite set of
$\Lmc$-sentences. An $\Lmc$-\emph{knowledge base (KB)} is a pair
$(\Omc,\Dmc)$, where $\Omc$ is an $\Lmc$-ontology and $\Dmc$ a
database. For any syntactic object $O$ such as a formula, an ontology,
and a KB, we use $\mn{sig}(O)$ to denote the set of relation symbols
that occur in $O$ and $||O||$ to denote the \emph{size} of $O$, that
is, the number of symbols needed to write it with names of relations,
variables, and constants counting as a single symbol.

As usual, KBs $\Kmc=(\Omc,\Dmc)$ are interpreted in
\emph{structures}
$
\Amf=(\text{dom}(\Amf),(R^{\Amf})_{R\in \Sigma_{\text{full}}},(c^{\Amf})_{c \in \text{Const}})
$
where $\text{dom}(\Amf)$ is the non-empty \emph{domain} of $\Amf$,
each $R^{\Amf}$ is a relation over $\text{dom}(\Amf)$ whose arity
matches that of $R$, and $c^{\Amf} \in \text{dom}(\Amf)$ for all $c\in
\text{Const}$. Note that we do not make the \emph{unique name
	assumption (UNA)}, that is $c_1^\Amf=c_2^\Amf$ might hold even when
$c_1 \neq c_2$. This assumption is essential for several of our results and we discuss its role in detail in Section~\ref{sec:discussion}.
For a structure $\Amf$, FO formula $\varphi(\vec{x})$, and tuple $\vec{a}\in \text{dom}(\Amf)^{|\vec{x}|}$ we write $\Amf \models \varphi(\vec{a})$ if
$\varphi(\vec{x})$ is satisfied in $\Amf$ under the assignment $\vec{x}\mapsto \vec{a}$. A structure $\Amf$ is a \emph{model of a KB} $\Kmc=(\Omc,\Dmc)$ if it satisfies all sentences
in $\Omc$ and all ground atoms in $\Dmc$. A KB $\Kmc$ is
\emph{satisfiable} if there exists a model of~$\Kmc$. For a KB $\Kmc=(\Omc,\Dmc)$, formula $\varphi(\vec{x})$, and tuple $\vec{a}\in \text{cons}(\Dmc)^{|\vec{x}|}$, we say that $\varphi(\vec{a})$ \emph{is entailed by $\Kmc$}, in symbols $\Kmc\models \varphi(\vec{a})$, if $\Amf\models \varphi(\vec{a}^{\Amf})$ for every model $\Amf$ of $\Kmc$, where   $\vec{a}^{\Amf}$ stands for $(a_{1}^{\Amf},\ldots,a_{n}^{\Amf})$ if
$\vec{a}=(a_{1},\ldots,a_{n})$. 

We next introduce notation for query evaluation on KBs. Let $\Qmc$ (the query language) and $\Lmc$ (the ontology language) be fragments of FO. For example, $\mathcal{Q}$ could be the set of UCQs and $\Lmc$ could be GF. Then \emph{$\mathcal{Q}$-evaluation on $\Lmc$-KBs} is the problem to decide, given a formula (also called query) $q(\vec{x})\in \mathcal{Q}$, an $\Lmc$-KB
$\Kmc=(\Omc,\Dmc)$, and a tuple $\vec{a}\in \text{cons}(\Dmc)^{|\vec{x}|}$, whether $\Kmc \models q(\vec{a})$. The complexity of $\mathcal{Q}$-evaluation on $\Lmc$-KBs has been investigated extensively, for various query language $\mathcal{Q}$ and ontology languages $\Lmc$. 

Description logics 
are fragments of FO that only support relation symbols of arities one
and two, called concept names and role names.  DLs come with their own
syntax, which we introduce
next~\cite{handbook,DL-Textbook}. 
A \emph{role} is a role name or an \emph{inverse role} $R^-$ with $R$
a role name. For uniformity, we set $(R^{-})^{-}=R$. 
\emph{$\mathcal{ALCI}$-concepts} are defined by the
grammar
$$
C,D ~::=~\bot \mid A \mid 
\neg C \mid 
C \sqcap D \mid \exists R.C 
$$
where $A$ ranges over concept names and $R$ over roles.  As usual, we
write $\top$ for $\neg \bot$, $C \sqcup D$ for $\neg (\neg C
\sqcap \neg D)$, $C \rightarrow D$ for $\neg C \sqcup D$, and $\forall
R . C$ for $\neg \exists R. \neg C$.
An $\mathcal{ALCI}$-\emph{concept inclusion (CI)} takes the form
$C\sqsubseteq D$ where $C$ and $D$ are $\mathcal{ALCI}$-concepts. An
$\mathcal{ALCI}$-\emph{ontology} is a finite set of
$\mathcal{ALCI}$-CIs. An \emph{$\ALCI$-KB} $\Kmc=(\Omc,\Dmc)$ consists
of an $\ALCI$-ontology $\Omc$ and a database \Dmc that uses only unary
and binary relation symbols. We sometimes also mention the fragment
\ALC of \ALCI in which inverse roles are not
available. 

If $\Sigma$ is a signature of concept and role names, then an \emph{$\mathcal{ALCI}(\Sigma)$-concept} is an $\mathcal{ALCI}$-concept 
using only symbols in $\Sigma$. A \emph{$\Sigma$-role} is a role
that is either a role name in $\Sigma$ or of the form $R=S^{-}$ with
$S\in \Sigma$.
  
To obtain a semantics, every \ALCI-concept $C$ can be translated into
a GF-formula $C^\dag$ with one free variable $x$: 
see \cite{DL-Textbook}.
$$ \begin{array}{rcl} \bot^{\dag} & = & \neg(x=x)\\ A^{\dag}  &=& A(x)
  \\ (\neg \vp)^{\dag}  & = & \neg \vp^{\dag}\\ (C \sqcap D)^\dag &=&
  C^\dag \wedge D^\dag \\ (\exists R.C)^\dag &=& \exists y \, (R(x,y)
  \land C^\dag[y/x]) \\ (\exists R^-.C)^\dag &=& \exists y \, (R(y,x)
  \land C^\dag[y/x]).  \end{array} $$
%
%
A CI $C\sqsubseteq D$ translates into the GF-sentence $\forall x \,
(C^\dag(x) \to D^\dag(x))$.  By reusing variables, we can even obtain
formulas and ontologies from $\text{GF} \cap \text{FO}^2$.  It follows
that every \ALCI-concept, CI, and ontology can be viewed as a
$\text{GF}\cap \text{FO}^2$-formula, sentence, and ontology,
respectively.
The \emph{extension} $C^{\Amf}$ of a concept $C$ in a structure $\Amf$
is \mbox{defined as}
$ C^{\Amf} = \{ a\in \text{dom}(\Amf) \mid \Amf\models C^{\dag}(a)\}.
$
We write $\Omc\models C \sqsubseteq D$ if $C^\Amf
\subseteq D^\Amf$ holds in every model $\Amf$ of $\Omc$.  Concepts $C$
and $D$ are \emph{equivalent} w.r.t.\ an ontology \Omc if $\Omc
\models C\sqsubseteq D$ and $\Omc \models D\sqsubseteq C$. Let
$\Kmc=(\Omc,\Dmc)$ be an $\ALCI$-KB, $C$ an $\ALCI$-concept, and $a\in
\text{cons}(\Dmc)$. Then $C(a)$ \emph{is entailed by $\Kmc$}, in
symbols $\Kmc\models C(a)$, if $a^{\Amf}\in C^{\Amf}$ for all models
$\Amf$ of $\Kmc$. Note that $\ALCI$ enjoys the finite model property
in the sense that $\Kmc\models C(a)$ iff $a^{\Amf}\in C^{\Amf}$ for
all finite models $\Amf$ of $\Kmc$~\cite{DL-Textbook}. 

A summary of the inclusion relationships between the languages introduced above is given in Figure~\ref{fig:logics}. The figure also contains a few description logics that have not yet been introduced but which are part of the discussion of related and future work. We refer the reader to~\cite{DL-Textbook} for a detailed introduction.


	\begin{figure*}\label{fig:logics}
		\begin{center}

	\tikzset{every picture/.style={line width=0.75pt}} 
	
	\begin{tikzpicture}[x=0.75pt,y=0.75pt,yscale=-1,xscale=1]
		
		\draw    (246.97,247.15) -- (246.97,271.53) ;
		\draw    (246.97,189.09) -- (246.97,213.47) ;
		\draw    (410.81,192.1) -- (410.81,216.48) ;
		\draw    (322.15,248.61) -- (246.97,271.53) ;
		\draw    (246.97,189.09) -- (322.15,214.92) ;
		\draw    (322.15,146.49) -- (322.15,154.86) ;
		\draw    (322.15,88.43) -- (322.15,112.8) ;
		\draw    (246.97,305.21) -- (322.15,345.04) ;
		\draw    (322.15,88.43) -- (246.97,155.41) ;
		\draw    (351.24,129.64) -- (410.81,158.42) ;
		\draw    (156.15,189.43) -- (217.88,230.31) ;
		\draw    (322.15,188.55) -- (246.97,213.47) ;
		\draw    (293.06,71.58) -- (156.15,155.74) ;
		\draw    (246.97,247.15) -- (322.15,271.98) ;
		\draw   (293.06,71.58) .. controls (293.06,62.28) and (306.08,54.74) .. (322.15,54.74) .. controls (338.22,54.74) and (351.24,62.28) .. (351.24,71.58) .. controls (351.24,80.88) and (338.22,88.43) .. (322.15,88.43) .. controls (306.08,88.43) and (293.06,80.88) .. (293.06,71.58) -- cycle ;
		\draw   (293.06,129.64) .. controls (293.06,120.34) and (306.08,112.8) .. (322.15,112.8) .. controls (338.22,112.8) and (351.24,120.34) .. (351.24,129.64) .. controls (351.24,138.95) and (338.22,146.49) .. (322.15,146.49) .. controls (306.08,146.49) and (293.06,138.95) .. (293.06,129.64) -- cycle ;
		\draw   (293.06,171.7) .. controls (293.06,162.4) and (306.08,154.86) .. (322.15,154.86) .. controls (338.22,154.86) and (351.24,162.4) .. (351.24,171.7) .. controls (351.24,181.01) and (338.22,188.55) .. (322.15,188.55) .. controls (306.08,188.55) and (293.06,181.01) .. (293.06,171.7) -- cycle ;
		\draw   (217.88,230.31) .. controls (217.88,221.01) and (230.91,213.47) .. (246.97,213.47) .. controls (263.04,213.47) and (276.07,221.01) .. (276.07,230.31) .. controls (276.07,239.61) and (263.04,247.15) .. (246.97,247.15) .. controls (230.91,247.15) and (217.88,239.61) .. (217.88,230.31) -- cycle ;
		\draw   (203.73,172.25) .. controls (203.73,162.95) and (223.09,155.41) .. (246.97,155.41) .. controls (270.86,155.41) and (290.22,162.95) .. (290.22,172.25) .. controls (290.22,181.55) and (270.86,189.09) .. (246.97,189.09) .. controls (223.09,189.09) and (203.73,181.55) .. (203.73,172.25) -- cycle ;
		\draw   (381.72,175.26) .. controls (381.72,165.96) and (394.74,158.42) .. (410.81,158.42) .. controls (426.88,158.42) and (439.9,165.96) .. (439.9,175.26) .. controls (439.9,184.56) and (426.88,192.1) .. (410.81,192.1) .. controls (394.74,192.1) and (381.72,184.56) .. (381.72,175.26) -- cycle ;
		\draw   (217.88,288.37) .. controls (217.88,279.07) and (230.91,271.53) .. (246.97,271.53) .. controls (263.04,271.53) and (276.07,279.07) .. (276.07,288.37) .. controls (276.07,297.67) and (263.04,305.21) .. (246.97,305.21) .. controls (230.91,305.21) and (217.88,297.67) .. (217.88,288.37) -- cycle ;
		\draw   (293.06,361.89) .. controls (293.06,352.59) and (306.08,345.04) .. (322.15,345.04) .. controls (338.22,345.04) and (351.24,352.59) .. (351.24,361.89) .. controls (351.24,371.19) and (338.22,378.73) .. (322.15,378.73) .. controls (306.08,378.73) and (293.06,371.19) .. (293.06,361.89) -- cycle ;
		\draw   (293.06,288.83) .. controls (293.06,279.52) and (306.08,271.98) .. (322.15,271.98) .. controls (338.22,271.98) and (351.24,279.52) .. (351.24,288.83) .. controls (351.24,298.13) and (338.22,305.67) .. (322.15,305.67) .. controls (306.08,305.67) and (293.06,298.13) .. (293.06,288.83) -- cycle ;
		\draw   (293.06,231.77) .. controls (293.06,222.46) and (306.08,214.92) .. (322.15,214.92) .. controls (338.22,214.92) and (351.24,222.46) .. (351.24,231.77) .. controls (351.24,241.07) and (338.22,248.61) .. (322.15,248.61) .. controls (306.08,248.61) and (293.06,241.07) .. (293.06,231.77) -- cycle ;
		\draw   (381.72,233.32) .. controls (381.72,224.02) and (394.74,216.48) .. (410.81,216.48) .. controls (426.88,216.48) and (439.9,224.02) .. (439.9,233.32) .. controls (439.9,242.62) and (426.88,250.17) .. (410.81,250.17) .. controls (394.74,250.17) and (381.72,242.62) .. (381.72,233.32) -- cycle ;
		\draw    (322.15,305.67) -- (322.15,345.04) ;
		\draw   (127.06,172.58) .. controls (127.06,163.28) and (140.08,155.74) .. (156.15,155.74) .. controls (172.22,155.74) and (185.24,163.28) .. (185.24,172.58) .. controls (185.24,181.88) and (172.22,189.43) .. (156.15,189.43) .. controls (140.08,189.43) and (127.06,181.88) .. (127.06,172.58) -- cycle ;
		\draw    (410.81,250.17) -- (322.15,271.98) ;
		
		\draw (310.9,64.64) node [anchor=north west][inner sep=0.75pt]  [font=\normalsize] [align=left] {$\displaystyle \text{FO}$};
		\draw (143.41,162.27) node [anchor=north west][inner sep=0.75pt]  [font=\normalsize] [align=left] {$\displaystyle \text{FO}^{2}$};
		\draw (311.99,165.92) node [anchor=north west][inner sep=0.75pt]  [font=\normalsize] [align=left] {$\displaystyle \text{GF}$};
		\draw (304.72,121.4) node [anchor=north west][inner sep=0.75pt]  [font=\normalsize] [align=left] {$\displaystyle \text{GNF}$};
		\draw (400.3,227.25) node [anchor=north west][inner sep=0.75pt]  [font=\normalsize] [align=left] {$\displaystyle \text{CQ}$};
		\draw (224.1,224.09) node [anchor=north west][inner sep=0.75pt]  [font=\normalsize] [align=left] {$\displaystyle \mathcal{ALCI}$};
		\draw (307.04,281.59) node [anchor=north west][inner sep=0.75pt]  [font=\normalsize] [align=left] {$\displaystyle \mathcal{ELI}$};
		\draw (298.81,224.53) node [anchor=north west][inner sep=0.75pt]  [font=\normalsize] [align=left] {$\displaystyle \mathcal{ALCQ}$};
		\draw (218.99,165.15) node [anchor=north west][inner sep=0.75pt]  [font=\normalsize] [align=left] {$\displaystyle \mathcal{ALCQI}$};
		\draw (393.76,168.04) node [anchor=north west][inner sep=0.75pt]  [font=\normalsize] [align=left] {$\displaystyle \text{UCQ}$};
		\draw (310.76,353.33) node [anchor=north west][inner sep=0.75pt]  [font=\normalsize] [align=left] {$\displaystyle \mathcal{EL}$};
		\draw (229.82,281.21) node [anchor=north west][inner sep=0.75pt]  [font=\normalsize] [align=left] {$\displaystyle \mathcal{ALC}$};

	\end{tikzpicture}
	\caption{Relationship between languages.} 
\end{center}
\end{figure*}

We next introduce the Gaifman graph of a structure~\cite{Libkin2004} and notation for homomorphisms. Both play a prominent role throughout the paper. Let $\Amf$ be a structure. The \emph{Gaifman graph} $G_{\Amf}$ of $\Amf$
has the set of vertices $\text{dom}(\Amf)$ and an edge $\{ d,e \}$ whenever there exists $\vec{a}\in R^{\Amf}$
containing $d,e$ for some relation $R$. We often apply graph-theoretic terminology to structures, in the obvious way. For example, a node $b$ is \emph{reachable} from a node $a$ in $\Amf$ if there is a path from $a$ to $b$ in the Gaifman graph of $\Amf$. The \emph{distance} $\text{dist}_{\Amf}(a,b)$ between $a$ and $b$ is the length of the shortest path from $a$ to $b$, if there exists one.
The \emph{outdegree} of $\Amf$ is defined as the degree of its Gaifman graph.
A \emph{homomorphism} $h$ from a
structure $\Amf$ to a structure $\Bmf$ is a function
$h:\text{dom}(\Amf)
\rightarrow \text{dom}(\Bmf)$ such that $\vec{a}\in
R^{\Amf}$ implies $h(\vec{a})\in R^{\Bmf}$ for all relation symbols
$R$ and tuples $\vec{a}\in \text{dom}(\Amf)^{n}$ with $n$ the arity of $\vec{a}$ and $h(\vec{a})$ being defined
component-wise in the expected way. 
Note that homomorphisms need not preserve 
constant symbols.  Every database $\Dmc$ gives rise to the finite
structure $\Amf_{\Dmc}$ with $\text{dom}(\Amf_\Dmc)=\text{cons}(\Dmc)$
and $\vec{a}\in R^{\Amf_{\Dmc}}$ iff $R(\vec{a})\in \Dmc$. A
homomorphism from database $\Dmc$ to structure $\Amf$ is a
homomorphism from $\Amf_{\Dmc}$ to $\Amf$.  A \emph{pointed structure}
takes the form $\Amf,\vec{a}$ with \Amf a structure and $\vec{a}$ a
tuple of elements of $\text{dom}(\Amf)$.  
%
A homomorphism
from $\Amf,\vec{a}$ to pointed structure
$\Bmf,\vec{b}$ is a homomorphism $h$ from $\Amf$ to $\Bmf$ with
$h(\vec{a})=\vec{b}$.  We write
$\Amf,\vec{a} \rightarrow \Bmf,\vec{b}$ if
such a homomorphism exists. We use the same notation for databases.
For example, a \emph{pointed
database} is a pair $\Dmc,\vec{a}$ with $\vec{a}$ a tuple in $\Dmc$ and a homomorphism from $\Dmc,\vec{a}$ to pointed structure $\Amf,\vec{b}$ is a homomorphism from
$\Amf_{\Dmc},\vec{a}$ to $\Amf,\vec{b}$. We write $\Dmc,\vec{a}\not\rightarrow \Amf,\vec{b}$ if no such homomorphism exists. 




We introduce a few fundamental properties of logics. A fragment $\Lmc$ of FO enjoys the \emph{relativization property}~\cite{modeltheory} if for every \Lmc-sentence $\varphi$ and unary relation symbol $A \notin
\mn{sig}(\varphi)$, there exists a sentence $\varphi'$ 
such that for every structure $\Amf$ with $A^{\Amf}\not=\emptyset$, $\Amf\models \varphi'$ iff
$\Amf_{|A}\models \varphi$, where $\Amf_{|A}$ is the restriction of $\Amf$ to domain~$A^\Amf$. To illustrate, given an $\ALCI$-CI $C \sqsubseteq D$,
then $C_{|A}\sqsubseteq D_{|A}$ is as required, where $C_{|A}$ is defined inductively by setting $\bot_{|A}=\bot$, $B_{|A}=B \sqcap A$, $(\neg C)_{|A}=A\sqcap \neg C_{|A}$, $(C\sqcap D)_{|A} = C_{|A} \sqcap D_{|A}$,
and $(\exists R.C)_{|A}= A \sqcap \exists R.C_{|A}$. All languages depicted in
Figure~\ref{fig:logics} enjoy the relativization property.

A fragment $\Lmc$ of FO enjoys the \emph{finite model property} if for every $\Lmc$-KB $\Kmc$, $\Lmc$-formula $\varphi(\vec{x})$, and tuple $\vec{a}\in \text{cons}(\Dmc)^{|\vec{x}|}$,  $\Kmc\models \varphi(\vec{a})$ iff $\Amf\models \varphi(\vec{a}^{\Amf})$ for every finite model $\Amf$ of $\Kmc$. All logics in Figure~\ref{fig:logics} with the exception of $\mathcal{ALCQI}$ and FO enjoy the finite model property~\cite{DBLP:journals/bsl/GradelKV97,DBLP:journals/jsyml/Gradel99,DBLP:journals/jsyml/BaranyBC18,DBLP:journals/mlq/Mortimer75}.

Finally, we also require a version of the finite model property for query evaluation. Let \Lmc be a fragment of FO. Evaluating queries from a query
language $\mathcal{Q}$ contained in $\text{FO}$ is \emph{finitely controllable} on
\Lmc-KBs if for every $\Lmc$-ontology $\Omc$, database $\Dmc$,
formula $\varphi(\vec{x})$ in $\mathcal{Q}$, tuple of constants $\vec{a}\in \text{cons}(\Dmc)^{|\vec{x}|}$, if $(\Omc,\Dmc)\not\models \varphi(\vec{a})$, then there is a finite model \Amf of $\Kmc$ such that $\Amf\not\models\varphi(\vec{a}^{\Amf})$~\cite{DBLP:journals/jcss/JohnsonK84,DBLP:journals/jcss/Rosati11}. 
Note that \Lmc has the finite model property if evaluating
queries from \Lmc is finitely controllable on \Lmc-KBs. Evaluating CQs and UCQs is finitely controllable on $\Lmc$-KBs for all $\Lmc$ depicted in Figure~\ref{fig:logics} with the exception of $\mathcal{ALCQI}$, FO$^{2}$, and FO~\cite{DL-Textbook,DBLP:journals/corr/BaranyGO13,DBLP:journals/jcss/Rosati11}.
It follows that FO$^{2}$ is the only example of a language in Figure~\ref{fig:logics}
that enjoys the finite model property but for which evaluating CQs and UCQs is not finitely controllable.

\section{Fundamental Results for Weak Separability}\label{sec:weak}

\newcommand{\mLOKB}{labeled $\Lmc$-KB}

We introduce the problem of (weak) separability in its
projective and non-projective version. We also discuss special cases 
of separability such as definability, referring expression existence, and entity distinguishability. We then give a fundamental
characterization of
$(\text{FO},\text{FO})$-separability which 
has the consequence that UCQs have the same separating
power as FO. This allows us to settle the complexity of
deciding separability in GNFO.
%
%
\newcommand{\LmcO}{\Lmc}
\begin{definition}
	\label{def:separa}
	Let $\LmcO$ be a fragment of FO. A \emph{\mLOKB} takes the form
	$(\Kmc,P,N)$ with $\Kmc=(\Omc,\Dmc)$ an $\LmcO$-KB and
	$P,N\subseteq \text{cons}(\Dmc)^{n}$ non-empty sets of \emph{positive}
	and \emph{negative examples}, all of them tuples of the same length
	$n$. 
	
	%
	An FO-formula $\varphi(\vec{x})$ with $n$ free variables
	\emph{(weakly) separates $(\Kmc,P,N)$} if
	\begin{enumerate}
		
		\item 
		$\Kmc\models \varphi(\vec{a})$ for all $\vec{a}\in P$ and 
		
		\item $\Kmc\not\models \varphi(\vec{a})$ for all $\vec{a}\in N$.
		
	\end{enumerate}
	%
	%
	Let $\Lmc_S$ be a fragment of FO. We say that $(\Kmc,P,N)$ is
	\emph{projectively $\Lmc_S$-separable} if there is an
	$\Lmc_S$-formula $\varphi(\vec{x})$ that separates $(\Kmc,P,N)$ and
	\emph{(non-projectively) $\Lmc_S$-separable} if there is such a
	$\varphi(\vec{x})$ with $\mn{sig}(\varphi) \subseteq
	\mn{sig}(\Kmc)$.
	%
\end{definition}
The following example illustrates the definition.
\begin{exmp}\label{exmp:44}
	{\em Let $\Kmc_1=(\emptyset,\Dmc_{1})$ where 
$$
		\Dmc_{1}   =  \{{\sf born\_in}(a,c), \ {\sf citizen\_of}(a,c), \ {\sf born\_in}(b,c_{1}), \ {\sf citizen\_of}(b,c_{2}), \ {\sf Person}(a)\}.
$$
		Then ${\sf Person}(x)$ separates $(\Kmc_{1},\{a\},\{b\})$. As any
		citizen is a person,
		however,
		this
		separating formula is 
		not natural and it only separates because of incomplete information about~$b$. 
		This may change with knowledge from the ontology. Let  
		$$
		\Omc_{2}=\{ \forall x (\forall y ({\sf citizen\_of}(x,y) \rightarrow {\sf
			Person}(x)))\}
		$$
		and $\Kmc_2=(\Omc_{2},\Dmc_{1})$. (It will later be
		useful to note that the ontology $\Omc_{2}$ can also
		be regarded as an $\ALCI$-ontology since $\forall x
		(\forall y ({\sf citizen\_of}(x,y) \rightarrow {\sf
			Person}(x)))$ is equivalent to the CI $\exists {\sf citizen\_of}.\top \sqsubseteq {\sf Person}$.)  We have $\Kmc_2\models {\sf Person}(b)$ and so
		${\sf Person}(x)$ no longer separates. However, the more natural formula 
		$$
		\varphi(x)= \exists y({\sf born\_in}(x,y) \wedge {\sf citizen\_of}(x,y)),
		$$
		separates $(\Kmc_{2},\{a\},\{b\})$.  Thus
		$(\Kmc_{2},\{a\},\{b\})$ is non-projectively
		\Lmc-separable for
		$\Lmc\in\{\text{CQ},\text{GF},\text{FO}^2\}$.
	
In contrast, it will follow from the model-theoretic characterization given below (Example~\ref{ex:mmm}) that the labeled KB $(\Kmc_{2},\{b\},\{a\})$ in which $b$ is the positive example and $a$ the negative example is not FO-separable.
Note that this observation rests on admitting models $\Amf$ of $\Kmc_{2}$ with $c_{1}^{\Amf}=c_{2}^{\Amf}$ (no UNA) because otherwise 
$$
\exists y_{1}\exists y_{2} (y_{1}\not=y_{2} \wedge {\sf born\_in}(x,y_{1})\wedge {\sf citizen\_of}(x,y_{2}))
$$
would be a separating formula.\finofex}
\end{exmp}
In the projective case, one admits symbols that are not from $\mn{sig}(\Kmc)$ as 
helper symbols in separating formulas.
Their availability sometimes makes inseparable KBs separable, for some separation languages. The following
example illustrates the role of helper symbols. 
Note
that in \cite{DBLP:conf/ijcai/FunkJLPW19}, helper symbols are generally admitted
and the results depend on this assumption.

\begin{exmp}\label{ex:ex2}
	{\em The separating formula $\varphi(x)$ in Example~\ref{exmp:44}
		cannot be expressed as an $\ALCI$-concept. Using a helper concept name $A$, however, we obtain the separating concept
		$$
		C= \forall {\sf born\_in}.A \rightarrow \exists {\sf citizen\_of}.A.
		$$
		and thus $(\Kmc_2,\{a\},\{b\})$ is projectively \ALCI-separable. Note that $\Kmc_{2}\models C(a)$ because whenever $a^{\Amf}\in (\forall {\sf born\_in}.A)^{\Amf}$ for a model $\Amf$ of $\Kmc_{2}$, then $c^{\Amf}\in A^{\Amf}$ and so $a^{\Amf}\in (\exists {\sf citizen\_of}.A)^{\Amf}$. On the other hand, $\Kmc_{2}\not\models C(b)$ because the structure $\Amf$ obtained by adding ${\sf Person}(b)$ and $A(c_{1})$ to $\Dmc_{1}$ is a model of $\Kmc_{2}$ with $b^{\Amf}\not\in C^{\Amf}$.
		For separation, it is thus important
		that $A$ is not constrained by $\Omc_{2}$. For a natural interpretation of the separating concept $C$ one should view $A$ as a universally quantified second-order variable. Then $\Kmc_{2}\models C(a)$ can be understood as: `for any class $A$, it follows from $\Kmc_{2}$ that $C(a)$'. 
		Finally note that ${\sf Person}$ is
		a concept name that, despite being in $\text{sig}(\Kmc_{2})$, is also sufficiently
		unconstrained by $\Omc_{2}$ to act as a helper symbol: by replacing $A$
		by ${\sf Person}$ in $C$, one obtains a (rather unnatural) concept
		that witnesses also non-projective
		\ALCI-separability of $(\Kmc_2,\{a\},\{b\})$. An example of a labeled KB that is projectively $\ALCI$-separable but not non-projectively $\ALCI$-separable was given in Example~\ref{exm:new} and will be discussed further in Example~\ref{exm:11} below. \finofex }
\end{exmp}
%
Each choice of an ontology language $\LmcO$ and a separation
language $\Lmc_{S}$ give rise to a separability problem and a
projective separability problem, defined as follows.

\vspace*{-2mm}
\begin{center}
	\fbox{\begin{tabular}{@{\;}ll@{\;}}
			\small{PROBLEM} : & (Projective) $(\LmcO,\Lmc_{S})$-separability\\
			{\small INPUT} : &  A \mLOKB\xspace $(\Kmc,P,N)$ \\
			{\small QUESTION} : & Is $(\Kmc,P,N)$ (projectively) 
			$\Lmc_S$-separable?  \end{tabular}}
\end{center}

\smallskip
\noindent
We study both the \emph{combined complexity} and the \emph{data
	complexity} of separability. In the former, the full labeled KB $(\Kmc,P,N)$ with $\Kmc=(\Omc,\Dmc)$ is taken as the input. In
the latter, only $\Dmc$ and the examples $P,N$ are regarded as 
the input while $\Omc$ is assumed to be fixed.  
We next introduce important special cases of $(\LmcO,\Lmc_{S})$-separability:
\begin{itemize}
	\item $(\LmcO,\Lmc_{S})$-\emph{definability} is 
	$(\LmcO,\Lmc_{S})$-separability for labeled KBs where $P$ and $N$ partition the
example space, that is, inputs are {\mLOKB}s $(\Kmc,P,N)$ such that
$N=\text{cons}(\Dmc)^n \setminus P$, $n$ the length of example tuples. 
     \item $(\LmcO,\Lmc_{S})$-\emph{referring expression existence (RE-existence)}  is $(\LmcO,\Lmc_{S})$-definability for labeled KBs where $P$
     is a singleton set.
     \item $(\LmcO,\Lmc_{S})$-\emph{entity distinguishability}
     is $(\LmcO,\Lmc_{S})$-separability for labeled KBs where $P$ and $N$ are both singleton sets.
     \end{itemize}
Note that for definability and referring expression existence we could
remove the set $N$ of negative examples from the input as it is
uniquely determined by $\Dmc$ and $P$. As the length of example tuples
in $P$ is not bounded (and so the size of $\text{cons}(\Dmc)^n
\setminus P$ is not polynomial in the size of $\text{cons}(\Dmc)$ and
$P$) this could affect the complexity of deciding definability and/or referring expression existence. In our proofs below we make sure that we show the complexity bounds independently from the representation.

Note also that as we only study FO-fragments $\Lmc_{S}$ that are closed under
conjunction, a labeled KB $(\Kmc,P,N)$ is (projectively)
$\Lmc_{S}$-separable if and only if all $(\Kmc,P,\{\vec{b}\})$,
$\vec{b}\in N$, are (projectively) $\Lmc_{S}$-separable. In fact, a
formula that separates $(\Kmc,P,N)$ can be obtained by taking the
conjunction of formulas that separate $(\Kmc,P,\{\vec{b}\})$,
$\vec{b}\in N$. When formulating semantic characterizations we thus often consider labeled KBs with single
negative examples only. In terms of computational complexity, we have the following reduction.
\begin{observation}\label{obs:red}
  Let $\Lmc$ and $\Lmc_{S}$ be fragments of FO with $\Lmc_{S}$ closed under conjunction. Then there is a polynomial time Turing reduction of (projective) $(\Lmc,\Lmc_S)$-separability to (projective) $(\Lmc,\Lmc_S)$-separability with a single negative example.
\end{observation}
Our first result provides a characterization of
$(\text{FO},\text{FO})$-separability 
in terms of homomorphisms, linking it to UCQ-separability and in fact
to UCQ-evaluation on KBs.
We first give some preliminaries.  With every pointed
database $\Dmc,\vec{a}$, where $\vec{a}=(a_1,\dots,a_n)$, we associate
a CQ $\varphi_{\Dmc,\vec{a}}(\vec x)$ with free variables
$\vec{x}=(x_1,\dots,x_n)$ that is obtained from $\Dmc,\vec{a}$ as
follows: view each $R(c_{1},\ldots,c_{m})\in\Dmc$ as an atom
$R(x_{c_{1}},\ldots,x_{c_{m}})$, existentially quantify all variables
$x_c$ with $c \in \text{cons}(\Dmc) \setminus [\vec{a}]$, replace
every variable $x_c$ such that $a_{i}=c$ for some $i$ with the
variable $x_i$ such that $i$ is minimal with $a_i=c$, and finally add
$x_i=x_j$ whenever $a_i=a_j$. Then $\varphi_{\Dmc,\vec{a}}(\vec x)$ is the logically strongest CQ such that $\Dmc\models \varphi_{\Dmc,\vec{a}}(\vec a)$. This link between conjunctive queries and pointed databases is well known~\cite{DBLP:conf/stoc/ChandraM77}.
 For a pointed database $\Dmc,\vec{a}$,
we write $\Dmc_{\text{con}(\vec{a})}$ to denote the restriction of
\Dmc to those constants that are reachable from some constant in
$\vec{a}$ in the Gaifman graph of~$\Dmc$. Equivalently, $R(\vec{b})\in \Dmc_{\text{con}(\vec{a})}$ iff $R(\vec{b})\in \Dmc$ and there exists $a\in [\vec{a}]$ such that there exists a path from $a$ to some $b\in [\vec{b}]$ in $\Dmc$.
\begin{exmp}\label{ex:m}
	{\em Consider the database $\Dmc_{1}$ introduced in Example~\ref{exmp:44}.
		Then 
		$$
		\varphi_{{\Dmc_{1}}_{\text{con}(a)},a}(x) = \exists y({\sf born\_in}(x,y) \wedge {\sf citizen\_of}(x,y) \wedge {\sf Person}(x))
		$$ 
		extends the formula we identified as a separating formula for $(\Kmc_{2},\{a\},\{b\})$ in the example with the atom ${\sf Person}(x)$. 
	
    Note that $\varphi_{{\Dmc_{1}}_{\text{con}(b)},b}(x) = \exists y_{1}\exists y_{2}({\sf born\_in}(x,y_{1}) \wedge {\sf citizen\_of}(x,y_{2}))$.\finofex
}
\end{exmp}
We are now in a position to formulate our first result. 
\begin{theorem}\label{critFOwithoutUNA}
	Let $(\Kmc,P,\{\vec{b}\})$ be a labeled FO-KB,
	$\Kmc=(\Omc,\Dmc)$. 
	Then the following
	conditions are equivalent:
	\begin{enumerate}
		\item $(\Kmc,P,\{\vec{b}\})$ is projectively UCQ-separable;
		\item $(\Kmc,P,\{\vec{b}\})$ is projectively FO-separable;
		\item there exists a model $\Amf$ of $\Kmc$ such that for all $\vec{a}\in P$:
		$\Dmc_{\text{con}(\vec{a})},\vec{a}\not\rightarrow \Amf,\vec{b}^{\Amf}$;
		\item the UCQ $\bigvee_{\vec{a}\in P}\varphi_{\Dmc_{\text{con}(\vec{a})},\vec{a}}$ separates
		$(\Kmc,P,\{\vec{b}\})$.
	\end{enumerate} 
\end{theorem}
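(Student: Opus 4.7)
I would prove the theorem by a cyclic chain of implications $(1) \Rightarrow (2) \Rightarrow (3) \Rightarrow (4) \Rightarrow (1)$. Two of the four links are essentially free: $(1) \Rightarrow (2)$ holds since $\text{UCQ} \subseteq \text{FO}$, and $(4) \Rightarrow (1)$ holds because the UCQ displayed in (4) is a UCQ (with no helper symbols at all, so it even gives non-projective UCQ-separability as a byproduct). So the content lies in $(2) \Rightarrow (3)$ and $(3) \Rightarrow (4)$.

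For $(3) \Rightarrow (4)$, let $\Amf$ be the model provided by (3). Condition~1 of Definition~\ref{def:separa}, namely $\Kmc \models \bigvee_{\vec{a} \in P} \varphi_{\Dmc_{\text{con}(\vec{a})}, \vec{a}}(\vec{a}')$ for every $\vec{a}' \in P$, follows from the observation that any model of $\Kmc$ contains a homomorphic image of the canonical CQ $\varphi_{\Dmc_{\text{con}(\vec{a}')}, \vec{a}'}$ rooted at $\vec{a}'$, witnessed by the interpretation of constants. For condition~2, the standard homomorphism characterisation of CQ evaluation gives $\Amf \models \varphi_{\Dmc_{\text{con}(\vec{a})}, \vec{a}}(\vec{b}^\Amf)$ iff $\Dmc_{\text{con}(\vec{a})}, \vec{a} \to \Amf, \vec{b}^\Amf$, which fails for every $\vec{a} \in P$ by (3). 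Hence $\Amf$ itself witnesses $\Kmc \not\models \bigvee_{\vec{a} \in P} \varphi_{\Dmc_{\text{con}(\vec{a})}, \vec{a}}(\vec{b})$.

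The substantive step is $(2) \Rightarrow (3)$, proved by a constant-renaming construction. Suppose $\varphi(\vec{x})$ projectively FO-separates $(\Kmc, P, \{\vec{b}\})$ over some signature $\Sigma \supseteq \mathrm{sig}(\Kmc)$. Since $\Kmc \not\models \varphi(\vec{b})$, fix a $\Sigma$-model $\Amf$ of $\Kmc$ with $\Amf \not\models \varphi(\vec{b}^\Amf)$. Assume toward contradiction that some $\vec{a} \in P$ admits a homomorphism $h : \Dmc_{\text{con}(\vec{a})}, \vec{a} \to \Amf, \vec{b}^\Amf$. Define $\Bmf$ to agree with $\Amf$ on all relation symbols and on every constant outside $\mathrm{cons}(\Dmc_{\text{con}(\vec{a})})$, while setting $c^\Bmf := h(c)$ for every $c \in \mathrm{cons}(\Dmc_{\text{con}(\vec{a})})$. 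I would then verify that $\Bmf \models \Kmc$: sentences in $\Omc$ are constant-free (by the paper's convention) and hence have the same truth value in $\Bmf$ as in $\Amf$; for a ground atom $R(\vec{c}) \in \Dmc$, Gaifman-connectedness of $\Dmc_{\text{con}(\vec{a})}$ rules out atoms that mix re-interpreted and unchanged constants, so either $\vec{c} \subseteq \mathrm{cons}(\Dmc_{\text{con}(\vec{a})})$, in which case $R(h(\vec{c})) \in R^\Amf = R^\Bmf$ because $h$ is a homomorphism, or $\vec{c}$ is disjoint from $\mathrm{cons}(\Dmc_{\text{con}(\vec{a})})$ and the interpretation is unchanged. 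Since $\Amf$ and $\Bmf$ share all relations, $\vp$ has the same truth value at any fixed tuple of elements; but $\vec{a}^\Bmf = h(\vec{a}) = \vec{b}^\Amf = \vec{b}^\Bmf$, so $\Bmf \not\models \varphi(\vec{a}^\Bmf)$, contradicting $\Kmc \models \varphi(\vec{a})$.

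The main obstacle I anticipate is precisely this renaming argument: one has to be sure that no atom of $\Dmc$ outside the reachable part is broken by the re-interpretation and that helper symbols making $\varphi$ fail at $\vec{b}$ in $\Amf$ continue to do so in $\Bmf$. Both points turn on the two conventions highlighted in the preliminaries, namely that FO-sentences (and hence ontologies) are constant-free and that $\Dmc_{\text{con}(\vec{a})}$ is Gaifman-closed by construction, which jointly eliminate the problematic half-renamed atoms. Once this step is in place, the cycle closes without further difficulty.
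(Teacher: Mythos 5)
Your proposal is correct and follows essentially the same route as the paper: the same cycle of implications with the only substantive step being $(2)\Rightarrow(3)$, proved by exactly the paper's constant-reinterpretation argument (which, as your construction implicitly exploits, relies on the absence of the unique name assumption, since setting $c^{\Bmf}:=h(c)$ may identify distinct constants). The only cosmetic difference is that you conclude directly from $\vec{a}^{\Bmf}=\vec{b}^{\Amf}$ and the fact that $\Amf$ and $\Bmf$ agree on all relations, whereas the paper phrases the same contradiction via the non-existence of a model $\Bmf$ with $\Bmf,\vec{a}^{\Bmf}$ isomorphic to $\Amf,\vec{b}^{\Amf}$.
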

\begin{proof} \ ``1. $\Rightarrow$ 2.'' and ``4. $\Rightarrow$
	1.'' are trivial and ``$3. \Rightarrow$ 4.'' is
	straightforward. We thus concentrate on ``2. $\Rightarrow$
	3.'' Assume that $(\Kmc,P,\{\vec{b}\})$ is separated by an
	FO-formula $\varphi(\vec{x})$. Then there is a model $\Amf$
	of $\Kmc$ such that $\Amf \not\models \varphi(\vec{b}^{\Amf})$. Let
	$\vec{a}\in P$. Since $\Kmc \models \varphi(\vec{a})$, there
	is no model $\Bmf$ of $\Kmc$ and such that
	$\Bmf,\vec{a}^{\Bmf}$ and $\Amf,\vec{b}^{\Amf}$ are
	isomorphic, meaning that there is an isomorphism $\tau$ from
	\Bmf to \Amf with $\tau(\vec{a}^\Bmf)=\vec{b}^\Amf$.  We show
	that $\Amf$
	satisfies Condition~3. Assume to the contrary that there is
	a homomorphism $h$ from $\Dmc_{\text{con}(\vec{a})},\vec{a}$
	to $\Amf,\vec{b}^{\Amf}$ for some $\vec{a}\in P$.  Let the
	structure $\Bmf$ be obtained from \Amf by setting
	$c^{\Bmf}=h(c)$ for all $c\in
	\text{cons}(\Dmc_{\text{con}(\vec{a})})$ and
	$c^{\Bmf}=c^{\Amf}$ for all remaining constants $c$.  This
	construction relies on not making the UNA.  \Bmf is a model
	of \Kmc since \Omc does not contain constants.  It is easy
	to verify that $\Bmf,\vec{a}^{\Bmf}$ and
	$\Amf,\vec{b}^{\Amf}$ are isomorphic and thus we have
	obtained a contradiction.
\end{proof}
Note that the UCQ in Point~4 of Theorem~\ref{critFOwithoutUNA}
is a concrete separating formula. It is only of size
polynomial in the size of the KB, but often not very
illuminating.  It also contains no helper
symbols (in fact, it even contains only relation
	symbols that occur in $\Dmc$ while symbols that only occur
	in \Omc are not used) and thus we obtain the following.
\begin{corollary}\label{cor:weak1}
	$(\text{FO},\text{FO})$-separability, 
	$(\text{FO},\Lmc_S)$-separability, and projective $(\text{FO},\Lmc_{S})$-separability coincide for all
	FO-fragments $\Lmc_S \supseteq \text{UCQ}$. The same is true for definability.
\end{corollary}
The UCQ in Point~4 of Theorem~\ref{critFOwithoutUNA} is a CQ if $P$ contains a single example. Thus we obtain the following result for referring expression existence and entity distinguishability.
\begin{corollary}\label{cor:weak2}
	 $(\text{FO},\text{FO})$-entity distinguishability, 
	 $(\text{FO},\Lmc_S)$-entity distinguishability, and
	 projective $(\text{FO},\Lmc_S)$-entity distinguishability coincide for all
	 FO-fragments $\Lmc_S \supseteq \text{CQ}$. The same is true for RE-existence.
\end{corollary}
The following example applies the characterization given in Theorem~\ref{critFOwithoutUNA} to
Examples~\ref{exmp:44} and \ref{ex:m}. 
\begin{exmp}\label{ex:mmm}
	{\em Consider the labeled KB $(\Kmc_{2},\{a\},\{b\})$ introduced in Example~\ref{exmp:44}. Point~4 of Theorem~\ref{critFOwithoutUNA} shows that in Example~\ref{exmp:44} we have chosen the canonical UCQ $\varphi_{{\Dmc_{1}}_{\text{con}(a)},a}(x)$ (which is actually a CQ since we have only one positive example) that separates $(\Kmc_{2},\{a\},\{b\})$ if there is an FO-formula at all that separates $(\Kmc_{2},\{a\},\{b\})$.
	
	It follows from Point~3 of Theorem~\ref{critFOwithoutUNA} that
	the labeled KB $(\Kmc_{2},\{b\},\{a\})$ introduced in Example~\ref{exmp:44}
	is not FO-separable: for any model $\Amf$ of $\Kmc_{2}$ there is a homomorphism $h$ from ${\Dmc_{1}}_{\text{con}(b)}$ to $\Amf$ mapping $b$ to $a^{\Amf}$ as one can
	always set $h(c_{1})=h(c_{2})=c^{\Amf}$. \finofex
}
\end{exmp}
We next observe a general result about the impact of strengthening the ontology on the separability of labeled KBs. Say that $(\Lmc,\Lmc_{S})$-separability is \emph{anti-monotone in the ontology} if for all labeled $\Lmc$-KBs $(\Kmc_{i},P,N)$ with $\Kmc_{i}=(\Omc_{i},\Dmc)$ for $i=1,2$, if $\Omc_{1} \subseteq \Omc_{2}$ and $(\Kmc_{2},P,N)$ is $\Lmc_{S}$-separable, then $(\Kmc_{1},P,N)$ is $\Lmc_{S}$-separable. Then the following result follows from Point~3 of Theorem~\ref{critFOwithoutUNA}.
\begin{corollary}\label{cor:antimon}
	Projective and non-projective $(\text{FO},\Lmc_S)$-separability are anti-monotone in the ontology, for all
	FO-fragments $\Lmc_S \supseteq \text{UCQ}$.
\end{corollary}	
Note that Corollary~\ref{cor:antimon} \emph{never} holds in the non-projective
case if that case differs from the projective case. The argument is straightforward. If $(\Kmc,P,N)$ is $\Lmc_{S}$-separable using helper symbols but not without helper symbols, then by adding tautologies using the helper symbols used in the separating formula to the ontology of $\Kmc$ makes $(\Kmc,P,N)$ $\Lmc_{S}$-separable. Note also that Corollary~\ref{cor:antimon} has no counterpart for the database of the labeled KB as adding ground atoms to the database can clearly both make non-separable labeled KBs separable but also separable labeled KBs non-separable. 
 
Theorem~\ref{critFOwithoutUNA} shows that there is a very close link between separability and query evaluation on KBs. In particular, it implies that for all
$(\LmcO,\Lmc_S)$ with $\LmcO$ and $\Lmc_{S}$ fragments of FO such that $\Lmc_S
\supseteq \text{UCQ}$, $(\LmcO,\Lmc_S)$-separability can be
polynomially reduced to the complement of UCQ-evaluation on $\LmcO$-KBs. 
We can actually
do better than this: call a CQ \emph{rooted} if every variable is reachable from an answer variable. A UCQ is called \emph{rooted} if all its CQs are rooted. Then the CQs $\varphi_{\Dmc_{\text{con}(\vec{a}),\vec{a}}}$
are rooted and so are the UCQs 
$\bigvee_{\vec{a}\in P}\varphi_{\Dmc_{\text{con}(\vec{a}),\vec{a}}}$.
Theorem~\ref{critFOwithoutUNA} now implies the reductions from left to right of the following corollary. For the converse direction, we require that the fragment $\Lmc$ enjoys the relativization property introduced above.
\begin{corollary}\label{cor:rel}
Let $\Lmc$ be a fragment of FO enjoying the relativization property.
\begin{enumerate}
	\item For all fragments $\Lmc_S \supseteq \text{UCQ}$ of FO, $(\LmcO,\Lmc_S)$-separability for labeled KBs with a single negative example can be
mutually polynomial time reduced with the complement of
rooted UCQ-evaluation on $\LmcO$-KBs.  
\item For all fragments $\Lmc_S\supseteq \text{CQ}$ of FO,
$(\Lmc,\Lmc_S)$-entity distinguishability can be
mutually polynomial time reduced with the complement of
rooted CQ-evaluation on $\LmcO$-KBs.
\end{enumerate}
For the reduction from left to right the relativization property can be dropped.  
\end{corollary}
\begin{proof} \
We show for Point~1 the reduction from rooted UCQ-evaluation to separability. The reduction from rooted CQ-evaluation to entity distinguishability in Point~2 follows directly from the proof. Assume an \Lmc-KB $\Kmc=(\Omc,\Dmc)$, 
	a rooted UCQ $q(\vec{x})=\bigvee_{i\in
		I}q_{i}(\vec{x})$, $\vec{x}=(x_{1},\ldots,x_{n})$, and a tuple $\vec{a}$ in $\Dmc$ are given. Consider the relativization $\Omc_{|A}$ of the sentences of $\Omc$ to $A$
	    and $\Dmc^{+A}=\Dmc \cup \{A(c) \mid c\in \text{cons}(\Dmc)\}$, for a
	    fresh unary relation $A$.
	Regard each CQ $q_{i}(x_{1},\ldots,x_{n})$ as a pointed
	database $\Dmc_{i},([x_{1}],\ldots,[x_{n}])$ as follows. Let
	$\sim$ be the smallest (in terms of containment) equivalence
	relation that contains $(x,y)$ for every conjunct $(x=y)$ of
	$q_i$.
%
	Then regard the equivalence classes $[x]$ as constants
	and set $R([y_{1}],\ldots,[y_{m}])\in \Dmc_{i}$ iff there are $y_1'\in [y_{1}],\ldots,y_{m}'\in [y_{m}]$
	such that $R(y_{1}',\ldots,y_{m}')$ is a conjunct of $q_{i}$. We assume the 
	pointed databases $\Dmc_{i},([x_{1}],\ldots,[x_{n}])$, $i\in I$,
	are mutually disjoint and also disjoint from $\Dmc$. The copy of $([x_{1}],\ldots,[x_{n}])$ in $\Dmc_{i}$ is denoted 
	$([x_{1}]^{i},\ldots,[x_{n}]^{i})$.
	Let $\Dmc'=\Dmc^{+A} \cup \bigcup_{i\in I}\Dmc_{i}$ and set
	\[P=
	\{([x_{1}]^{i},\ldots,[x_{n}]^{i})\mid i \in I\}, \quad N = \{\vec{a}\}.\]
	Consider the labeled KB $(\Kmc',P,N)$ for $\Kmc'=(\Omc_{|A},\Dmc')$. (Intuitively, the reason for replacing $\Omc$ by $\Omc_{|A}$ is that we do not want the new database assertions in $\bigcup_{i\in I}\Dmc_{i}$ to interfere with what is entailed at $\vec{a}$. For example, $(\Omc,\Dmc')$ could be unsatisfiable.) We show that $(\Kmc',P,N)$ is UCQ-separable iff $\Kmc \not\models q(\vec{a})$. By Theorem~\ref{critFOwithoutUNA}, $(\Kmc',P,N)$ is UCQ-separable iff $\Kmc'\not\models q(\vec{a})$. Hence we have to show that $\Kmc \models q(\vec{a})$ iff
	$\Kmc' \models q(\vec{a})$. The direction from left to right follows from the condition that $\Omc_{|A}$ is a relativization of $\Omc$. Conversely, assume that $\Kmc \not\models q(\vec{a})$. Take a model $\Amf$ of $\Kmc$ with $\Amf\not\models q(\vec{a})$. We may assume that $A^{\Amf}= \text{dom}(\Amf)$. We expand $\Amf$ to a model $\Amf'$ of $\Kmc'$ by taking the disjoint union of $\Amf$ with $\Dmc'$ (regarded as a structure). Using the condition that $q$ is rooted it now follows directly that $\Amf'\not\models q(\vec{a})$. Hence $\Kmc' \not\models q(\vec{a})$.
\end{proof}
Since rooted UCQ-evaluation on FO-KBs is undecidable, so is
$(\text{FO},\text{FO})$-separability.  However, for GNFO we obtain the following result from $\text{GNFO}
\supseteq \text{UCQ}$. 
\begin{corollary}
	\label{thm:gnfo}
	For all FO-fragments $\Lmc_{S}\supseteq \text{UCQ}$:
	\begin{enumerate}
		\item $(\text{GNFO},\text{GNFO})$-separability, $(\text{GNFO},\Lmc_S)$-separability, and projective $(\text{GNFO},\Lmc_S)$-separability coincide. The same is true for definability.
		\item $(\text{GNFO},\Lmc_S)$-separability and $(\text{GNFO},\Lmc_S)$-definability are 
	\TwoExpTime-complete in combined complexity.
	\end{enumerate}
	Points~1 and~2 also hold for RE-existence and entity distinguishability, for
	all FO-fragments $\Lmc_{S}\supseteq \text{CQ}$.
\end{corollary}
\begin{proof} \
It is known that UCQ-evaluation on GNFO-KBs is decidable in \TwoExpTime in combined complexity~\cite{DBLP:journals/jacm/BaranyCS15}. This implies the 
\TwoExpTime upper bound in combined complexity.
\TwoExpTime-hardness in combined complexity can be shown by a polynomial time reduction of satisfiability (in a model $\Amf$ with $|\text{dom}(\Amf)|\geq 2$) of GNFO-sentences which is also \TwoExpTime-hard \cite{DBLP:journals/jacm/BaranyCS15}. To show the reduction observe that
a GNFO-sentence $\vp$ is satisfiable in a model $\Amf$ with $|\text{dom}(\Amf)|\geq 2$ iff the labeled KB
$(\Kmc,\{a\},\{b\})$ is (projectively and, equivalently, non-projectively) separable
in any FO-fragment $\Lmc_{S}\supseteq \text{CQ}$, where $\Kmc=(\Omc,\Dmc)$, $\Omc = \{ \vp \}$ and
$\Dmc = \{ P(a), N(b) \}$ with $P,N$ fresh unary relation symbols. Observe that the labeled KB $(\Kmc,\{a\},\{b\})$ can be used to also prove the lower bound for definability, RE-existence, and entity distinguishability.
\end{proof}
If follows from Theorems~\ref{thm:correctedresults} and \ref{thm:correctedresults2} below that $(\ALCI,\Lmc_{S})$-separability is
\NExpTime-complete in data complexity, for all FO-fragments
$\Lmc_{S}\supseteq \text{UCQ}$. Hence,
$(\text{GNFO},\Lmc_{S})$-separability is \NExpTime-hard in data
complexity, for all such $\Lmc_{S}$. We conjecture that it is even
\TwoExpTime-hard in data complexity, see~\ref{app:hardness} for
further discussion. The data complexity of RE-existence and
entity distinguishability remains open.
%

We briefly discuss the case of FO-separability of labeled KBs in which
the ontology is empty. From the connection to rooted UCQ-evaluation,
it is immediate that this problem is in \coNPclass since evaluating
CQs on databases is \NPclass-complete~\cite{DBLP:books/aw/AbiteboulHV95}.
\begin{theorem}\label{thm:empty} 
	$(\text{FO},\text{FO})$-separability on labeled KBs with empty ontology
	is \coNPclass-complete. This is also true for definability, RE-existence, and entity distinguishability.
\end{theorem}
\begin{proof} \
	It remains to prove the lower bounds. For separability and entity distinguishability the lower bound follows directly from Corollary~\ref{cor:rel} and the fact that rooted CQ-evaluation on databases is \NPclass-hard~\cite{DBLP:books/aw/AbiteboulHV95}.
    It remains to prove the \coNPclass-lower bound for RE-existence (the lower bound for definability follows). The proof is by a polynomial time reduction of the problem whether there does not exist a homomorphism from a connected database $\Dmc_{1}$ to a connected database $\Dmc_{2}$ (in symbols, $\Dmc_{1}\not\rightarrow \Dmc_{2}$) which is \coNPclass-hard. Assume $\Dmc_{1}$ and $\Dmc_{2}$ are given. We may assume that $\text{cons}(\Dmc_{1})\cap \text{cons}(\Dmc_{2})=\emptyset$. Take a constant $a$ in $\Dmc_{1}$ and a constant $b$ in $\Dmc_{2}$ and let 
    $\Dmc_{1}'=\Dmc_{1}\cup \{A(a)\}$ and $\Dmc_{2}'=\Dmc_{2}\cup\{A(c) \mid c\in \text{cons}(\Dmc_{2})\}$ for a fresh unary relation $A$. Then $\Dmc_{1} \not\rightarrow \Dmc_{2}$ iff
    $\Dmc_{1}',a \not\rightarrow \Dmc_{1}'\cup \Dmc_{2}',b$ for any $b\in N:= \text{cons}(\Dmc_{1}\cup \Dmc_{2})\setminus \{a\}$. Let $\Kmc=(\emptyset,\Dmc_{1}'\cup \Dmc_{2}')$. Then the latter problem
    is equivalent to FO-separability of $(\Kmc,\{a\},N)$,
    by Theorem~\ref{critFOwithoutUNA}.    
\end{proof}
This result is in 
contrast to GI-completeness of the FO-definability problem on closed
world structures in which one asks for a finite structure $\Amf$ and set $P$ of tuples in $\text{dom}(\Amf)^{n}$  whether there exists an FO-formula $\varphi(\vec{x})$ that defines $P$ in the sense that $P=\{ \vec{a}\in\text{dom}(\Amf)^{n}\mid \Amf\models \varphi(\vec{a})\}$~\cite{DBLP:journals/tods/ArenasD16}. 

\section{Weak Separability for Decidable Fragments of
FO}\label{sec:weakdecidable}

We study $(\Lmc,\Lmc)$-separability for
$\Lmc \in \{ \ALCI, \text{GF}, \text{FO}^2 \}$.
None of these 
fragments \Lmc contains UCQ (or even CQ), and thus we cannot use
Theorem~\ref{critFOwithoutUNA} in the same way as for GNFO above. We also investigate the special cases of definability, referring expression existence,
and entity distinguishability. 

\subsection{Separability of Labeled $\mathcal{ALCI}$-KBs}
\label{section:ALCI}


We are interested in separating labeled \ALCI-KBs $(\Kmc,P,N)$ in
terms of \ALCI-concepts which is relevant for concept learning, for
generating referring expressions, and for entity comparison. Note that
since \ALCI-concepts are FO-formulas with one free variable, positive
and negative examples are single constants rather than proper tuples.

We start by considering projective $(\ALCI,\ALCI)$-separability and show that
Point~3 of Theorem~\ref{critFOwithoutUNA} also characterizes projective $\ALCI$-separability of labeled $\ALCI$-KBs. This has profound consequences.
For example, it follows that $(\ALCI,\text{FO})$-separability coincides with projective $(\ALCI,\mathcal{ALCI})$-separability and that projective $(\ALCI,\mathcal{ALCI})$-separability can be decided using rooted 
UCQ-evaluation on $\ALCI$-KBs. To prove the characterization we require an intermediate step
in which we characterize projective $(\ALCI,\mathcal{ALCI})$-separability using a suitable version of functional bisimulations. 
We start by introducing bisimulations for $\mathcal{ALCI}$~\cite{TBoxpaper,goranko20075}. Let
$\Amf$ and $\Bmf$ be structures and $\Sigma$ a signature of concept and role names. A relation
$S \subseteq \text{dom}(\Amf) \times \text{dom}(\Bmf)$ is an
\emph{$\mathcal{ALCI}(\Sigma)$-bisimulation between $\Amf$ and $\Bmf$}
if the following conditions hold:
\begin{enumerate}
	\item if $(d,e)\in S$ and $A\in \Sigma$, then
	$d \in A^{\Amf}$ iff $e\in A^{\Bmf}$; 
	\item if $(d,e)\in S$, $R$ is a $\Sigma$-role, and $(d,d')\in R^{\Amf}$, 
	then there is an $e'$ with $(e,e')\in R^{\Bmf}$ and
	$(d',e')\in S$;
	\item if $(d,e)\in S$, $R$ is a $\Sigma$-role, and $(e,e')\in R^{\Bmf}$, 
	then there is a $d'$ with $(d,d')\in R^{\Amf}$ and
	$(d',e')\in S$.
\end{enumerate}	
We write $\Amf,d\sim_{\ALCI,\Sigma}\Bmf,e$ and call pointed structures $\Amf,d$ and $\Bmf,e$
\emph{$\ALCI(\Sigma)$-bisimilar} if there exists an
$\ALCI(\Sigma)$-bisimulation $S$ between $\Amf$ and $\Bmf$ such that $(d,e)\in S$. 
We say that $\Amf,d$ and $\Bmf,e$ are \emph{$\mathcal{ALCI}(\Sigma)$-equivalent},
in symbols $\Amf,d\equiv_{\ALCI,\Sigma}\Bmf,e$, if $d\in C^{\Amf}$ iff
$e\in C^{\Bmf}$ for all $\ALCI(\Sigma)$-concepts $C$.
$\mathcal{ALCI}(\Sigma)$-bisimilarity implies $\mathcal{ALCI}(\Sigma)$-equivalence. The converse direction does not always hold, but it holds if at least one structure has finite outdegree~\cite{TBoxpaper}. We apply the following lemma mainly to finite structures (which trivially have finite outdegree).
\begin{lemma}\label{lem:equivalence}
	Let $\Amf,d$ and $\Bmf,e$ be pointed structures,
	assume that $\Amf$ has finite outdegree, and let $\Sigma$ be a signature. 
	Then
	$$
	\Amf,d \equiv_{\ALCI,\Sigma} \Bmf,e \quad \text{ iff } \quad
	\Amf,d \sim_{\ALCI,\Sigma}\Bmf,e.
	$$
	For the ``if'' direction, the condition ``$\Amf$ has finite outdegree'' can be dropped.
\end{lemma}
To characterize projective separability we require a functional version of 
$\ALCI(\Sigma)$-bisimilarity. In detail, we write $\Amf,d \sim_{\ALCI,\Sigma}^{f} \Bmf,e$ if there exists an
$\ALCI(\Sigma)$-bisimulation $S$ between $\Amf$ and $\Bmf$ that
contains $(d,e)$ and is \emph{functional}, that is,
$(a,b_{1}),(a,b_{2})\in S$ implies $b_{1}=b_{2}$ for all $a\in \text{dom}(\Amf)$ and $b_{1},b_{2}\in \text{dom}(\Bmf)$. Note that
$\Amf,d \sim_{\ALCI,\Sigma}^{f} \Bmf,e$ implies that there is a
homomorphism from $\Amf_{\text{con}(d)},d$ to $\Bmf,e$, where
$\Amf_{\text{con}(d)}$ is the restriction of $\Amf$ to all nodes
reachable from $d$ (here we assume that $X^{\Amf}=\emptyset$ for all concept and role names $X\not\in\Sigma$ because otherwise one might reach nodes from $d$ in $\Amf$ that are not relevant to $\ALCI(\Sigma)$-bisimilarity of $d$ and $e$).

%
\begin{restatable}{theorem}{thmLmodeltheorynull}
	\label{thm:L-modeltheory0}
	Assume that $(\Kmc,P,\{b\})$ is a labeled $\ALCI$-KB with $\Kmc=(\Omc,\Dmc)$
	and $\Sigma=\text{sig}(\Kmc)$. Then the following
	conditions are equivalent:
	\begin{enumerate}
		
		\item $(\Kmc,P,\{b\})$ is projectively
		$\ALCI$-separable;
		
		\item there exists a finite model $\Amf$ of $\Kmc$ and a finite set $\Sigma_{\text{help}}$ of concept names disjoint from $\text{sig}(\Kmc)$ such that for all models $\Bmf$ of
		$\Kmc$ and all $a\in P$: $\Bmf,a^{\Bmf}
		\not\sim_{\ALCI,\Sigma\cup \Sigma_{\text{help}}} \Amf, b^{\Amf}$;
		
		\item there exists a finite model $\Amf$ of $\Kmc$ such that for all models $\Bmf$ of $\Kmc$ and all
		$a\in P$: $\Bmf,a^{\Bmf} \not\sim_{\ALCI,\Sigma}^{f} \Amf,
		b^{\Amf}$; 
		
		\item there exists a finite model $\Amf$ of $\Kmc$ such that for all $a\in P$:
		$\Dmc_{\text{con}(a)},a\not\rightarrow \Amf,b^{\Amf}$;
	
        \item there exists a model $\Amf$ of $\Kmc$ such that for all $a\in P$:
        $\Dmc_{\text{con}(a)},a\not\rightarrow \Amf,b^{\Amf}$.
       
\end{enumerate}
\end{restatable}
\begin{proof} \
``1. $\Rightarrow$ 5.'' If $(\Kmc,P,\{b\})$ is projectively $\ALCI$-separable,
then it is projectively FO-separable. It follows that Condition~3 of Theorem~\ref{critFOwithoutUNA} holds which is identical to Condition~5.


``2. $\Rightarrow$ 1.'' Assume Condition~2 holds for $\Amf$ and $\Sigma_{\text{help}}$. Let 
$
t_{\Amf}(b) = \{C \in \ALCI(\Sigma\cup \Sigma_{\text{help}}) \mid b^{\Amf}\in C^{\Amf}\}$.
It follows from Lemma~\ref{lem:equivalence} that 
$
\Kmc \cup \{C(a) \mid C \in t_{\Amf}(b)\}
$
is not satisfiable, for any $a\in P$. (This can be seen as follows: assume there exists $a\in P$ and a model $\Bmf$ of $\Kmc$ such that $a^{\Bmf}\in C^{\Bmf}$ for all $C\in t_{\Amf}(b)$. Then $\Bmf,a^{\Bmf} \equiv_{\ALCI,\Sigma\cup \Sigma_{\text{help}}} \Amf,b^{\Amf}$. By Lemma~\ref{lem:equivalence}, $\Bmf,a^{\Bmf} \sim_{\ALCI,\Sigma\cup\Sigma_{\text{help}}} \Amf,b^{\Amf}$ which contradicts Condition~2.)
By compactness (and closure under conjunction of $t_{\Amf}(b)$) we find for every $a\in P$ a concept $C_{a}\in t_{\Amf}(b)$ such that $\Kmc\models \neg C_{a}(a)$. Thus, the concept $\neg (\bigsqcap_{a\in P}C_{a})$ separates $(\Kmc,P,\{b\})$, as required.


``3. $\Rightarrow$ 2.'' Take a model $\Amf$ of $\Kmc$ such that Condition~3 holds. We may assume that $\Amf$ only interprets the symbols in $\text{sig}(\Kmc)$.  
Define $\Amf'$ by expanding $\Amf$ as follows. Take for any $d\in \text{dom}(\Amf)$ a fresh concept name $A_{d}$ and set
$A_{d}^{\Amf'} =\{d\}$. 
Then Condition~2 holds for $\Amf'$ and $\Sigma_{\text{help}} = \{A_{d}\mid d\in \text{dom}(\Amf)\}$.

``4. $\Rightarrow$ 3.'' Take a model $\Amf$ of $\Kmc$ such that Condition~4 holds. As every functional bisimulation between any model $\Bmf$ of $\Kmc$ and $\Amf$ witnessing $\Bmf,a^{\Bmf} \sim_{\ALCI,\Sigma}^{f} \Amf,
b^{\Amf}$ induces a homomorphism from $\Dmc_{\text{con}(a)}$ to $\Amf$ mapping $a$ to $b^{\Amf}$, it follows that $\Amf$ satisfies Condition~3. 	

``5. $\Rightarrow$ 4.'' This follows from the fact that rooted UCQ-evaluation on $\ALCI$-KBs is finitely controllable.
\end{proof}
The following example illustrates the characterization given in Theorem~\ref{thm:L-modeltheory0} by applying it to Examples~\ref{exmp:44} and \ref{ex:ex2}.
\begin{exmp}
{\em Consider the KB $\Kmc_{2}=(\Omc_{2},\Dmc_{1})$ introduced in Examples~\ref{exmp:44} and \ref{ex:ex2}. We know already that $(\Kmc_{2},\{a\},\{b\})$ is $\ALCI$-separable. The structure $\Amf$ of Condition~2 in Theorem~\ref{thm:L-modeltheory0} can be used 
to construct a separating concept following the proof of ``2. $\Rightarrow$ 1.''.
For example, obtain a model $\Amf$ of $\Kmc_{2}$ from $\Dmc_{1}$ by adding ${\sf Person}(b)$. Then there does not exist a model $\Bmf$ of $\Kmc_{2}$ such that $\Bmf,a^{\Bmf}\sim_{\ALCI,\text{sig}(\Kmc_{2})}\Amf,b^{\Amf}$. Thus we have found a structure witnessing Condition~2 with $\Sigma_{\text{help}}=\emptyset$.
Now $t_{\Amf}(b)$ contains a concept $C$ such that $\neg C$ separates $(\Kmc_{2},\{a\},\{b\})$. An example of such a concept in $t_{\Amf}(b)$ is $C=\forall {\sf born\_in}.\neg \exists {\sf citizen\_of}^{-}.\top$. Then $\neg C$ separates $(\Kmc_{2},\{a\},\{b\})$ since $\Kmc_{2}\models \neg C(a)$.\finofex
}
\end{exmp}
It follows from Theorem~\ref{thm:L-modeltheory0} that Point~3 of Theorem~\ref{critFOwithoutUNA} also characterizes projective $(\ALCI,\ALCI)$-separability.  
This leads to the following result.
\begin{restatable}{theorem}{thmdatacomplexity}
	\label{thm:correctedresults}
For any FO-fragment $\Lmc_{S}\supseteq \text{UCQ}$:
\begin{enumerate}
	\item projective $(\ALCI,\ALCI)$-separability coincides with
    $(\ALCI,\Lmc_S)$-separability, and the same is true for definability;
    \item projective $(\ALCI,\ALCI)$-separability and projective $(\ALCI,\ALCI)$-definability are \NExpTime-complete.
\end{enumerate}
This holds also for RE-existence and entity distinguishability, for
all FO-fragments $\Lmc_{S}\supseteq \text{CQ}$.
\end{restatable}
\begin{proof} \ It remains to prove the complexity result. The
  \NExpTime upper bound follows from the fact that rooted
  UCQ-evaluation on $\ALCI$-KBs is in
  \coNExpTime~\cite{LutzDL07,Lutz-IJCAR08}. For the lower bound we
  first consider entity distinguishability. It is shown in
  \cite{LutzDL07,Lutz-IJCAR08} that unary rooted CQ-evaluation on
  $\ALCI$-KBs with a single constant is co\NExpTime-hard in combined
  complexity. Then the proof of Corollary~\ref{cor:rel} shows that
  projective $(\ALCI,\ALCI)$-entity distinguishability is
  \NExpTime-hard. It remains to prove \NExpTime-hardness of projective
  $(\ALCI,\ALCI)$-RE existence, as then hardness of projective
  $(\ALCI,\ALCI)$-definability follows. We show this by a polynomial
  time reduction of projective $(\ALCI,\ALCI)$-entity
  distinguishability. Assume a labeled KB $(\Kmc,\{a\},\{b\})$ with
  $\Kmc=(\Omc,\Dmc)$ is given. We may assume that
  $\text{cons}(\Dmc_{\text{con}(a)})\cap
  \text{cons}(\Dmc_{\text{con}(b)})=\emptyset$. (To show this, let $\Dmc'$ be a copy of $\Dmc$ using a set of constants disjoint from $\text{cons}(\Dmc)$. 
  We show that $(\Omc,\Dmc) \models C(a)$ iff $(\Omc,\Dmc\cup \Dmc')\models C(a)$, for every $\mathcal{ALCI}$-concept $C$. The implication from left to right follows from $\Dmc\subseteq \Dmc\cup \Dmc'$ and the converse implication can be proved by extending any model $\Amf$ of $\Kmc$ to a model of $(\Omc,\Dmc\cup \Dmc')$ 
  by setting ${c'}^{\Amf}=c^{\Amf}$ for the copy $c'$ of $c\in \text{cons}(\Dmc)$ in $\Dmc'$. Now the claim follows by replacing $\Dmc$ by $\Dmc\cup \Dmc'$, with $b$ and $b'$ swapped.) Let
  $\Dmc'=\Dmc\cup \{A(a),A(b)\}$ for a fresh concept name $A$ and let
  $N=\text{cons}(\Dmc)\setminus \{a\}$. Then $(\Kmc,\{a\},\{b\})$ is
  projectively $(\ALCI,\ALCI)$-separable iff $(\Kmc,\{a\},\{b\})$ is
  UCQ-separable iff $((\Omc,\Dmc'),\{a\},N)$ is UCQ-separable iff
  $((\Omc,\Dmc'),\{a\},N)$ is projectively $(\ALCI,\ALCI)$-separable,
  as required for the reduction.
\end{proof}
We next determine the data complexity of projective
$(\ALCI,\ALCI)$-separability and projective
$(\ALCI,\ALCI)$-definability. Quite remarkably, it turns out to be
identical to the combined complexity, namely \NExpTime-complete.
\begin{restatable}{theorem}{thmdatacomplexity2}
	\label{thm:correctedresults2}
	Projective $(\ALCI,\ALCI)$-separability and projective $(\ALCI,\ALCI)$-definability are \NExpTime-complete in data complexity.
\end{restatable}	
\begin{proof} \
	The upper bound follows from the \NExpTime upper bound for the
        same problems in combined complexity. For the lower bound for
        separability, it follows from the proof of
        Corollary~\ref{cor:rel} that is suffices to construct an
        $\ALCI$-ontology~$\Omc$ such that it is \coNExpTime-hard to
        decide unary rooted UCQ-evaluation on KBs with
        ontology~$\Omc$. The construction of such an ontology $\Omc$
       and in particular of the rooted UCQs that
       demonstrate \coNExpTime-hardness
       is quite tedious, we present details
        in the appendix. To show the lower bound for definability, we show that for a fixed $\ALCI$-ontology~$\Omc$, projective $\ALCI$-separability of labeled KBs with ontology $\Omc$
can be reduced in polynomial time to projective $\ALCI$-definability
for KBs with ontology $\Omc$. We use the same technique as in the reduction of entity distinguishability in the proof of Theorem~\ref{thm:correctedresults} above. Assume a labeled $\ALCI$-KB $(\Kmc,P,N)$ with $\Kmc=(\Omc,\Dmc)$ is given. We may assume that $\text{cons}(\Dmc_{\text{con}(a)})\cap \text{cons}(\Dmc_{\text{con}(b)})=\emptyset$, for any distinct $a,b\in P\cup N$ (see the proof of Theorem~\ref{thm:correctedresults}). Let $\Dmc'=\Dmc\cup \{A(a) \mid a\in P\cup N\}$ for a fresh concept name $A$ and let $N'=\text{cons}(\Dmc)\setminus P$. Then $(\Kmc,P,N)$ is projectively $(\ALCI,\ALCI)$-separable iff $(\Kmc,P,N)$ is UCQ-separable iff $((\Omc,\Dmc'),P,N')$ is UCQ-separable iff $((\Omc,\Dmc'),P,N')$ is projectively $(\ALCI,\ALCI)$-separable, as required for the reduction.
\end{proof}
It remains open whether the \NExpTime lower bound of Theorem~\ref{thm:correctedresults2} also holds for RE-existence and entity distinguishability.

\bigskip

We now turn to non-projective
separability.  We first observe that projective and non-projective
separability are indeed different. We use a simplified variant of 
Example~\ref{exm:new} we are going to revisit throughout this article.
\begin{exmp}\label{exm:11}
{\em Let $\Kmc_{3}=(\Omc_{3},\Dmc_{3})$ be the \ALCI-KB where $\Omc_{3} = \{ \top \sqsubseteq \exists R.\top \sqcap
	\exists R^{-}.\top \}$ and $\Dmc_{3}= \{R(a,a), R(b,c),R(c,b)\}$.
%
%
%
%
	Further let $P=\{a\}$ and $N=\{b\}$. Then the \ALCI-concept $A
	\rightarrow \exists R.A$ separates $(\Kmc_{3},P,N)$ using the concept name
	$A$ as a helper symbol. Thus $(\Kmc_{3},P,N)$ is projectively
	\ALCI-separable. Projective \ALCI-separability can also be shown 
	using Theorem~\ref{thm:L-modeltheory0} by observing that the structure $\Amf$ corresponding to $\Dmc_{3}$ is a model of $\Omc_{3}$ and that ${\Dmc_{3}}_{\text{con}(a)},a\not\rightarrow \Amf,b^{\Amf}$ as $(b,b)\not\in R^{\Amf}$.
	
	In contrast, $(\Kmc_{3},P,N)$ is not non-projectively
	\ALCI-separable. In fact, every \ALCI-concept $C$ with
	$\mn{sig}(C)=\{R\}$ is equivalent to $\top$ or to $\bot$
	w.r.t. \Omc. Thus if $\Kmc_{3}\models C(a)$, then $\Omc_{3}\models C \equiv
	\top$, and so $\Kmc_{3}\models C(b)$.
	
	It is also instructive to consider an ontology that is weaker
	than $\Omc_{3}$. For example, let
	$\Kmc_{4}=(\emptyset,\Dmc_{3})$. Then $(\Kmc_{4},P,N)$ is
	non-projectively $\ALCI$-separable. This is witnessed, for
	example, by the concept $C=\exists R.\forall R.\bot
	\rightarrow \exists R.\exists R.\forall R.\bot$ (which is
	obtained from $A \rightarrow \exists R.A$ by replacing $A$ by
	$\exists R.\forall R.\bot$) since $\Kmc_{4}\models C(a)$ but
	$\Kmc_{4}\not\models C(b)$. \finofex}
	%
\end{exmp}
Of course, Example~\ref{exm:11} implies that an analogue of Point~1 of
Theorem~\ref{thm:correctedresults} fails for non-projective
separability.  In fact, the labeled \ALCI-KB $(\Kmc_{3},P,N)$ in
Example~\ref{exm:11} is not non-projectively \ALCI-separable but is
separated by the CQ $R(x,x)$.


We next aim to characterize non-projective $(\ALCI,\ALCI)$-separability in the style
of Point~3 of Theorem~\ref{critFOwithoutUNA}. We start with noting
that the ontology $\Omc_{3}$ used in Example~\ref{exm:11} is very strong 
and enforces that all elements of all models of $\Omc_{3}$ are
$\ALCI(\mn{sig}(\Kmc_{3}))$-equivalent to each other. For ontologies that make
such strong statements, symbols from outside of $\mn{sig}(\Kmc)$ might
be required to construct a separating concept. It turns out that this
is in fact the only effect that distinguishes non-projective from projective
separability.
We next make this
precise.

For a KB $\Kmc$, we use $\mn{cl}(\Kmc)$
to denote the set of concepts in $\Kmc$ and the concepts $\exists
R.\top$ and $\exists R^- . \top$ for all role names $R\in {\sf
	sig}(\Kmc)$, closed under subconcepts and single negation.  A
\emph{$\Kmc$-type} is a set $t\subseteq \mn{cl}(\Kmc)$ such that there
exists a model $\Amf$ of $\Kmc$ and an $a\in \text{dom}(\Amf)$ with
$\text{tp}_{\Kmc}(\Amf,a)=t$ where
$$
\text{tp}_{\Kmc}(\Amf,a)= \{ C\in \mn{cl}(\Kmc) \mid a\in C^{\Amf}\}
$$
is the \emph{$\Kmc$-type of $a$ in $\Amf$}. Conversely, a \Kmc-type $t$ is \emph{realizable in $\Kmc,b$},
where $\Kmc=(\Omc,\Dmc)$ and $b \in \text{cons}(\Dmc)$, if there
exists a model $\Amf$ of $\Kmc$ such that
$\text{tp}_{\Kmc}(\Amf,b^{\Amf})=t$. We often identify a $\Kmc$-type with the conjunction over all its concepts and thus write, for example, $\exists R.t$ for the concept $\exists R.(\bigsqcap_{C\in t}C)$. 

Finally, we need the notion of \emph{complete} types which is similar
in spirit to the notion of a complete theory in classical logic
\cite{modeltheory}.
\begin{definition}
  A $\Kmc$-type $t$ is \emph{$\ALCI$-complete for $\Kmc$} if for any
  two pointed models $\Amf_{1},b_{1}$ and $\Amf_{2},b_{2}$ of $\Kmc$,
  $t=\text{tp}_{\Kmc}(\Amf_{1},b_{1})=
  \text{tp}_{\Kmc}(\Amf_{2},b_{2})$ implies $ \Amf_{1},b_{1}
  \equiv_{\ALCI,\mn{sig}(\Kmc)} \Amf_{2},b_{2}$.  
\end{definition}
We next illustrate the basic notions. 
\begin{exmp}\label{exm:completion}
{\em 	(1) In the KB $\Kmc_{3}$ from Example~\ref{exm:11}, there is only a single $\Kmc_{3}$-type, $t_{0}=\{\top, \exists R.\top,\exists R^{-}.\top,\exists R.\top\sqcap\exists R^{-}.\top\}$, and
	this type is \ALCI-complete for $\Kmc_{3}$.
	
	(2) In the KB $\Kmc_{4}$ from Example~\ref{exm:11}, there are
	four different $\Kmc_{4}$-types, determined by any combination of (negated) $\exists R.\top$ and $\exists R^{-}.\top$. The type $t_{1}=\{\top, \neg \exists R.\top, \neg \exists R^{-}.\top,\neg(\exists R.\top\sqcap\exists R^{-}.\top)\}$ is the only type that is  $\ALCI$-complete for $\Kmc_{4}$.
	
	(3) Let $\Kmc$ be any KB not using any role names. Then every $\Kmc$-type is $\ALCI$-complete for $\Kmc$.
	
	(4) Let $\Dmc$ be a database and $\Sigma=\text{sig}(\Dmc)$. Let $\Omc_\Dmc$ be the ontology that
	contains all $\ALCI(\Sigma)$-CIs that are
	true in the structure $\Amf_{\Dmc}$ defined by $\Dmc$. This ontology is infinite, but
	easily seen to be logically equivalent to a finite
	ontology: define an equivalence relation $\sim$ on $\text{cons}(\Dmc)$ by setting $a\sim b$ if $\Amf_{\Dmc},a \sim_{\ALCI,\Sigma}\Amf_{\Dmc},b$, and denote by $[a]$ the equivalence class of $a$. Then we find $\ALCI$-concepts $C_{[a]}$, $a\in \text{cons}(\Amf)$, such that $a\in C_{[a]}^{\Amf_{\Dmc}}$ but $b\not\in C_{[a]}^{\Amf_{\Dmc}}$ for any $b\not\sim a$. We may assume that $A$ or its negation are a conjunct of $C_{[a]}$, for any concept name $A\in \Sigma$. Then $\Omc_{\Dmc}$ is axiomatized by taking the CI $\top \sqsubseteq \bigsqcup_{a\in \text{cons}(\Dmc)}C_{[a]}$, every CI $C_{[a]} \sqsubseteq \exists R.C_{[b]}$ with $b\in R[a]$, and every CI $C_{[a]}\sqsubseteq \forall R.\bigsqcup_{b\in R[a]} C_{[b]}$, where $R[a]$ is the set of $b$ such that $R(c,b)\in \Dmc$ for some $c\in [a]$. Let
	$\Kmc_{\Dmc}=(\Omc_{\Dmc},\Dmc)$. Then every $\Kmc_{\Dmc}$-type is \ALCI-complete for $\Kmc_{\Dmc}$. Intuitively, if we assume that the database $\Dmc$
	is complete in the sense that any ground atom $R(\vec{a})$ with $R\in \Sigma$ that is not in $\Dmc$ is false then $\Omc_{\Dmc}$ is the logically strongest $\ALCI$-ontology representing this assumption. \finofex
}
\end{exmp}
Observe that the types under Points~(2) and~(3) are $\ALCI$-complete simply because nodes that satisfy them cannot be connected to any other node via roles in $\text{sig}(\Kmc)$ in any model of the respective KBs. We say that a $\Kmc$-type
$t$ is \emph{disconnected} if $\neg \exists R . \top \in t$ for all $\text{sig}(\Kmc)$-roles $R$ and \emph{connected} otherwise. 
The $\Kmc$-types under Points~(2) and~(3) are disconnected.
We are now in a position to formulate the characterization of
non-projective $(\ALCI,\ALCI)$-separability. 
\begin{restatable}{theorem}{thmchartwelve}
	\label{thm:char12}
	Assume that $(\Kmc,P,\{b\})$ is a labeled $\ALCI$-KB with 
	$\Kmc=(\Omc,\Dmc)$
	and $\Sigma=\text{sig}(\Kmc)$. Then the following
	conditions are equivalent:
	\begin{enumerate}
		
		\item $(\Kmc,P,\{b\})$ is non-projectively
		$\ALCI$-separable;
		\item there exists a finite model $\Amf$ of $\Kmc$ of such that for all models $\Bmf$ of
		$\Kmc$ and all $a\in P$: $\Bmf,a^{\Bmf}
		\not\sim_{\ALCI,\Sigma} \Amf, b^{\Amf}$;
		\item there exists a finite model $\Amf$ of $\Kmc$ such that
	for all $a\in P$:
	\begin{enumerate}
		\item $\Dmc_{\text{con}(a)},a\not\rightarrow \Amf,b^{\Amf}$ and
		\item if $\text{tp}_{\Kmc}(\Amf,b^{\Amf})$ is connected and
		$\ALCI$-complete for $\Kmc$, then $\text{tp}_{\Kmc}(\Amf,b^{\Amf})$ is not
		realizable in $\Kmc,a$.
	\end{enumerate}
\end{enumerate}
\end{restatable}
Before proving Theorem~\ref{thm:char12} we make a few observations and illustrate the result by examples. Note that Point~2 of Theorem~\ref{thm:char12} coincides with Point~2 of Theorem~\ref{thm:L-modeltheory0} characterizing projective $(\ALCI,\ALCI)$-separability except that $\Sigma_{\text{help}}=\emptyset$. This should be intuitively clear and the
proof of ``1. $\Leftrightarrow$ 2.'' here is essentially the same as the proof in Theorem~\ref{thm:L-modeltheory0}. Point~3 (a) of Theorem~\ref{thm:char12} coincides with Point~4 of Theorem~\ref{thm:L-modeltheory0}. It follows that Point~3 strengthens Point~4 of Theorem~\ref{thm:L-modeltheory0} by (b).
The following example illustrates Point~3 (b).
\begin{exmp}
 {\em Consider the KB $\Kmc_{3}$ from Example~\ref{exm:11}. We know that $(\Kmc_{3},\{a\},\{b\})$ is not \ALCI-separable and aim to confirm this using Point~3 of Theorem~\ref{thm:char12}. Let $\Amf$ be any finite model of $\Kmc_{3}$ and assume that ${\Dmc_{3}}_{\text{con}(a)},a\not\rightarrow \Amf,b^{\Amf}$.
  As $t_{0}= \{\top, \exists R.\top,\exists R^{-}.\top,\exists R.\top\sqcap\exists R^{-}.\top\}$ is the only $\Kmc_{3}$-type we have that $\text{tp}_{\Kmc}(\Amf,b^{\Amf})=t_{0}$.
  Thus, $\text{tp}_{\Kmc_{3}}(\Amf,b^{\Amf})$ is connected and
  $\ALCI$-complete for $\Kmc_{3}$. But then $\text{tp}_{\Kmc_{3}}(\Amf,b^{\Amf})$ is also realizable in $\Kmc,a$, and so (b) is refuted for $\Amf$.
  
  Consider the KB $\Kmc_{4}$ from Example~\ref{exm:11}. We know that $(\Kmc_{4},\{a\},\{b\})$ is \ALCI-separable and confirm this using Point~3 of Theorem~\ref{thm:char12}. Let $\Amf$ be the structure defined by $\Dmc_{3}$. Then (a) holds since ${\Dmc_{3}}_{\text{con}(a)},a\not\rightarrow \Amf,b^{\Amf}$ which follows from $\Amf\not\models R(b,b)$. Condition~(b) also holds since
  $\text{tp}_{\Kmc_{4}}(\Amf,b^{\Amf})= t_{0}$ is not $\ALCI$-complete for $\Kmc_{4}$.\finofex
}
\end{exmp}

{\bf Proof of Theorem~\ref{thm:char12}.}
``1. $\Rightarrow$ 2.'' Let
$C$ be an $\mathcal{ALCI}(\Sigma)$-concept such that $\Kmc\models C(a)$ for all $a\in P$ but $\Kmc\not\models C(b)$. By the finite model property of $\mathcal{ALCI}$ (Section~\ref{sec:prelim}) there exists a finite model $\Amf$ of $\Kmc$ such that $b^{\Amf}\not\in C^{\Amf}$. Then for all models $\Bmf$ of
$\Kmc$ and all $a\in P$: $\Bmf,a^{\Bmf}
\not\equiv_{\ALCI,\Sigma} \Amf, b^{\Amf}$ since otherwise $\Kmc\not\models C(a)$ for some $a\in P$. But then, by Lemma~\ref{lem:equivalence}, $\Bmf,a^{\Bmf}
\not\sim_{\ALCI,\Sigma} \Amf, b^{\Amf}$ for all $a\in P$, as required.

``2. $\Rightarrow$ 1.'' This is a special case of the proof of ``2. $\Rightarrow$ 1.'' for Theorem~\ref{thm:L-modeltheory0}. 

``2. $\Rightarrow$ 3.'' Assume that $\Amf$ is a finite model of $\Kmc$ such that
for all models $\Bmf$ of
$\Kmc$ and all $a\in P$: $\Bmf,a^{\Bmf}
\not\sim_{\ALCI,\Sigma} \Amf, b^{\Amf}$. We know already from the proof of
Theorem~\ref{thm:L-modeltheory0} that $\Dmc_{\text{con}(a)},a\not\rightarrow \Amf,b^{\Amf}$ for all $a\in P$. We show that (b) holds as well.
Assume for a proof by contradiction that there is $a\in P$ such that $\text{tp}_{\Kmc}(\Amf,b^{\Amf})$ is connected,
$\ALCI$-complete for $\Kmc$, and realizable in $\Kmc,a$. Then take a model $\Bmf$ of $\Kmc$ such that
$\text{tp}_{\Kmc}(\Bmf,a^{\Bmf})=\text{tp}_{\Kmc}(\Amf,b^{\Amf})$.
Then $\Bmf,a^{\Bmf}\equiv_{\ALCI,\Sigma} \Amf,b^{\Amf}$ since $\text{tp}_{\Kmc}(\Amf,b^{\Amf})$ is $\ALCI$-complete for $\Kmc$.
But then $\Bmf,a^{\Bmf}\sim_{\ALCI,\Sigma} \Amf,b^{\Amf}$, by Lemma~\ref{lem:equivalence},
and we have derived a contradiction. 

``3. $\Rightarrow$ 2.'' 
Assume that Condition~3 holds for $\Amf$. 
If $\text{tp}_{\Kmc}(\Amf,b^{\Amf})$ is connected and $\ALCI$-complete
for $\Kmc$, then by (b) $\text{tp}_{\Kmc}(\Amf,b^{\Amf})$ is not
realizable in $\Kmc,a$ for any $a\in P$. Thus, there is no model $\Bmf$ of $\Kmc$ with 
$\Bmf,a^{\Bmf}\equiv_{\ALCI,\Sigma} \Amf,b^{\Amf}$, for any $a\in P$.
By Lemma~\ref{lem:equivalence}, there is no model $\Bmf$ of $\Kmc$ such that 
$\Bmf,a^{\Bmf}\sim_{\ALCI,\Sigma} \Amf,b^{\Amf}$, for any $a\in P$, as required.
Note that $\neg \text{tp}_{\Kmc}(\Amf,b^{\Amf})$ separates
$(\Kmc,P,\{b\})$ in this case.

If $\text{tp}_{\Kmc}(\Amf,b^{\Amf})$ is disconnected, then it follows from
$\Dmc_{\text{con}(a)},a \not\rightarrow \Amf,b^{\Amf}$ that either there exists $A$ with $A(a)\in \Dmc$
and $b^{\Amf}\not\in A^{\Amf}$ or there exists $R$ with $R(a,c)\in \Dmc$ for some $c$. In both cases
$\text{tp}_{\Kmc}(\Amf,b^{\Amf})$ is not realizable in $\Kmc,a$, and the proof continues as above.

Assume now that $\text{tp}_{\Kmc}(\Amf,b^{\Amf})$ is connected and not $\ALCI$-complete for $\Kmc$. For a model $\Cmf$ of $\Kmc$ and $\ell\geq 0$ we denote by 
$\Cmf_{\Dmc,b}^{\leq \ell}$ the substructure of $\Cmf$ induced by all nodes reachable 
from some $c^{\Cmf}$, $c\in \text{cons}(\Dmc_{\text{con}(b)})$, in at most $\ell$ steps.
We construct for any $\ell\geq 0$ a finite model $\Cmf$ of $\Kmc$ such that
\begin{itemize}
	\item[(i)] $\Cmf_{\Dmc,b}^{\leq \ell},b^{\Cmf} \rightarrow \Amf,b^{\Amf}$;
	\item[(ii)] for any two distinct $d_{1},d_{2}\in \text{dom}(\Cmf_{\Dmc,b}^{\leq \ell})$: 
	$\Cmf,d_{1}\not\sim_{\ALCI,\Sigma} \Cmf,d_{2}$.
\end{itemize}
We first show that the implication ``3. $\Rightarrow$ 2.'' is proved if such a $\Cmf$ can be constructed.

\medskip
\noindent
\emph{Claim 1}. If (i) and (ii) hold for some $\ell \geq |\Dmc|$ for $\Cmf$ and $\Dmc_{\text{con}(a)},a \not\rightarrow \Amf,b^{\Amf}$ for all $a\in P$,
then Condition~2 holds for $\Cmf$.

\medskip
\textit{Proof of Claim~1.} The proof of Claim~1 is indirect. Assume that Condition~2 does not hold for $\Cmf$, that (i) holds for some $\ell \geq |\Dmc|$, and $\Dmc_{\text{con}(a)},a \not\rightarrow \Amf,b^{\Amf}$ for all $a\in P$. We show that (ii) does not hold for $\ell$ and $\Cmf$. As we assume that Condition~2 does not hold for $\Cmf$, there exists
$a\in P$ and a model $\Bmf$ of $\Kmc$ and a bisimulation $S$ witnessing 
$\Bmf,a^{\Bmf}\sim_{\ALCI,\Sigma}\Cmf,b^{\Cmf}$. 
As there is no homomorphism from $\Dmc_{\text{con}(a)}$ to $\Amf$ 
mapping $a$ to $b^{\Amf}$, by Condition~(i) there is no homomorphism from 
$\Dmc_{\text{con}(a)}$ to $\Cmf_{\Dmc,b}^{\leq \ell}$ mapping $a$ to $b^{\Cmf}$ (since the composition of homomorphisms is a homomorphism). As the bisimulation $S$ induces a relation between $\text{cons}(\Dmc_{\text{con}(a)})$ and $\text{dom}(\Cmf_{\Dmc,b}^{\leq \ell})$ with domain $\text{cons}(\Dmc_{\text{con}(a)})$ that is a homomorphism if it is functional, there exist $e\in\text{cons}(\Dmc_{\text{con}(a)})$ and $d_{1},d_{2}\in \text{dom}(\Cmf_{\Dmc,b}^{\leq \ell})$ with 
$d_{1}\not=d_{2}$ and $(e,d_{1}),(e,d_{2})\in S$. Then $\Cmf,d_{1}
\sim_{\ALCI,\Sigma} \Cmf,d_{2}$ (since the composition  of
bisimulations is a bisimulation). Hence Condition~(ii) does not hold,
as required. This finishes the proof of Claim~1.

\medskip
We come to the construction of the model $\Cmf$ of $\Kmc$ satisfying (i) and (ii). It is illustrated in Figure~\ref{fig:ALCIsigma}.
\begin{figure*}
	\begin{center}
		

		\caption{Construction of $\mathfrak{C}$.} \label{fig:ALCIsigma}
	\end{center}
\end{figure*}
To construct $\Cmf$ we first provide a more constructive characterization of when a $\Kmc$-type $t$ is not $\ALCI$-complete for $\Kmc$. By definition, this means that there are non-$\mathcal{ALCI}(\Sigma)$-bisimilar pointed models of $\Kmc$ realizing $t$. We equivalently rephrase this in terms of the existence of certain paths realizing types. Let $R$ be a role. 
We say that $\Kmc$-types $t_{1}$ and $t_{2}$ are \emph{$R$-coherent} if there exists a
model $\Amf$ of $\Kmc$ and nodes $d_{1}$ and $d_{2}$ realizing $t_{1}$
and $t_{2}$, respectively, such that $(d_{1},d_{2})\in R^{\Amf}$. We
write $t_1\rightsquigarrow_R t_2$ in this case.
A sequence 
\begin{equation*}
	\sigma=t_{0}R_{0}\ldots R_{n}t_{n+1}
\end{equation*}
of $\Kmc$-types $t_{0},\ldots,t_{n+1}$ and $\Sigma$-roles 
$R_{0},\ldots, R_{n}$ \emph{witnesses $\ALCI$-incompleteness of a $\Kmc$-type $t$} 
if $t=t_{0}$, $n\geq 1$, and 
\begin{itemize}
	\item $t_i\rightsquigarrow_{R_{i}} t_{i+1}$ for $i\leq n$;
	\item there exists a model $\Amf$ of $\Kmc$ and nodes 
	$d_{n-1},d_{n}\in \text{dom}(\Amf)$ with $(d_{n-1},d_{n})\in R_{n-1}^{\Amf}$ such that $d_{n-1}$ and $d_{n}$ realize $t_{n-1}$ and $t_{n}$ in $\Amf$,
	respectively, and there does not exist $d_{n+1}$ in $\Amf$ realizing $t_{n+1}$ with $(d_{n},d_{n+1})\in R_{n}^{\Amf}$.
\end{itemize}
\emph{Claim 2}. 
	The following conditions are equivalent, for any $\Kmc$-type $t$:
	\begin{enumerate}
		\item $t$ is not $\ALCI$-complete for $\Kmc$;	
		\item there is a sequence of length not exceeding $2^{||\Omc||}+2$ witnessing $\ALCI$-incompleteness of $t$.
	\end{enumerate}
It is in \ExpTime to decide whether a $\Kmc$-type $t$ is $\ALCI$-complete for $\Kmc$.

\medskip
Claim~2 is proved in the appendix. The intuition behind its proof is as follows. One can define a `maximal model' of $\Kmc$ realizing $t$ by taking a node `realizing' $t$ and then, inductively, for every role $R$ and $\Kmc$-type $t'$ with $t\rightsquigarrow_{R} t'$ an $R$-successor `realizing' $t'$. Then one continues with those $R$-successors in the same way, and so on. This model is maximal in the sense that for every node realizing a $\Kmc$-type every  $\Kmc$-type for which this is logically possible is realized in an $R$-successor. As not every model of $\Kmc$ realizing $t$ is bisimilar to the maximal one just constructed, one obtains a sequence witnessing $\ALCI$-incompleteness of $t$ for $\Kmc$. A pumping argument bounds its length.

We return to the construction of $\Cmf$. By Claim~2 we can take a sequence $\sigma=t_{0}^{\sigma}R_{0}^{\sigma}\ldots R_{m_{\sigma}}^{\sigma}t_{m_{\sigma}+1}^{\sigma}$ 
	that witnesses $\ALCI$-incompleteness of
	$t_{0}^{\sigma}:=\text{tp}_{\Kmc}(\Amf,b^{\Amf})$ for $\Kmc$, where $1\leq m_{\sigma}\leq L_{\Omc}:=2^{||\Omc||}+1$. 
	Note that there exists $d\in \text{dom}(\Amf)$ such that $(b^{\Amf},d)\in (R_{0}^{\sigma})^{\Amf}$, 
	since $\exists R^\sigma_0.\top \in t^\sigma_0$.
	
	We unfold $\Amf$ to a model $\Amf_{\Dmc,b}^{\leq \ell}$ as follows: the domain of $\Amf_{\Dmc,b}^{\leq \ell}$ is the
	set of sequences $c^{\Amf}R_{1}d_{1}\cdots R_{n}d_{n}$, $0\leq n \leq \ell$, with $c\in \text{cons}(\Dmc_{\text{con}(b)})$, $(c^{\Amf},d_{1})\in R_{1}^{\Amf}$ and $(d_{i},d_{i+1})\in R_{i+1}^{\Amf}$ 
        for all $i<n$, together with a disjoint copy $\Bmf$ of $\Amf$ satisfying $(\Omc,\Dmc\setminus \Dmc_{\text{con}(b)})$. On that disjoint copy all symbols except the constants in 
        $\Dmc_{\text{con}(b)}$ are defined as before. This part of
    $\Amf_{\Dmc,b}^{\leq \ell}$ is not relevant at all in what follows; it is only needed to obtain a model of $(\Omc,\Dmc\setminus \Dmc_{\text{con}(b)})$. Next we set $w d \in A^{\Amf_{\Dmc,b}^{\leq \ell}}$ if $d\in A^{\Amf}$ for all $w d\in \text{dom}(\Amf_{\Dmc,b}^{\leq \ell})$, $(w,wRd)\in R^{\Amf_{\Dmc,b}^{\leq \ell}}$ for all $wRd\in \text{dom}(\Amf_{\Dmc,b}^{\leq \ell})$, $(c_{1}^{\Amf},c_{2}^{\Amf})\in R^{\Amf_{\Dmc,b}^{\leq \ell}}$ if 
		$(c_{1}^{\Amf},c_{2}^{\Amf})\in R^{\Amf}$ for all $c_{1},c_{2}\in \text{cons}(\Dmc_{\text{con}(b)})$, and $c^{\Amf_{\Dmc,b}^{\leq \ell}}=c^{\Amf}$ for all $c\in \text{cons}(\Dmc_{\text{con}(b)})$.
		Note that in the tree-shaped models $\Amf_{c}$ hooked to $c^{\Amf}$, $c\in \text{cons}(\Dmc_{\text{con}(b)})$, all nodes of any depth $k< \ell$ 
	    have an $R$-successor in $\Amf_{c}$ of depth $k+1$, for some $R\in \mn{sig}(\Kmc)$. In Figure~\ref{fig:ALCIsigma}, $\Amf_{\Dmc,b}^{\leq \ell}$
	    is represented by the circle in the middle.
    	Denote by $L$ the set of all leaf nodes in $\Amf_{\Dmc,b}^{\leq \ell}$, that is to say, all nodes that have depth exactly $\ell$ in some $\Amf_{c}$, $c\in \text{cons}(\Dmc_{\text{con}(b)})$. In Figure~\ref{fig:ALCIsigma}, these are $d,d',d''$.
	
	We obtain $\Cmf$ by attaching to every $d\in L$ a tree-shaped model $\mathfrak{F}_{d}$ such that in the resulting model
	no node in $L$ is $\ALCI(\Sigma)$-bisimilar to any other node
	in $\Amf_{\Dmc,b}^{\leq \ell}$. It then directly follows that $\Cmf$
	satisfies Conditions~(i) and (ii). 

    Take for any $d\in L$ a number $N_{d}>|\Dmc|+2\ell+2(L_{\Omc} +1)$ 
	such that $|N_{d}-N_{d'}|>2(L_{\Omc}+1)$ for $d\not=d'$. Now fix $d\in L$ and let $t_{0}=\text{tp}_{\Kmc}(\Amf,d)$.
	By first walking from $d$ to $b^{\Amf}$ in $\Amf_{\Dmc,b}^{\leq \ell}$ and then using that $\text{tp}_{\Kmc}(\Amf,b^{\Amf})$
        is not $\ALCI$-complete for $\Kmc$ we find a sequence
	$$
        t_{0}R_{0}\cdots R_{n_{d}}t_{n_{d}+1}
        $$
        that witnesses $\mathcal{ALCI}$-incompleteness of $t_{0}$ for $\Kmc$ and ends with the tail 
        $t_{m_{\sigma}-1}^{\sigma}R_{m_{\sigma}-1}^{\sigma}t_{m_{\sigma}}^{\sigma}R_{m_{\sigma}}^{\sigma}t_{m_{\sigma}+1}^{\sigma}$
        of the sequence $\sigma$ witnessing $\ALCI$-incompleteness of  
        $\text{tp}_{\Kmc}(\Amf,b^{\Amf})$ for $\Kmc$. Using a straightforward pumping argument as in the proof of Claim~2 
	we may assume that $n_{d}\leq L_{\Omc}$. We define the tree-shaped model $\mathfrak{F}_{d}$ in such a way that we have a node $c_{0}\in \text{dom}(\mathfrak{F}_{d})$ 
        such that in the model $\Cmf$ we have 
        $$
        c_{0} \in (\forall \Sigma^{N_{d}}.D)^{\Cmf} \subseteq \text{dom}(\mathfrak{F}_{d}), \quad  D=\exists \Sigma^{L_{\Omc}}.(t_{n_{d}} \sqcap \neg \exists R_{n_{d}}.t_{n_{d}+1})
	$$
        where $\exists \Sigma^{k}.C$ stands for the disjunction of all $\exists \rho.C$
	with $\rho$ a path $R_{1}\cdots R_{m}$ of $\Sigma$-roles $R_{1},\ldots,R_{m}$ and $m\leq k$, and $\forall \Sigma^{k}.C=\neg \exists \Sigma^{k}.\neg C$.
        Thus, we aim to achieve that $D$ holds in the $N_{d}$-neighbourhood of $c_{0}$, and that this does not hold
        for any other node in $\Cmf$ outside $\mathfrak{F}_{d}$. Clearly then we are done as Condition~(ii) now follows from the fact that there exists a path from $d$ to a node
	satisfying $\forall \Sigma^{N_{d}}.D$ that is shorter than any such path in $\Cmf$ from any other node in $\Amf_{\Dmc,b}^{\leq \ell}$.
	        Observe that $(t_{n_{d}} \sqcap \neg \exists R_{n_{d}}.t_{n_{d}+1})$ 
        is satisfiable in a model of $\Kmc$ because $t_{n_{d}} R_{n_{d}}t_{n_{d}+1}$
        is the tail of a sequence witnessing $\ALCI$-incompleteness for $\Kmc$.   
	
        To construct $\Fmf_{d}$ consider the `almost maximal' model (introduced in the argument for Claim 2) $\Umf_{c_{0}}$ of
	$\Omc$ whose root $c_{0}$ realizes $t_{n_{d}}$ such that if a node $e\in \text{dom}(\Umf_{c_{0}})$ 
	realizes any $\Kmc$-type $t$ and is of depth $k\geq 0$, then for every $\Kmc$-type $t'$
	with $t\rightsquigarrow_{R} t'$ for some $\Sigma$-role $R$ there exists $e'$ realizing $t'$ of depth $k+1$ with 
	$(e,e')\in R^{\Umf_{c_0}}$, \emph{except if} $k\leq N_{d}+L_{\Omc}+1$, 
	$t=t_{n_{d}}$, $R=R_{n_{d}+1}$, and $t'=t_{n_{d}+1}$. $\Umf_{c_{0}}$ with root $c_{0}$ is depicted on the left hand side of Figure~\ref{fig:ALCIsigma}.
	Observe that $\Umf_{c_{0}}$ is a model of $\Kmc$ because $t_{n_{d}} R_{n_{d}}t_{n_{d}+1}$ is the tail of a sequence witnessing $\ALCI$-incompleteness for $\Kmc$.
        Also observe that $D$ is true in the $N_{d}$-neighbourhood of $c_{0}$ and false outside the $N_{d}+2(L_{\Omc}+1)$-neighbourhood of $c_{0}$, formally:
	\begin{itemize}
		\item $e\in D^{\Umf_{c_0}}$ for all $e$ with $\text{dist}_{\Umf_{c_{0}}}(c_{0},e)\leq N_{d}$;
		\item $e\not\in D^{\Umf_{c_0}}$ for all $e$ with $\text{dist}_{\Umf_{c_{0}}}(c_{0},e)> N_{d}+2(L_{\Omc}+1)$.
	\end{itemize}
	
	%
	We next need to attach $\Umf_{c_{0}}$ to $\Amf_{\Dmc,b}^{\leq \ell}$ at $d$ (and rename it to $\mathfrak{F}_{d}$). For this again the sequence 
        $t_{0}R_{0}\cdots R_{n_{d}}t_{n_{d}+1}$ is crucial. Now, however, we are not interested in its tail but in the initial part $t_{0}R_{0}\cdots t_{n_{d}-1}R_{n_{d}-1}t_{n_{d}}$.
        Starting from $t_{n_{d}}$ realized in $c_{0}$ in $\Umf_{c_{0}}$ we can follow $R_{n_{d}-1}^{-}$ to $t_{n_{d}-1}$, then follow $R_{n_{d}-1}$ to $t_{n_{d}}$, and then follow            
        again $R_{n_{d}-1}^{-}$ to $t_{n_{d}-1}$, and so on. In this way, we can go back and force $N_{d}$ times between $t_{n_{d}}$ and $t_{n_{d}-1}$ 
        and then follow the inverse of the sequence $t_{0}R_{0}\cdots t_{n_{d}-1}R_{n_{d}-1}t_{n_{d}}$ after $2N_{d}$ steps. In other words, 
        $\Umf_{c_0}$ contains a path 
	$
	e_{0},\ldots,e_{n_{d}}\ldots,e_{n_{d}+2N_{d}}=c_{0}
	$ (see Figure~\ref{fig:ALCIsigma})
	such that $t_{0}$ is realized in $e_{0}$ and
	\begin{itemize}
		\item $(e_{i},e_{i+1})\in R_{i}^{\Umf_{c_0}}$ for all $i<n_{d}$, and $(e_{n_{d}+2k+1},e_{n_{d}+2k}),(e_{n_{d}+2k+1},e_{n_{d}+2k+2})\in R_{n_{d}-1}^{\Amf_{c_0}}$ for $0\leq k <N_{d}$; 
		\item $e_{n_{d}+2k}\in t_{n_{d}}^{\Umf_{c_0}}$, for all $k\leq N_{d}$, and
		$e_{n_{d}+2k+1}\in t_{n_{d}-1}^{\Amf_{c_0}}$, for all $k<N_{d}$.
	\end{itemize}
	Now $\mathfrak{F}_{d}$ is obtained from $\Umf_{c_0}$ by renaming $e_{0}$ to $d$. Finally $\Cmf$ is obtained from
	$\Amf_{\Dmc,b}^{\leq \ell}$ by hooking $\mathfrak{F}_{d}$ at $d$ to $\Amf_{\Dmc,b}^{\leq \ell}$ for all $d\in L$, see Figure~\ref{fig:ALCIsigma}.
	$\Cmf$ is a model of $\Kmc$ since $t_{0}$ is realized in $e_{0}$ in $\Umf_{c_{0}}$ and in $d$ in $\Amf_{\Dmc,b}^{\leq \ell}$. Moreover, clearly $\Cmf$ satisfies Condition~(i).
	For Condition~(ii) assume $d\in L$ is as above. Let
	$C_{d}=\forall \Sigma^{N_{d}}.D$. 
	Then $e_{n_{d}+2N_{d}}\in C_{d}^{\Cmf}$ and by construction $C_{d}^{\Cmf}\subseteq \text{dom}(\mathfrak{F}_{d})$.
    This finishes the construction of the structure $\Cmf$ satisfying (i) and (ii) except that $\Cmf$ fails to be finite since the structures $\mathfrak{F}_{d}$ are not finite.
    This is straightforward to repair, however: instead of constructing the $\mathfrak{F}_{d}$ as tree-shaped almost maximal structures in which for every $\Kmc$-type $t$ realized in a node $e$ of depth $k> N_{d}+L_{\Omc}+1$ and $\Kmc$-type $t'$ with $t\rightsquigarrow_{R} t'$ there exists an $R$-successor of $e$ of depth $k+1$ realizing $t'$ we do the following for a sufficiently large $k$ (for instance, $k= 2(N_{d}+L_{\Omc}+1))$. We take a node $e'$ realizing $t'$ with depth between $N_{d}+L_{\Omc}+1$ and $2(N_{d}+L_{\Omc}+1)$ and connect $e$ and $e'$ with $R$.
     The resulting structures $\mathfrak{F}_{d}$ are finite, and so $\mathfrak{C}$ is finite and behaves in exactly the same way as the original structure $\Cmf$.
     This finishes the proof of Theorem~\ref{thm:char12}.\qed

\medskip

In practice, one would expect that KBs $\Kmc$ are such that no
connected \Kmc-type is \ALCI-complete for $\Kmc$ (while every disconnected
\Kmc-type is necessarily \ALCI-complete for $\Kmc$). It thus makes sense to
consider the following special case. A labeled \ALCI-KB $(\Kmc,P,N)$
is \emph{strongly incomplete} if no connected $\Kmc$-type that is
realizable in some $\Kmc,b$, with $b\in N$, is $\ALCI$-complete.  For
\ALCI-KBs that are strongly incomplete, we can drop Point~3 (b)
from Theorem~\ref{thm:char12} and obtain the following from
Theorem~\ref{critFOwithoutUNA}.
%
\begin{corollary}
	\label{corr:stronglyincompl}
	For labeled \ALCI-KBs that are strongly incomplete, non-projective
	$\ALCI$-separability coincides with non-projective and projective
	$\Lmc_S$-separability for all FO-fragments $\Lmc_S \supseteq
	\text{UCQ}$.
\end{corollary}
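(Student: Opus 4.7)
The plan is to derive the corollary as a straightforward combination of Theorem~\ref{thm:char12}, Theorem~\ref{critFOwithoutUNA}, and Corollary~\ref{cor:projcoincalci}, the main observation being that strong incompleteness exactly eliminates the second clause of Theorem~\ref{thm:char12}. Throughout, since $(\Kmc,P,N)$ is $\Lmc_S$-separable iff all $(\Kmc,P,\{b\})$ with $b\in N$ are, I will argue at the level of a single negative example $b\in N$.

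First I would establish the key reduction: for strongly incomplete $(\Kmc,P,\{b\})$, non-projective $\ALCI$-separability is equivalent to the existence of a model $\Amf$ of $\Kmc$ such that $\Dmc_{\text{con}(a)},a \not\rightarrow \Amf,b^\Amf$ for all $a\in P$. The direction from left to right is Point~1 of Theorem~\ref{thm:char12}. For the other direction, let $\Amf$ satisfy Point~1 of Theorem~\ref{thm:char12} and consider the type $t = \text{tp}_{\Kmc}(\Amf, b^\Amf)$. Since $\Amf$ is a model of $\Kmc$ and $b\in N$, the type $t$ is realizable in $\Kmc, b$, so by strong incompleteness $t$ is either non-connected or not $\ALCI$-complete. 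Hence the antecedent of Point~2 of Theorem~\ref{thm:char12} fails, Point~2 is vacuously satisfied, and Theorem~\ref{thm:char12} yields non-projective $\ALCI$-separability.

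Next I would close the loop with the UCQ-layer. The condition ``there exists a model $\Amf$ of $\Kmc$ with $\Dmc_{\text{con}(a)},a \not\rightarrow \Amf,b^\Amf$ for all $a\in P$'' is exactly Point~3 of Theorem~\ref{critFOwithoutUNA}, which is equivalent to projective UCQ-separability and, via the corollary following Theorem~\ref{critFOwithoutUNA}, to both projective and non-projective $\Lmc_S$-separability for every FO-fragment $\Lmc_S \supseteq \text{UCQ}$. Chaining these equivalences with the reduction established in the previous paragraph gives the desired coincidence. (One can also insert Corollary~\ref{cor:projcoincalci} as a sanity check: projective $(\ALCI,\ALCI)$-separability already coincides with $(\ALCI,\Lmc_S)$-separability for $\Lmc_S\supseteq\text{UCQ}$ without any assumption, so strong incompleteness is only needed to lift this to the non-projective $\ALCI$ side.)

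I do not expect a real obstacle here; the only subtle point is to make sure that the type appearing in Point~2 of Theorem~\ref{thm:char12} is realized at the \emph{negative} example, so that the hypothesis of strong incompleteness (which quantifies over $b \in N$) actually applies. Everything else is a direct assembly of previously stated results.
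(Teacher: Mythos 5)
Your proposal is correct and follows exactly the route the paper intends: the paper derives this corollary by observing that strong incompleteness renders Point~2 of Theorem~\ref{thm:char12} vacuously true (since the type of $b^{\Amf}$ in any model $\Amf$ of $\Kmc$ is realizable in $\Kmc,b$), reducing non-projective $\ALCI$-separability to Point~3 of Theorem~\ref{critFOwithoutUNA} and then chaining with its corollary and Corollary~\ref{cor:projcoincalci}. Your handling of the one subtle point --- that the relevant type is realized at the negative example, so the hypothesis of strong incompleteness applies --- matches what the paper's argument requires.
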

It follows from Theorem~\ref{thm:char12} that there is a polynomial time reduction of projective
$(\ALCI,\ALCI)$-separability to non-projective
$(\ALCI,\ALCI)$-separability. Let
$(\Kmc,P,\{b\})$, $\Kmc=(\Omc,\Dmc)$, be a labeled \ALCI-KB.  Then
\Kmc is projectively \ALCI-separable if and only if $(\Kmc',P,\{b\})$
is non-projectively \ALCI-separable where $\Kmc'=(\Omc',\Dmc)$ and
$\Omc'=\Omc \cup \{ A \sqsubseteq A \}$, $A$ a fresh concept name. In
fact, $\Kmc$ is clearly projectively \ALCI-separable iff $\Kmc'$ is,
and $\Kmc'$ is projectively \ALCI-separable iff it is non-projectively
\ALCI-separable because no connected $\Kmc'$-type is \ALCI-complete
and thus Point~3 (b) of Theorem~\ref{thm:char12} is vacuously true for
$\Kmc'$. This also implies that whenever a labeled \ALCI-KB is
projectively separable, then a single fresh concept name 
suffices for separation.

We now also have everything in place to clarify
the complexity of non-projective
$(\mathcal{ALCI},\mathcal{ALCI})$-separability.
\begin{theorem}
	\label{thm:alcinexp}
	Non-projective $(\ALCI,\ALCI)$-separability is $\NExpTime$-complete in combined complexity. This also holds for definability, RE-existence, and entity distinguishability. For separability and definability the \NExpTime-lower bound holds already in data complexity.
\end{theorem}
\begin{proof}\
	The lower bounds are a consequence of
	Theorem~\ref{thm:correctedresults}, Theorem~\ref{thm:correctedresults2},
	and the mentioned reduction of
	projective separability to non-projective separability. For the
	upper bound, we first observe that according to Claim~2 in the proof of Theorem~\ref{thm:char12} it is in \ExpTime to decide whether a given $\Kmc$-type $t$ is
	$\ALCI$-complete. Let $(\Kmc,P,\{b\})$ be a labeled \ALCI-KB.  For
	any $\Kmc$-type $t$, let $\Kmc_{t}=(\Omc_{t},\Dmc_{t})$ where
	$\Omc_{t}= \Omc \cup \{A \sqsubseteq \bigsqcap_{C\in t}C\}$ and
	$\Dmc_{t}=\Dmc\cup \{A(b)\}$ for a fresh concept name $A$. By
	Theorem~\ref{thm:char12}, $(\Kmc,P,\{b\})$ is non-projectively $\ALCI$-separable iff
	there exists a $\Kmc$-type $t$ that is realizable in $\Kmc,b$ such
	that (i)~$\Kmc_{t}\not\models \bigvee_{a\in
		P}\varphi_{\Dmc_{\text{con}(a),a}}(b)$ and~(ii) if $t$ is
	connected and $\ALCI$-complete for $\Kmc$, then $t$ is not realizable in
	$\Kmc,a$ for any $a\in P$. The \NExpTime upper bound now follows
	from the fact that rooted UCQ-evaluation on \ALCI-KBs is in
	\coNExpTime and that \ALCI-completeness of $t$
	and realizability of $t$ in $\Kmc,a$ can be checked in \ExpTime.
	%
\end{proof}


When the ontology in \Kmc is 
empty, then no connected \Kmc-type is \ALCI-complete and thus 
Point~3 (b) of Theorem~\ref{thm:char12} is vacuously true. It follows that 
non-projective and projective \ALCI-separability of KBs 
$(\emptyset,\Dmc)$ coincides with FO-separability and is 
\coNPclass-complete (Theorem~\ref{thm:empty}). The same is true for
definability, RE-existence, and entity distinguishability.

\subsection{Separability of Labeled GF-KBs}
\label{section:GF}

We study projective and non-projective
$(\text{GF},\text{GF})$-separability which turns out to behave similarly
to the \ALCI case in many ways. The non-projective case is, however, significantly more difficult to analyse. A new aspect we consider for GF is a comparison with separability in the fragment openGF of GF in which one can only speak locally about neighbourhoods of tuples and not about disconnected parts. It turns out that separability is not affected by this restriction, but the length of separating formulas is.


We start with an example which shows that projective and
non-projective $(\text{GF},\text{GF})$-separability do not coincide.
Note that Example~\ref{exm:11} does not serve this purpose since the
labeled KB given there is separable by the GF-formula $R(x,x)$.  We
use the more succinct \ALCI-syntax for GF-formulas and ontologies
whenever possible.
\begin{exmp}\label{ex:GF} {\em Define $\Kmc=(\Omc,\Dmc)$ where
$$
	\Omc  = \{\top \sqsubseteq \exists R.\top \sqcap \exists
	R^{-}.\top,\ 
	\forall x \forall y(R(x,y) \rightarrow \neg R(y,x))\} 
$$
and $\Dmc  = \{R(a,c), R(c,d), R(d,a), R(b,e)\}$ is depicted below:
	\begin{center}
		\tikzset{every picture/.style={line width=0.55pt}} 
		
		\begin{tikzpicture}[x=0.75pt,y=0.75pt,yscale=-0.6,xscale=0.6]
		
		\draw    (83.75,114.5) -- (115.97,74.86) ;
		\draw [shift={(116.5,74.17)}, rotate = 487.38] [fill={rgb, 255:red, 0; green, 0; blue, 0 }  ][line width=0.08]  [draw opacity=0] (8.93,-4.29) -- (0,0) -- (8.93,4.29) -- cycle    ;
		
		\draw    (150.25,119.5) -- (91.25,119.98) ;
		\draw [shift={(88.25,120)}, rotate = 359.53999999999996] [fill={rgb, 255:red, 0; green, 0; blue, 0 }  ][line width=0.08]  [draw opacity=0] (8.93,-4.29) -- (0,0) -- (8.93,4.29) -- cycle    ;
		
		\draw    (125.25,77) -- (156.28,112.73) ;
		\draw [shift={(158.25,115)}, rotate = 229.03] [fill={rgb, 255:red, 0; green, 0; blue, 0 }  ][line width=0.08]  [draw opacity=0] (8.93,-4.29) -- (0,0) -- (8.93,4.29) -- cycle    ;
		
		\draw  [fill={rgb, 255:red, 0; green, 0; blue, 0 }  ,fill opacity=1 ] (116.6,70.23) .. controls (116.6,68.41) and (118.08,66.93) .. (119.9,66.93) .. controls (121.72,66.93) and (123.2,68.41) .. (123.2,70.23) .. controls (123.2,72.06) and (121.72,73.53) .. (119.9,73.53) .. controls (118.08,73.53) and (116.6,72.06) .. (116.6,70.23) -- cycle ;
		\draw  [fill={rgb, 255:red, 0; green, 0; blue, 0 }  ,fill opacity=1 ] (157.4,119.83) .. controls (157.4,118.01) and (158.88,116.53) .. (160.7,116.53) .. controls (162.52,116.53) and (164,118.01) .. (164,119.83) .. controls (164,121.66) and (162.52,123.13) .. (160.7,123.13) .. controls (158.88,123.13) and (157.4,121.66) .. (157.4,119.83) -- cycle ;
		\draw  [fill={rgb, 255:red, 0; green, 0; blue, 0 }  ,fill opacity=1 ] (77,120.23) .. controls (77,118.41) and (78.48,116.93) .. (80.3,116.93) .. controls (82.12,116.93) and (83.6,118.41) .. (83.6,120.23) .. controls (83.6,122.06) and (82.12,123.53) .. (80.3,123.53) .. controls (78.48,123.53) and (77,122.06) .. (77,120.23) -- cycle ;
		\draw  [fill={rgb, 255:red, 0; green, 0; blue, 0 }  ,fill opacity=1 ] (237,90.23) .. controls (237,88.41) and (238.48,86.93) .. (240.3,86.93) .. controls (242.12,86.93) and (243.6,88.41) .. (243.6,90.23) .. controls (243.6,92.06) and (242.12,93.53) .. (240.3,93.53) .. controls (238.48,93.53) and (237,92.06) .. (237,90.23) -- cycle ;
		\draw    (243.6,90.23) -- (296.55,89.25) ;
		\draw [shift={(297.42,89.23)}, rotate = 538.87] [fill={rgb, 255:red, 0; green, 0; blue, 0 }  ][line width=0.08]  [draw opacity=0] (8.93,-4.29) -- (0,0) -- (8.93,4.29) -- cycle    ;
		
		\draw  [fill={rgb, 255:red, 0; green, 0; blue, 0 }  ,fill opacity=1 ] (299.42,89.23) .. controls (299.42,87.41) and (300.9,85.93) .. (302.72,85.93) .. controls (304.55,85.93) and (306.02,87.41) .. (306.02,89.23) .. controls (306.02,91.06) and (304.55,92.53) .. (302.72,92.53) .. controls (300.9,92.53) and (299.42,91.06) .. (299.42,89.23) -- cycle ;
		
		\draw (66.9,131) node  [font=\small] [align=left] {$\displaystyle a$};
		\draw (229.9,103) node  [font=\small] [align=left] {$\displaystyle b$};
		\draw (121.9,55) node  [font=\small] [align=left] {$\displaystyle c$};
		\draw (176.9,125) node  [font=\small] [align=left] {$\displaystyle d$};
		\draw (315.4,104.5) node  [font=\small] [align=left] {$\displaystyle e$};

		\end{tikzpicture}
	\end{center}
	\vspace*{-2mm}
	%
	The labeled GF-KB $(\Kmc,\{a\},\{b\})$ is separated by the
	\ALCI-concept $ C=A \rightarrow \exists R.\exists R.\exists R.A $
	that uses the concept name $A$ as a helper symbol. In contrast, the
	KB is not non-projectively GF-separable since every GF-formula
	$\varphi(x)$ with $\mn{sig}(\varphi) = \{ R\}$ is either valid (equivalent to
	$x=x$) or unsatisfiable (equivalent to $\neg(x=x)$) w.r.t.~\Omc.
    It follows that if $\Kmc\models \varphi(a)$, then $\varphi$ is valid w.r.t. $\Omc$ and so $\Kmc\models \varphi(b)$.
	
	To illustrate the role of the second sentence in \Omc, let $\Omc^-$
	be $\Omc$ without that sentence. Then $\Kmc^-=(\Omc^-,\Dmc)$ is
	separated by the GF-formula obtained from the separating
	\ALCI-concept $C$ above by replacing each occurrence of $A(x)$ in
	$C^\dagger$ by the formula $\chi(x)=\exists y (R(x,y) \wedge R(y,y))$ (the resulting formulas is equivalent to $\chi(x) \rightarrow \exists y(R^{3}(x,z) \wedge \chi(z))$ and we therefore have non-projective GF-separability.\finofex
}
\end{exmp}
Let \emph{openGF} be the fragment of GF that consists of all open 
formulas in GF whose subformulas are all open and in which equality is not 
used as a guard. For example, $A(x) \wedge \exists x B(x)$ is in GF but not in openGF because it contains a closed subformula. OpenGF was first considered in 
\cite{tocl2020} where it is also observed that an open GF formula $\varphi(\vec{x})$ is equivalent to a formula in openGF if and only if
for all structures $\Amf$ and $\vec{a}\in \text{dom}(\Amf)^{|\vec{x}|}$ $\Amf\models \varphi(\vec{a})$ iff $\Amf_{\text{con}(\vec{a})}\models \varphi(\vec{a})$, where $\Amf_{\text{con}(\vec{a})}$ is the restriction
of $\Amf$ to all $b\in \text{dom}(\Amf)$ that are reachable from some
$a\in [\vec{a}]$ in the Gaifman graph of $\Amf$.
Thus, openGF can only speak about the neighourhood of $\vec{a}$ and its truth does not depend on any disconnected parts of $\Amf$. Informally, openGF relates to GF in the same way as \ALCI relates to the extension of \ALCI with the 
universal role \cite{DL-Textbook}. We start our 
investigation with observing the following. 
\begin{theorem}\label{thm:GFopen}
	$(\text{GF},\text{GF})$-separability coincides with 
	$(\text{GF},\text{openGF})$-separability, both in the projective 
	and in the non-projective case. 
\end{theorem}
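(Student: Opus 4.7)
One direction is immediate: every openGF formula is a GF formula, so openGF-separability implies GF-separability. For the converse, let $\varphi(\vec x) \in \text{GF}$ separate $(\Kmc, P, \{\vec b\})$; the aim is to construct an openGF formula that also separates. Without loss of generality we reduce to the case of a single negative example by taking conjunctions.

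The main tool is the semantic characterization recalled above: the openGF formulas are precisely those GF formulas that are invariant under disjoint unions. First, I would put $\varphi$ in the normal form $\bigvee_k (\psi_k(\vec x) \wedge \gamma_k)$, where each $\psi_k \in \text{openGF}$ and each $\gamma_k$ is a conjunction of GF sentences and their negations. Let $\Bmf$ be a model of $\Kmc$ witnessing $\Bmf \not\models \varphi(\vec b^{\Bmf})$, set $K^+ = \{k : \Bmf \models \gamma_k\}$, and take the openGF candidate $\psi^*(\vec x) = \bigvee_{k \in K^+} \psi_k(\vec x)$. Negative separation then comes for free: for each $k \in K^+$ the sentence $\gamma_k$ holds in $\Bmf$, so $\Bmf \not\models \varphi(\vec b^{\Bmf})$ forces $\Bmf \not\models \psi_k(\vec b^{\Bmf})$ for every $k \in K^+$. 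This yields $\Bmf \not\models \psi^*(\vec b^{\Bmf})$ and hence $\Kmc \not\models \psi^*(\vec b)$.

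The positive direction is the main obstacle: one must show $\Kmc \models \psi^*(\vec a)$ for each $\vec a \in P$. The difficulty is that a model $\Amf$ of $\Kmc$ may have a GF-sentence type different from that of $\Bmf$, so the disjunct of $\varphi$ witnessing $\Amf \models \varphi(\vec a^{\Amf})$ need not lie in $K^+$. I plan to resolve this by amalgamation: given $\Amf$, construct a model $\Amf'$ of $\Kmc$ that (i) satisfies exactly the same $\gamma_k$ as $\Bmf$, so that on $\Amf'$ the formula $\varphi$ reduces to $\psi^*$; and (ii) is guarded-bisimilar to $\Amf$ on the connected component of $\vec a^{\Amf'}$, so that, since $\psi^*$ is openGF and hence invariant under disjoint unions, $\Amf' \models \psi^*(\vec a^{\Amf'})$ iff $\Amf \models \psi^*(\vec a^{\Amf})$. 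Applying $\Kmc \models \varphi(\vec a)$ to $\Amf'$ then gives $\Amf' \models \psi^*(\vec a^{\Amf'})$, and consequently $\Amf \models \psi^*(\vec a^{\Amf})$.

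The technical heart of the argument is the amalgamation in (i)--(ii), for which I would use GF's tree-model property together with the standard amalgamation of GF-structures along guarded substructures: one unravels $\Amf$ around $\vec a$ into a tree-like model, then glues it to a suitable unfolding of $\Bmf$ along a guarded interface, so as to preserve $\Omc$ while forcing the intended global sentence type. The projective version of the theorem is handled in the same way: helper symbols appearing in $\varphi$ are retained unchanged in the normal form and hence in $\psi^*$, with no modification of the amalgamation argument required.
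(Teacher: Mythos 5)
Your reduction to a single negative example and the negative half of the argument (each $\psi_k$ with $k\in K^+$ must fail at $\vec b$ in $\Bmf$) are fine, and a normal form $\bigvee_k(\psi_k\wedge\gamma_k)$ of the kind you describe is indeed available. The gap is in the positive direction: the claim $\Kmc\models\psi^*(\vec a)$ is false in general, and the amalgam $\Amf'$ required in (i)--(ii) need not exist, because under $\Omc$ the connected guarded bisimulation type of $\vec a^{\Amf}$ can be jointly unsatisfiable with the GF sentence type of $\Bmf$. Concretely, let $\Omc$ consist of $\forall x\,(H(x)\rightarrow (A(x)\vee\exists y\,E(y)))$ and $\forall x\,(H(x)\rightarrow (B(x)\vee\neg\exists y\,E(y)))$, let $\Dmc=\{H(a),C(b)\}$, $P=\{a\}$, $N=\{b\}$. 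The GF formula $\varphi(x)=(A(x)\wedge\neg\exists y\,E(y))\vee(B(x)\wedge\exists y\,E(y))$ separates: in any model, either no $E$ occurs and the first axiom forces $A(a)$, or some $E$ occurs and the second forces $B(a)$. Choosing $\Bmf$ with domain $\{a,b\}$, $H^{\Bmf}=A^{\Bmf}=\{a\}$, $C^{\Bmf}=\{b\}$ and all else empty refutes $\varphi(b)$ and satisfies $\neg\exists y\,E(y)$, so $K^+$ selects the first disjunct and $\psi^*(x)=A(x)$. But the model $\Amf$ with domain $\{a,b,e\}$, $H^{\Amf}=B^{\Amf}=\{a\}$, $E^{\Amf}=\{e\}$, $A^{\Amf}=\emptyset$ is a model of $\Kmc$ refuting $A(a)$, so $\psi^*$ does not separate. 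Moreover, no $\Amf'$ as in (i)--(ii) exists: connected bisimilarity to $\Amf$ at $a$ forces $H\wedge\neg A$ at $a^{\Amf'}$, matching $\Bmf$'s sentence type forces $\Amf'\models\neg\exists y\,E(y)$, and together these violate the first axiom. (The correct openGF separator here is $A(x)\vee B(x)$; in general the right local formula cannot be read off by restricting $\varphi$ to the sentence type of one countermodel.)

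The paper avoids this trap by never trying to combine a countermodel at $\vec b$ with an arbitrary model at $\vec a$. Theorem~\ref{thm:gfcrit1} characterizes GF- and openGF-separability by one and the same condition on a single witnessing model $\Amf$ at $\vec b$, stated in terms of connected guarded $\ell$-bisimulations and guarded $\Sigma$ $\ell$-embeddings; in the hard implication, the model realizing $\vec a$ is manufactured entirely from glued copies of that one witness $\Amf$, so every piece automatically satisfies $\Omc$ and has the right global sentence behaviour. That construction is precisely what resolves the compatibility problem on which your amalgamation founders.
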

Theorem~\ref{thm:GFopen} will follow from model-theoretic characterizations of
separability provided below. Arguably, openGF formulas are more natural for
separation purposes than unrestricted GF formulas as they speak only about the neighbourhood of the examples. The next example shows that this is at the expense of larger
separating formulas (a slightly modified example shows the same behaviour
for $\mathcal{ALCI}$ and its extension with the universal role).
\begin{exmp}\label{exmp:openGFGF}
	{\em Let 
	$$\Omc=\{A \sqsubseteq \forall R.A, \ \forall xy(R(x,y)\rightarrow
	\neg R(y,x))\}
	$$ and let  $\Dmc_{n}$ contain two $R$-paths of length $n$,
	$a_0 R a_1 R \dots R a_n$ and $b_0 R b_1 R \dots R b_n$ with $a_n$ labeled with $E$: 
	\begin{center}
		\tikzset{every picture/.style={line width=0.75pt}} 

		\tikzset{every picture/.style={line width=0.75pt}} 
		
		\begin{tikzpicture}[x=0.75pt,y=0.75pt,yscale=-1,xscale=1]
			
			\draw    (132,124) -- (150.5,124.22) ;
			\draw [shift={(153.5,124.25)}, rotate = 180.67] [fill={rgb, 255:red, 0; green, 0; blue, 0 }  ][line width=0.08]  [draw opacity=0] (4.93,-2.29) -- (0,0) -- (4.93,2.29) -- cycle    ;
			\draw    (176,124.5) -- (194.5,124.72) ;
			\draw [shift={(197.5,124.75)}, rotate = 180.67] [fill={rgb, 255:red, 0; green, 0; blue, 0 }  ][line width=0.08]  [draw opacity=0] (4.93,-2.29) -- (0,0) -- (4.93,2.29) -- cycle    ;
			\draw    (222,124) -- (240.5,124.22) ;
			\draw [shift={(243.5,124.25)}, rotate = 180.67] [fill={rgb, 255:red, 0; green, 0; blue, 0 }  ][line width=0.08]  [draw opacity=0] (4.93,-2.29) -- (0,0) -- (4.93,2.29) -- cycle    ;
			\draw    (132,156) -- (150.5,156.22) ;
			\draw [shift={(153.5,156.25)}, rotate = 180.67] [fill={rgb, 255:red, 0; green, 0; blue, 0 }  ][line width=0.08]  [draw opacity=0] (4.93,-2.29) -- (0,0) -- (4.93,2.29) -- cycle    ;
			\draw    (176,156.5) -- (194.5,156.72) ;
			\draw [shift={(197.5,156.75)}, rotate = 180.67] [fill={rgb, 255:red, 0; green, 0; blue, 0 }  ][line width=0.08]  [draw opacity=0] (4.93,-2.29) -- (0,0) -- (4.93,2.29) -- cycle    ;
			\draw    (222,156) -- (240.5,156.22) ;
			\draw [shift={(243.5,156.25)}, rotate = 180.67] [fill={rgb, 255:red, 0; green, 0; blue, 0 }  ][line width=0.08]  [draw opacity=0] (4.93,-2.29) -- (0,0) -- (4.93,2.29) -- cycle    ;
			\draw  [dash pattern={on 0.84pt off 2.51pt}] (104.25,122.63) .. controls (104.25,114.55) and (111.69,108) .. (120.88,108) .. controls (130.06,108) and (137.5,114.55) .. (137.5,122.63) .. controls (137.5,130.7) and (130.06,137.25) .. (120.88,137.25) .. controls (111.69,137.25) and (104.25,130.7) .. (104.25,122.63) -- cycle ;
			\draw  [dash pattern={on 0.84pt off 2.51pt}] (104.25,155.63) .. controls (104.25,147.55) and (111.69,141) .. (120.88,141) .. controls (130.06,141) and (137.5,147.55) .. (137.5,155.63) .. controls (137.5,163.7) and (130.06,170.25) .. (120.88,170.25) .. controls (111.69,170.25) and (104.25,163.7) .. (104.25,155.63) -- cycle ;
			\draw  [dash pattern={on 0.84pt off 2.51pt}]  (203.5,124.25) -- (217.5,124.25) ;
			\draw  [dash pattern={on 0.84pt off 2.51pt}]  (202.5,156.25) -- (216.5,156.25) ;
			
			\draw (120,122) node   [align=left] {$\displaystyle a_{0}$};
			\draw (166,123.5) node   [align=left] {$\displaystyle a_{1}$};
			\draw (255,125) node   [align=left] {$\displaystyle a_{n}$};
			\draw (120,154) node   [align=left] {$\displaystyle b_{0}$};
			\draw (166,155.5) node   [align=left] {$\displaystyle b_{1}$};
			\draw (255,155) node   [align=left] {$\displaystyle b_{n}$};
			\draw (255,110) node   [align=left] {$\displaystyle E$};
			\draw (90,120.5) node   [align=left] {$\displaystyle +$};
			\draw (90,154.5) node   [align=left] {$\displaystyle -$};

		\end{tikzpicture}
		
	\end{center}
	Consider the labeled GF-KB $(\Kmc_{n},\{ a_0 \},\{b_0\})$ with
	$\Kmc_{n}=(\Omc,\Dmc_{n})$. Then the GF-formula $A(x)\rightarrow \exists
	y(A(y) \wedge E(y))$ separates $(\Kmc_{n},\{ a_0\},\{b_0\})$. An openGF formula separating $(\Kmc_{n},\{a_{0}\},\{b_{0}\})$ is given by the $\mathcal{ALCI}$-concept $A \rightarrow \exists R^{n}.(A \sqcap E)$ and we show in Proposition~\ref{prop:example1} below that the shortest separating openGF-formula has 
	length at least $n$ (even if it uses helper symbols).\finofex
}
\end{exmp}
To characterize separability in GF and openGF we define guarded 
bisimulations, a standard tool 
for proving that two
structures satisfy the same guarded formulas~\cite{DBLP:books/daglib/p/Gradel014,tocl2020}. Guarded bisimulations generalize $\mathcal{ALCI}$-bisimulations.
Let $\Amf$ be structure. 
A set $G \subseteq \text{dom}(\Amf)$ is \emph{guarded} in $\mathfrak{A}$ if $G$ is a singleton or 
there exists $R$ with $\Amf\models R(\vec{a})$ such that $G = [\vec{a}]$. 
A tuple $\vec{a}$ in $\text{dom}(\Amf)$ is
\emph{guarded} in $\mathfrak{A}$ if $[\vec{a}]$ is a 
subset of some guarded set in~$\mathfrak{A}$. 

Let $\Sigma$ be a signature. The restriction of a structure $\Amf$ 
to a nonempty subset $A$ of $\text{dom}(\Amf)$ is denoted $\Amf_{|A}$. 
The \emph{$\Sigma$-reduct} of $\Amf$ coincides with $\Amf$ except that all symbols not in $\Sigma$ are interpreted by the empty set.
For tuples $\vec{a}=(a_{1},\ldots,a_{n})$ in $\mathfrak{A}$
and $\vec{b}=(b_{1},\ldots,b_{n})$ in $\mathfrak{B}$ we call a mapping $p$ from $[\vec{a}]$ to
$[\vec{b}]$ with $p(a_{i})=b_{i}$ for $1\leq i \leq n$ (written $p:\vec{a}\mapsto \vec{b}$)
a \emph{partial $\Sigma$-homomorphism} if $p$ is a homomorphism from the $\Sigma$-reduct of $\mathfrak{A}_{|[\vec{a}]}$ 
to $\mathfrak{B}_{|[\vec{b}]}$. We call $p$ a \emph{partial $\Sigma$-isomorphism} if, in addition,
the inverse of $p$ is a partial $\Sigma$-homomorphism from  $\mathfrak{B}_{|[\vec{b}]}$ to $\mathfrak{A}_{|[\vec{a}]}$. 

A set $I$ of partial $\Sigma$-isomorphisms $p: \vec{a} \mapsto \vec{b}$ 
from guarded tuples $\vec{a}$ in $\mathfrak{A}$ to guarded tuples $\vec{b}$ in $\Bmf$ is called a 
\emph{connected guarded $\Sigma$-bisimulation} if the following hold for all 
$p: \vec{a} \mapsto \vec{b}\in I$:
\begin{enumerate}
	\item[(i)] for every guarded tuple $\vec{a}'$ in $\Amf$ with $[\vec{a}]\cap [\vec{a}']\not=\emptyset$
	there exists a guarded tuple $\vec{b}'$ in $\Bmf$ and $p': \vec{a}'\mapsto \vec{b}'\in I$ such that $p'$ 
	and $p$ coincide on $[\vec{a}]\cap [\vec{a}']$.
	\item[(ii)] for every guarded tuple $\vec{b}'$ in $\Bmf$ with $[\vec{b}]\cap [\vec{b}']\not=\emptyset$
	there exists a guarded tuple $\vec{a}'$ in $\Amf$ and $p': \vec{a}'\mapsto \vec{b}'\in I$ such that
	$p'^{-1}$ and $p^{-1}$ coincide on $[\vec{b}]\cap [\vec{b}']$.
\end{enumerate}
Assume that $\vec{a}$ and $\vec{b}$ are (possibly not guarded) tuples in $\Amf$
and $\Bmf$. Then we say that $\Amf,\vec{a}$ and $\Bmf,\vec{b}$ are \emph{connected guarded $\Sigma$-bisimilar},
in symbols $\Amf,\vec{a} \sim_{\text{openGF},\Sigma} \Bmf,\vec{b}$,
if there exists a partial $\Sigma$-isomorphism $p: \vec{a}\mapsto \vec{b}$ and a 
connected guarded $\Sigma$-bisimulation $I$ such that Conditions (i) and (ii) hold for $p$~\cite{tocl2020}. 

Connected guarded $\Sigma$-bisimulations differ from the standard guarded
$\Sigma$-bisimulations \cite{DBLP:books/daglib/p/Gradel014} in requiring $[\vec{a}]\cap
[\vec{a}']\not=\emptyset$ in Condition~(i) and $[\vec{b}]\cap
[\vec{b}']\not=\emptyset$ in Condition~(ii). If we drop these conditions then
we talk about \emph{guarded $\Sigma$-bisimulations} and \emph{guarded $\Sigma$-bisimilarity},
in symbols $\Amf,\vec{a} \sim_{\text{GF},\Sigma} \Bmf,\vec{b}$.

We also use finitary versions of guarded bisimulations.
In these versions of (connected) guarded bisimulations the 
Conditions~(i) and (ii)
are required to hold a finite number $\ell \geq 0$ of rounds only. Thus, one considers sets
$I_{\ell},\ldots,I_{0}$ of partial $\Sigma$-isomorphisms such that $I_{\ell}$ 
contains the partial $\Sigma$-isomorphism $p:\vec{a}\mapsto \vec{b}$ and 
for any $p\in I_{i}$ there
exist $p'\in I_{i-1}$ such that (i) and, respectively, (ii) hold, for $0<i\leq \ell$. 
If such sets exist then we say that $\Amf,\vec{a}$ and $\Bmf,\vec{b}$ are 
\emph{(connected) guarded $\Sigma$ $\ell$-bisimilar}
and write $\Amf,\vec{a} \sim_{\text{openGF},\Sigma}^{\ell} \Bmf,\vec{b}$ and 
$\Amf,\vec{a} \sim_{\text{GF},\Sigma}^{\ell} \Bmf,\vec{b}$, respectively. 

We say that $\Amf,\vec{a}$ and $\Bmf,\vec{b}$ are \emph{GF$(\Sigma)$-equivalent}, in symbols
$\Amf,\vec{a} \equiv_{\text{GF},\Sigma} \Bmf,\vec{b}$, if $\Amf\models \varphi(\vec{a})$
iff $\Bmf\models \varphi(\vec{b})$ for all formulas $\varphi(\vec{x})$ in GF$(\Sigma)$.
The \emph{guarded quantifier rank} $\text{gr}(\varphi)$ of a formula $\varphi$ in GF is
the number of nestings of guarded quantifiers in it.
We say that $\Amf,\vec{a}$ and $\Bmf,\vec{b}$ are \emph{GF$^{\ell}(\Sigma)$-equivalent}, in symbols
$\Amf,\vec{a} \equiv_{\text{GF},\Sigma}^{\ell} \Bmf,\vec{b}$, if $\Amf\models \varphi(\vec{a})$
iff $\Bmf\models \varphi(\vec{b})$ for all formulas $\varphi(\vec{x})$ in GF$(\Sigma)$
of guarded quantifier rank at most $\ell$. The same notation is applied for openGF. For example, we say that $\Amf,\vec{a}$ and $\Bmf,\vec{b}$ are \emph{openGF$(\Sigma)$-equivalent}, in symbols
$\Amf,\vec{a} \equiv_{\text{openGF},\Sigma} \Bmf,\vec{b}$, if $\Amf\models \varphi(\vec{a})$
iff $\Bmf\models \varphi(\vec{b})$ for all formulas $\varphi(\vec{x})$ in openGF$(\Sigma)$.
We say that $\Amf,\vec{a}$ and $\Bmf,\vec{b}$ are \emph{openGF$^{\ell}(\Sigma)$-equivalent}, in symbols
$\Amf,\vec{a} \equiv_{\text{openGF},\Sigma}^{\ell} \Bmf,\vec{b}$, if $\Amf\models \varphi(\vec{a})$
iff $\Bmf\models \varphi(\vec{b})$ for all formulas $\varphi(\vec{x})$ in openGF$(\Sigma)$
of guarded quantifier rank at most $\ell$. The following is shown
in~\cite{ANvB98,DBLP:books/daglib/p/Gradel014,tocl2020} and the notion
of $\omega$-saturated structures is discussed, for example, in~\cite{modeltheory}. 
\begin{lemma}\label{lem:guardedbisim}
	Let $\Amf,\vec{a}$ and $\Bmf,\vec{b}$ be pointed structures and $\Sigma$ a signature.
	Then for $\Lmc\in \{\text{GF},\text{openGF}\}$ and all $\ell\geq 0$:
	$$
	\Amf,\vec{a} \equiv_{\Lmc,\Sigma}^{\ell} \Bmf,\vec{b} \quad \text{ iff } \quad 
	\Amf,\vec{a} \sim_{\Lmc,\Sigma}^{\ell} \Bmf,\vec{b}.
	$$
	Moreover,
	$$
	\Amf,\vec{a} \sim_{\Lmc,\Sigma} \Bmf,\vec{b} \quad \text{ implies } \quad
	\Amf,\vec{a} \equiv_{\Lmc,\Sigma} \Bmf,\vec{b}
	$$
	and, conversely, if either $\Amf$ is finite or both $\Amf$ and $\Bmf$ are $\omega$-saturated, then
	$$
	\Amf,\vec{a} \equiv_{\Lmc,\Sigma} \Bmf,\vec{b} \quad \text{ implies } \quad 
	\Amf,\vec{a} \sim_{\Lmc,\Sigma} \Bmf,\vec{b}
	$$
\end{lemma}
We illustrate connected guarded bisimulations by using them to prove the claim made in Example~\ref{exmp:openGFGF} above. 
\begin{proposition}\label{prop:example1}
Let $(\Kmc_{n},\{a_{0}\},\{b_{0}\})$ be as in Example~\ref{exmp:openGFGF}. Then
any openGF-formula separating $(\Kmc_{n},\{a_{0}\},\{b_{0}\})$ has guarded quantifier rank at least $n$ (and so length at least $n$).
\end{proposition}
\begin{proof} \
Let $\Sigma=\{A,E,R\}=\text{sig}(\mathcal{K}_{n})$. To prove that no openGF-formula of guarded quantifier rank $m < n$ separates
$(\mathcal{K}_{n}, \{a_{0}\},\{b_{0}\})$, it is sufficient to prove that for all models $\Amf$ of $\Kmc_{n}$ and signatures $\Sigma_{\text{help}}$ of helper symbols there exists 
a model $\mathfrak{B}$ of $\Kmc_{n}$ such that $\mathfrak{A},b_{0}^{\Amf} \sim^m_{\text{openGF},\Sigma\cup\Sigma_{\text{help}}} \mathfrak{B}, a_{0}^{\Bmf}$. (To see this assume
that there exists a formula $\varphi(x)$ in openGF of guarded quantifier rank $m<n$ using symbols in $\Sigma\cup\Sigma_{\text{help}}$ such that $\Kmc_{n} \models \varphi(a_{0})$
and $\Kmc_{n}\not\models \varphi(b_{0})$. Take any model $\Amf$ of $\Kmc_{n}$ such that $\Amf\not\models \varphi(b_{0})$.
By Lemma~\ref{lem:guardedbisim}, $\Amf\not\models \varphi(a_{0})$, and we have derived a contradiction.) Now
let $\mathfrak{A}$ be a model of $\mathcal{K}_{n}$. Observe that  $b_{i}^{\Amf}\not=b_{i+1}^{\Amf}$ since $\forall xy(R(x,y)\rightarrow \neg R(y,x))\in \Omc$. We can unfold $\Amf$ into a guarded tree-decomposable model $\Amf^{\ast}$ of $\Omc$ (see the appendix and~\cite{tocl2020})
such that there is a sequence $d_{0},\ldots,d_{n}\in \text{dom}(\Amf^{\ast})$
with 
\begin{itemize}
	\item $\Amf,b_{0}^{\Amf} \sim_{\text{openGF},\Sigma\cup\Sigma_{\text{help}}} \Amf^{\ast}, d_{0}$;
	\item $\Amf,(b_{i}^{\Amf},b_{i+1}^{\Amf}) \sim_{\text{openGF},\Sigma\cup\Sigma_{\text{help}}} \Amf^{\ast}, (d_{i},d_{i+1})$, for all $i< n$;
    \item $d_{0}$ and $d_{i}$ have distance exactly $i$ in $\Amf^{\ast}$, for 
     all $i\leq n$.
\end{itemize}
Define a structure $\Bmf'$ as follows: the domain of $\Bmf'$ and the interpretation of symbols in $\Sigma\cup \Sigma_{\text{help}}$ is given by taking the
disjoint union of $\Amf^{\ast}$ and $\Amf$. For the interpretation of the constants $a_{0},\ldots,a_{n}$ take the path $d_{0},\ldots,d_{n}\in
\text{dom}(\Amf^{\ast})$ and set $a_{i}^{\Bmf'}:= d_{i}$ for $i\leq n$. 
The constants $b_{0},\ldots,b_{n}$ are interpreted by setting $b_{i}^{\Bmf'}:=b_{i}^{\Amf}$
for $i\leq n$. Note that $\Amf,b_{0}^{\Amf}\sim_{\text{openGF},\Sigma\cup\Sigma_{\text{help}}}\Bmf',a_{0}^{\Bmf'}$ follows from 
$\Amf,b_{0}^{\Amf} \sim_{\text{openGF},\Sigma\cup\Sigma_{\text{help}}} \Amf^{\ast}, d_{0}$. However, $\Bmf'$ might not yet be a model of $\Dmc_{n}$
since $d_{n}\in E^{\Bmf'}$ is not guaranteed. We therefore add $d_{n}$ to the interpretation of $E$ and denote the resulting structure by $\Bmf$.
Then $\Bmf$ is a model of $\Kmc_{n}$ and, as $d_{0}$ and $d_{n}$ have distance exactly $n$ in $\Amf^{\ast}$ the addition of $E$ only becomes 
relevant after $n-1$ steps of the connected guarded bisimulation witnessing $\Amf,b_{0}^{\Amf}\sim_{\text{openGF},\Sigma\cup\Sigma_{\text{help}}}\Bmf',a_{0}^{\Bmf'}$.
We thus have that $\Amf,b_{0}^{\Amf}\sim_{\text{openGF},\Sigma\cup\Sigma_{\text{help}}}^{n-1}\Bmf,a_{0}^{\Bmf}$, as required.
\end{proof}
We next extend the notion of a functional $\mathcal{ALCI}(\Sigma)$-bisimulation to the guarded case. We write
$\Amf,\vec{a} \sim_{\text{openGF},\Sigma}^{f} \Bmf,\vec{b}$ if there exists a
set $I$ of partial $\Sigma$-isomorphisms witnessing that 
$\Amf,\vec{a} \sim_{\text{openGF},\Sigma} \Bmf,\vec{b}$ such that
$p(a) = p'(a)$ for all $p,p'\in I$ and $a\in \text{dom}(\Amf)$ such that
$a$ is in the domain of both $p$ and $p'$. Then $I$ is called a \emph{connected
functional guarded $\Sigma$-bisimulation}. We are now in a position to formulate a counterpart of Theorem~\ref{thm:L-modeltheory0} for $(\text{GF},\text{openGF})$-separability.
\begin{restatable}{theorem}{thmLmodeltheorygfnull}
	\label{thm:L-modeltheorygf0}
	Assume that $(\Kmc,P,\{\vec{b}\})$ is a labeled GF-KB with $\Kmc=(\Omc,\Dmc)$
	and $\Sigma=\text{sig}(\Kmc)$. Then the following
	conditions are equivalent:
	\begin{enumerate}
		
		\item $(\Kmc,P,\{\vec{b}\})$ is projectively
		openGF-separable;
		
		\item there exists a finite model $\Amf$ of $\Kmc$ and a finite set $\Sigma_{\text{help}}$ of unary relation symbols disjoint from $\text{sig}(\Kmc)$ such that for all models $\Bmf$ of
		$\Kmc$ and all $\vec{a}\in P$: $\Bmf,\vec{a}^{\Bmf}
		\not\sim_{\text{openGF},\Sigma\cup \Sigma_{\text{help}}} \Amf, \vec{b}^{\Amf}$;
		
		\item there exists a finite model $\Amf$ of $\Kmc$ such that for all models $\Bmf$ of $\Kmc$ and all
		$\vec{a}\in P$: $\Bmf,\vec{a}^{\Bmf} \not\sim_{\text{openGF},\Sigma}^{f} \Amf,
		\vec{b}^{\Amf}$; 
		
		\item there exists a finite model $\Amf$ of $\Kmc$
		  such that for all $\vec{a} \in P$:
		$\Dmc_{\text{con}(\vec{a})},\vec{a}\not\rightarrow \Amf,\vec{b}^{\Amf}$;
		
		\item there exists a model $\Amf$ of $\Kmc$ such that for all $\vec{a}\in P$:
		$\Dmc_{\text{con}(\vec{a})},\vec{a}\not\rightarrow \Amf,\vec{b}^{\Amf}$.		
	\end{enumerate}
\end{restatable}
\begin{proof} \
	The proof is essentially the same as the proof of Theorem~\ref{thm:L-modeltheory0} with (functional) $\mathcal{ALCI}(\Sigma)$-bisimulations replaced by (functional) connected guarded bisimulations. For the implication ``2. $\Rightarrow$ 1.'' one uses Lemma~\ref{lem:guardedbisim} instead of Lemma~\ref{lem:equivalence}. For
	``4. $\Rightarrow$ 3.'' consider a model $\Amf$ of $\Kmc$ such that Condition~4 holds but assume there exists a model $\Bmf$ of $\Kmc$ and a connected functional guarded bisimulation $I$ refuting Condition~3 for some $\vec{a}\in P$ for $\Amf$. Then $f=\bigcup_{p\in I}p$ is a function and thus, as every $p\in I$ is a $\Sigma$-isomorphism, it is a $\Sigma$-homomorphism from
	$\Bmf_{\text{con}(\vec{a}^{\Bmf})},\vec{a}^{\Bmf}$ to $\Amf,\vec{b}^{\Amf}$. Let $h$ be the mapping from $\text{cons}(\Dmc_{\text{con}(\vec{a})})$ to $\text{dom}(\Bmf)$ defined by setting $h(c)=c^{\Bmf}$. Then $h$ is a homomorphism from $\Dmc_{\text{con}(\vec{a})}$ to $\Bmf$ and so the composition $f\circ h$ is a homomorphism witnessing $\Dmc_{\text{con}(\vec{a})},\vec{a}\rightarrow \Amf,\vec{b}^{\Amf}$. We have derived a contradiction. For ``5. $\Rightarrow$ 4.'' observe that UCQ-evaluation on GF-KBs is finitely controllable~\cite{DBLP:journals/corr/BaranyGO13}.
\end{proof}
Observe that the projective case of Theorem~\ref{thm:GFopen} follows directly from Theorem~\ref{thm:L-modeltheorygf0} since even projective (GF,openGF)-separability and (GF,FO)-separability coincide by Point~4 and
$\text{openGF} \subseteq \text{GF} \subseteq \text{FO}$. 

UCQ-evaluation is decidable in \TwoExpTime on GF-KBs~\cite{DBLP:journals/corr/BaranyGO13}
and satisfiability of GF-sentences is 2\ExpTime-hard~\cite{DBLP:journals/jsyml/Gradel99}.
We therefore obtain the following result in the same way as Corollary~\ref{thm:gnfo}.
\begin{theorem}
	\label{thm:gfo}
	For any FO-fragment $\Lmc_{S}\supseteq \text{UCQ}$:
	\begin{enumerate}
		\item projective $(\text{GF},\text{GF})$-separability,
		projective $(\text{GF},\text{openGF})$-separability, and
		$(\text{GF},\Lmc_S)$-separability coincide, and the same is true for definability;
		\item projective $(\text{GF},\text{GF})$-separability and projective $(\text{GF},\text{GF})$-definability are \TwoExpTime-complete in combined complexity.
	\end{enumerate}
	This holds also for RE-existence and entity distinguishability, for
	all FO-fragments $\Lmc_{S}\supseteq \text{CQ}$.	
\end{theorem}
Regarding data complexity, it follows from the \NExpTime-hardness of
projective $(\ALCI,\ALCI)$-separability that projective
$(\text{GF},\text{GF})$-separability is also \NExpTime-hard in data
complexity. We conjecture that it is actually \TwoExpTime-hard in data
complexity. Some justification for this conjecture is provide
in~\ref{app:hardness}, following the proof of \NExpTime-hardness of
projective $(\ALCI,\ALCI)$-separability.

%
We now consider non-projective separability. Let $\Kmc=(\Omc,\Dmc)$ be a GF-KB. The main result characterizing non-projective separability is very similar to non-projective $\mathcal{ALCI}$: again, in addition to UCQ-evaluation one has to understand complete types that distinguish the non-projective from the projective case. For each $n \geq 1$, fix a tuple
of distinct variables $\vec{x}_n$ of length~$n$.  We use $\cl(\Kmc)$
to denote the smallest set of GF-formulas that is closed under
subformulas and single negation and contains: all formulas from \Omc;
$x=y$ for distinct variables $x,y$; for all $R\in \mn{sig}(\Kmc)$ of
arity $n\geq 2$ and all distinct $x,y \in [\vec{x}_n]$, the formulas
$R(\vec{x}_n)$, $\exists \vec{y}_1 \, (R(\vec{x}_n) \wedge
x\not=y)$ where $\vec{y}_1$ is $\vec{x}_n$ without $x$, and
$\exists \vec{y}_2 \, R(\vec{x}_n)$ for all $\vec{y}_2$ with
$[\vec{y}_2] \subseteq [\vec{x}_n] \setminus \{ x,y \}$ (for $n\geq 3$).
%
Let $\Amf$ be a model of $\Kmc$ and $\vec{a}$ a tuple in $\Amf$.
The \emph{$\Kmc$-type of $\vec{a}$ in $\Amf$} is defined as
$$
\text{tp}_{\Kmc}(\Amf,\vec{a})= \{ \varphi\mid \Amf\models \varphi(\vec{a}), \varphi\in \cl(\Kmc)[\vec{x}]\},
$$
where $\cl(\Kmc)[\vec{x}]$ is obtained from $\cl(\Kmc)$ by
substituting in any formula $\varphi\in \cl(\Kmc)$ the free variables of $\varphi$ 
by variables in $\vec{x}$ in all possible ways, $\vec{x}$ a tuple of distinct variables 
of the same length as $\vec{a}$. Any such $\Kmc$-type of some $\vec{a}$ in
a model $\Amf$ of $\Kmc$ is called a \emph{$\Kmc$-type} and denoted $\Phi(\vec{x})$. $\Phi(\vec{x})$ is
\emph{realizable in a pair $\Kmc,\vec{b}$} with $\vec{b}$ a tuple in $\text{cons}(\Dmc)$ if there exists a model $\Amf$ of
$\Kmc$ with $\text{tp}_{\Kmc}(\Amf,\vec{b}^{\Amf})=\Phi(\vec{x})$.
Recall that in $\ALCI$, a type is \emph{connected} if it contains a concept of the form $\exists R.\top$. We generalize this notion to GF by demanding that the type $\Phi(\vec{x})$ contains a formula of the form 
$\exists
\vec{y}_1 \, (R(\vec{x}_{n}) \wedge x \not=y)$ from above, for some relation $R$ of arity $n\geq 2$ and with $x$ replaced by some $x_{i}$ from $\vec{x}$. 
Otherwise $\Phi(\vec{x})$ is called \emph{disconnected}.  The definition of a complete type is the same as for $\mathcal{ALCI}$.
\begin{definition}
	Let $\Kmc$ be a GF-KB.  A $\Kmc$-type $\Phi(\vec{x})$ is
	\emph{openGF-complete for $\Kmc$} if for any two pointed models
	$\Amf_{1},\vec{b}_{1}$ and $\Amf_{2},\vec{b}_{2}$ of $\Kmc$,
	$\Phi(\vec{x})=\text{tp}_{\Kmc}(\Amf_{1},\vec{b}_{1})=\text{tp}_{\Kmc}(\Amf_{2},\vec{b}_{2})$
	implies
	$
	\Amf_{1},\vec{b}_{1}\equiv_{\text{openGF},\Sigma} \Amf_{2},\vec{b}_{2}.
	$
\end{definition}
Observe that every disconnected $\Kmc$-type is openGF-complete. The following example illustrates the notion of $\Kmc$-types in GF.
\begin{exmp}
	{\em Consider the KB $\Kmc^{-}=(\Omc^{-},\Dmc)$ from Example~\ref{ex:GF}.
		In contrast to $\ALCI$ (where there exists only a single $\Kmc^{-}$-type $\{\top,\exists R.\top,\exists R^{-}.\top,\exists R.\top\sqcap \exists R^{-}.\top\}$ which is connected and $\ALCI$-complete for $\Kmc^{-}$), in GF there are multiple $\Kmc^{-}$-types with a single free variable $x$:
		\begin{itemize}
			\item there exists a uniquely determined such $\Kmc^{-}$-type containing 
		$
		\{R(x,x),\neg\exists y(R(x,y)\wedge x\not=y),\neg\exists y(R(y,x)\wedge x\not=y)\}
		$. This type is disconnected (and therefore openGF-complete for $\Kmc^{-}$). 
		    \item There exists a uniquely determined such $\Kmc^{-}$-type containing $\forall xy(R(x,y)\rightarrow
		    \neg R(y,x))$. This type is connected (as it contains $\exists y (R(x,y) \wedge x\not=y)$ and openGF-complete for $\Kmc^{-}$.
		    \item There are multiple connected and not openGF-complete such  $\Kmc^{-}$-types. For example, there exist such types containing $\{R(x,x),\exists y(R(x,y)\wedge x\not=y)\}$, $\{R(x,x),\neg\exists y(R(x,y)\wedge x\not=y)\}$, or $\{\neg R(x,x),\exists y(R(x,y)\wedge x\not=y)\}$. 
		 \end{itemize}
		It follows from Point~2 above that for the KB $\Kmc=(\Omc,\Dmc)$ from
		Example~\ref{ex:GF} there is only a single $\Kmc$-type
		$\Phi(x)$. This type is connected and openGF-complete for $\Kmc$.
		There are, up to variable renaming, two $\Kmc$-types
		$\Phi(x,y)$ containing $x\not=y$: the $\Kmc$-type
		containing $R(x,y),\neg R(y,x)$ and the $\Kmc$-type
		containing $\neg R(x,y),\neg R(y,x)$. Both are
		openGF-complete for $\Kmc$ and connected.\finofex
	}
\end{exmp}
We could now characterize non-projective
$(\text{GF},\text{GF})$-separability in a way that is completely
analogous to Theorem~\ref{thm:char12}, replacing \ALCI-completeness of
types with openGF-completeness.  However, this works only for labeled
KBs $(\Kmc,P,\{\vec{b}\})$, $\Kmc = (\Omc,\Dmc)$, such that all
constants in $[\vec{b}]$ can reach one another in the Gaifman graph
of~$\Dmc$. To formulate a condition for the general case,
%
%
for a tuple $\vec{a}=(a_{1},\ldots,a_{n})$ and $I
\subseteq \{1,\ldots,n\}$ let $\vec{a}_{I}=(a_{i}\mid i\in I)$. 
\begin{restatable}{theorem}{thmcritGFtwelve}
	\label{thm:critGF1}
	Assume that $(\Kmc,P,\{\vec{b}\})$ is a labeled GF-KB with $\Kmc=(\Omc,\Dmc)$, $\Sigma=\text{sig}(\Kmc)$, and 
	$\vec{b}=(b_{1},\ldots,b_{n})$. Then the following conditions are equivalent:
	\begin{enumerate}
		\item $(\Kmc,P,\{\vec{b}\})$ is non-projectively GF-separable;
		\item $(\Kmc,P,\{\vec{b}\})$ is non-projectively openGF-separable; 
		\item there exists a finite model $\Amf$ of $\Kmc$ such that for all models $\Bmf$ of $\Kmc$ and all $\vec{a}\in P$: $\Bmf,\vec{a}^{\Bmf}\not\sim_{\text{GF},\Sigma} \Amf,\vec{b}^{\Amf}$;
		\item there exists a finite model $\Amf$ of $\Kmc$ such that for all models $\Bmf$ of $\Kmc$ and all $\vec{a}\in P$: $\Bmf,\vec{a}^{\Bmf}\not\sim_{\text{openGF},\Sigma} \Amf,\vec{b}^{\Amf}$;
		\item there exists a model $\Amf$ of $\Kmc$ such that for all $\vec{a}\in P$:
	%
	\begin{enumerate}
		\item $\Dmc_{\text{con}(\vec{a})},\vec{a}\not\rightarrow \Amf,\vec{b}^{\Amf}$ and
		\item if the set $I$ of all $i$ such that
		$\text{tp}_{\Kmc}(\Amf,b_{i}^{\Amf})$ is connected and openGF-complete is not empty,
		then 
		\begin{enumerate}
			
			\item $J=\{1,\dots,n\}\setminus I \neq \emptyset$
			and
			$\Dmc_{\text{con}(\vec{a}_{J})},\vec{a}_{J}\not\rightarrow \Amf,\vec{b}_{J}^{\Amf}$ or
			
			\item $\text{tp}_{\Kmc}(\Amf,\vec{b}^{\Amf})$ is not realizable in 
			$\Kmc,\vec{a}$.
			
		\end{enumerate}
	\end{enumerate}
\end{enumerate}
\end{restatable}  
\begin{proof}\
(sketch)
The equivalence of Points~1 and 3 and of Points~2 and 4 can be proved
using Lemma~\ref{lem:guardedbisim} and the finite model property of GF (see Section~\ref{sec:prelim}).
These proofs are exactly the same as the proofs we have given for $\mathcal{ALCI}$ already and therefore omitted. The implication ``2. $\Rightarrow$ 1.'' holds since openGF is contained in GF. We next show ``3. $\Rightarrow$ 4.'' Assume that there exists a model
$\Amf$ such that Point~3 holds. We show that Point~4 also holds for $\Amf$.
For a proof by contradiction, assume that there exist a model $\Bmf$ of $\Kmc$
and $\vec{a}\in P$ such that $\Bmf,\vec{a}^{\Bmf} \sim_{\text{openGF},\Sigma} \Amf,\vec{b}^{\Amf}$. Let $\Amf_{\vec{b}}$ and $\Bmf_{\vec{a}}$ be the restrictions of $\Amf$ and $\Bmf$ to all nodes reachable from $\vec{b}^{\Amf}$ in $\Amf$ and from $\vec{a}^{\Bmf}$ in $\Bmf$, respectively. We interpret
all constants from $\Dmc_{\text{con}(\vec{a})}$ in $\Bmf_{\vec{a}}$ as before.
Note that $\Bmf_{\vec{a}},\vec{a}^{\Bmf} \sim_{\text{GF},\Sigma} \Amf_{\vec{b}},\vec{b}^{\Amf}$. Let $\Bmf'$ be an isomorphic copy of $\Amf$ disjoint from $\Bmf_{\vec{a}}$. Interpret the constants in $\Dmc\setminus \Dmc_{\text{con}(\vec{a})}$ in $\Bmf'$ by the copies of their interpretations in $\Amf$. We have $\Bmf_{\vec{a}}\cup \Bmf',\vec{a}^{\Bmf} \sim_{\text{GF},\Sigma} \Amf, \vec{b}^{\Amf}$ since we can extend the guarded bisimulation witnessing $\Bmf_{\vec{a}},\vec{a}^{\Bmf} \sim_{\text{GF},\Sigma} \Amf_{\vec{b}},\vec{b}^{\Amf}$ by the identity mappings between $\Bmf'$ and $\Amf$. Then $\Bmf_{\vec{a}}\cup \Bmf'$
is a model of $\Omc$ since $\Amf$ is a model of $\Omc$ and
$\Bmf_{\vec{a}}\cup \Bmf',\vec{a}^{\Bmf} \sim_{\text{GF},\Sigma} \Amf,
\vec{b}^{\Amf}$. Moreover, $\Bmf_{\vec{a}}\cup \Bmf'$ is a model of
$\Dmc$ by definition. Hence,
$\Bmf_{\vec{a}}\cup \Bmf'$ is a model $\Kmc$ and we have derived a contradiction.

We have shown that Points~1 to 4 are equivalent. The proof of the equivalence of Point~5 and Points~1 to 4 is based on the same ideas as its counterpart for $\mathcal{ALCI}$ but has to deal with a few additional problems. (1) We admit proper tuples and not only singletons as examples. This leads to various case distinctions and technical issues. For example, we require a lemma that states that a type $\Phi(\vec{x})$ is openGF-complete iff all its restrictions to a single variable are openGF-complete. (2) The unfolding from $\Amf$ into $\Amf_{\Dmc,b}^{\leq \ell}$ cannot be replicated by an unfolding into a guarded tree-decomposable structures, as one might have hoped. The issue is that the leafs $L$ of the unfolding $\Amf_{\Dmc,b}^{\leq \ell}$ should have distance $\ell$ from the interpretation of the database $\Dmc$. This is not the case for unfoldings into guarded tree-decomposable structures. Instead, a rather different unfolding is required that takes copies of the whole input structure. (3) Finally, in contrast to $\ALCI$ the structure $\Cmf$ one constructs can be of infinite outdegree and cannot be easily transformed into a finite model (or model of finite outdegree). To nevertheless use the machinery of guarded bisimulations, we work with bounded guarded bisimulations instead of infinitary ones. 
A detailed proof dealing with these issues is given in the appendix of this article.
\end{proof}
Replicating the case of \ALCI, we could now define a notion of
strongly incomplete GF-KBs and observe a counterpart of
Corollary~\ref{corr:stronglyincompl}. We refrain from giving the
details. Also as for \ALCI, we can reduce projective
$(\text{GF},\text{GF})$-separability to non-projective
$(\text{GF},\text{GF})$-separability in polynomial time and show that
a single unary helper symbol always suffices to separate a labeled GF-KB that
is projectively GF-separable. 
We obtain
the following in a similar way as 
Theorem~\ref{thm:alcinexp}.
\begin{theorem} 
	\label{thm:gftwoexp}
	Non-projective $(\text{GF},\text{GF})$-separability is \TwoExpTime-complete in combined complexity. The same is true for definability, RE-existence, and
	entity indistinguishability. 
\end{theorem}
%


%
In the special case where the ontology is empty, Point~5(b) of
Theorem~\ref{thm:critGF1} is vacuously true and thus projective and
non-projective GF-separability coincide with FO-separability.
This also follows from the observation in the previous section 
that already non-projective $\ALCI$-separability coincides with FO-separability
if the ontology is empty and the fact that $\ALCI$ is contained in GF.

\subsection{Separability of Labeled FO$^2$-KBs}
\label{section:twovariable}

We show that $(\text{FO}^2, \text{FO}^2)$- and
$(\text{FO}^2, \text{FO})$-separability are undecidable both in the
projective and in the non-projective case. We also show that
these separation problems do not coincide even in
the projective case, in contrast to our results for \ALCI and GF in the
previous sections. The latter result is a consequence of a general theorem
that states that for all fragments
$\Lmc$ of FO enjoying the relativization property and the finite model property but for which rooted UCQ-evaluation is not finitely controllable, projective $(\Lmc,\Lmc)$-separability does not coincide with projective $(\Lmc,\text{FO})$-separability.
In the context of FO$^2$,
we generally assume that examples are tuples of length one or two and that only unary and binary relation symbols are used.

UCQ-evaluation on FO$^{2}$-KBs is
undecidable~\cite{DBLP:conf/icdt/Rosati07} and the proof easily adapts
to rooted UCQs. Together with Theorem~\ref{critFOwithoutUNA}, we
obtain undecidability of $(\text{FO}^{2},\text{FO})$-separability both
in the projective and non-projective case (which coincide, due to that
theorem). We adapt the mentioned undecidability proof to show that  
projective and non-projective
$(\text{FO}^{2},\Lmc_S)$-separability is undecidable for every FO-fragment
$\Lmc_S$ that contains $\text{FO}^2$. The same is true 
for definability, RE-existence, and entity distinguishability.
\begin{restatable}{theorem}{thmfotwofo}\label{thm:fo2fo}
  For all FO-fragments $\Lmc_S\supseteq\text{FO}^2$, projective and
  non-projective $(\text{FO}^{2},\Lmc_S)$-separability are undecidable.
  This is also true for definability, RE-existence, and entity distinguishability.
\end{restatable}

The proof of Theorem~\ref{thm:fo2fo} is by reduction from tiling
problems which we introduce next. A \emph{tiling
system} is a triple $(T,H,V)$ with $T$ a finite set and
$H,V\subseteq T\times T$. A \emph{solution} to $(T,H,V)$ is a function
$\tau:\mathbbm{N}\times \mathbbm{N}$ such that, for all $i,j\geq 0$:
\begin{enumerate}[label=(\roman*)]

  \item $(\tau(i,j),\tau(i+1,j))\in H$, and 

  \item $(\tau(i,j),\tau(i,j+1))\in V$.

\end{enumerate} 
A solution is \emph{periodic} if there are periods $h,v\geq 1$ such
that $\tau(i,j)=\tau(i+h,j)=\tau(i,j+v)$, for all $i,j\geq 0$. Note
that a periodic solution can be thought of as a torus, labeled with
elements from $T$ consistent with $H$ and $V$. We say that a tiling
system \emph{admits a (periodic) solution}, if there is a (periodic)
solution to it. It is well-known that the problem of deciding whether
a given tiling system admits a (resp., periodic) solution is
undecidable~\cite{Berger1966}. However, we are going to exploit a
stronger undecidability result due to Gurevich and
Koryakov~\cite{gurevichTilings}; see
also~\cite[Theorem~3.1.7]{DBLP:books/sp/BorgerGG1997} for a new proof.
Recall that two sets $A,B$ are \emph{recursively inseparable} if there
is no recursive (that is, decidable) set that contains $A$ and is
disjoint from $B$.
\begin{theorem}[\cite{gurevichTilings,DBLP:books/sp/BorgerGG1997}]
  \label{thm:tilings} The set of tiling systems that admit no solution
  is recursively inseparable from the set of tiling systems that admit
  a periodic solution. 
\end{theorem}
Hence, one can show undecidability of a language $L\subseteq \Sigma^*$
by giving a computable function $f$ from tiling systems to $\Sigma^*$
such that: \textit{(i)} tiling systems which admit a periodic solution
are mapped to $L$, and~\textit{(ii)} tiling systems without a solution
are mapped to $\Sigma^*\setminus L$. We employ this strategy in the following
proof of Theorem~\ref{thm:fo2fo}. 

\medskip\noindent\begin{proof}
%
%
%
%
%
\ By Observation~\ref{obs:red}, it suffices to show undecidability for 
RE-existence and undecidability of separability, definability, and entity distinguishability follow. 

Given a tiling system $(T,H,V)$, we construct a labeled FO$^2$-KB
$(\Kmc,\{a\},N)$ with $\Kmc = (\Omc,\Dmc)$ as follows: 
\begin{align}
\Omc = \{\, & B\sqsubseteq \exists R_h.B\sqcap \exists R_v.B
\label{eq:tree}\\
%
%
& \forall xy\, (B(x)\wedge B(y)\rightarrow U(x,y)), \\
& \forall xy\, (\neg R_v(x,y)\rightarrow \overline
R_v(x,y)), \label{eq:roleinclusion}\\
%
%
& B \sqsubseteq \bigsqcup_{t\in T} \big(A_t\sqcap
\bigsqcap_{t'\in T\setminus\{t\}}\neg A_{t'}\big), \label{eq:tile1}\\
%
%
& A_t \sqsubseteq \forall R_v.\bigsqcup_{(t,t')\in V} A_{t'}, &
\text{for all $t\in T$}
\label{eq:tile2}\\
%
& A_t \sqsubseteq \forall R_h.\bigsqcup_{(t,t')\in H} A_{t'}, &
\text{for all $t\in T$} 
\label{eq:tile3}\\
&\! \} \notag\\
\Dmc & = \{\ U(a,a_1),\ R_v(a_1,a_2),\ R_h(a_2,a_3),\ R_h(a_1,a_4),\
\overline R_v(a_4,a_3),\ B(b)\ \}\notag \\
N & = \{\ a_1,\ a_2,\ a_3,\ a_4,\ b\ \}\notag
\end{align}
\tikzset{every picture/.style={line width=0.75pt}} 
The database $\Dmc_{\text{con}(a)}$ can be depicted as follows:
\begin{center}
\begin{tikzpicture}[x=0.75pt,y=0.75pt,yscale=-1,xscale=1]
	
	\draw  [fill={rgb, 255:red, 0; green, 0; blue, 0 }  ,fill opacity=1 ] (262.1,270.25) .. controls (262.1,269.28) and (261.32,268.5) .. (260.35,268.5) .. controls (259.38,268.5) and (258.6,269.28) .. (258.6,270.25) .. controls (258.6,271.22) and (259.38,272) .. (260.35,272) .. controls (261.32,272) and (262.1,271.22) .. (262.1,270.25) -- cycle ;
	\draw  [fill={rgb, 255:red, 0; green, 0; blue, 0 }  ,fill opacity=1 ] (261.9,238.05) .. controls (261.9,237.08) and (261.12,236.3) .. (260.15,236.3) .. controls (259.18,236.3) and (258.4,237.08) .. (258.4,238.05) .. controls (258.4,239.02) and (259.18,239.8) .. (260.15,239.8) .. controls (261.12,239.8) and (261.9,239.02) .. (261.9,238.05) -- cycle ;
	\draw    (260.38,265.52) -- (260.28,246.05) ;
	\draw [shift={(260.27,243.05)}, rotate = 449.7] [fill={rgb, 255:red, 0; green, 0; blue, 0 }  ][line width=0.08]  [draw opacity=0] (5,-2.5) -- (0,0) -- (5,2.5) -- (3.5,0) -- cycle    ;
	\draw    (265.18,270.72) -- (286,270.27) ;
	\draw [shift={(289,270.2)}, rotate = 538.76] [fill={rgb, 255:red, 0; green, 0; blue, 0 }  ][line width=0.08]  [draw opacity=0] (5,-2.5) -- (0,0) -- (5,2.5) -- (3.5,0) -- cycle    ;
	\draw  [fill={rgb, 255:red, 0; green, 0; blue, 0 }  ,fill opacity=1 ] (295.3,270.25) .. controls (295.3,269.28) and (294.52,268.5) .. (293.55,268.5) .. controls (292.58,268.5) and (291.8,269.28) .. (291.8,270.25) .. controls (291.8,271.22) and (292.58,272) .. (293.55,272) .. controls (294.52,272) and (295.3,271.22) .. (295.3,270.25) -- cycle ;
	\draw    (265.18,237.72) -- (286,237.27) ;
	\draw [shift={(289,237.2)}, rotate = 538.76] [fill={rgb, 255:red, 0; green, 0; blue, 0 }  ][line width=0.08]  [draw opacity=0] (5,-2.5) -- (0,0) -- (5,2.5) -- (3.5,0) -- cycle    ;
	\draw  [fill={rgb, 255:red, 0; green, 0; blue, 0 }  ,fill opacity=1 ] (295.3,237.25) .. controls (295.3,236.28) and (294.52,235.5) .. (293.55,235.5) .. controls (292.58,235.5) and (291.8,236.28) .. (291.8,237.25) .. controls (291.8,238.22) and (292.58,239) .. (293.55,239) .. controls (294.52,239) and (295.3,238.22) .. (295.3,237.25) -- cycle ;
	\draw [dash pattern={on 0.84pt off 2.51pt}]   (293.58,265.05) -- (293.48,245.58) ;
	\draw [shift={(293.47,242.58)}, rotate = 449.7] [fill={rgb, 255:red, 0; green, 0; blue, 0 }  ][line width=0.08]  [draw opacity=0] (5,-2.5) -- (0,0) -- (5,2.5) -- (3.5,0) -- cycle    ;
	\draw    (211,250) .. controls (200.92,257.15) and (213.03,271.59) .. (248.86,270.45) ;
	\draw [shift={(251.67,270.33)}, rotate = 536.99] [fill={rgb, 255:red, 0; green, 0; blue, 0 }  ][line width=0.08]  [draw opacity=0] (5,-2.5) -- (0,0) -- (5,2.5) -- (3.5,0) -- cycle    ;
	
	\draw (210.47,239.93) node [anchor=north west][inner sep=0.75pt]  [font=\footnotesize] [align=left] {$\displaystyle a$};
	\draw (269.18,274.72) node [anchor=north west][inner sep=0.75pt]  [font=\footnotesize] [align=left] {$\displaystyle R_{h}{}$};
	\draw (267.18,218.72) node [anchor=north west][inner sep=0.75pt]  [font=\footnotesize] [align=left] {$\displaystyle R_{h}{}$};
	\draw (239.18,246.72) node [anchor=north west][inner sep=0.75pt]  [font=\footnotesize] [align=left] {$\displaystyle R_{v}$};
	\draw (302.18,246.72) node [anchor=north west][inner sep=0.75pt]  [font=\footnotesize] [align=left] {$\displaystyle \overline{R_{v}}$};
	\draw (199.18,265.72) node [anchor=north west][inner sep=0.75pt]  [font=\footnotesize] [align=left] {$\displaystyle U$};

\end{tikzpicture}
\end{center}

By (the comment after) Theorem~\ref{thm:tilings}, it suffices to show the following Claim.

\smallskip\noindent\textit{Claim.} Let $\Lmc_S$ be an FO-fragment with 
$\Lmc_S\supseteq \text{FO}^2$. Then, we have: 

\begin{enumerate}

  \item If $(\Kmc,\{a\},N)$ is projectively or non-projectively
    $\Lmc_S$-separable, then $(T,H,V)$ admits a solution.

  \item If $(T,H,V)$ admits a periodic solution, then 
    $(\Kmc,\{a\},N)$ is non-projectively $\Lmc_S$-separable.

\end{enumerate}

\smallskip\noindent\textit{Proof of the Claim.} For Point~1, suppose
that $(\Kmc,\{a\},N)$ is projectively or non-projectively
$\Lmc_S$-separable. Thus, $(\Kmc,\{a\},\{b\})$ is FO-separable. 
By
Theorem~\ref{critFOwithoutUNA} and Corollary~\ref{cor:weak2}, 
$\varphi_{\Dmc_{\text{con}(a)},a}(x)$ separates
$(\Kmc,\{a\},\{b\})$. Let $\Amf$ be a structure witnessing
$\Kmc\not\models \varphi_{\Dmc_{\text{con}(a)},a}(b)$. Since \Amf is a model of~\eqref{eq:tree}--\eqref{eq:roleinclusion} and
$\Amf\not\models\varphi_{\Dmc_{\text{con}(a)},a}(b)$, it contains an infinite grid
formed by relations $R_v$ and $R_h$. Since \Amf is a model
of~\eqref{eq:tile1} every element in the grid is labeled with 
$A_t$ for a unique element $t\in T$. Finally, since \Amf is a model
of~\eqref{eq:tile2} and~\eqref{eq:tile3}, the relations
$H$ and $V$ are respected along $R_h$ and $R_v$, respectively.

\smallskip For Point~2, suppose that $(T,H,V)$ admits a periodic
solution $\tau$ with periods $h,v\geq 1$. We show that $(\Kmc,\{a\},N)$ is
non-projectively FO$^2$-separable under the assumption that $\Kmc$
mentions a binary relation symbol $S$. This is without loss of
generality, as we can include $\forall xy\, S(x,y)\rightarrow
S(x,y)$ in $\Omc$. 

Let $\pi$ be a bijection from $[h]\times[v]$ to $[hv]$ and
let $C_{ij}$ be the \ALCI-concept (corresponding to an FO$^2$-formula)
expressing that there is an $S$-path of length $\pi(i,j)$.  We
construct the following FO$^2$-formula $\varphi_{hv}(x)$, written as
an \ALCI-concept: 
\begin{align}
  \exists U.\bigsqcap_{i\in[h]}\bigsqcap_{j\in [v]}(\forall R_v.\forall R_h.C_{ij}\rightarrow
  \exists R_h.\exists \overline R_v. C_{ij} ).\label{eq:sepconcept}
\end{align}
It should be clear that $\Kmc\models \varphi_{hv}(a)$ since already
$\varphi_{\Dmc_{\text{con}(a)},a}(x)\models \varphi_{hv}(x)$. To see that
$\Kmc\not\models\varphi_{hv}(d)$, for all $d\in N$, we construct a (finite) model \Amf
witnessing that. Informally, $\Amf$ consists of two disconnected
parts. One part is $\Dmc$ viewed as a structure; the other is an $h\times
v$-torus over binary symbols $R_v,R_h$ in which each element has an
outgoing $S$-path. More precisely, the torus has domain $[h]\times [v]$
and each element $(i,j)$ is labeled with the unary symbol
$A_{\tau(i,j)}$ and has an outgoing $S$-path of length $\pi(i,j)$.
Formally, we have: 
\begin{align*}
  B^\Amf & = [h]\times [v] \\
  R_v^{\Amf} & = \{(a_1,a_2)\} \cup \{( (i,j), (i,j\oplus_v 1)\mid
  i\in [h], j\in [v]\}
  \\
  R_h^{\Amf} & = \{(a_2,a_3),(a_1,a_4)\} \cup \{( (i,j), (i\oplus_h 1,j)\mid i\in [h], j\in [v]\}
  \\
  A_t^{\Amf} & = \{(i,j)\in [h]\times[v]\mid \tau(i,j)=t\} \hspace{5cm}\text{for all
  $t\in T$}\\
  U^{\Amf} & = \{(a,a_1)\}\cup
  ([h]\times[v])\times([h]\times[v]) \\
  a^{\Amf} & = a\quad \quad a_i^\Amf = a_i,\text{ for
  }i\in\{1,2,3,4\}\quad \quad b^\Amf = (0,0)
\end{align*}
and $S^\Amf$ is as described above and $\overline R_v^\Amf$ is the
complement of $R_v^\Amf$. 

It is readily checked that $\Amf$ is a model of \Kmc. Moreover, we
have
$\Amf\not\models\varphi_{hv}(a_i)$, for $i\in\{1,\ldots,4\}$, since
the $a_i$ do not have $U$-successors in $\Amf$. Suppose finally that 
$\Amf\models\varphi_{hv}(b)$, let $(i_0,j_0)$ be the $U$-successor of
$b^\Amf$ that witnesses the big conjunction in~\eqref{eq:sepconcept},
and let $i=i_0\oplus_h 1$ and $j=j_0\oplus_v 1$. By construction of
$\Amf$, we have $\Amf\models \forall R_v.\forall R_h.C_{ij}(i_0,j_0)$, but 
$\Amf\not\models \exists R_h.\exists \overline R_v.C_{ij}(i_0,j_0)$, in
contradiction to the implication in~\eqref{eq:sepconcept}. Hence, 
$\Amf\not \models\varphi_{hv}(b)$. This finishes the proof of the
Claim and in fact of the Theorem.
\end{proof}

We next discuss the relationship between FO-separability and FO$^2$-separability in labeled FO$^2$-KBs. Example~\ref{ex:GF} shows that
$(\text{FO}^{2},\text{FO})$-separability and
$(\text{FO}^{2},\text{FO}^2)$-separability do not coincide in the
non-projective case, since every FO$^2$-formula $\vp(x)$ with
$\mn{sig}(\vp) = \{ R \}$ is equivalent to $x=x$ or to $\neg(x=x)$
w.r.t.\ the ontology \Omc used there. The example also yields that
projective and non-projective $(\text{FO}^{2},\text{FO}^2)$-separability do not coincide. We next observe that $(\text{FO}^{2},\text{FO})$-separability and
$(\text{FO}^{2},\text{FO}^2)$-separability do not coincide also in the
projective case, in a more general
setting. 
Note that FO$^{2}$ has the finite model property and the relativization property~\cite{DBLP:journals/mlq/Mortimer75}. The following example shows that evaluating rooted CQs on FO$^{2}$-KBs is not finitely controllable using the ontology constructed in the proof of Theorem~\ref{thm:fo2fo}.
\begin{exmp}\label{ex:exmp}
	{\em 
Let $\Omc$ be the ontology constructed in the proof of Theorem~\ref{thm:fo2fo}
for a tiling system that admits a solution but does not admit any periodic solution. Let $\Dmc$ be the database used in that proof, $\Dmc'=\{B(b)\}$, $\Kmc'=(\Omc,\Dmc')$, and consider the rooted CQ $\varphi_{\Dmc_{\text{con}(a),a}}(x)$.
Then $\Kmc'\not\models \varphi_{\Dmc_{\text{con}(a),a}}(b)$ but
$\Amf\models \varphi_{\Dmc_{\text{con}(a),a}}(b)$ for every finite
model $\Amf$ of $\Kmc'$.\finofex
}
\end{exmp}

\begin{restatable}{theorem}{thmsepfcp}\label{thm:sepfcp}
	Let $\LmcO$ be a fragment of FO that has the relativization property
	and the finite model property. Then the following holds:
	\begin{itemize}
		\item If projective $(\LmcO,\text{FO})$-separability
	coincides with projective $(\LmcO,\LmcO)$-separability for labeled KBs with a single negative example tuple of length $n$, then
	evaluating rooted UCQs of arity $n$ on $\LmcO$-KBs is finitely controllable;
	    \item if projective $(\LmcO,\text{FO})$-entity distinguishability  coincides with projective $(\LmcO,\LmcO)$-entity distinguishability for example tuples of length $n$, then
	    evaluating rooted CQs of arity $n$ on $\LmcO$-KBs is finitely controllable.
	 \end{itemize}
\end{restatable}
\begin{proof} \ We prove Point~1. Point~2 is a direct consequence of the proof. Assume that evaluating rooted UCQs on \Lmc-KBs is not
finitely controllable, that is, there is an \Lmc-KB $\Kmc=(\Omc,\Dmc)$, 
a rooted UCQ $q(\vec{x})=\bigvee_{i\in
	I}q_{i}(\vec{x})$, $\vec{x}=(x_{1},\ldots,x_{n})$, and a tuple $\vec{a}$ in $\Dmc$ such that
$\Kmc\not\models q(\vec{a})$, but $\Bmf\models q(\vec{a})$
for all finite models $\Bmf$ of $(\Omc,\Dmc)$. The proof is now very similar to the proof of Corollary~\ref{cor:rel}.
Consider the relativization $\Omc_{|A}$ of the sentences of $\Omc$ to $A$
and $\Dmc^{+A}=\Dmc \cup \{A(a) \mid a\in \text{cons}(\Dmc)\}$, for a
fresh unary relation symbol $A$.
Define the labeled KB $(\Kmc',P,N)$ as in the proof of Corollary~\ref{cor:rel}. Then the UCQ $q(\vec x)$ separates $(\Kmc',P,N)$ but we show that
$(\Kmc',P,N)$ is not $\Lmc$-separable.
Suppose there is an \Lmc-formula $\varphi(\vec{x})$ that separates
$(\Kmc',P,N)$. Since \Lmc has the finite model property, there exists a finite model $\Amf_{f}$ of
$\Kmc'$ such that $\Amf_{f}\models \neg\varphi(\vec{a})$.
As $\Bmf\models q(\vec{a})$ for all finite models $\Bmf$ of $(\Omc,\Dmc)$, there exists $i\in I$ with
$\Amf_{f}\models q_{i}(\vec{a})$. Then there is a homomorphism $h$ from $\Dmc_{i},([x_{1}]^{i},\ldots,[x_{n}]^{i})$ 
to $\Amf_{f},\vec{a}$ witnessing this. We modify $\Amf_{f}$ to
obtain a new structure $\Amf_{f}'$ which coincides with $\Amf_{f}$
except that the constants $c$ in $\Dmc_{i}$ are interpreted as $h(c)$.
Then $\Amf_{f}'$ is a model of $\Kmc'$ with $\Amf_{f}'\models
\neg\varphi([x_{1}]^{i},\ldots,[x_{n}]^{i})$ which contradicts the assumption
that $\varphi(\vec{x})$ separates $(\Kmc',P,N)$.  
\end{proof}
It follows from Example~\ref{ex:exmp} and Theorem~\ref{thm:sepfcp} that
projective $(\text{FO}^{2},\text{FO})$-entity distinguishability does not  coincide with projective $(\text{FO}^{2},\text{FO}^{2})$-entity distinguishability for examples consisting of single constants.

The main idea behind the proof Theorem~\ref{thm:sepfcp} can also often
be applied to FO-fragments $\Lmc$ without the relativization property.
We illustrate this for the description logic $\mathcal{S}$ extending
$\mathcal{ALC}$ with transitive roles. $\mathcal{S}$ does not enjoy
the relativization property since one cannot express that the
restriction of a relation $R$ to a concept name is transitive. 
\begin{exmp}\label{ex:exmptrans} {\em $\mathcal{S}$ enjoys the finite
  model property but evaluating rooted CQs is not finitely
  controllable~\cite{GuIbJu-AAAI18}. To see the latter claim, let
  $\Omc=\{A\sqsubseteq \exists R.A\}$ with $R$ a transitive role,
  $\Dmc=\{A(b)\}$, and $q(x)= \exists y (R(x,y)\wedge R(y,y))$. Let
  $\Kmc=(\Omc,\Dmc)$.  Then $\Kmc\not\models q(b)$ is witnessed by the
  model $\Amf$ of $\Kmc$ consisting of an infinite $R$-chain with root
  $b$ and nodes labeled with $A$. But $q(b)$ is satisfied in every
  finite model of $\Kmc$.  This example can be used to show that
  projective $(\mathcal{S},\mathcal{S})$-separability does not
  coincide with $(\mathcal{S},\text{FO})$-separability by taking
  $\Kmc'=(\Omc,\Dmc')$ with $\Dmc'=\{R(a,c),R(c,c),A(b)\}$ and
  observing that $q$ separates $(\Kmc',\{a\},\{b\})$ but that no
  $\ALCI$-concept separates $(\Kmc',\{a\},\{b\})$ (use that
  $\mathcal{S}$ enjoys the finite model property). \finofex }
\end{exmp}
 
We note that if the ontology is empty, projective and
non-projective FO$^2$-separability coincide with FO-separability. This follows from the earlier observation that projective and
non-projective $\ALCI$-separability coincide with FO-separability for empty ontologies and the fact that $\ALCI$ is contained in FO$^{2}$.

\section{Fundamental Results on Strong Separability}\label{sec:strong}

We introduce strong separability and the special cases of
strong definability, referring expression existence, and entity distinguishability.
We observe that, in contrast to weak separability, strong projective separability and strong non-projective
separability coincide in all relevant cases. We then give a 
characterization of strong
$(\text{FO},\text{FO})$-separability that has the consequence that also in the context of strong separability UCQs have the same separating power as FO. In contrast to
Theorem~\ref{critFOwithoutUNA} for weak separability, however, it establishes a link to KB
satisfiability rather than to the evaluation of rooted UCQs. We also settle the complexity of
deciding strong separability in GNFO.

\begin{definition}
	An FO-formula $\varphi(\vec{x})$ \emph{strongly separates}
	a labeled FO-KB $(\Kmc,P,N)$ if 
	\begin{enumerate}
		
		\item 
		$\Kmc\models \varphi(\vec{a})$ for all $\vec{a}\in P$ and 
		
		\item $\Kmc\models \neg\varphi(\vec{a})$ for all $\vec{a}\in N$.
		
	\end{enumerate}
	Let $\Lmc_S$ be a fragment of FO. We say that $(\Kmc,P,N)$ is
	\emph{strongly projectively $\Lmc_S$-separable} if there is an
	$\Lmc_S$-formula $\varphi(\vec{x})$ that strongly separates
	$(\Kmc,P,N)$ and \emph{strongly (non-projectively)
		$\Lmc_S$-separable} if there is such a $\varphi(\vec{x})$ with
	$\mn{sig}(\varphi) \subseteq \mn{sig}(\Kmc)$.
\end{definition}
%
By definition, (projective) strong separability implies (projective)
weak separability, but the converse is false. 
%
\begin{exmp}
	\label{ex:strong}
{\em 	Let $\Kmc_{1}=(\emptyset,\Dmc)$ with
	$$\Dmc=\{{\sf votes}(a,c_{1}), {\sf votes}(b,c_{2}), {\sf Left}(c_{1}), {\sf Right}(c_{2})\}.$$
	Then $(\Kmc_1, \{a\},\{b\})$ is weakly separated by the
	\ALCI-concept $\exists {\sf votes}.{\sf Left}$, but it is not strongly FO-separable.
	
	Now let $\Kmc_{2}=(\Omc,\Dmc)$ with
	$$\Omc=\{\exists {\sf votes}.{\sf Left} \sqsubseteq \neg \exists {\sf
		votes}.{\sf Right}\}.
	$$
	Then $\exists {\sf votes}.{\sf Left}$ strongly separates $(\Kmc_{2},\{a\},\{b\})$.\finofex
}
\end{exmp}
As illustrated by Example~\ref{ex:strong}, `negative information'
introduced by the ontology is crucial for strong separability because
of the open world semantics and since the database cannot contain
negative information. In fact, labeled KBs with an empty ontology are
never strongly separable. In a sense, weak separability tends to be
too credulous if the data is incomplete regarding positive
information, see Example~\ref{exmp:44}, while strong
separability tends to be too sceptical if the data is incomplete
regarding negative information as shown by Example~\ref{ex:strong}.
Note also that while weak $(\text{FO},\text{FO})$-separability is
anti-monotone in the ontology, strong separability is always trivially monotone in the ontology (and also the database) in the sense that for all labeled 
KBs $(\Kmc_{i},P,N)$ with $\Kmc_{i}=(\Omc_{i},\Dmc_{i})$ for $i=1,2$, if $\Omc_{1}\subseteq \Omc_{2}$, $\Dmc_{1}\subseteq \Dmc_{2}$, and $(\Kmc_{1},P,N)$ is $\Lmc_{S}$-separable, then $(\Kmc_{2},P,N)$ is $\Lmc_{S}$-separable.

In contrast to weak separability, projective and non-projective strong
separability coincide in all cases that are relevant to this paper.
From now on, we thus omit these qualifications.
\begin{proposition}\label{allthesame}
  Let $(\Kmc,P,N)$ be an FO-KB and let $\Lmc_{S} \in \{\text{CQ},
  \text{UCQ}, \ALCI, \text{GF}, \text{openGF}, \text{GNFO},
  \text{FO}^{2}, \text{FO}\}$.  Then $(\Kmc,P,N)$ is strongly
  projectively $\Lmc_S$-separable iff it is strongly non-projectively
  $\Lmc_S$-separable.  
\end{proposition}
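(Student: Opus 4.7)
The direction from non-projective to projective strong separability is immediate from the definitions. For the converse, the plan is to eliminate helper symbols from a given strongly separating formula by substituting them with \emph{logical constants}, exploiting that $\Omc$ does not constrain symbols outside $\mn{sig}(\Kmc)$.

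Suppose $\varphi(\vec x)\in\Lmc_S$ strongly separates $(\Kmc,P,N)$ projectively, and let $\Sigma_h=\mn{sig}(\varphi)\setminus\mn{sig}(\Kmc)$. For $\Lmc_S\in\{\ALCI,\text{GF},\text{GNFO},\text{FO}^2,\text{FO}\}$ I define $\varphi'$ by replacing every atom $R(\vec t)$ with $R\in\Sigma_h$ by $\bot$; this collapses each $\exists R.C$ with a helper role $R$ in $\ALCI$ to $\bot$ and each guarded quantifier with a helper guard in GF to $\bot$, and a routine syntactic induction confirms $\varphi'\in\Lmc_S$. For openGF one cannot use $\bot$ because every subformula must be open, so instead I substitute each helper atom $R(\vec t)$ by $\neg(x=x)$ for some variable $x$ occurring in $\vec t$, and replace every guarded quantifier whose guard is a helper atom by $\neg(x=x)$ for a free variable $x$ of the quantifier subformula; equality is then used only as an atom and never as a guard, and every subformula remains open. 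For $\Lmc_S=\text{UCQ}$ I instead \emph{delete} every helper atom (equivalently, substitute $\top$), which inside each CQ disjunct yields a CQ, so $\varphi'$ remains a UCQ. In all cases $\mn{sig}(\varphi')\subseteq\mn{sig}(\Kmc)$.

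It remains to verify that $\varphi'$ strongly separates $(\Kmc,P,N)$. Let $\Amf$ be any model of $\Kmc$. Define $\tilde\Amf$ to agree with $\Amf$ on $\mn{sig}(\Kmc)$ and on all constants, and to interpret each $R\in\Sigma_h$ of arity $n$ as $\emptyset$ when we used the $\bot$-style substitution (including the openGF variant) or as $\text{dom}(\Amf)^n$ when we used the $\top$-substitution. Since $\Omc$ mentions no symbol of $\Sigma_h$, the reinterpretation preserves satisfaction of $\Omc$ and $\Dmc$, so $\tilde\Amf$ is again a model of $\Kmc$. A routine induction on $\varphi$ shows $\tilde\Amf\models\varphi(\vec c)\Leftrightarrow\tilde\Amf\models\varphi'(\vec c)$ for every tuple $\vec c$, since every replaced helper atom has the same truth value as its substitute under the chosen interpretation of $\Sigma_h$; and because $\varphi'$ uses no helper symbol, both sides coincide with $\Amf\models\varphi'(\vec c)$. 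Plugging in the projective separation hypothesis $\tilde\Amf\models\varphi(\vec a^{\tilde\Amf})$ for $\vec a\in P$ and $\tilde\Amf\models\neg\varphi(\vec b^{\tilde\Amf})$ for $\vec b\in N$ then yields $\Amf\models\varphi'(\vec a^\Amf)$ and $\Amf\models\neg\varphi'(\vec b^\Amf)$; as $\Amf$ was arbitrary, $\varphi'$ strongly separates $(\Kmc,P,N)$ non-projectively. The main obstacle is the syntactic closure argument, most delicate for openGF, where the substitute must retain a free variable and avoid introducing an equality guard.
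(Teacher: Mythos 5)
Your core argument is sound, and it isolates exactly the right reason why helper symbols can be eliminated under \emph{strong} separability: both conditions quantify universally over models of $\Kmc$, so you may re-interpret the helper symbols in an arbitrary model $\Amf$ without leaving the class of models of $\Kmc$, and then read the truth value of the helper-free formula back in $\Amf$. The paper rests on the same observation but implements the elimination differently: instead of substituting helper atoms by truth constants, it replaces each helper symbol $R$ by some $R'\in\mn{sig}(\Kmc)$ of the same arity (re-interpreting $R$ as $R'^{\Amf}$ in the modified model). That renaming is a mere signature substitution and therefore trivially preserves membership in every language on the list, which is precisely where your version has to work hardest: substituting $\bot$ into a guard or a guarded negation leaves the syntactic fragment, so you must collapse the surrounding construct, separately for each $\Lmc_S$. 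In exchange, your variant does not need $\mn{sig}(\Kmc)$ to contain a symbol of the right arity, an assumption the paper's renaming tacitly makes.

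There is, however, one concrete slip in your substitution. A universal guarded quantifier $\forall\vec y\,(\alpha(\vec x,\vec y)\to\psi)$ whose guard $\alpha$ is a helper atom is \emph{vacuously true} in $\tilde{\Amf}$ (where helper relations are empty), so it must collapse to $\top$ (in openGF, to $x=x$ for a free variable $x$), not to $\bot$ resp.\ $\neg(x=x)$ as you prescribe for ``each guarded quantifier with a helper guard''. As written, the equivalence $\tilde{\Amf}\models\varphi(\vec c)\Leftrightarrow\tilde{\Amf}\models\varphi'(\vec c)$ fails on such subformulas. (For \ALCI\ this is harmless because $\forall R.C$ is an abbreviation for $\neg\exists R.\neg C$ and the negation flips the polarity for you; in GF and openGF the universal guarded quantifier is primitive and needs its own case.) You should also say a word about GNFO, where a guarded negation $\alpha\wedge\neg\psi$ with helper guard $\alpha$ must likewise be collapsed to $\bot$ rather than left with $\bot$ in guard position. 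These are local, polarity-respecting fixes; with them in place your argument goes through.
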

\begin{proof} \ Assume $\varphi(\vec{x})$ strongly separates
  $(\Kmc,P,N)$ and $R\in
  \text{sig}(\varphi)\setminus\text{sig}(\Kmc)$. Intuitively, we can
  replace every occurence of any formula of the form $R(\vec{y})$ in
  $\varphi(\vec{x})$ by a tautology (or an unsatisfiable formula)
  formulated in $\text{sig}(\Kmc)$ without affecting separation as the
  KB does not state anything about such $R$. If $\Lmc_S$ is FO, then
  we can simply replace $R(\vec y)$ by the conjunction of all $y=y$
  with $y$ in $\vec y$. In other choices of $\Lmc_S$, a bit more care
  is needed: for example, the formula obtained in this way is not
  guarded if $R$ occurs as a guard in $\varphi$. In this case we
  replace every subformula of the form $\exists
  \vec{y}(R(\vec{z},\vec{y}) \wedge \psi)$ in $\varphi(\vec{x})$ by
  (the unsatisfiable formula) $\neg (x=x)$ for some $x$ in $\vec{x}$
  and every ocurrence of $R(\vec{y})$ in $\varphi(\vec{x})$ in a
  non-guard position by $\neg (y=y)$ for some $y$ in $\vec{y}$.  For
  $\mathcal{ALCI}$, assume that the concept $C$ strongly separates
  $(\Kmc,P,N)$ and $X\in\text{sig}(C)\setminus\text{sig}(\Kmc)$. If
  $X$ is a concept name, then replace every occurrence of $X$ in $C$
  by $\bot$ and if $X$ is a role name, then replace every subconcept
  of the form $\exists X.D$ or $\exists X^{-}.D$ in $C$ by $\bot$.
  The other cases can be dealt with similarly. 
%
    %
\end{proof}

Each choice of an ontology language $\Lmc$ and a separation language
$\Lmc_{S}$ thus gives rise to a (single) strong separability problem
that we refer to as \emph{strong $(\Lmc,\Lmc_S)$-separability},
defined in the expected way.
We consider again the following special cases of $(\LmcO,\Lmc_{S})$-separability: \emph{strong $(\LmcO,\Lmc_{S})$-definability},
\emph{strong $(\LmcO,\Lmc_{S})$-referring expression existence (RE-existence)}, and
\emph{strong $(\LmcO,\Lmc_{S})$-entity distinguishability}, all defined in the obvious way. Our main complexity results for strong separability also hold for strong definability, referring expression existence, and entity distinguishability.
In fact, for FO-fragments $\Lmc_{S}$ closed under conjunction and disjunction, a labeled KB $(\Kmc,P,N)$ is strongly 
$\Lmc_S$-separable iff every KB $(\Kmc,\{\vec{a}\},\{\vec{b}\})$ with $\vec{a}\in P$ and $\vec{b}\in N$ is strongly 
$\Lmc_S$-separable. To show this, observe that if
$\varphi_{\vec{a},\vec{b}}$ strongly separates $(\Kmc,\{a\},\{b\})$ for all
$\vec{a}\in P$ and $\vec{b}\in N$, then $\bigvee_{\vec{a}\in
	P}\bigwedge_{\vec{b}\in N}\varphi_{\vec{a},\vec{b}}$ strongly separates
$(\Kmc,P,N)$. The converse direction is trivial: if $\varphi$ strongly separates $(\Kmc,P,N)$, then it strongly separates all  $(\Kmc,\{\vec{a}\},\{\vec{b}\})$ with $\vec{a}\in P$ and $\vec{b}\in N$.
\begin{observation}
	Let $\Lmc$ and $\Lmc_{S}$ be fragments of FO such that $\Lmc_{S}$ is  closed under conjunction and disjunction. Then there is a polynomial time Turing reduction of strong $(\Lmc,\Lmc_S)$-separability to strong $(\Lmc,\Lmc_S)$-entity distinguishability.
\end{observation}
We will thus mostly consider entity distinguishability when proving complexity upper bounds and RE-existence when proving complexity lower bounds. We next characterize strong $(\text{FO},\text{FO})$-separability in terms of
KB unsatisfiability and show that strong $(\text{FO},\text{FO})$-separability
coincides with strong $(\text{FO},\text{UCQ})$-separability. Let \Dmc be a database and let
$\vec{a}=(a_{1},\ldots,a_{n})$ and $\vec{b}=(b_{1},\ldots,b_{n})$ be
tuples of constants in $\Dmc$. We write $\Dmc_{\vec{a}=\vec{b}}$ to
denote the database obtained by taking $\Dmc\cup \Dmc'$, $\Dmc'$ a
copy of \Dmc with fresh constants $c'$ for $c\in \text{cons}(\Dmc)$, and then identifying $a_{i}$ and the copy $b_{i}'$ of $b_{i}$ for
$1\leq i \leq n$. For example, for $\Dmc=\{R(a,b),S(b,c),A(a),B(b)\}$ we have
$\Dmc_{a=b} = \{R(a,b),S(b,c),A(a),B(b),R(a',a),S(a,c'),A(a'),B(a)\}$, where $a',c'\in \text{Const}$ are `copies' of $a$ and $c$ respectively and we identify the copy $b'$ of $b$ with $a$. 
\begin{theorem}\label{thm:strongsep}
	
	Let $(\Kmc,P,N)$ be a labeled FO-KB,
	$\Kmc=(\Omc,\Dmc)$. 
	Then the following
	conditions are equivalent:
	\begin{enumerate}
		\item $(\Kmc,P,N)$ is strongly UCQ-separable;
		\item $(\Kmc,P,N)$ is strongly FO-separable;
		\item for all $\vec{a}\in P$ and $\vec{b}\in N$, the KB $(\Omc,\Dmc_{\vec{a}=\vec{b}})$ 
		is unsatisfiable;
		\item the UCQ $\bigvee_{\vec{a}\in 
			P}\varphi_{\Dmc_{\text{con}(\vec{a}),\vec{a}}}$
		strongly separates $(\Kmc,P,N)$.
	\end{enumerate}
\end{theorem}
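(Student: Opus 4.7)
The plan is a cyclic proof $2 \Rightarrow 3 \Rightarrow 4 \Rightarrow 1 \Rightarrow 2$. The implications $1 \Rightarrow 2$ (UCQ $\subseteq$ FO) and $4 \Rightarrow 1$ (the formula exhibited in~4 is itself a UCQ) are immediate.

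For $2 \Rightarrow 3$, I would argue by contraposition. Suppose that for some $\vec{a} \in P$ and $\vec{b} \in N$ the KB $(\Omc, \Dmc_{\vec{a}=\vec{b}})$ admits a model $\Bmf$. The crucial observation is that the single structure $\Bmf$ gives rise to two different models of $\Kmc=(\Omc,\Dmc)$ that differ only in their constant assignments: one uses $c \mapsto c^\Bmf$ and reads off the atoms of $\Dmc$ from the $\Dmc$-part of $\Dmc_{\vec{a}=\vec{b}}$, the other uses $c \mapsto c'^\Bmf$ and reads them off from the $\Dmc'$-part. Both satisfy $\Omc$ since the ontology is constant-free. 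Applying a hypothetical strong FO-separator $\varphi$, the first view yields $\Bmf \models \varphi(\vec{a}^\Bmf)$ and the second yields $\Bmf \models \neg\varphi(\vec{b}'^\Bmf)$; the identification $\vec{a}^\Bmf = \vec{b}'^\Bmf$ built into $\Dmc_{\vec{a}=\vec{b}}$ together with the absence of constants in $\varphi$ (so that its truth on a tuple of domain elements does not depend on the constant assignment) yields a contradiction.

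For $3 \Rightarrow 4$, I would directly verify that the UCQ $q(\vec{x}) := \bigvee_{\vec{a}\in P} \varphi_{\Dmc_{\text{con}(\vec{a})},\vec{a}}$ strongly separates $(\Kmc,P,N)$. Positivity is standard: in any model $\Amf$ of $\Kmc$ the canonical map $c \mapsto c^\Amf$ witnesses the disjunct $\varphi_{\Dmc_{\text{con}(\vec{a})},\vec{a}}(\vec{a}^\Amf)$. For negativity, suppose for contradiction that some model $\Amf$ of $\Kmc$ satisfies $\varphi_{\Dmc_{\text{con}(\vec{a})},\vec{a}}(\vec{b}^\Amf)$ for some $\vec{a} \in P$, $\vec{b} \in N$, witnessed by a homomorphism $h: \Dmc_{\text{con}(\vec{a})} \to \Amf$ with $h(\vec{a}) = \vec{b}^\Amf$. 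Extend $h$ to all of $\Dmc$ by setting $h(c) := c^\Amf$ outside the component of $\vec{a}$; this remains a homomorphism because no $\Dmc$-atom crosses connected components in the Gaifman graph. Now define $\Bmf$ on the underlying structure of $\Amf$ by setting $c^\Bmf := h(c)$ and $c'^\Bmf := c^\Amf$ for every $c \in \text{cons}(\Dmc)$. Then $a_i^\Bmf = h(a_i) = b_i^\Amf = b_i'^\Bmf$, so the identification is respected; the $\Dmc$-atoms hold because $h$ is a homomorphism to $\Amf$; and the $\Dmc'$-atoms (with $b_i'$ replaced by $a_i$) hold because the identity map from $\text{cons}(\Dmc)$ to $\Amf$ is a homomorphism whose value at $b_i$ coincides with the identified value $a_i^\Bmf$. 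Hence $\Bmf \models (\Omc, \Dmc_{\vec{a}=\vec{b}})$, contradicting~3.

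The main subtlety in both directions is the constant-reassignment bookkeeping: the whole argument hinges on the fact that $\Omc$ and the FO-formulas considered here are constant-free, so that reinterpreting the constants of $\Dmc$ in a fixed underlying structure preserves satisfaction of $\Omc$ and leaves the truth of $\varphi$ on tuples of domain elements invariant. The asymmetric assignment in the $3 \Rightarrow 4$ construction (using $h$ on unprimed and the identity on primed constants) is forced by the need to make the two interpretations of the identified constants coincide; the mirror construction would instead require a homomorphism that does not a priori exist.
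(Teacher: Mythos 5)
Your proposal is correct and follows essentially the same route as the paper: the same cycle of implications, with $3 \Rightarrow 4$ proved by contraposition via reinterpreting the constants of $\Dmc_{\vec{a}=\vec{b}}$ in the given model so that the homomorphism witnessing the match at $\vec{b}^\Amf$ handles the unprimed copy and the canonical assignment handles the primed one. The only difference is that you spell out the constant-reassignment bookkeeping for $2 \Rightarrow 3$ and $3 \Rightarrow 4$ that the paper leaves as "straightforward" and "easy", respectively.
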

\begin{proof} \ ``1. $\Rightarrow$ 2.'' is trivial. ``2. $\Rightarrow$ 3.'' Assume that $\varphi(\vec{x})$ strongly separates $(\Kmc,P,N)$ but that there are $\vec{a}\in P$ and $\vec{b}\in N$ such that $(\Omc,\Dmc_{\vec{a}=\vec{b}})$ is satisfiable. Take a model $\Amf$ of $(\Omc,\Dmc_{\vec{a}=\vec{b}})$. Then $\Amf$ is a model of $\Kmc$ and so $\Amf\models \varphi(\vec{a})$. Hence $\Amf\models \varphi(\vec{b}')$ since $\vec{a}=\vec{b}'$. Then $\Amf$ gives rise to another model of $\Kmc$ with $\Amf\models \varphi(\vec{b})$ by reinterpreting the constants in $\Dmc$ by setting $c^{\Amf}={c'}^{\Amf}$ for all $c\in \text{cons}(\Dmc)$. This contradicts the assumption that $\Kmc\models \neg \varphi(\vec{b})$.
	
``3. $\Rightarrow$ 4.'' Assume that $\bigvee_{\vec{a}\in
		P}\varphi_{\Dmc_{\text{con}(\vec{a}),\vec{a}}}$ does not strongly
	separate $(\Kmc,P,N)$. Then there are $\vec{a}\in P$, $\vec{b}\in N$, and a model $\Amf$ of $\Kmc$ such that $\Amf\models
	\varphi_{\Dmc_{\text{con}(\vec{a}),\vec{a}}}(\vec{b})$.  One
	can now interpret the constants of $\Dmc_{\vec{a}=\vec{b}}$
	in such a way that $\Amf$ becomes a model of
	$\Dmc_{\vec{a}=\vec{b}}$: assume that $v$ is a variable assignment witnessing $\Amf\models
	\varphi_{\Dmc_{\text{con}(\vec{a}),\vec{a}}}(\vec{b})$. Then set ${c'}^{\Amf} = c^{\Amf}$ for $c\in \text{cons}(\Dmc)$ and reinterpret the constants from $\text{cons}(\Dmc_{\text{con}(\vec{a})})$ in $\Amf$ by setting ${c}^{\Amf}= v(x_{c})$ for $c\in \text{cons}(\Dmc_{\text{con}(\vec{a})})$, with $x_{c}$ the variable corresponding to $c$. Then $\Amf$ is a model of $(\Omc,\Dmc_{\vec{a}=\vec{b}})$ and so the KB
	$(\Omc,\Dmc_{\vec{a}=\vec{b}})$ 
	is satisfiable.
	
``4. $\Rightarrow$ 1.'' is trivial.
\end{proof}
Note that the UCQ in Point~4 of Theorem~\ref{thm:strongsep} is a
concrete separating formula of polynomial size, and that it is
identical to the UCQ in Point~4 of Theorem~\ref{critFOwithoutUNA}.
%
%
Point~3 provides the announced link to KB unsatisfiability. 
We also obtain the following counterpart of Corollary~\ref{cor:weak1} and  \ref{cor:weak2}.

\begin{corollary}\label{cor:expr}	
	Strong $(\text{FO},\text{FO})$-separability coincides with strong $(\text{FO},\Lmc_{S})$-separability for all FO-fragments $\Lmc_{S}\supseteq \text{UCQ}$. This also holds for strong RE-existence and strong entity distinguishability for all FO-fragments $\Lmc_S \supseteq \text{CQ}$.
\end{corollary}
Recall that GNFO contains UCQ and that satisfiability of GNFO-KBs is
\TwoExpTime-complete in combined complexity and \NPclass-complete in
data complexity
\cite{DBLP:journals/jacm/BaranyCS15,DBLP:journals/pvldb/BaranyCO12}.
This directly implies the following complexity upper bounds for strong separability on GNFO-KBs.
%
\begin{corollary}
	\label{thm:gnfostrong}
       For all  FO-fragments $\Lmc_S\supseteq \text{UCQ}$,
        \begin{enumerate}

        \item strong $(\text{GNFO},\text{GNFO})$-separability coincides with
          strong $(\text{GNFO},\Lmc_S)$-separability, and the same is true for
          definability;

        \item strong $(\text{GNFO},\Lmc_{S})$-separability and
          $(\text{GNFO},\Lmc_{S})$-definability are \TwoExpTime-complete in
          combined complexity and \coNPclass-complete in data complexity.
	
        \end{enumerate}
        This holds also for strong RE-existence and strong entity
        distinguishability, for all FO-fragments $\Lmc_{S}\supseteq \text{CQ}$.
%
\end{corollary}
\begin{proof} \
It remains to prove the lower bounds. For the \coNPclass lower bound in data complexity, we give a reduction of the complement of 3-colorability to entity distinguishability and to RE-existence. \coNPclass-hardness of separability and definability follow trivially. Let $\Omc$ contain the $\mathcal{ALCI}$-CIs
$$
\top \sqsubseteq L_{1} \sqcup L_{2} \sqcup L_{3}, \quad L_{i} \sqcap \exists R.L_{i} \sqsubseteq E, \quad \exists R.E \sqsubseteq E, \quad E \sqcap A \sqcap B \sqsubseteq \bot
$$
for all $i\in\{1,2,3\}$. 
For the polynomial time reduction, consider any connected undirected graph $G$ and let $\Dmc$ be defined as the set of all $R(c,d),R(d,c)$ with $\{c,d\}$ an edge in $G$. We may assume that $G$ contains at least two nodes. For the reduction to strong $(\text{GNFO},\text{GNFO})$-entity distinguishability take two constants $a,b\in \text{cons}(\Dmc)$ and obtain $\Dmc'$ by adding $A(a)$ and $B(b)$ to $\Dmc$. Let $\Kmc=(\Omc,\Dmc')$. Then $G$ is not 3-colorable iff $(\Omc,\Dmc)\models E(c)$ for all $c\in \text{cons}(\Dmc)$ iff $(\Omc,\Dmc_{a=b}')$ is not satisfiable iff $(\Kmc,\{a\},\{b\})$ is strongly $(\text{GNFO},\text{GNFO})$-separable. For the reduction to strong $(\text{GNFO},\text{GNFO})$-referring expression existence take a constant $a\in \text{cons}(\Dmc)$ and obtain $\Dmc'$ by adding $A(a)$ and $B(c)$, $c\in N:= \text{cons}(\Dmc)\setminus\{a\}$ to $\Dmc$. Let $\Kmc=(\Omc,\Dmc')$.
Then $G$ is not 3-colorable iff $(\Omc,\Dmc)\models E(c)$ for all $c\in \text{cons}(\Dmc)$ iff $(\Omc,\Dmc_{a=c}')$ is not satisfiable for any $c\in N$ iff $(\Kmc,\{a\},N)$ is strongly $(\text{GNFO},\text{GNFO})$-separable.

For the \TwoExpTime lower bound in combined complexity, it follows from \cite{DBLP:journals/jacm/BaranyCS15,DBLP:journals/pvldb/BaranyCO12} that it is \TwoExpTime-hard to decide for a GNFO-ontology $\Omc$ and a unary relational symbol $E$ whether $\Omc\models \forall x E(x)$. Assume now that a GNFO-ontology $\Omc$ and $E$ are given. Let $\Kmc=(\Omc',\Dmc)$, where  $\Omc'= \Omc \cup \{A \sqcap B \sqcap E \sqsubseteq\bot\}$ and $\Dmc=\{A(a),B(b)\}$ with $A$ and $B$ fresh unary relation symbols. Then $\Omc\models \forall x E(x)$ iff $(\Omc',\Dmc_{a=b})$
is not satisfiable iff $(\Kmc,\{a\},\{b\})$ is strongly $(\text{GNFO},\text{GNFO})$-separable. The labeled KB $(\Kmc,\{a\},\{b\})$
also shows \TwoExpTime-hardness of definability, RE-existence and entity distinguishability.
\end{proof}
%
%
%
\section{Strong Separability for Decidable Fragments of
FO}\label{sec:strongdecidable}

We study strong $(\Lmc,\Lmc)$-separability for $\Lmc \in \{ \ALCI,
\text{GF}, \text{FO}^{2} \}$.  We show that for all these cases, strong
$(\Lmc,\Lmc)$-separability coincides with strong
$(\Lmc,\text{FO})$-separability and thus we can use the link to KB
unsatisfiability provided by Theorem~\ref{thm:strongsep} to obtain
decidability and tight complexity bounds.  


%
%
\subsection{Strong Separability of Labeled $\mathcal{ALCI}$-KBs}
\label{subsect:alci}

We show that strong $(\ALCI,\ALCI)$-separability coincides with strong
$(\ALCI,\text{FO})$-separability and also prove that strong
$(\ALCI,\ALCI)$-separability is \ExpTime-complete in combined
complexity and \coNPclass-complete in data complexity. With \Kmc-types, 
we mean the types introduced for \ALCI in Section~\ref{section:ALCI}. We identify
a type with the conjunction of concepts in it.
\begin{restatable}{theorem}{thmdlcharone}
	\label{thm:dlchar1}
	For every labeled $\ALCI$-KB $(\Kmc,P,N)$, the following conditions
	are equivalent:
	\begin{enumerate}
		\item $(\Kmc,P,N)$ is strongly $\ALCI$-separable;
		\item $(\Kmc,P,N)$ is strongly FO-separable;
		\item for all $a\in P$ and $b\in N$, there do not exist models $\Amf$ and $\Bmf$ of $\Kmc$
		such that $a^{\Amf}$ and $b^{\Bmf}$ realize the same $\Kmc$-type;
		\item the \ALCI-concept $t_1 \sqcup \cdots \sqcup t_n$ strongly
		separates $(\Kmc,P,N)$, $t_1,\dots,t_n$ the $\Kmc$-types realizable
		in $\Kmc,a$ for some $a\in P$.
	\end{enumerate}
\end{restatable}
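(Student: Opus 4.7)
The plan is to establish the cycle $(1) \Rightarrow (2) \Rightarrow (3) \Rightarrow (4) \Rightarrow (1)$. The implications $(1) \Rightarrow (2)$ and $(4) \Rightarrow (1)$ are immediate: the first from the fact that every $\ALCI$-concept translates to an FO-formula via $(\cdot)^\dagger$, and the second because $t_1 \sqcup \cdots \sqcup t_n$ is manifestly an $\ALCI$-concept, each $t_i$ being the conjunction of a set of concepts in $\mn{cl}(\Kmc)$. The substantive content lies in $(3) \Rightarrow (4)$ and $(2) \Rightarrow (3)$.

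For $(3) \Rightarrow (4)$, I plan to verify that $C := t_1 \sqcup \cdots \sqcup t_n$ strongly separates $(\Kmc,P,N)$. For any $a \in P$ and any model $\Amf$ of $\Kmc$, the type $\text{tp}_{\Kmc}(\Amf, a^\Amf)$ must coincide with one of the listed $t_i$, so $\Amf \models t_i(a^\Amf)$ and hence $\Kmc \models C(a)$. For any $b \in N$ and any model $\Bmf$ of $\Kmc$, the key observation is that $\Kmc$-types are complete because $\mn{cl}(\Kmc)$ is closed under single negation. Consequently, $\Bmf \models t_i(b^\Bmf)$ would force $\text{tp}_{\Kmc}(\Bmf, b^\Bmf) = t_i$, exhibiting $t_i$ as realized at both some positive example (in some model) and at $b$ in $\Bmf$, contradicting~(3). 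Hence $\Bmf \models \neg t_i(b^\Bmf)$ for every $i$, giving $\Kmc \models \neg C(b)$.

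For $(2) \Rightarrow (3)$, I argue contrapositively and invoke Theorem~\ref{thm:strongsep}, which equates strong FO-separability of $(\Kmc,P,N)$ with unsatisfiability of $(\Omc, \Dmc_{a=b})$ for every $a \in P$ and $b \in N$. So given models $\Amf, \Bmf$ of $\Kmc$ with $\text{tp}_{\Kmc}(\Amf, a^\Amf) = \text{tp}_{\Kmc}(\Bmf, b^\Bmf)$, the task reduces to constructing a model $\Cmf$ of $(\Omc, \Dmc_{a=b})$. My plan is to pass to $\omega$-saturated elementary extensions $\Amf^*, \Bmf^*$ (still models of $\Kmc$ with the type property preserved at $a^{\Amf^*}, b^{\Bmf^*}$), use a Hennessy--Milner-style back-and-forth to upgrade $\mn{cl}(\Kmc)$-type equality to a full $\mn{sig}(\Kmc)$-bisimulation linking $a^{\Amf^*}$ and $b^{\Bmf^*}$, and then amalgamate the bisimilar tree-unfoldings $T_A$ of $(\Amf^*, a^{\Amf^*})$ and $T_B$ of $(\Bmf^*, b^{\Bmf^*})$ by identifying their roots. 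Unprimed constants are interpreted via $T_A$ (so $\Dmc$ is satisfied), primed constants via $T_B$ (so the primed copy is satisfied, with any primed constant not reachable from $b$ placed in a disjoint auxiliary piece), and $a^\Cmf = (b')^\Cmf$ is the amalgamated root.

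The main obstacle is showing that $\Cmf$ still satisfies $\Omc$. Since the inclusions in $\Omc$ are bisimulation-invariant and each tree-unfolding is bisimilar to its source model, $\Omc$ clearly holds at every non-root element of $\Cmf$. At the amalgamated root, the $\mn{sig}(\Kmc)$-successors stem from both trees simultaneously, so one must argue carefully: because the two roots were bisimilar before the identification, any universal concept $\forall R.D$ that held at one root held at the other, so every successor from either tree satisfies $D$ (and dually, the $\exists R.D$ witnesses are preserved). Thus the $\mn{cl}(\Kmc)$-type of the root in $\Cmf$ remains a valid $\Kmc$-type and all concept inclusions of $\Omc$ continue to hold throughout $\Cmf$, closing the cycle.
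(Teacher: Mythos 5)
Your reduction of $(2)\Rightarrow(3)$ to the satisfiability of $(\Omc,\Dmc_{a=b})$ via Theorem~\ref{thm:strongsep} is the right move, and your $(3)\Rightarrow(4)$ argument is fine, but the model construction for $(2)\Rightarrow(3)$ contains two steps that fail. First, the Hennessy--Milner upgrade is not available: equality of $\Kmc$-types means agreement on the \emph{finite} set $\mn{cl}(\Kmc)$ only, not on all $\ALCI(\mn{sig}(\Kmc))$-concepts, so even after passing to $\omega$-saturated structures it does not yield a $\mn{sig}(\Kmc)$-bisimulation linking $a^{\Amf^*}$ and $b^{\Bmf^*}$. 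This is exactly the phenomenon the paper isolates with the notion of an $\ALCI$-complete type: in general a $\Kmc$-type does \emph{not} determine the bisimulation class of its realizations. Second, tree unfoldings do not in general satisfy the database: if $\Dmc$ contains a cycle or a confluence among its constants, no assignment of the constants to nodes of the unfolding $T_A$ makes all atoms of $\Dmc$ true, so the parenthetical ``(so $\Dmc$ is satisfied)'' is unjustified and generally false.

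Both detours are unnecessary. The paper's proof takes the disjoint union $\Amf\uplus\Bmf$ (with the constants of $\Bmf$ reinterpreted so that it models the primed copy $\Dmc'$), identifies the single pair $a^{\Amf}$, $b'^{\Bmf}$, and shows by induction on the structure of concepts $C\in\mn{cl}(\Kmc)$ that every element has the same $\mn{cl}(\Kmc)$-extensions before and after the identification; at the merged point this uses precisely the hypothesis $\text{tp}_{\Kmc}(\Amf,a^{\Amf})=\text{tp}_{\Kmc}(\Bmf,b^{\Bmf})$, and agreement on $\mn{cl}(\Kmc)$ is all that is needed to conclude that $\Omc$ (whose subconcepts all lie in $\mn{cl}(\Kmc)$) and $\Dmc_{a=b}$ hold in the resulting structure. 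Your verification at the amalgamated root is implicitly this induction, so the argument can be repaired by dropping the bisimulation and unfolding steps and running that induction on $\Amf\uplus\Bmf$ directly.
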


\begin{proof} 
  \ ``$1. \Rightarrow 2.$'', ``$3.\Rightarrow 4.$'', and
  ``$4.\Rightarrow 1.$'' are straightforward.
  For ``$2. \Rightarrow 3.$'', let $\Kmc=(\Omc,\Dmc)$ and assume that
	Point~3 does not hold, that is, there exist models $\mathfrak{A}$
	and $\mathfrak{B}$ of $\Kmc$ and $a \in P$, $ b \in N$ such that
	$\text{tp}_{\mathcal{K}}(\mathfrak{A},a^{\Amf})=\text{tp}_{\mathcal{K}}(\mathfrak{B},b^{\Bmf})$. We
	prove that $(\mathcal{O},
	\mathcal{D}_{a=b})$ is satisfiable. This implies that $(\Kmc,P,N)$ is not strongly
	FO-separable by Theorem~\ref{thm:strongsep}.
	
	By reinterpreting constants, we can achieve that $\mathfrak{B}$ is a
	model of the database $\mathcal{D}'$ from the definition of
	$\mathcal{D}_{a=b}$. Let $\Amf \uplus \Bmf$ be the disjoint union of $\Amf$ and $\Bmf$. Define the structure $\mathfrak{C}$ as
	$\mathfrak{A} \uplus \mathfrak{B}$ in which $a^{\Amf}$ and
	$b'^{\Bmf}$ are identified.  There is an obvious surjection $f:
	\text{dom}(\mathfrak{A} \uplus \mathfrak{B}) \rightarrow
	\text{dom}(\mathfrak{C})$ that maps every individual to itself except
	that $f(a^{\Amf})=f({b'}^{\Bmf})=a^{\Cmf}$.  Using the fact that
	$\text{tp}_{\mathcal{K}}(\mathfrak{A},a^{\Amf})=\text{tp}_{\mathcal{K}}(\mathfrak{B},b'^{\Bmf})$
	and a simple induction on the structure of concepts $C$, we can show
	that for all $C \in \mn{cl}(\mathcal{K})$ and $d \in
	\text{dom}(\mathfrak{A} \uplus \mathfrak{B})$, $d \in
	C^{\mathfrak{A} \uplus \mathfrak{B}}$ iff $f(d) \in C^\mathfrak{C}$.
	Since \Amf and \Bmf are models of \Omc, it follows that \Cmf is a
	model of \Omc. By construction, it is also a model of $\mathcal{D}_{a=b}$.
\end{proof}
Note that Point~4 of Theorem~\ref{thm:dlchar1} provides concrete
separating concepts. These are not illuminating, but of size at most
$2^{p(||\Omc||)}$, $p$ a polynomial.  In contrast to the case of weak
separability, the length of separating concepts is thus independent of
\Dmc. Satisfiability of $\ALCI$-KBs is \ExpTime-complete~\cite{DL-Textbook} in combined complexity and NP-complete in data complexity. Hence, strong $(\ALCI,\ALCI)$-separability is in \ExpTime in combined complexity and in \coNPclass in data complexity. 
%
%
\begin{corollary}
	\label{thm:alcistrong}
	For any FO-fragment $\Lmc_{S} \supseteq \text{UCQ}$:
	\begin{enumerate}
		\item strong $(\ALCI,\ALCI)$-separability coincides with strong
	$(\ALCI,\Lmc_S)$-separability, the same is true for definability;
	   \item strong $(\ALCI,\ALCI)$-separability and strong $(\ALCI,\ALCI)$-definability are \ExpTime-complete in combined complexity and \coNPclass-complete in data complexity.
   \end{enumerate}
This also holds for strong RE-existence and entity distinguishability, for all FO-fragments $\Lmc_{S}\supseteq \text{CQ}$.
\end{corollary}
\begin{proof} \
	It remains to consider the lower bounds. The ontology $\Omc$ used in the proof of \coNPclass-hardness in data complexity for GNFO in Corollary~\ref{thm:gnfostrong} is an $\ALCI$-ontology. The lower
	bounds in data complexity therefore follow directly from that proof.
	The \ExpTime lower bound in combined complexity can also be proved in the same way as the \TwoExpTime-lower bound in Corollary~\ref{thm:gnfostrong} using the fact that it is \ExpTime-hard to decide whether $\Omc\models \top \sqsubseteq A$ for $\ALCI$-ontologies $\Omc$~\cite{DL-Textbook}.	
\end{proof}
\subsection{Strong Separability of Labeled GF-KBs}
\label{subsection:strongGF}
We show that strong $(\text{GF},\text{GF})$-separability and strong $(\text{GF},\text{openGF})$ coincide with strong
$(\text{GF},\text{FO})$-separability and that both are 2\ExpTime-complete in combined
complexity and \coNPclass-complete in data complexity. We also show that
the length of strongly separating GF-formulas is independent from the database $\Dmc$ but that this is not the case for strongly separating openGF-formulas.
With \Kmc-types, we mean the types introduced for GF in Section~\ref{section:GF}. We identify
a type with the conjunction of its formulas. We begin by formulating 
a counterpart of Theorem~\ref{thm:dlchar1} for GF.
\begin{restatable}{theorem}{thmgfcharone} \label{thm:gfchar1}
	For every labeled GF-KB $(\Kmc,P,N)$, the following conditions
	are equivalent:
	\begin{enumerate}
		
		\item $(\Kmc,P,N)$ is strongly GF-separable;
		
		\item $(\Kmc,P,N)$ is strongly FO-separable;
		
		\item for all $\vec a\in P$ and $\vec b\in N$, there do not exist models $\Amf$ and $\Bmf$ of $\Kmc$
		such that $\vec a^{\Amf}$ and $\vec b^{\Bmf}$ realize the same $\Kmc$-type;
		
		\item the GF-formula $\Phi_1(\vec x) \vee \cdots \vee
		\Phi_n(\vec x)$ strongly
		separates $(\Kmc,P,N)$, $\Phi_1(\vec x),\dots,\Phi_n(\vec x)$ the $\Kmc$-types realizable
		in $\Kmc,\vec{a}$ for some $\vec{a}\in P$.
		
	\end{enumerate}	
\end{restatable}
\begin{proof} \
	The proof is similar to the proof of Theorem~\ref{thm:dlchar1}. The  implication ``2. $\Rightarrow$ 3.'' relies on the fact that one can merge 
	models $\Amf$ and $\Bmf$ of a GF-KB $\Kmc$ which realize the same $\Kmc$-type $\Phi(\vec{x})$ at tuples $\vec{a}$ and $\vec{b}$, respectively, by identifying the interpretation of the tuples and obtain a model of $\Kmc$ realizing $\Phi(\vec{x})$.
\end{proof}
It follows that the size of strongly separating GF-formulas is at most
$2^{2^{p(||\Omc||)}}$, $p$ a polynomial, and thus does not depend on
the database. Interestingly, we can use a variation of
Example~\ref{exmp:openGFGF} to show that this is not the case for
separating openGF-formulas.	It follows also that GF and 
openGF differ in terms of the
size of strongly separating formula. 
\begin{exmp}\label{ex:gfopengf}
	{\em Define a GF-ontology $\Omc$ as follows: 
$$
\Omc = \{A_{1} \sqsubseteq \forall S.A_{1}, \quad A_{2}\sqsubseteq \forall R.A_{2}, \quad E_{2} \sqcap A_{1} \sqsubseteq \exists u.B, \quad E_{1} \sqcap A_{2} \sqsubseteq \neg \exists u.B\}.
$$
Here, $u$ denotes the universal role as discussed earlier. For example, $E_{2} \sqcap A_{1} \sqsubseteq \exists u.B$ is logically equivalent to $\forall x (E_{2}(x) \wedge A_{1}(x) \rightarrow \exists y B(y))$.
Note that the first two CIs propagate $A_{1}$ and $A_{2}$ along the
role names $S$ and $R$, respectively, and that according to the
remaining CIs $E_{2}\sqcap A_{1}$ and $E_{1} \sqcap A_{2}$ enforce
that $B$ is non-empty and empty, respectively. It follows, in
particular, that $E_{2}\sqcap A_{1}$ and $E_{1} \sqcap A_{2}$ cannot both be satisfied in a model of $\Omc$. Let 
$$
\Dmc_{n} = \{A_{1}(a_{0}),E_{1}(c_{n}),R(a_{0},c_{1}),\ldots,R(c_{n-1},c_{n})\}
		\cup\{A_{2}(b_{0}), E_{2}(d_{n}),S(b_{0},d_{1}),\ldots,S(d_{n-1},d_{n})\}.
		$$
Thus, we have an $R$-chain starting at $A_{1}(a_{0})$ and an $S$-chain
starting at $A_{2}(b_{0})$. Let $\Kmc_{n}=(\Omc,\Dmc_{n})$ and let $P=\{a_{0}\}$ and $N=\{b_{0}\}$. 
In GF (in fact in $\mathcal{ALC}$ with the universal role) the following formula strongly separates $(\Kmc_{n},P,N)$:
		$$
		(A_{1} \sqcap A_{2} \sqcap \neg\exists u.B) \sqcup (A_{1} \sqcap \neg A_{2}).
		$$
In contrast, any strongly separating formula in openGF has guarded quantifier rank at least $n$. To show this consider the models $\Amf$ and $\Bmf$ defined in Figure~\ref{fig:strongopengfdepth} with the constants $c_{i}$ and $d_{i}$
interpreted so that they are both models of $\Kmc_{n}$. We have
$\Amf,a_{0}^{\Amf}\sim_{\text{openGF},\text{sig}(\Kmc_{n})}^{n-1}\Bmf,b_{0}^{\Bmf}$
and so $\Amf,a_{0}^{\Amf}$ and $\Bmf,b_{0}^{\Bmf}$ cannot be
distinguished by any openGF-formula of guarded quantifier rank $\leq
n-1$. \finofex
}
\end{exmp}
\begin{figure*}
	
	\begin{center}

		\tikzset{every picture/.style={line width=0.5pt}} 
		
		\begin{tikzpicture}[x=0.75pt,y=0.75pt,yscale=-1,xscale=1]
			
			\draw    (100,104.5) -- (130.16,89.04) ;
			\draw [shift={(132.83,87.67)}, rotate = 512.86] [fill={rgb, 255:red, 0; green, 0; blue, 0 }  ][line width=0.08]  [draw opacity=0] (5,-3) -- (0,0) -- (5,3) -- cycle    ;
			\draw  [fill={rgb, 255:red, 0; green, 0; blue, 0 }  ,fill opacity=1 ] (139.5,85.25) .. controls (139.5,84.28) and (138.72,83.5) .. (137.75,83.5) .. controls (136.78,83.5) and (136,84.28) .. (136,85.25) .. controls (136,86.22) and (136.78,87) .. (137.75,87) .. controls (138.72,87) and (139.5,86.22) .. (139.5,85.25) -- cycle ;
			\draw    (100,104.5) -- (130.54,121.23) ;
			\draw [shift={(133.17,122.67)}, rotate = 208.71] [fill={rgb, 255:red, 0; green, 0; blue, 0 }  ][line width=0.08]  [draw opacity=0] (5,-3) -- (0,0) -- (5,3) -- cycle    ;
			\draw    (169.5,85.5) -- (196.5,85.27) ;
			\draw [shift={(199.5,85.25)}, rotate = 539.52] [fill={rgb, 255:red, 0; green, 0; blue, 0 }  ][line width=0.08]  [draw opacity=0] (5,-3) -- (0,0) -- (5,3) -- cycle    ;
			\draw  [fill={rgb, 255:red, 0; green, 0; blue, 0 }  ,fill opacity=1 ] (207.5,85.25) .. controls (207.5,84.28) and (206.72,83.5) .. (205.75,83.5) .. controls (204.78,83.5) and (204,84.28) .. (204,85.25) .. controls (204,86.22) and (204.78,87) .. (205.75,87) .. controls (206.72,87) and (207.5,86.22) .. (207.5,85.25) -- cycle ;
			\draw  [fill={rgb, 255:red, 0; green, 0; blue, 0 }  ,fill opacity=1 ] (139.5,125.25) .. controls (139.5,124.28) and (138.72,123.5) .. (137.75,123.5) .. controls (136.78,123.5) and (136,124.28) .. (136,125.25) .. controls (136,126.22) and (136.78,127) .. (137.75,127) .. controls (138.72,127) and (139.5,126.22) .. (139.5,125.25) -- cycle ;
			\draw    (169.5,125.5) -- (196.5,125.39) ;
			\draw [shift={(199.5,125.38)}, rotate = 539.76] [fill={rgb, 255:red, 0; green, 0; blue, 0 }  ][line width=0.08]  [draw opacity=0] (5,-3) -- (0,0) -- (5,3) -- cycle    ;
			\draw  [fill={rgb, 255:red, 0; green, 0; blue, 0 }  ,fill opacity=1 ] (207.5,125.25) .. controls (207.5,124.28) and (206.72,123.5) .. (205.75,123.5) .. controls (204.78,123.5) and (204,124.28) .. (204,125.25) .. controls (204,126.22) and (204.78,127) .. (205.75,127) .. controls (206.72,127) and (207.5,126.22) .. (207.5,125.25) -- cycle ;
			\draw   (77.88,104.19) .. controls (77.88,98.08) and (82.83,93.13) .. (88.94,93.13) .. controls (95.05,93.13) and (100,98.08) .. (100,104.19) .. controls (100,110.3) and (95.05,115.25) .. (88.94,115.25) .. controls (82.83,115.25) and (77.88,110.3) .. (77.88,104.19) -- cycle ;
			\draw  [dash pattern={on 0.84pt off 2.51pt}]  (145.2,85) -- (164.17,85) ;
			\draw  [dash pattern={on 0.84pt off 2.51pt}]  (145.95,125) -- (165.17,125) ;
			\draw    (100,197.5) -- (128.75,197.84) ;
			\draw [shift={(131.75,197.88)}, rotate = 180.68] [fill={rgb, 255:red, 0; green, 0; blue, 0 }  ][line width=0.08]  [draw opacity=0] (5,-3) -- (0,0) -- (5,3) -- cycle    ;
			\draw  [fill={rgb, 255:red, 0; green, 0; blue, 0 }  ,fill opacity=1 ] (139.5,198.25) .. controls (139.5,197.28) and (138.72,196.5) .. (137.75,196.5) .. controls (136.78,196.5) and (136,197.28) .. (136,198.25) .. controls (136,199.22) and (136.78,200) .. (137.75,200) .. controls (138.72,200) and (139.5,199.22) .. (139.5,198.25) -- cycle ;
			\draw    (169.5,198.5) -- (196.5,198.39) ;
			\draw [shift={(199.5,198.38)}, rotate = 539.76] [fill={rgb, 255:red, 0; green, 0; blue, 0 }  ][line width=0.08]  [draw opacity=0] (5,-3) -- (0,0) -- (5,3) -- cycle    ;
			\draw  [fill={rgb, 255:red, 0; green, 0; blue, 0 }  ,fill opacity=1 ] (207.5,198.25) .. controls (207.5,197.28) and (206.72,196.5) .. (205.75,196.5) .. controls (204.78,196.5) and (204,197.28) .. (204,198.25) .. controls (204,199.22) and (204.78,200) .. (205.75,200) .. controls (206.72,200) and (207.5,199.22) .. (207.5,198.25) -- cycle ;
			\draw   (77.88,197.19) .. controls (77.88,191.08) and (82.83,186.13) .. (88.94,186.13) .. controls (95.05,186.13) and (100,191.08) .. (100,197.19) .. controls (100,203.3) and (95.05,208.25) .. (88.94,208.25) .. controls (82.83,208.25) and (77.88,203.3) .. (77.88,197.19) -- cycle ;
			\draw  [dash pattern={on 0.84pt off 2.51pt}]  (145.95,198) -- (165.17,198) ;
			\draw   (77.88,158.19) .. controls (77.88,152.08) and (82.83,147.13) .. (88.94,147.13) .. controls (95.05,147.13) and (100,152.08) .. (100,158.19) .. controls (100,164.3) and (95.05,169.25) .. (88.94,169.25) .. controls (82.83,169.25) and (77.88,164.3) .. (77.88,158.19) -- cycle ;
			\draw    (382,104.5) -- (412.16,89.04) ;
			\draw [shift={(414.83,87.67)}, rotate = 512.86] [fill={rgb, 255:red, 0; green, 0; blue, 0 }  ][line width=0.08]  [draw opacity=0] (5,-3) -- (0,0) -- (5,3) -- cycle    ;
			\draw  [fill={rgb, 255:red, 0; green, 0; blue, 0 }  ,fill opacity=1 ] (421.5,85.25) .. controls (421.5,84.28) and (420.72,83.5) .. (419.75,83.5) .. controls (418.78,83.5) and (418,84.28) .. (418,85.25) .. controls (418,86.22) and (418.78,87) .. (419.75,87) .. controls (420.72,87) and (421.5,86.22) .. (421.5,85.25) -- cycle ;
			\draw    (382,104.5) -- (412.54,121.23) ;
			\draw [shift={(415.17,122.67)}, rotate = 208.71] [fill={rgb, 255:red, 0; green, 0; blue, 0 }  ][line width=0.08]  [draw opacity=0] (5,-3) -- (0,0) -- (5,3) -- cycle    ;
			\draw    (451.5,85.5) -- (478.5,85.27) ;
			\draw [shift={(481.5,85.25)}, rotate = 539.52] [fill={rgb, 255:red, 0; green, 0; blue, 0 }  ][line width=0.08]  [draw opacity=0] (5,-3) -- (0,0) -- (5,3) -- cycle    ;
			\draw  [fill={rgb, 255:red, 0; green, 0; blue, 0 }  ,fill opacity=1 ] (489.5,85.25) .. controls (489.5,84.28) and (488.72,83.5) .. (487.75,83.5) .. controls (486.78,83.5) and (486,84.28) .. (486,85.25) .. controls (486,86.22) and (486.78,87) .. (487.75,87) .. controls (488.72,87) and (489.5,86.22) .. (489.5,85.25) -- cycle ;
			\draw  [fill={rgb, 255:red, 0; green, 0; blue, 0 }  ,fill opacity=1 ] (421.5,125.25) .. controls (421.5,124.28) and (420.72,123.5) .. (419.75,123.5) .. controls (418.78,123.5) and (418,124.28) .. (418,125.25) .. controls (418,126.22) and (418.78,127) .. (419.75,127) .. controls (420.72,127) and (421.5,126.22) .. (421.5,125.25) -- cycle ;
			\draw    (451.5,125.5) -- (478.5,125.39) ;
			\draw [shift={(481.5,125.38)}, rotate = 539.76] [fill={rgb, 255:red, 0; green, 0; blue, 0 }  ][line width=0.08]  [draw opacity=0] (5,-3) -- (0,0) -- (5,3) -- cycle    ;
			\draw  [fill={rgb, 255:red, 0; green, 0; blue, 0 }  ,fill opacity=1 ] (489.5,125.25) .. controls (489.5,124.28) and (488.72,123.5) .. (487.75,123.5) .. controls (486.78,123.5) and (486,124.28) .. (486,125.25) .. controls (486,126.22) and (486.78,127) .. (487.75,127) .. controls (488.72,127) and (489.5,126.22) .. (489.5,125.25) -- cycle ;
			\draw   (359.88,104.19) .. controls (359.88,98.08) and (364.83,93.13) .. (370.94,93.13) .. controls (377.05,93.13) and (382,98.08) .. (382,104.19) .. controls (382,110.3) and (377.05,115.25) .. (370.94,115.25) .. controls (364.83,115.25) and (359.88,110.3) .. (359.88,104.19) -- cycle ;
			\draw  [dash pattern={on 0.84pt off 2.51pt}]  (427.2,85) -- (446.17,85) ;
			\draw  [dash pattern={on 0.84pt off 2.51pt}]  (427.95,125) -- (447.17,125) ;
			\draw    (382,197.5) -- (410.75,197.84) ;
			\draw [shift={(413.75,197.88)}, rotate = 180.68] [fill={rgb, 255:red, 0; green, 0; blue, 0 }  ][line width=0.08]  [draw opacity=0] (5,-3) -- (0,0) -- (5,3) -- cycle    ;
			\draw  [fill={rgb, 255:red, 0; green, 0; blue, 0 }  ,fill opacity=1 ] (421.5,198.25) .. controls (421.5,197.28) and (420.72,196.5) .. (419.75,196.5) .. controls (418.78,196.5) and (418,197.28) .. (418,198.25) .. controls (418,199.22) and (418.78,200) .. (419.75,200) .. controls (420.72,200) and (421.5,199.22) .. (421.5,198.25) -- cycle ;
			\draw    (451.5,198.5) -- (478.5,198.39) ;
			\draw [shift={(481.5,198.38)}, rotate = 539.76] [fill={rgb, 255:red, 0; green, 0; blue, 0 }  ][line width=0.08]  [draw opacity=0] (5,-3) -- (0,0) -- (5,3) -- cycle    ;
			\draw  [fill={rgb, 255:red, 0; green, 0; blue, 0 }  ,fill opacity=1 ] (489.5,198.25) .. controls (489.5,197.28) and (488.72,196.5) .. (487.75,196.5) .. controls (486.78,196.5) and (486,197.28) .. (486,198.25) .. controls (486,199.22) and (486.78,200) .. (487.75,200) .. controls (488.72,200) and (489.5,199.22) .. (489.5,198.25) -- cycle ;
			\draw   (359.88,197.19) .. controls (359.88,191.08) and (364.83,186.13) .. (370.94,186.13) .. controls (377.05,186.13) and (382,191.08) .. (382,197.19) .. controls (382,203.3) and (377.05,208.25) .. (370.94,208.25) .. controls (364.83,208.25) and (359.88,203.3) .. (359.88,197.19) -- cycle ;
			\draw  [dash pattern={on 0.84pt off 2.51pt}]  (427.95,198) -- (447.17,198) ;
			\draw   (359.88,158.19) .. controls (359.88,152.08) and (364.83,147.13) .. (370.94,147.13) .. controls (377.05,147.13) and (382,152.08) .. (382,158.19) .. controls (382,164.3) and (377.05,169.25) .. (370.94,169.25) .. controls (364.83,169.25) and (359.88,164.3) .. (359.88,158.19) -- cycle ;
			\draw   (22.67,39.5) -- (290.17,39.5) -- (290.17,234.83) -- (22.67,234.83) -- cycle ;
			\draw   (303.67,39.5) -- (571.17,39.5) -- (571.17,235) -- (303.67,235) -- cycle ;
			
			\draw (89.83,105.33) node  [font=\small] [align=left] {$\displaystyle a_{0}$};
			\draw (110,85.5) node  [font=\small] [align=left] {$\displaystyle R$};
			\draw (182,75) node  [font=\small] [align=left] {$\displaystyle R$};
			\draw (109.5,122) node  [font=\small] [align=left] {$\displaystyle S$};
			\draw (182,136) node  [font=\small] [align=left] {$\displaystyle S$};
			\draw (49.5,103.5) node  [font=\footnotesize] [align=left] {$\displaystyle \{ A_{1} ,A_{2}\}$};
			\draw (142,69.5) node  [font=\footnotesize] [align=left] {$\displaystyle \{ A_{1} ,A_{2}\}$};
			\draw (243.5,86) node  [font=\footnotesize] [align=left] {$\displaystyle \{ A_{1} ,A_{2}, E_1\}$};
			\draw (235,125.5) node  [font=\footnotesize] [align=left] {$\displaystyle \{ A_{1} ,A_{2}\}$};
			\draw (142,139.5) node  [font=\footnotesize] [align=left] {$\displaystyle \{ A_{1} ,A_{2}\}$};
			\draw (89.83,198.33) node  [font=\small] [align=left] {$\displaystyle b_{0}$};
			\draw (109.5,208) node  [font=\small] [align=left] {$\displaystyle S$};
			\draw (181,208) node  [font=\small] [align=left] {$\displaystyle S$};
			\draw (60.17,196.5) node  [font=\footnotesize] [align=left] {$\displaystyle \{ A_{2}\}$};
			\draw (224,198.5) node  [font=\footnotesize] [align=left] {$\displaystyle \{E_{2}\}$};
			\draw (89.83,159.33) node  [font=\small] [align=left] {$\displaystyle f$};
			\draw (33.33,51) node   [align=left] {$\displaystyle \mathfrak{A}$};
			\draw (371.83,105.33) node  [font=\small] [align=left] {$\displaystyle b_{0}$};
			\draw (392,85.5) node  [font=\small] [align=left] {$\displaystyle R$};
			\draw (464,75) node  [font=\small] [align=left] {$\displaystyle R$};
			\draw (391.5,122) node  [font=\small] [align=left] {$\displaystyle S$};
			\draw (464,136) node  [font=\small] [align=left] {$\displaystyle S$};
			\draw (331.5,103.5) node  [font=\footnotesize] [align=left] {$\displaystyle \{ A_{1} ,A_{2}\}$};
			\draw (424,69.5) node  [font=\footnotesize] [align=left] {$\displaystyle \{ A_{1} ,A_{2}\}$};
			\draw (516,86) node  [font=\footnotesize] [align=left] {$\displaystyle \{ A_{1} ,A_{2}\}$};
			\draw (525,125.5) node  [font=\footnotesize] [align=left] {$\displaystyle \{ A_{1} ,A_{2}, E_2\}$};
			\draw (424,139.5) node  [font=\footnotesize] [align=left] {$\displaystyle \{A_{1} ,A_{2}\}$};
			\draw (371.83,198.33) node  [font=\small] [align=left] {$\displaystyle a_{0}$};
			\draw (391.5,208) node  [font=\small] [align=left] {$\displaystyle R$};
			\draw (463,208) node  [font=\small] [align=left] {$\displaystyle R$};
			\draw (342.17,196.5) node  [font=\footnotesize] [align=left] {$\displaystyle \{ A_{1}\}$};
			\draw (506,198.5) node  [font=\footnotesize] [align=left] {$\displaystyle \{ E_{1}\}$};
			\draw (371.83,159.33) node  [font=\small] [align=left] {$\displaystyle f$};
			\draw (315.33,51) node   [align=left] {$\displaystyle \mathfrak{B}$};
			\draw (342.17,156.5) node  [font=\footnotesize] [align=left] {$\displaystyle \{ B \}$};
			
			\draw (139.83,96.93) node  [font=\footnotesize] [align=left] {$\displaystyle c_{1}$};
			\draw (205.43,96.93) node  [font=\footnotesize] [align=left] {$\displaystyle c_{n}$};
			\draw (425.53,112.13) node  [font=\footnotesize] [align=left] {$\displaystyle d_{1}$};
			\draw (487.93,112.13) node  [font=\footnotesize] [align=left] {$\displaystyle d_{n}$};
			\draw (423.83,186) node  [font=\footnotesize] [align=left] {$\displaystyle c_{1}$};
			\draw (486.43,186) node  [font=\footnotesize] [align=left] {$\displaystyle c_{n}$};
			\draw (140.53,186) node  [font=\footnotesize] [align=left] {$\displaystyle d_{1}$};
			\draw (207.93,186) node  [font=\footnotesize] [align=left] {$\displaystyle d_{n}$};
		\end{tikzpicture}
		\caption{Models $\Amf$ and $\Bmf$ in Example~\ref{ex:gfopengf}.} \label{fig:strongopengfdepth}
		
	\end{center}
\end{figure*}

We next show a counterpart of Theorem~\ref{thm:GFopen}.
\begin{theorem}\label{thm:strongGFopenGF}
	Strong $(\text{GF},\text{GF})$-separability coincides with strong
	$(\text{GF},\text{openGF})$-separability.
\end{theorem}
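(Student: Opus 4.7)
The plan is to handle the two inclusions separately. The direction from strong $(\text{GF},\text{openGF})$-separability to strong $(\text{GF},\text{GF})$-separability is immediate: openGF is a syntactic fragment of GF, so any openGF-formula that strongly separates $(\Kmc, P, N)$ is also a GF-formula that strongly separates it.

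For the converse, suppose $(\Kmc, P, N)$ is strongly GF-separable. Since $\text{GF} \subseteq \text{FO}$, the labeled KB is strongly FO-separable, so Theorem~\ref{thm:strongsep} yields that $(\Omc, \Dmc_{\vec{a}=\vec{b}})$ is unsatisfiable for every $\vec{a} \in P$ and $\vec{b} \in N$. From this, I want to exhibit a strongly separating openGF-formula. Paralleling Point~4 of Theorem~\ref{thm:dlchar1} in the $\ALCI$-case, the candidate is built from openGF-types: for a sufficient depth $k$ depending on $\Kmc$, let $T_{\vec{a}}$ be the set of $k$-bounded openGF-types realizable at $\vec{a}$ in some model of $\Kmc$, each expressible by an openGF-formula $\psi_t(\vec{x})$ via a Hennessy--Milner characterization of openGF through openGF-bisimulation (as developed in \cite{tocl2020}). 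Set
\[
  \varphi^\ast(\vec{x}) \;:=\; \bigvee_{\vec{a}\in P}\;\bigvee_{t \in T_{\vec{a}}} \psi_t(\vec{x}).
\]

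To verify strong separation: $\Kmc\models \varphi^\ast(\vec{a})$ holds for all $\vec{a}\in P$ since in every model $\Amf$ of $\Kmc$, $\vec{a}^\Amf$ realizes some type in $T_{\vec{a}}$. For $\Kmc\models \neg\varphi^\ast(\vec{b})$, suppose for contradiction that some model $\Amf$ of $\Kmc$ satisfies $\psi_t(\vec{b}^\Amf)$ for some $t\in T_{\vec{a}}$, $\vec{a}\in P$. Then $\Amf, \vec{b}^\Amf$ and some $\Bmf, \vec{a}^\Bmf$, $\Bmf\models \Kmc$, agree on all openGF-formulas of depth $\le k$, hence are openGF-bisimilar up to depth $k$. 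I would then amalgamate $\Amf$ and $\Bmf$ along this bisimulation, using the disjoint-union invariance of openGF together with the guardedness of all sentences in $\Omc$, to produce a model of $(\Omc, \Dmc_{\vec{a}=\vec{b}})$, contradicting its unsatisfiability.

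The main obstacle is picking the bound $k$ and justifying that an openGF-bisimulation amalgamation produces a model of $\Omc$: openGF-bisimulation must respect not only local atoms but also every guard appearing in $\Omc$, and the identification of the constants $\vec{a}^\Amf$ with $\vec{b}^{\Bmf}$ in the disjoint union of $\Amf$ and $\Bmf$ has to remain consistent with all guarded sentences of $\Omc$. This is where the key difference to the $\ALCI$ case in Theorem~\ref{thm:dlchar1} lies, and it is the step that would need to be argued in detail, presumably adapting the amalgamation ideas already used for the weak case in Theorem~\ref{thm:GFopen}.
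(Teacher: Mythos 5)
The easy direction of your proposal is fine, but the hard direction has a genuine gap, concentrated exactly where you defer the argument. Your plan rests on two claims: (i) there is a finite depth $k$, determined by $\Kmc$, such that the disjunction of $k$-bounded openGF-types realizable at the positive examples strongly separates, and (ii) if $\vec b^{\Amf}$ satisfies such a bounded type, then $\Amf$ and a model $\Bmf$ realizing that type at $\vec a$ can be amalgamated into a model of $(\Omc,\Dmc_{\vec a=\vec b})$. Claim~(ii) is false for bounded bisimilarity: agreement up to a finite rank $k$ does not license identifying $\vec a^{\Bmf}$ with $\vec b^{\Amf}$, since the two points may disagree on openGF-formulas of rank $k+1$ and hence on consequences of $\Omc$. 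The paper's own Example~\ref{ex:gfopengf} together with Proposition~\ref{prop:strongopengfdepth} exhibits a strongly separable GF-KB and models $\Amf,\Bmf$ of it with $\Amf,a_0^{\Amf}\sim^{n-1}_{\text{openGF},\Sigma}\Bmf,b_0^{\Bmf}$, where $a_0$ is the positive and $b_0$ the negative example and $n$ is determined by the \emph{database}; consequently no bound $k$ extracted from $\Omc$ alone can work, and any strongly separating openGF-formula must have guarded quantifier rank growing with $\Dmc$. So fixing $k$ and making the amalgamation sound is not a detail to be filled in later --- it is the entire content of the theorem, and the sketch offers no mechanism for it.

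The paper avoids bounded types altogether: Theorem~\ref{thm:strongGFopenGF} is read off from Theorem~\ref{thm:gfstrongmodel}, a four-way equivalence between strong GF-/openGF-separability and the non-existence of models $\Amf,\Bmf$ of $\Kmc$ with $\Amf,\vec a^{\Amf}\sim_{\text{GF},\Sigma}\Bmf,\vec b^{\Bmf}$ (respectively $\sim_{\text{openGF},\Sigma}$), with \emph{unbounded} bisimulations. The direction from non-bisimilarity to separability is a compactness argument: assuming inseparability, one builds a maximal set $\Gamma$ of GF-formulas jointly satisfiable at some $\vec a\in P$ and some $\vec b\in N$, obtains GF-equivalent pointed models, and upgrades equivalence to full guarded bisimilarity via $\omega$-saturation (Lemma~\ref{lem:guardedbisim}); the separating formula is extracted from $\Gamma$ rather than from a type disjunction, so no depth bound is ever needed. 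The bridge between GF and openGF is then implication $3\Rightarrow 4$ of Theorem~\ref{thm:gfstrongmodel}: given a connected (openGF) guarded bisimulation, one passes to the disjoint union of $\Amf$ with the connected component of $\vec b^{\Bmf}$ --- without identifying any points --- and verifies that all GF-sentences are preserved, so the connected bisimulation extends to an unrestricted one. If you want to salvage your type-based plan, you would in effect have to reprove this saturation argument; the amalgamation-by-identification you envisage is what the paper does for $\ALCI$ and FO$^2$ via complete (not rank-bounded) types, and even there it needs the full type equality of Theorem~\ref{thm:dlchar1}, not a finite approximation.
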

\begin{proof} \
The proof uses guarded bisimulations. It suffices to consider labeled KBs with singleton sets of positive and negative examples. Let $(\Kmc,\{\vec{a}\},\{\vec{b}\})$ be a labeled GF-KB and $\Sigma=\text{sig}(\Kmc)$. We show that the following conditions are equivalent:
\begin{enumerate}
	\item $(\Kmc,\{\vec{a}\},\vec b)$ is strongly openGF-separable;
	\item for all models \Amf and \Bmf of \Kmc: $\Amf,\vec
	a^\Amf\not\sim_{\text{openGF},\Sigma}\Bmf,\vec b^\Bmf$;
	\item for all models \Amf and \Bmf of \Kmc: $\Amf,\vec a^\Amf\not\sim_{\text{GF},\Sigma}\Bmf,\vec
	b^\Bmf$;
	\item $(\Kmc,\{\vec{a}\},\vec b)$ is strongly GF-separable.
\end{enumerate} 
The implication ``1. $\Rightarrow$ 4.'' is trivial and ``4. $\Rightarrow$ 3.'' follows directly from Lemma~\ref{lem:guardedbisim}. 

``3. $\Rightarrow$ 2.'' Suppose there are models \Amf and \Bmf of \Kmc such that $\Amf,\vec
a^\Amf\sim_{\text{openGF},\Sigma}\Bmf,\vec b^\Bmf$. Then one can construct a model $\Bmf'$ of $\Kmc$ such that $\Amf,\vec
a^\Amf\sim_{\text{GF},\Sigma}\Bmf',\vec b^\Bmf$ in exactly the same way as in the proof of the implication ``3. $\Rightarrow$ 4.'' of Theorem~\ref{thm:critGF1}. 

``2. $\Rightarrow$ 1.'' Suppose $(\Kmc,\{\vec{a}\},\vec b)$ is not strongly openGF-separable. Let $\vec{x}$ be a sequence of different variables of the same length as $\vec{a}$. Let
$$
\Gamma_{\vec{a}} = \{ \varphi(\vec x)\in \text{openGF}(\Sigma) \mid
	\Kmc\models \varphi(\vec a)\}, \quad
\Gamma_{\vec b}  =  \{ \varphi(\vec x)\in \text{openGF}(\Sigma) \mid
\,\Kmc\models \varphi(\vec b)\}
$$
Note that $\Gamma_{\vec{a}}$ and $\Gamma_{\vec b}$ are both
closed under conjunction. Say that a set $\Gamma$ of formulas in GF of the form $\varphi(\vec{x})$ is
\emph{satisfiable in} $\Kmc,\vec a$ if the set of first-order formulas $\Omc\cup \Dmc \cup
\{\varphi(\vec a) \mid \varphi(\vec x)\in \Gamma\}$
is satisfiable. We show that $\Gamma_{\vec a}\cup \Gamma_{\vec b}$ is satisfiable in $\Kmc,\vec a$ and in $\Kmc,\vec{b}$. Assume that this is not the case for $\Kmc,\vec{a}$ (the case $\Kmc,\vec{b}$ is considered in the same way). Then it follows that $\Gamma_{\vec b}$ is not satisfiable in $\Kmc,\vec a$. By compactness and closure under conjunction, there exists $\varphi(\vec
x)\in \Gamma_{\vec b}$ such that $\Kmc\models \neg \varphi(\vec
a)$. As $\Kmc\models \varphi(\vec
b)$ this contradicts the assumption
that $(\Kmc,\{\vec a\},\{\vec b\})$ is not strongly openGF-separable. 

Now, let $\Gamma_{0}=\Gamma_{\vec a}\cup \Gamma_{\vec b}$ and consider an
enumeration $\varphi_{1},\varphi_{2},\ldots$ 
of the remaining openGF$(\Sigma)$-formulas of the form $\varphi(\vec{x})$. Then we set inductively
$\Gamma_{i+1}= \Gamma_{i}\cup \{\varphi_{i+1}\}$ if $\Gamma_{i}\cup \{\varphi_{i+1}\}$ is satisfiable in both $\Kmc,\vec a$ and
$\Kmc, \vec b$ and set $\Gamma_{i+1}=\Gamma_{i}\cup \{\neg
\varphi_{i+1}\}$ otherwise.
Let $\Gamma=\bigcup_{i\geq 0}\Gamma_{i}$. One can now easily show that $\Gamma$
is satisfiable in both $\Kmc,\vec a$ and
$\Kmc, \vec b$. Hence there exist models
$\Amf$ and $\Bmf$ of $\Kmc$ such
that $\Amf\models \varphi(\vec a)$ and
$\Bmf\models \varphi(\vec b)$ for all $\varphi\in \Gamma$. Thus, by definition,
$\Amf,\vec a^{\Amf}\equiv_{\text{openGF},\Sigma}\Bmf,\vec b^{\Bmf}$.
Any structure $\Cmf$ is a substructure of an $\omega$-saturated structure $\Cmf'$ satisfying the same FO-formulas $\varphi(\vec{c})$ with $\vec{c}$ tuples
in $\text{dom}(\Cmf)$ regarded as constants~\cite{modeltheory}. Hence we may assume that $\Amf$ and $\Bmf$ are $\omega$-saturated. By Lemma~\ref{lem:guardedbisim}, we obtain 
$\Amf,\vec a^{\Amf}\sim_{\text{openGF},\Sigma}\Bmf,\vec b^{\Bmf}$, as required.
\end{proof}
Satisfiability of GF-KBs is \TwoExpTime-complete in combined
complexity and \NPclass-complete in data complexity
\cite{DBLP:journals/jsyml/Gradel99}.
We  obtain the following.
\begin{corollary}
	\label{thm:gfstrong}
	For any FO-fragment $\Lmc_{S} \supseteq \text{UCQ}$:
	\begin{enumerate}
		\item strong $(\text{GF},\text{GF})$-separability, strong $(\text{GF},\text{openGF})$, and strong
		$(GF,\Lmc_S)$-separability coincide, the same is true for definability;
		\item strong $(\text{GF},\text{GF})$-separability and strong
		$(\text{GF},\text{GF})$-definability are \TwoExpTime-complete in combined complexity and \coNPclass-complete in data complexity.
	\end{enumerate} 
This also holds for RE-existence and entity distinguishability, for all FO-fragments $\Lmc_{S}\supseteq \text{CQ}$.
\end{corollary}
\begin{proof} \
	It remains to prove the lower bounds. As $\mathcal{ALCI}$ is a fragment of GF, the ontology $\Omc$ used in the proof of \coNPclass-hardness in data complexity in Corollary~\ref{thm:gnfostrong} is a GF-ontology. The \coNPclass lower bounds in data complexity follow directly from that proof.
	The \TwoExpTime lower bound in combined complexity can also be proved in the same way as the \TwoExpTime-lower bound in Corollary~\ref{thm:gnfostrong} by using the fact that it is \TwoExpTime-hard to decide whether $\Omc\models \forall x A(x)$ for GF-ontologies $\Omc$~\cite{DBLP:journals/jsyml/Gradel99}.	
\end{proof}
\subsection{Strong Separability of Labeled FO$^2$-KBs}
\label{subsection:strongFO2}

We show that in contrast to weak separability, strong
$(\text{FO}^{2},\text{FO}^2)$-separability is decidable and coincides with strong $(\text{FO}^{2},\text{FO})$-separability. The proof
strategy is the same as for \ALCI and GF and thus we first need a
suitable notion of type for FO$^2$-KBs. Existing such notions, such as
the types defined in \cite{DBLP:journals/bsl/GradelKV97}, are not
strong enough for our purposes, and so we define and work with
a more powerful notion of type. We can then once more establish a
theorem that parallels Theorem~\ref{thm:dlchar1} and show that strong separability has the same complexity as non-satisfiability of KBs, both in
combined complexity ({\sc co}\NExpTime-complete) and in data complexity (\coNPclass-complete).

We start by introducing appropriate types for FO$^{2}$-KBs.
Recall that we assume that FO$^{2}$ uses unary and binary relation symbols only and that positive and negative examples are either constants or pairs of constants.
Assume that $\Kmc=(\Omc,\Dmc)$ is an FO$^{2}$-KB. 
Let $\cl(\Kmc)$ denote the closure under single negation
and swapping the variables $x,y$ of the set of subformulas of formulas in $\Omc$
and $\{R(x,x),R(x,y),A(x)\mid R,A \in \mn{sig}(\Kmc)\}$. 
The \emph{1-type for \Kmc}
of a pointed structure $\Amf,a$, denoted
$\text{tp}_{\mathcal{K}}(\mathfrak{A},a)$, is the set of all formulas
$\psi(x)\in \mn{cl}(\Kmc)$ such that $\Amf\models \psi(a)$.  We denote
by $T_{x}(\Kmc)$ the set of all 1-types for \Kmc. We say that
$t(x)\in T_{x}(\Kmc)$ is \emph{realized} in $\Amf,a$ if
$t(x)=\text{tp}_{\mathcal{K}}(\mathfrak{A},a)$.  Denote by $t(x)[y/x]$
the set of formulas obtained from $t(x)$ by swapping $y$ and $x$.

The \emph{$2$-type for \Kmc} of a pointed structure $\Amf,a,b$,
denoted $\text{tp}_{\mathcal{K}}(\mathfrak{A},a,b)$, is the set of all
$R(x,y)$ with $\Amf\models R(a,b)$, $R(y,x)$ with $\Amf\models
R(b,a)$, $\neg R(x,y)$ with $\Amf\not\models R(a,b)$, and $\neg
R(y,x)$ with $\Amf\not\models R(b,a)$, where $R$ is a binary relation
in $\Kmc$. In addition, $x=y\in
\text{tp}_{\mathcal{K}}(\mathfrak{A},a,b)$ if $a=b$ and $\neg (x=y)\in
\text{tp}_{\mathcal{K}}(\mathfrak{A},a,b)$ if $a\not=b$.  We denote by
$T_{x,y}(\Kmc)$ the set of all $2$-types for \Kmc. We say that
$t(x,y)\in T_{x,y}(\Kmc)$ is \emph{realized} in $\Amf,a,b$ if
$t(x,y)=\text{tp}_{\mathcal{K}}(\mathfrak{A},a,b)$.

Types as defined above are not yet sufficiently powerful to ensure
that models of $\Kmc$ can be merged at nodes satisfying the same type.
To achieve this we introduce extended types.
For $t\in T_{x}(\Kmc)$, let 
$$
\text{king}(t) = \forall y (t[y/x] \rightarrow (x=y)),
$$ 
where here and in what follows we identify a type $t$ or set $t[y/x]$ with the conjunction of the formulas they contain. Thus $\text{king}(t)$ states that $t$ is realized at most once. Now, an extended type states not only which formulas in $\mn{cl}(\Kmc)$ are satisfied but also, for example, which types are kings and which binary relations hold between realized kings.
In detail, the \emph{extended 2-type for \Kmc} of a
pointed structure $\Amf,a,b$, denoted
$\text{tp}^*_{\mathcal{K}}(\mathfrak{A},a,b)$, is the conjunction of
\begin{enumerate}
	\item $\text{tp}_{\mathcal{K}}(\mathfrak{A},a,b)\wedge \text{tp}_{\mathcal{K}}(\mathfrak{A},a) \wedge \text{tp}_{\mathcal{K}}(\mathfrak{A},b)[y/x]$ (stating which 1-types are realized in $a$ and $b$ and which relations hold between $a$ and $b$);
	\item $\exists y (\text{tp}_{\mathcal{K}}(\mathfrak{A},a,c)\wedge \text{tp}_{\mathcal{K}}(\mathfrak{A},c)[y/x])$ for any 
	$c\in \text{dom}(\Amf)\setminus\{a,b\}$ such that $\text{tp}_{\Kmc}(\Amf,c)$ is realized exactly once in $\Amf$ (stating which binary relations hold between $a$ and king types);
	\item $\exists x (\text{tp}_{\Kmc}(\Amf,c,b)\wedge \text{tp}_{\Kmc}(\Amf,c)[x/y])$ for any 
	$c\in \text{dom}(\Amf)\setminus\{a,b\}$ such that
	$\text{tp}_{\Kmc}(\Amf,c)$ is realized exactly once in $\Amf$ (stating which binary relations hold between $b$ and king types);
	\item $\neg \exists x \;t$, for any $t\in T_{x}(\Kmc)$ not realized in $\Amf$ (stating which types are not realized);
	\item $\exists x (t \wedge \text{king}(t))$ if $t\in T_{x}(\Kmc)$ is realized exactly once in $\Amf$ (stating which types are king types);
	\item $\exists x (t \wedge \neg \text{king}(t))$ if $t(x)\in T_{x}(\Kmc)$ is realized at least 
	twice in $\Amf$ (stating which types are realized at least twice);
	\item $\exists xy \text{tp}_{\Kmc}(\Amf,c,d) \wedge \text{tp}_{\Kmc}(\Amf,c) \wedge \text{tp}_{\Kmc}(\Amf,d)[y/x]$
	for any $c\not=d$ such that $\text{tp}_{\Kmc}(\Amf,c)$ and
	$\text{tp}_{\Kmc}(\Amf,d)$ are realized exactly once in $\Amf$ (stating which relations hold between king types).
\end{enumerate}
%
We denote by $T^*_{x,y}(\Kmc)$ the set of all extended $2$-types for \Kmc.  We
say that $t(x,y)\in T^*_{x,y}(\Kmc)$ is \emph{realized} in $\Amf,a,b$
if $t(x,y)=\text{tp}^{*}_{\mathcal{K}}(\mathfrak{A},a,b)$.
The \emph{extended 1-type for $\Kmc$} of a pointed structure $\Amf,a$ is
defined in the same way with $\text{tp}_{\Kmc}(\Amf,a,b),
\text{tp}_{\Kmc}(\Amf,b)[y/x]$ removed in Point~1 and with Point~3
completely removed. We also define the realization of such types by
pointed structures as expected.
\begin{theorem}
	For every labeled FO$^{2}$-KB such that the tuples in $P\cup N$ have
	length $i \in \{1,2\}$, the following are equivalent:
	\begin{enumerate}
		\item $(\Kmc,P,N)$ is strongly FO$^{2}$-separable;
		\item $(\Kmc,P,N)$ is strongly FO-separable;
		\item for all $\vec{a}\in P$ and $\vec{b}\in N$, there do not exist
		models $\Amf$ and $\Bmf$ of $\Kmc$ such that $\vec{a}^\Amf$
		and $\vec{b}^\Bmf$ realize the same extended $i$-type for \Kmc;
		\item the FO$^2$-formula $t_1 \vee \cdots \vee t_n$ strongly 
		separates $(\Kmc,P,N)$, $t_1,\dots,t_n$ the extended $i$-types
		for \Kmc realizable in $\Kmc,\vec{a}$ for some
		$\vec{a}\in P$.
	\end{enumerate}
\end{theorem}
\begin{proof} \
	Assume w.l.o.g. that the tuples in $P$ and $N$ have length two.
	Only ``2. $\Rightarrow$ 3.'' is non-trivial. Assume that Condition~3 does not hold. 
	Thus, there are $\vec{a}=(a_{1},a_{2})\in P$ and $\vec{b}=(b_{1},b_{2})\in N$ and models $\Amf$ and $\Bmf$ of $\Kmc$
	such that the extended 2-types of $\Amf,\vec{a}$ and $\Bmf,\vec{b}$ coincide.
	We show that there exists a model $\Cmf$ of $(\Omc,\Dmc_{\vec{a}=\vec{b}})$. Then, by Theorem~\ref{thm:strongsep}, 
	$(\Kmc,P,N)$ is not FO-separable.
	We construct $\Cmf$ from $\Amf$ and $\Bmf$ as follows: assume that $a_{1}^{\Amf}\not=a_{2}^{\Amf}$.
	The case $a_{1}^{\Amf}=a_{2}^{\Amf}$ can be proved similarly.
	Then, by Point~1 of the definition of extended types and since 2-types contain equality assertions,
	$b_{1}^{\Bmf}\not=b_{2}^{\Bmf}$. By Points~5 and 6, $\Amf$ and $\Bmf$ realize exactly the same 1-types once. Let $K$ denote this set of 1-types. 
        By identifying $a_{i}^{\Amf}$ and $b_{i}^{\Amf}$, $i=1,2$, and the nodes $c_{t}\in \text{dom}(\Amf)$ and $d_{t}\in \text{dom}(\Bmf)$ realizing 
        the same 1-type $t$ in $K$ we obtain a structure $\Cmf$ whose substructure induced by
	$\text{dom}(\Amf)$ coincides with $\Amf$ and whose substructure induced by $\text{dom}(\Bmf)$ coincides with $\Bmf$.
	This is well defined by Points~1, 2, 3, and 7 of the definition of extended types. Set $c^{\Cmf}=c^{\Amf}$
	for all constants $c$ in $\Dmc$ and ${c'}^{\Cmf}={c'}^{\Bmf}$ for all constants $c'$ in $\Dmc'$ (from the definition
	of $\Dmc_{\vec{a}=\vec{b}}$). It remains to define
	the 2-type realized by $(c,d)$ in $\Cmf$ for $c\in \text{dom}(\Cmf)\setminus \text{dom}(\Bmf)$
	and $d\in \text{dom}(\Cmf)\setminus \text{dom}(\Amf)$. Assume such a $(c,d)$ is given (the situation is depicted below). Then the
	type $\text{tp}_{\Kmc}(\Bmf,d)$ is realized in $\Amf$, by Points~4 and 6 of the definition of extended types. Take
	$d'\in \text{dom}(\Amf)$ such that $\text{tp}_{\Kmc}(\Amf,d')=\text{tp}_{\Kmc}(\Bmf,d)$. 
	\begin{center}
		
		\tikzset{every picture/.style={line width=0.75pt}} 
		
		\begin{tikzpicture}[x=0.75pt,y=0.75pt,yscale=-1,xscale=1]
			
			\draw   (182.69,178.05) .. controls (178.72,178.67) and (174.57,179) .. (170.3,179) .. controls (140.92,179) and (117.1,163.33) .. (117.1,144) .. controls (117.1,124.67) and (140.92,109) .. (170.3,109) .. controls (174.57,109) and (178.72,109.33) .. (182.69,109.95) -- cycle ;
			\draw    (182.5,86.33) -- (182.5,105) ;
			\draw [shift={(182.5,107)}, rotate = 269.31] [color={rgb, 255:red, 0; green, 0; blue, 0 }  ][line width=0.75]    (5.93,-5.29) .. controls (3.95,-1.4) and (3.31,-0.3) .. (0,0) .. controls (3.31,0.3) and (3.95,1.4) .. (5.93,5.29)   ;
			\draw  [fill={rgb, 255:red, 0; green, 0; blue, 0 }  ,fill opacity=1 ] (157.67,129.67) .. controls (157.67,128.75) and (158.41,128) .. (159.33,128) .. controls (160.25,128) and (161,128.75) .. (161,129.67) .. controls (161,130.59) and (160.25,131.33) .. (159.33,131.33) .. controls (158.41,131.33) and (157.67,130.59) .. (157.67,129.67) -- cycle ;
			\draw   (202,130) .. controls (202,129.08) and (202.75,128.33) .. (203.67,128.33) .. controls (204.59,128.33) and (205.33,129.08) .. (205.33,130) .. controls (205.33,130.92) and (204.59,131.67) .. (203.67,131.67) .. controls (202.75,131.67) and (202,130.92) .. (202,130) -- cycle ;
			\draw   (158.67,160) .. controls (158.67,159.08) and (159.41,158.33) .. (160.33,158.33) .. controls (161.25,158.33) and (162,159.08) .. (162,160) .. controls (162,160.92) and (161.25,161.67) .. (160.33,161.67) .. controls (159.41,161.67) and (158.67,160.92) .. (158.67,160) -- cycle ;
			\draw   (182.69,109.95) .. controls (186.67,109.33) and (190.82,109) .. (195.08,109) .. controls (224.46,109) and (248.28,124.67) .. (248.28,144) .. controls (248.28,163.33) and (224.46,179) .. (195.08,179) .. controls (190.82,179) and (186.67,178.67) .. (182.69,178.05) -- cycle ;
			
			\draw (186.33,78) node  [font=\small] [align=left] {\textit{merged points}};
			\draw (111.53,111.07) node   [align=left] {$\displaystyle \mathfrak{A}$};
			\draw (257.6,111.8) node   [align=left] {$\displaystyle \mathfrak{B}$};
			\draw (147.13,130.13) node   [align=left] {$\displaystyle c$};
			\draw (215.33,131.6) node   [align=left] {$\displaystyle d$};
			\draw (148.33,162.33) node   [align=left] {$\displaystyle d'$};

		\end{tikzpicture}
		
	\end{center}
	We may assume 
	that $d'\not=c$ as $\text{tp}_{\Kmc}(\Bmf,d)$ is realized at least twice in both $\Amf$ and in $\Bmf$. Now interpret the
	relations $R\in \mn{sig}(\Kmc)$ in $\Cmf$ in such a way that $\text{tp}_{\Kmc}(\Cmf,d,c)=\text{tp}_{\Kmc}(\Amf,d',c)$.
	One can then show by induction over the construction of formulas in $\mn{cl}(\Kmc)$ that $\Cmf$ is a model of $(\Omc,\Dmc_{\vec{a}=\vec{b}})$, as required. 
\end{proof}

As in the GF case, strongly separating formulas are of size at most
$2^{2^{p(||\Omc||)}}$, $p$ a polynomial.
As satisfiability of FO$^2$-KBs is \NExpTime-complete in combined
complexity and \NPclass-complete in data complexity
\cite{DBLP:journals/iandc/Pratt-Hartmann09} we obtain the following result.
\begin{corollary}\label{thm:fo2strong}
For any FO-fragment $\Lmc_{S} \supseteq \text{UCQ}$:
\begin{enumerate}
	\item strong $(\text{FO}^2,\text{FO}^2)$-separability and strong
	$(\text{FO}^{2},\Lmc_S)$-separability coincide, the same is true for definability;
	\item strong $(\text{FO}^{2},\text{FO}^{2})$-separability and strong
	$(\text{FO}^{2},\text{FO}^{2})$-definability are {\sc co}\NExpTime-complete in combined complexity and \coNPclass-complete in data complexity.
\end{enumerate} 
This also holds for RE-existence and entity distinguishability, for all FO-fragments $\Lmc_{S}\supseteq \text{CQ}$.	
\end{corollary}
\begin{proof} \
It remains to consider the lower bounds. As $\mathcal{ALCI}$ is a fragment of FO$^{2}$ we can again use the proof of Corollary~\ref{thm:gnfostrong}
for the lower bound in data complexity. As it is {\sc co}\NExpTime-hard 
to check whether $\Omc\models \forall x A(x)$ for an FO$^{2}$-ontology $\Omc$, we can also use the proof of Corollary~\ref{thm:gnfostrong}
for the lower bound in combined complexity.
\end{proof}
%
%

\section{Discussion and Future Work}
\label{sec:discussion}

In this article and in~\cite{DBLP:conf/ijcai/FunkJLPW19}, we have started an
investigation of the separability problem for labeled KBs, that is,
finding logical formulas that separate positive and
negative examples in knowledge bases which consist of incomplete data
and an ontology. We have
established fundamental results for several ontology and separating
languages ranging from full first-order logic to decidable fragments
thereof, e.g., expressive description logics.
In the remainder of the section, we discuss variations and extensions
of the separability problem that are not covered in this paper. In
passing, we mention several interesting directions for future work. 
%

\paragraph{Separability under the Unique Name Assumption}
Recall that we do not adopt the unique name assumption (UNA) which is essential for some of our results. Here we briefly discuss what happens if one adopts it and open problems that arise. We first note that if
$\Lmc$ and $\Lmc_{S}$ are fragments of FO without equality, then the UNA does not influence the $\Lmc_{S}$-consequences of $\Lmc$-KBs in the sense that $\Kmc\models \varphi(\vec{a})$ under UNA iff $\Kmc\models \varphi(\vec{a})$
without UNA, for any $\Lmc$-KB $\Kmc=(\Omc,\Dmc)$, $\Lmc_{S}$-formula $\varphi(\vec{x})$,
and $\vec{a}\in \text{cons}(\Dmc)^{|\vec{x}|}$~\cite{DBLP:journals/tods/BienvenuCLW14}. Clearly, in this case our results are not affected by adopting the UNA. This applies, for example, to (strong) $(\ALCI,\ALCI)$-separability and $(\ALCI,\text{UCQ})$-separability.

\newcommand{\nrightarrowtail}{\centernot\rightarrowtail}
We next consider the consequences of adopting the UNA on weak separability for languages with equality. As noted in Example~\ref{exmp:44} already, the fundamental characterization of weak $(\text{FO},\text{FO})$-separability given in Theorem~\ref{critFOwithoutUNA} does not hold under the UNA. It is, however, straightforward to obtain a characterization of $(\text{FO},\text{FO})$-separability under UNA by adjusting the conditions given in Theorem~\ref{critFOwithoutUNA} as follows. Let $\text{UCQ}^{\not=}$ denote the class of unions of conjunctive queries that also admit atoms of the form $x\not=y$ and let $\varphi^{\not=}_{\Dmc_{\text{con}(\vec{a}),\vec{a}}}$ denote the
CQ $\varphi_{\Dmc_{\text{con}(\vec{a}),\vec{a}}}$ extended by
$x_{c}\not=x_{d}$ for any constants $c\not=d$ in
$\Dmc_{\text{con}(\vec{a})}$. Let $\Dmc,\vec{a}\nrightarrowtail
\Amf,\vec{b}$ denote that there is no injective homomorphism from $\Dmc$ to $\Amf$ mapping $\vec{a}$ to $\vec{b}$. Then we obtain the following characterization using almost the same proof as before.
\begin{theorem}\label{critFOwithoutUNA3}
	Let $(\Kmc,P,\{\vec{b}\})$ be a labeled FO-KB,
	$\Kmc=(\Omc,\Dmc)$. 
	Then under the UNA the following
	conditions are equivalent:
	\begin{enumerate}
		\item $(\Kmc,P,\{\vec{b}\})$ is projectively UCQ$^{\not=}$-separable;
		\item $(\Kmc,P,\{\vec{b}\})$ is projectively FO-separable;
		\item there exists a model $\Amf$ of $\Kmc$ such that
		  for all $\vec{a}\in P$:
		  $\Dmc_{\text{con}(\vec{a})},\vec{a}\nrightarrowtail \Amf,\vec{b}^{\Amf}$;  
		\item the UCQ $\bigvee_{\vec{a}\in P}\varphi^{\not=}_{\Dmc_{\text{con}(\vec{a})},\vec{a}}$ separates
		$(\Kmc,P,\{\vec{b}\})$.
	\end{enumerate} 
\end{theorem} 
This characterization of $(\text{FO},\text{FO})$-separability under the UNA provides a very close link between rooted UCQ$^{\not=}$-evaluation on FO-KBs and FO-separability. It is known that without the restriction to rooted queries,
UCQ$^{\not=}$-evaluation on $\ALCI$-KBS is undecidable~\cite{DBLP:conf/pods/CalvaneseGL98,DBLP:conf/icdt/Rosati07}.
We obtain the following undecidability result by strengthening this result to rooted queries in a straighforward way. The proof is given in the appendix. 
\begin{restatable}{theorem}{thmundeuna}\label{thm:undecuna}
  $(\ALCI,\text{FO})$-separability is undecidable, under the UNA. 
\end{restatable}
Thus, in sharp contrast to separability without the UNA, $(\ALCI,\text{FO})$-separability behaves very differently from $(\ALCI,\ALCI)$-separability under the UNA, both from a semantic and algorithmic viewpoint. We conjecture, however, that the complexity
of projective and non-projective $(\Lmc,\Lmc)$-separability is not
affected and still \TwoExpTime-complete in combined complexity for $\Lmc\in\{\text{GF},\text{GNFO}\}$.

The UNA is also relevant for strong separability, if the ontology or query language uses equality. To illustrate this, extend the ontology $\Omc_{2}$ defined in Example~\ref{exmp:44} by stating that the binary relations ${\sf citizen\_of}$ and ${\sf born\_in}$ are both partial functions, that is, 
$
\forall x\forall y_{1}\forall y_{2} (R(x,y_{1}) \wedge R(x,y_{2}) \rightarrow y_{1}=y_{2})
$
for $R\in \{{\sf citizen\_of},{\sf born\_in}\}$. Denote the resulting ontology by $\Omc_{2}'$ and consider the KB $\Kmc_{2}'=(\Omc_{2}',\Dmc_{1})$ with $\Dmc_{1}$ introduced in Example~\ref{exmp:44}.
Then under the UNA the labeled KB $(\Kmc_{2}',\{a\},\{b\})$ is strongly separated by $\exists y ({\sf born\_in}(x,y) \wedge {\sf citizen\_of}(x,y))$ but it is
not strongly FO-separable without the UNA. 

While Theorem~\ref{thm:strongsep} does not hold under the UNA, one can
adjust it in a similar way as Theorem~\ref{critFOwithoutUNA} by
replacing UCQ by UCQ$^{\not=}$ and the CQs
$\varphi_{\Dmc_{\text{con}(\vec{a}),\vec{a}}}$ by
$\varphi^{\not=}_{\Dmc_{\text{con}(\vec{a}),\vec{a}}}$. Recall that
the database $\Dmc_{\vec{a}=\vec{b}}$ was constructed from \Dmc and a
disjoint copy $\Dmc'$ thereof, c.f.~its definition before
Theorem~\ref{thm:strongsep}. Denote for any partial injection
$S \subseteq (\text{cons}(\Dmc)\setminus [\vec{a}]) \times (\text{cons}(\Dmc')\setminus [\vec{b}])$ by $\mathcal{D}_{\vec{a}=\vec{b}}^{S}$ the database obtained from 
$\mathcal{D}_{\vec{a}=\vec{b}}$ by identifying all $c_{1},c_{2}$ with $(c_{1},c_{2})\in S$.
Then one can prove the following characterization by adapting the proof of Theorem~\ref{thm:strongsep}.
\begin{theorem}\label{critFOwithoutUNA3new}
	Let $(\Kmc,P,\{\vec{b}\})$ be a labeled FO-KB,
	$\Kmc=(\Omc,\Dmc)$. 
	Then under the UNA the following
	conditions are equivalent:
	\begin{enumerate}
		\item $(\Kmc,P,\{\vec{b}\})$ is strongly UCQ$^{\not=}$-separable;
		\item $(\Kmc,P,\{\vec{b}\})$ is strongly FO-separable;
		\item for all $\vec{a}\in P$, $\vec{b}\in N$, and all partial injections $S \subseteq (\text{cons}(\Dmc)\setminus [\vec{a}]) \times (\text{cons}(\Dmc')\setminus [\vec{b}])$,
		the KB $(\Omc,\mathcal{D}_{\vec{a}=\vec{b}}^{S})$ is unsatisfiable (under UNA);  
		\item the UCQ $\bigvee_{\vec{a}\in P}\varphi^{\not=}_{\Dmc_{\text{con}(\vec{a})},\vec{a}}$ strongly separates
		$(\Kmc,P,\{\vec{b}\})$.
	\end{enumerate} 
\end{theorem}
Recall that for $\mathcal{ALCI}$-ontologies $\Omc$ any KB $(\Omc,\Dmc)$ is satisfiable under the UNA iff it is satisfiable without UNA. Hence, in sharp contrast to weak separability, by Point~3 of Theorems~\ref{thm:strongsep} and \ref{critFOwithoutUNA3new},
strong $(\mathcal{ALCI},\text{FO})$-separability does not depend on the UNA and
coincides with strong $(\mathcal{ALCI},\mathcal{ALCI})$-separability with and without the UNA.

On the other hand, if $\Lmc\in \{\text{GNFO},\text{GF},\text{FO}^{2}\}$, then strong $(\Lmc,\text{FO})$-separability depends on the UNA. Consider, for instance,  
$$
\Omc = \{ \forall x \forall y (R(x,y) \rightarrow (R(x,x) \rightarrow (x=y)))\}
$$
and $\Dmc=\{R(a,c),R(b,b)\}$. Then $((\Omc,\Dmc),\{a\},\{b\})$ is strongly FO-separable under the UNA but not without the UNA (check Condition~3 of the characterizations).
It remains an interesting open problem to determine the relationship between strong $(\Lmc,\text{FO})$-separability and strong $(\Lmc,\Lmc)$-separability under the UNA and their complexity for $\Lmc\in \{\text{GNFO},\text{GF},\text{FO}^{2}\}$.

\paragraph{Separability for Expressive Description Logics} Projective weak separability has been studied
also for $\mathcal{ALC}$, $\mathcal{ALCQ}$, and $\mathcal{ALCQI}$, with some surprising results. As we have seen in this article, projective weak separability in $\mathcal{ALCI}$ is \NExpTime-complete in combined complexity, by mutual polynomial time reduction with the complement of rooted 
UCQ-evaluation. In $\mathcal{ALC}$, $\mathcal{ALCQ}$ and $\mathcal{ALCQI}$, projective weak separability can be mutually polynomial time reduced with the complement of natural variants of rooted UCQ-evaluation in the respective DLs~\cite{DBLP:conf/ijcai/FunkJLPW19}. Adopting the UNA, it is shown that projective $(\mathcal{ALC},\mathcal{ALC})$ and $(\mathcal{ALCQ},\mathcal{ALCQ})$-separability are also \NExpTime-complete but that projective $(\mathcal{ALCQI},\mathcal{ALCQI})$-separability is \ExpTime-complete. Strong separability
has also been studied for $\mathcal{ALC}$ and turns out to be
\ExpTime-complete~\cite{DBLP:conf/ijcai/FunkJLPW19}. As discussed
above,  in $\mathcal{ALC}$ and $\mathcal{ALCI}$ neither weak nor
strong separability  depend on the UNA, but for $\mathcal{ALCQ}$ and
$\mathcal{ALCQI}$ they do. In fact, it would be of interest to
investigate whether the complexity results for  $\mathcal{ALCQ}$ and
$\mathcal{ALCQI}$ still hold if one does not adopt the UNA. More
generally, it is a challenging open problem to systematically
investigate the semantic and algorithmic properties of separability in
the context of expressive DLs and consider description logics with
further constructors such as functional roles or number restrictions,
role hierarchies, transitive roles, expressive role inclusions, and
nominals. DLs of interest include $\mathcal{ALCO}$,
$\mathcal{ALCFIO}$, $\mathcal{S}$ (see Example~\ref{ex:exmptrans}), $\mathcal{SHIQ}$, $\mathcal{SHOIQ}$, and
$\mathcal{SROIQ}$~\cite{handbook,DL-Textbook}. Note that
$\mathcal{ALCO}$ is also discussed below when we consider in more
detail the role of constants in the separability problem.

\paragraph{Separability with Signature Restrictions}
In this article, we have investigated separability under the assumption that all relation symbols from the labeled KB can occur in the separating expression.
It is also of interest to consider a signature $\Sigma$ of relation symbols (and possibly constants) that is given as an additional input
and require separating expressions to be formulated in $\Sigma$.
This makes it possible to ‘direct’ separation towards expressions based on desired features and accordingly to exclude
features that are not supposed to be used for separation. In \cite{kr2021restr},
we have started investigating separability under
signature restrictions. Many results comparing the expressive power of different separation languages obtained in this article do not hold under signature restrictions and new separation languages combining UCQs and DLs are introduced to understand the power of projective separability under signature restrictions. Also decision problems becomes much harder.
The following table gives an overview of the complexity of separability for expressive fragments of FO with signature restrictions. For the results on $\mathcal{ALCO}$ we have admitted nominals in separating concepts but no constants are admitted in separating expressions for the remaining languages 
in the table. In the weak separability case only projective separability has been investigated (with unary relation symbols as helper symbols),  non-projective weak separability appears to be very challenging and has not yet been considered.
%
We note that while projective separability without signature restrictions is closely linked to UCQ-evaluation, projective separability with signature restrictions is closely linked to conservative extensions~\cite{DBLP:conf/kr/GhilardiLW06,DBLP:conf/icalp/JungLM0W17} and that while
strong separability without signature restrictions is closely linked to KB satisfiability, strong satisfiability with signature restrictions is closely linked to Craig interpolant existence~\cite{DBLP:conf/aaai/ArtaleJMOW21,DBLP:conf/lics/JungW21}.

\begin{center}
	\begin{tabular}{|c|c|c|}
		\hline
		& Weak Separability & Strong Separability \\
		$\mathcal{L}$ & projective restricted signature & restricted signature\\
		\hline                     
		$\mathcal{ALC}$ & \textsc{2ExpTime} & \textsc{2ExpTime}\\
		$\mathcal{ALCI}$ & \textsc{2ExpTime} & \textsc{2ExpTime}\\ 
		$\mathcal{ALCO}$ & \textsc{3ExpTime} & \textsc{2ExpTime}\\
		GF & Undecidable & \textsc{3ExpTime} \\
		FO$^2$ & Undecidable &
		$[\textsc{2Exp},\textsc{coN}2\textsc{Exp}]$  \\
		\hline
	\end{tabular}
\end{center}
Many challenging problems are open for separability under signature restrictions. For example, the complexity of deciding weak and strong separability for most of the expressive DLs mentioned in the previous paragraph and a better understanding of the separating power of expressive DLs remain to be investigated.
%

\paragraph{Separability with Constants} 
In this article we do not admit constants in ontologies nor in separating expressions.
Even without constants in separating expressions, the
admission of constants in ontologies makes a significant difference. For example, in FO and fragments such as the extension $\mathcal{ALCO}$ of $\mathcal{ALC}$ with nominals one can then state that $a\not=b$ for different constants $a,b$ in the database. Thus, one can implicitly cover the UNA with the consequences discussed above. The existence of useful model-theoretic characterization results and the complexity of separability remain to be investigated. If one admits constants in separating formulas, the situation changes even more drastically. Note that it does not make sense to admit all constants in the underpinning database
of the separability problem 
in separating formulas as in this case the formula $x=a$ would weakly separate the labeled KB $(\Kmc,\{a\},\{b\})$ whenever $a\not=b$. Thus, the 
set of constants that can occur in separating formulas has to be restricted by excluding at least some constants from the database. The general set-up in which one can restrict both the relation symbols and the constants that can occur in separating formulas has been studied in~\cite{kr2021restr} and is discussed above. We note, however, that the case in which one only restricts the use of constants but not the use of relation symbols in separating formulas has not yet been studied. We conjecture that many of the complexity results we obtained in this article still hold.

\paragraph{Separation with CQs, $\mathcal{ELI}$-Concepts, and  
$\mathcal{EL}$-Concepts} 

If one aims at finding separating expressions that
generalize from the positive examples, it is important to avoid overfitting. 
From a logical viewpoint this can be achieved by disallowing disjunction in separating formulas and admit as separating expressions only CQs (in the context of query by example) or concepts in the lightweight description logics $\mathcal{EL}$ or $\mathcal{ELI}$ (in the context of DL concept learning). Note that $\mathcal{ELI}$-concepts can be regarded as rooted tree-shaped CQs and $\mathcal{EL}$-concepts as rooted tree-shaped CQs in which children are reached by role names~\cite{DL-Textbook}.
We have seen in this article that for labeled KBs with a single positive example only, CQs often have the same separating power as UCQs. For labeled KBs with many positive example this is clearly not the case. In fact,
separability often becomes undecidable:
in~\cite{DBLP:conf/ijcai/FunkJLPW19},  it is shown by a reduction of
the undecidable CQ query inseparability problem for
$\mathcal{ALC}$-KBs~\cite{DBLP:journals/ai/BotoevaKLRWZ19} that weak
$(\mathcal{ALC},\text{CQ})$, $(\mathcal{ALC},\mathcal{ELI})$ and
$(\mathcal{ALC},\mathcal{EL})$-separability are undecidable. Note that
in this case the use of helper symbols does not make a difference.
Even if one considers ontologies given in Horn DLs, weak separability
is computationally surprisingly hard: weak $(\mathcal{EL},\mathcal{EL})$-separability is \ExpTime-complete in both combined and data complexity and weak $(\mathcal{ELI},\mathcal{ELI})$-separability is undecidable, even on labeled KBs with only two positive examples~\cite{DBLP:conf/ijcai/FunkJLPW19}. Further results for weak separability on KBs given in Horn-DLs are presented in~\cite{GuJuSa-IJCAI18,DBLP:conf/gcai/Ortiz19,aaaithis}. It is an exciting and challenging problem to find suitable subsets of the set of $\mathcal{EL}$ or $\mathcal{ELI}$ concepts for which separability becomes decidable. 
%











\bibliographystyle{elsarticle-num}

\bibliography{local}

\cleardoublepage
\appendix

\section{Proofs for Section~\ref{section:ALCI}}
\label{app:hardness}

To prove Theorem~\ref{thm:correctedresults2}, it remains to show
\NExpTime-hardness in data complexity of projective $(\ALCI,\ALCI)$-separability. This
follows from the following result.
\begin{theorem}
  \label{thm:cqhardness}
	There exists an $\ALCI$-ontology $\Omc$ such that unary rooted UCQ-evaluation on KBs with ontology $\Omc$ is \coNExpTime-hard. 
\end{theorem}
To prove this result, we adapt a \coNExpTime-hardness proof from
\cite{LutzDL07,Lutz-IJCAR08}. It works by reducing a tiling problem that asks
to tile a $2^n \times 2^n$-torus.

\medskip

A \emph{tiling system} $\Tmf$ is a triple $(T,H,V)$, where $T =
\{0,1,\dots,k-1\}$, $k \geq 0$, is a finite set of \emph{tile types}
and $H,V \subseteq T \times T$ represent the \emph{horizontal and
	vertical matching conditions}.  Let \Tmf be a tiling system and $c
= c_0\cdots c_{n-1}$ an \emph{initial condition}, i.e.\ an $n$-tuple
of tile types.  A mapping $\tau: \{0,\dots,2^{n}-1\} \times
\{0,\dots,2^{n}-1\} \to T$ is a \emph{solution} for \Tmf and $c$
if for all $x,y < 2^{n}$, the following holds where $\oplus_i$
denotes addition modulo~$i$:
\begin{enumerate}
	\item 
	if $\tau(x,y) = t$ and $\tau(x \oplus_{2^{n}} 1,y) =
	t'$, then $(t,t') \in H$;
	\item 
	if $\tau(x,y) = t$ and $\tau(x,y \oplus_{2^{n}} 1) =
	t'$, then $(t,t') \in V$;
	\item 
	$\tau(i,0) = c_i$ for $i < n$.
\end{enumerate}
It is well-known that there is a tiling system \Tmf such that it is
\NExpTime-hard to decide, given an initial condition $c$, whether
there is a solution for \Tmf and $c$. In fact, this can be easily
proved using the methods in \cite{EmdeBoas97}. For what follows, fix
such a system \Tmf. To build up intuition for the reduction, we first
describe a representation of solutions for \Tmf and $c$ in terms of
certain \emph{torus trees}, shown in Figure~\ref{fig:mod}.  Let
$\tau$ be such a solution. Then the corresponding torus tree has the following structure:
\begin{figure}[t!]
	\begin{center}
		\framebox[1\columnwidth]{\input{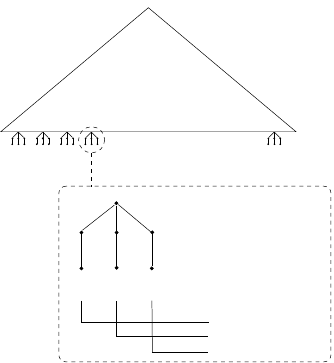_t}}
		\caption{The structure encoding the $2^{n} \times 2^{n}$-grid.}
		\label{fig:mod}
	\end{center}
	\vspace*{-5mm}
\end{figure}
\begin{itemize}

\item The main part of the tree, shown as a large triangle in
  Figure~\ref{fig:mod}, consists of a full binary tree of depth $2n$.
  Thus, there is exactly one leaf in the main part for every
  position in the $2^n \times 2^n$-torus.

\item Every leaf in the main part has an attached gadget, which
  is itself a tree of depth two, so that the overall depth of the
  torus tree is $2n+2$;

\item The root of the gadget has three successors that we
  call $H$-nodes; each $H$-node has a single successor;
  we call these successors of $H$-nodes $G$-nodes and,
  from left to right, $G_1$-node, $G_2$-node, and $G_3$-node.

\item Each gadget represents a position in the $2^n \times 2^n$-torus
  together with the neighboring positions above and to the right.
  More precisely, the $G_1$-node represents the position $(i,j)$
  focussed on by the gadget, the $G_2$-node represents the position
  $(i+1,j)$ to the right, and the $G_3$-node represents the position
  $(i,j+1)$ above (all modulo~$2^n$). We store at these $G$-nodes the
  tile types $\tau(i,j)$, $\tau(i+1,j)$, and $\tau(i,j+1)$,
  respectively (not shown in the figure).

\end{itemize}
Note that in a torus tree, every tile type assigned to a position of
the $2^n \times 2^n$-torus is stored in three different gadgets, once
as a $G_1$-node, once as a $G_2$-node, and once as a $G_3$-node. In
principle, a torus tree may store \emph{different} tile types at these
places, but is then of course not guaranteed to correspond to a
solution of $\Tmf$ and $c$.  Such \emph{copying defects} will play an
important role in the reduction below.

In what follows, our aim is to identify a knowledge base $\Kmc=(\Omc,\Dmc)$
with selected individual name $a_0$ such that given an initial condition
$c=c_0 \cdots c_{n-1}$, we can construct in polynomial time a unary rooted UCQ
$q_c$ such that $\Kmc \not\models q_c(a_0)$ iff there is a solution for \Tmf
and $c$. In fact, $\Kmc \not\models q_c(a_0)$ implies that there is a model
\Imc of \Kmc such that $\Imc \not\models q_c(a_0)$ and we will build \Kmc and
$q_c$ such that such a model \Imc contains a (representation of a) torus tree
without copying defects. In fact, \Dmc will be of the very simple form $\{ A_0(a_0) \}$
and we will constuct $q_c$ to be of the form $q^1_c \vee q^2_c$, where $q^1_c$
and $q^2_c$ are both rooted UCQs, such that $\Imc \not\models q^1_c(a_0)$
implies the presence of a torus tree in \Imc, potentially with copying defects,
while $\Imc \not\models q^2_c(a_0)$ guarantees that no copying defects are
actually present. To achieve the latter, we actually need to enforce the
presence of a second torus tree in \Imc, as explained below.

We first construct the ontology \Omc. Note that \Omc must be 
independent of the choice of $c$, and thus also independent of the 
size of the torus. This is not the case in the mentioned reduction in 
\cite{LutzDL07,Lutz-IJCAR08} where \Omc simply enforces that models of \Kmc 
contain a torus tree, possibly with copying defects. We clearly cannot 
do this here. What we do instead is building \Omc so that it generates 
an \emph{infinite} tree in models of \Kmc that should be thought of as 
providing a `template' for torus trees of any possible depths. In this 
template, the concept names $H, G, G_1, G_2, G_3$ that we use 
to identify the eponymous nodes are \emph{not} enforced to be true 
anywhere.  We will use $q^1_c$ to make them true at the right depths 
for the initial condition $c$ given, in this way imposing a torus tree of 
the desired depth onto the template. 
We next describe the representation of torus trees in models of
\Kmc in some more detail:
\begin{itemize}

\item We use a role name $R_0$ such that each successor in the torus
  tree, both in the main part of the tree and in the gadgets, is
  reached via a role path of the form $R_0;R_0^-;R_0;R_0^-;R_0;R_0^-$;
  no branching occurs for nodes on this path.

\item We use concept names $B_1,B_2,B_3$ and make sure that the
  nodes in the torus tree are labeled with $B_1$ while  on
  each `successor path' $R_0;R_0^-;R_0;R_0^-;R_0;R_0^-$, the
  node reached via the $R_0;R_0^-$ prefix satisfies $B_2$ and the
  node reached via the $R_0;R_0^-;R_0;R_0^-$ prefix satisfies $B_3$.

\item What this achieves is that successors can be reached both by the
  role composition $R_0;R_0^-;R_0;R_0^-;R_0;R_0^-$, which should be
  thought of as a reflexive and symmetric role, and by $B_1?;R_0;R_0^-;B_2?;R_0;R_0^-;B_3?R_0;R_0^-;B_1?$,
  which is directed rather than symmetric and where each $B_i?$
  denotes a test that the node reached at this point satisfies concept
  name $B_i$.

\item We also use a role name $R_1$ to represent the torus positions
  represented by $G$-nodes. This representation is in binary, that is,
  we encode the numbers $0,\dots,2^{2n}-1$ in binary, assuming that
  the first $n$ bits describe the horizontal position and the second
  $n$
  bits the vertical position. Bit
  $i$ has value one at a domain element $d$ if
  $d \in (\exists R_1^{i+1} . T)^\Imc$ and zero if
  $d \in (\exists R_1^{i+1} . F)^\Imc$ where bit 0 is the least
  significant bit and where $\exists R_1^{i+1}$ denotes the $i+1$-fold
  nesting $\exists R_1 \cdots \exists R_1$.
  
\end{itemize}
Let us now formally define the ontology \Omc. To represent tiles, we
introduce a concept name $D_i$ for each $i \in T$. We write
$\exists R . C$ as shorthand for
$$
B_1 \sqcap \exists R_0 . \exists R_0^- . (B_2 \sqcap
\exists R_0 . \exists R_0^- . (B_3 \sqcap \exists R_0. \exists R_0^- . (B_1 \sqcap C))).
$$
Now \Omc contains the following:
$$
\begin{array}{rcl}
  A_0 &\sqsubseteq& A \\[1mm]
  A &\sqsubseteq& \exists R . (A \sqcap T) \sqcap \exists R . (A
\sqcap F) \\[1mm]
A &\sqsubseteq& \displaystyle\bigsqcap_{1 \leq i \leq 3} \exists R . (H' \sqcap
                M_1 \sqcap \exists R .(G' \sqcap G'_i \sqcap M_1))
\\[5mm]
\neg X_H \sqcap H' &\sqsubseteq& H\\[1mm]
\neg X_G \sqcap G' &\sqsubseteq& G\\[1mm]
\neg X_G \sqcap G'_i &\sqsubseteq& G_i \text{ for } 1 \leq i \leq
3\\[1mm]
G & \sqsubseteq& \displaystyle \bigsqcup_{i \in T} (D_i \sqcap 
\bigsqcap_{j \in T \setminus \{ i \}} \neg D_j) \\[1mm]
H & \sqsubseteq& \displaystyle \bigsqcup_{i \in T} (\neg D_i \sqcap 
\bigsqcap_{j \in T \setminus \{ i \}} D_j) \\[1mm]
T & \equiv& \neg F \\[1mm]
\top &\sqsubseteq&\exists R_1 . \top
\end{array}
$$
Intuitively, the first CI generates the template for the main part of the torus
tree and the second CI generates the template for the gadgets. Note that we
attach a gadget template to \emph{every} node of the main tree, independently
of its depth. As explained before, though, these gadgets are not activated:
they only satisfy the concept names $H', G', G'_1, G'_2, G'_3$ while what we
are really interested in are $H, G, G_1, G_2, G_3$. The next three CIs provide
a way to activate these concept names by making the concept names $X_H$ and
$X_G$ false. The next two lines select tile types whenever $G$ and $H$ have
been activated. In the case of $G$, exactly one tile type is made true while in
the case of $H$, exactly one tile type is made false. This is explained later
on, we speak of $G$-nodes and $H$-nodes being labeled complementarily regarding
tile types. \Omc also makes sure that the concept names $T$ and $F$, which
distinguish left successors and right successors in the template for the main
part of the torus tree, are complements of each other. Moreover, we generate
infinite $R_1$-paths to provide templates for representing torus
positions. Again, these are not yet `activated' as the concept names $T$ and
$F$ are not set anywhere on these paths. Note that all $H$-nodes and $G$-nodes
are labeled with $M_1$.

We have already mentioned above that avoiding copying defects requires us to
use a second torus tree.\footnote{The second tree is actually not used in the
  reduction in~\cite{LutzDL07,Lutz-IJCAR08}. That reduction contains a
  (somewhat minor) glitch that we fix by using the secomd tree.}  This tree is
attached via an extra $R$-edge from the root of the first tree. In contrast to
the first tree, there is no branching in the gadgets, that is, every leaf of
the torus tree proper has only a single $H$-node successor, which has a single
$G$-node successor, and the concept names $G_1,G_2,G_3$ are not used.  All $H$-
and $G$-nodes in the second tree are labeled with $M_2$. We include the
following in \Omc:
$$
\begin{array}{rcl}
A_0 &\sqsubseteq& \exists R . A' \\[1mm]
  A' &\sqsubseteq& \exists R . (A' \sqcap T) \sqcap \exists R . (A'
\sqcap F) \\[1mm]
A' &\sqsubseteq& \displaystyle \exists R . (H' \sqcap M_2 \sqcap 
                \exists R .(G' \sqcap M_2))
\end{array}
$$

Let $c=c_0 \cdots c_{n-1}$ be a given initial condition for \Tmf. We construct
the rooted UCQs $q^1_c$ and $q^2_c$, starting with $q^1_c$.  Recall that we
want to achieve that for models \Imc of \Kmc, $\Imc \not\models q^1_c(a_0)$
implies that \Imc contains (a homomorphic image of) a representation of a torus
tree, possibly with copying defects, as well as a second torus tree as
described above, and that \Omc provides an infinite template for both torus
trees in \Imc. The aim is thus to construct $q^1_c$ so that the non-existence
of a homomorphism from $q^1_c$ to \Imc `activates' in the template the two
torus trees of appropriate depth. We write $R(x,y)$ as shorthand for
$$
   B_1(x) , R_0(x,y_1) , R_0(z_1,y_1) , B_2(z_1) 
, R_0(z_1,y_2) , R_0(z_2,y_2) , B_3(z_2) 
, R_0(z_2,y_3) , R_0(y,y_3) , B_1(y),
$$
with $y_1,y_2,y_3,z_1,z_2$ fresh variables, expressing that $y$ is a
successor of $x$ in the torus tree.
%
We further use $R^i(x,y)$, for $i \geq 1$,
as a shorthand for $R(z_1,z_2),\dots,R(z_{i-1},z_i)$ where $z_1=x$,
$z_i=y$, and $z_2,\dots,z_{i-1}$ are fresh variables.  
In what follows, $x_0$ will be the (only) answer variable in each of
the constructed CQs.

Our first aim is to activate $H$-nodes on level $2n+1$ of the template as well
as $G$-nodes on level $2n+2$, both in the first ree, using the two CQs
$$
R^{2n+1}(x_0,x_1),M_1(x_1),X_H(x_1)  \quad \text{ and } \quad
R^{2n+2}(x_0,x_1),M_1(x_1),X_G(x_1).  
$$
We can do the same for the second tree using 
$$
R^{2n+2}(x_0,x_1),M_2(x_1),X_H(x_1)  \quad \text{ and } \quad
R^{2n+3}(x_0,x_1),M_2(x_1),X_G(x_1).  
$$

We next achieve the correct representation of torus positions at $G_1$-nodes in
the first tree, and at $G$-nodes in the second tree. We only treat the first
tree explicitly and leave to the reader the easy task to adapt the given CQs to
the second tree. By construction of the template, this position is already
represented by the $T$ and $F$ concept names used on the path of the tree that
leads to the $G_1$-node in question. We still need to `push this representation
down' to $R_1$-paths to achieve the representation described above.  For
$1 \leq i,j \leq 2n$ with $i+j=2n$, include the CQs
$$
R^i(x_0,x_1),T(x_1),R^{j+2}(x_1,x_2),G_1(x_2),M_1(x_2)R_1^{i}(x_2,x_3),F(x_3) 
$$
and
$$
R^i(x_0,x_1),F(x_1),R^{j+2}(x_1,x_2),G_1(x_2),M_1(x_2),R_1^{i}(x_2,x_3),T(x_3) .
$$
Note that this relies on the concept names $T$ and $F$ to be
complements of each other. For example, take a homomorphism $h$
from the upper CQ without the last atom $F(x_3)$ into a model \Imc of
\Kmc with $\Imc \not \models q^1_c$. Then $h(x_3)$ cannot make $F$
true. But since $T$ and $F$ are complementary, it must make $T$
true. 

To make the counter value unique, we should also ensure that
$\exists r^i.T$ and $\exists r^i.F$ are not both true, for any $i$
with $1 \leq i \leq 2n$. We need this for all $G$-nodes, not
 only for $G_1$-nodes.
From now on, we use $(x \text{ bit } i = V)$,
$0 \leq i < 2n$ and $V \in \{T,F\}$, to abbreviate
$R_1^{i+1}(x,y), V(y)$ for a fresh variable $y$.  For $0\leq i < 2n$,
add the CQ
%
%
$$
R^{2n+2}(x_0,x_1), G(x_1), M_1(x_1), (x_1 \text{ bit } i = T), (x_1 \text{
	bit } i = F).
$$
We further want (only for the main torus tree) that, relative to its
$G_1$-sibling, each $G_2$-node represents the horizontal neighbor position in
the grid and each $G_3$-node represents the vertical neighbor position. This
can be achieved by a couple of additional CQs that are slightly tedious. We
only give CQs which express that if a $G_1$-node represents $(i,j)$, then its
$G_2$-sibling represents $(i \oplus_{2^n} 1,j')$ for some
$j'$. 
 For $0 \leq i < n$, add the CQs
$$
R^{2n}(x_0,x_1),R^2(x_1,y_1), G_1(y_1), \!\! \bigwedge_{0 \leq j \leq i} \!\!
(y_1 \text{ bit } j = T),
R^2(x_1,y_2), G_2(y_2),  (y_2 \text{ bit } i = F) 
$$
and
$$
R^{2n}(x_0,x_1),R^2(x_1,y_1), G_1(y_1), \!\! \bigwedge_{0 \leq j < i} \!\!
(y_1 \text{ bit } j = T),
(y_1 \text{ bit } i = F),
R^2(x_1,y_2), G_2(y_2),  (y_2 \text{ bit } i = T) 
$$
and for $0 \leq j < i < n$, add
$$
R^{2n}(x_0,x_1),R^2(x_1,y_1), G_1(y_1), 
(y_1 \text{ bit } j = F), (y_1 \text{ bit } i = F), 
R^2(x_1,y_2), G_2(y_2),  (y_2 \text{ bit } i = T) 
$$
and
$$
R^{2n}(x_0,x_1),R^2(x_1,y_1), G_1(y_1), 
(y_1 \text{ bit } j = F), (y_1 \text{ bit } i = T),
R^2(x_1,y_2), G_2(y_2),  (y_2 \text{ bit } i = F).
$$
It is not difficult to construct similar CQs which express that if a
$G_1$-node represents $(i,j)$, then its $G_2$-sibling represents
$(i',j)$ for some $i'$ and if a $G_1$-node represents $(i,j)$, then
its $G_3$-sibling represents $(i,j \oplus_{2^n} 1)$.

Due to Line~6 of \Omc, every $G$-node is labeled with $D_i$ for
a unique tile type $i \in \Tmf$. 
The initial condition is now easily guaranteed. For $0 \leq i < n$,
and each $j \in \Tmf \setminus \{ c_i\}$, add the CQ
$$
R^{2n+2}(x_0,x_1),G(x_1),M_1(x_1),D_j(x_1),
(y_1 \text{ bit } 0 = V_0), \dots,   (y_1 \text{ bit } n-1 = V_{n-1})
$$
where $V_i$ is $T$ if the $i$-th bit in the binary representation of $i$ is 1
and $F$ otherwise.  We next enforce (in the first torus tree) that the
horizontal and vertical matching conditions are satisfied locally in each
gadget. For all $(i,j) \notin H$, put
%
$$
R^{2n}(x_0,x_1),R^2(x_1,y_1),G_1(y_1),D_i(y_1), 
R^2(x_1,y_2),G_2(y_2),D_j(y_2)
$$
and likewise for all $(i,j)\notin V$ and $G_3$ in place of $G_2$. To prepare
for the construction of the UCQ $q^2_c$, we also want to achieve (in both torus
trees) that $G$-nodes and their $H$-node predecessors are complementary
regarding the bits of the counter: if $d$ is an $H$-node and $e$ its $G$-node
successor, then $d$ satisfies $\exists s^i.T$ iff $e$ satisfies
$\exists s^i . F$, for $1 \leq i \leq 2n$.  We can enforce this using for
$0 \leq i < 2n$ the CQs
$$
R^{m+1}(x_0,x_1),H(x_1), (x_1 \text{ bit } i = T), 
R(x_1,x_2),G(x_2), M_1(x_2), (x_2 \text{ bit } i = T)
$$
and
$$
R^{m+1}(x_0,x_1),H(x_1), M_1(x_1), (x_1 \text{ bit } i = F), 
R(x_1,x_2),G(x_2), M_1(x_2) (x_2 \text{ bit } i = F).
$$
%
To ensure the uniqueness of bit values also at $H$-nodes, we further
add, for $0\leq i < 2n$, the CQ
$$
R^{2n+1}(x_0,x_1), H(x_1), (x_1 \text{ bit } i = T), (x_1 \text{
	bit } i = F).
$$
This finishes the construction of our first UCQ $q^1_c$.

As explained above, the purpose of the second UCQ $q^2_c$ is to make sure that
the (first and main) torus tree in models \Imc of \Kmc with
$\Imc \not \models q_c$ has no copying defects. To achieve this, it suffices
to guarantee the following:
\begin{enumerate}
	
\item[($*$)] if a $G$-node in the first tree represents the same grid position
  as a $G$-node in the second tree, then their tile types coincide.
	
\end{enumerate}
In ($*$), a $G$-node in the first tree can be any of a $G_1$-, $G_2$-, or
$G_3$-node.  The UCQ $q^2_c$ is less straightforward to construct than those in
$q^1_c$ and the ideas that we rely on were first used in
\cite{LutzDL07,Lutz-IJCAR08}. Actually, using the constructions from
\cite{LutzDL07,Lutz-IJCAR08}, we could achieve that $q^2_c$ is a CQ rather than
a UCQs. But since this makes the constructions more complex and the overall
query $q_c$ will be a UCQ anyway, we confine ourselves to $q^2_c$ being a
UCQ. Enforcing ($*$) via $q^2_c$ relies on the second torus tree and on the
complementary labeling of $H$- and $G$-nodes.

We want to construct $q^2_c$ so that it has a homomorphism into the torus trees
iff ($*$) is violated, that is, iff there is a $G$-node in the first torus tree
and one in the second tree that agree on the grid position but are labeled with
different tile types. The UCQ $q^2_c$ contains one CQ $q_{j,\ell}$ for each
choice of distinct tile types $j,\ell \in T$.  We construct $q_{j,\ell}$ from
component queries $q_0,\dots,q_{2n-1}$, which all take the form of the query
displayed on the left-hand side of Figure~\ref{fig:q1}. All edges shown there
represent subqueries of the form $S(x,y)$, which is an abbreviation for
$$
  R_0(x,y_1) , R_0(z_1,y_1) , 
 R_0(z_1,y_2) , R_0(z_2,y_2) , 
 R_0(z_2,y_3) , R_0(y,y_3)
$$
with $y_1,y_2,y_3,z_1,z_2$ fresh variables. Note that this differs
from the edges $R(x,y)$ used in $q^1_c$ in that the
concept names $B_1,B_2,B_3$ are not present, that is, here we use the
edges in the torus tree as \emph{symmetric} edges. Moreover, by
construction every node in the torus tree has a reflexive $S$-loop,
with $S$ the above role sequence. With $q_i^V(y)$, $V \in \{T,F\}$, we
abbreviate the CQ $(y \text{ bit } i = V)$ whose edges are not shown.
The only difference between the component queries $q_0,\dots,q_{2n-1}$
is that in query $q_i$, we use subqueries $q_i^T$ and $q_i^F$.
\begin{figure}[t!]
	\begin{center}

		\tikzset{every picture/.style={line width=0.75pt}} 
		
		\begin{tikzpicture}[x=0.75pt,y=0.75pt,yscale=-0.75,xscale=1]
			\draw   (90,5.6) -- (710.33,5.6) -- (710.33,524.27) -- (90,524.27) -- cycle ;
			\draw    (213,74) -- (250,98.25) ;
			\draw    (201.35,74.22) -- (165.25,98) ;
			\draw    (160,115) -- (160.16,155.14) ;
			\draw    (260,115) -- (260.16,155.14) ;
			\draw    (160,219) -- (160.16,259.14) ;
			\draw    (260,219) -- (260.16,259.14) ;
			\draw    (160.08,279) -- (160.24,319.14) ;
			\draw    (260,279) -- (260.16,319.14) ;
			\draw    (160,381) -- (160.16,421.14) ;
			\draw    (260,381) -- (260.16,421.14) ;
			\draw    (183,267.74) -- (197.16,267.83) ;
			\draw    (223.41,267.7) -- (237.57,267.79) ;
			\draw    (184,429.96) -- (239.94,430.23) ;
			\draw    (199.35,462) -- (163.25,438.22) ;
			\draw    (221.25,462.33) -- (257.35,438.56) ;
			\draw    (399.33,115.33) -- (399.49,155.47) ;
			\draw    (399.33,219.33) -- (399.49,259.47) ;
			\draw    (399.41,279.33) -- (399.57,319.47) ;
			\draw    (399.33,381.33) -- (399.49,421.47) ;
			\draw    (399.33,438.53) -- (399.44,454.2) ;
			\draw    (620.2,81.45) -- (620.33,93.47) ;
			\draw    (620,279.34) -- (620.16,319.48) ;
			\draw    (620.33,382.33) -- (620.49,422.47) ;
			\draw    (619.33,218.33) -- (619.49,258.47) ;
			\draw    (620.33,114.33) -- (620.49,154.47) ;
			
			\draw (171.4,28.2) node [anchor=north west][inner sep=0.75pt]   [align=left] {$\displaystyle G,\ M_{1} ,\ T_{j}$};
			\draw (363.67,58) node [anchor=north west][inner sep=0.75pt]   [align=left] {$\displaystyle G,\ M_{1} ,\ T_{j}$};
			\draw (587.2,20.27) node [anchor=north west][inner sep=0.75pt]   [align=left] {$\displaystyle G,\ M_{1} ,\ T_{j}$};
			\draw (202,60) node [anchor=north west][inner sep=0.75pt]   [align=left] {$\displaystyle x$};
			\draw (152.5,98) node [anchor=north west][inner sep=0.75pt]   [align=left] {$\displaystyle y_{0}$};
			\draw (252,98) node [anchor=north west][inner sep=0.75pt]   [align=left] {$\displaystyle z_{0}$};
			\draw (152,158) node [anchor=north west][inner sep=0.75pt]   [align=left] {$\displaystyle y_{1}$};
			\draw (252,158) node [anchor=north west][inner sep=0.75pt]   [align=left] {$\displaystyle z_{1}$};
			\draw (141,200) node [anchor=north west][inner sep=0.75pt]   [align=left] {$\displaystyle y_{2n+1}$};
			\draw (242,200) node [anchor=north west][inner sep=0.75pt]   [align=left] {$\displaystyle z_{2n+1}$};
			\draw (141,260) node [anchor=north west][inner sep=0.75pt]   [align=left] {$\displaystyle y_{2n+2}$};
			\draw (242,260) node [anchor=north west][inner sep=0.75pt]   [align=left] {$\displaystyle z_{2n+2}$};
			\draw (202,260) node [anchor=north west][inner sep=0.75pt]   [align=left] {$\displaystyle x_{0}$};
			\draw (141,320) node [anchor=north west][inner sep=0.75pt]   [align=left] {$\displaystyle y_{2n+3}$};
			\draw (242,320) node [anchor=north west][inner sep=0.75pt]   [align=left] {$\displaystyle z_{2n+3}$};
			\draw (141,362) node [anchor=north west][inner sep=0.75pt]   [align=left] {$\displaystyle y_{4n+3}$};
			\draw (242,362) node [anchor=north west][inner sep=0.75pt]   [align=left] {$\displaystyle z_{4n+3}$};
			\draw (141,422) node [anchor=north west][inner sep=0.75pt]   [align=left] {$\displaystyle y_{4n+4}$};
			\draw (242,422) node [anchor=north west][inner sep=0.75pt]   [align=left] {$\displaystyle z_{4n+4}$};
			\draw (205,458) node [anchor=north west][inner sep=0.75pt]   [align=left] {$\displaystyle z$};
			\draw (123.5,399) node [anchor=north west][inner sep=0.75pt]   [align=left] {$\displaystyle q_{i}^{F}$};
			\draw (279.5,399) node [anchor=north west][inner sep=0.75pt]   [align=left] {$\displaystyle q_{i}^{T}$};
			\draw (175.12,485.73) node [anchor=north west][inner sep=0.75pt]   [align=left] {$\displaystyle G,\ M_{2} ,\ T_{\ell }$};
			\draw (162,176) node [anchor=north west][inner sep=0.75pt]  [rotate=-90] [align=left] {$\displaystyle \dotsc $};
			\draw (261.5,176) node [anchor=north west][inner sep=0.75pt]  [rotate=-90] [align=left] {$\displaystyle \dotsc $};
			\draw (162,337) node [anchor=north west][inner sep=0.75pt]  [rotate=-90] [align=left] {$\displaystyle \dotsc $};
			\draw (263,337) node [anchor=north west][inner sep=0.75pt]  [rotate=-90] [align=left] {$\displaystyle \dotsc $};
			\draw (391.83,98) node [anchor=north west][inner sep=0.75pt]   [align=left] {$\displaystyle y_{0} =x$};
			\draw (391.33,158) node [anchor=north west][inner sep=0.75pt]   [align=left] {$\displaystyle y_{1} =z_{0}$};
			\draw (380.33,200) node [anchor=north west][inner sep=0.75pt]   [align=left] {$\displaystyle y_{2n+1} =z_{2n}$};
			\draw (380.33,260) node [anchor=north west][inner sep=0.75pt]   [align=left] {$\displaystyle x_{0} \ =y_{2n+2} =\ z_{2n+1}$};
			\draw (380.33,320) node [anchor=north west][inner sep=0.75pt]   [align=left] {$\displaystyle y_{2n+3} =z_{2n+2}$};
			\draw (380.33,362) node [anchor=north west][inner sep=0.75pt]   [align=left] {$\displaystyle y_{4n+3} =\ z_{4n+2}$};
			\draw (380.33,422) node [anchor=north west][inner sep=0.75pt]   [align=left] {$\displaystyle y_{4n+4} =z_{4n+3}$};
			\draw (362.83,399) node [anchor=north west][inner sep=0.75pt]   [align=left] {$\displaystyle q_{i}^{F}$};
			\draw (401.83,176.33) node [anchor=north west][inner sep=0.75pt]  [rotate=-90] [align=left] {$\displaystyle \dotsc $};
			\draw (401.83,337.33) node [anchor=north west][inner sep=0.75pt]  [rotate=-90] [align=left] {$\displaystyle \dotsc $};
			\draw (131.5,82.8) node [anchor=north west][inner sep=0.75pt]   [align=left] {$\displaystyle q_{i}^{T}$};
			\draw (272.5,82.8) node [anchor=north west][inner sep=0.75pt]   [align=left] {$\displaystyle q_{i}^{F}$};
			\draw (394,460) node [anchor=north west][inner sep=0.75pt]   [align=left] {$\displaystyle z\ =\ z_{4n+4}$};
			\draw (364.52,487.73) node [anchor=north west][inner sep=0.75pt]   [align=left] {$\displaystyle G,\ M_{2} ,\ T_{\ell }$};
			\draw (612.67,64) node [anchor=north west][inner sep=0.75pt]   [align=left] {$\displaystyle x=z_{0}$};
			\draw (612.17,98) node [anchor=north west][inner sep=0.75pt]   [align=left] {$\displaystyle y_{0} =z_{1}$};
			\draw (601,260) node [anchor=north west][inner sep=0.75pt]   [align=left] {$\displaystyle x_{0} =y_{2n+1} =z_{2n+2}$};
			\draw (601.77,322) node [anchor=north west][inner sep=0.75pt]   [align=left] {$\displaystyle y_{2n+2} =z_{2n+3}$};
			\draw (601.4,362) node [anchor=north west][inner sep=0.75pt]   [align=left] {$\displaystyle y_{4n+3} =z_{4n+4}$};
			\draw (622.9,342) node [anchor=north west][inner sep=0.75pt]  [rotate=-90] [align=left] {$\displaystyle \dotsc $};
			\draw (601.33,422) node [anchor=north west][inner sep=0.75pt]   [align=left] {$\displaystyle y_{4n+4} =z$};
			\draw (587.5,79.33) node [anchor=north west][inner sep=0.75pt]   [align=left] {$\displaystyle q_{i}^{T}$};
			\draw (588.83,48.67) node [anchor=north west][inner sep=0.75pt]   [align=left] {$\displaystyle q_{i}^{F}$};
			\draw (575.5,345.67) node [anchor=north west][inner sep=0.75pt]   [align=left] {$\displaystyle q_{i}^{T}$};
			\draw (574.17,408.33) node [anchor=north west][inner sep=0.75pt]   [align=left] {$\displaystyle q_{i}^{F}$};
			\draw (588.85,463.67) node [anchor=north west][inner sep=0.75pt]   [align=left] {$\displaystyle G,\ M_{2} ,\ T_{\ell }$};
			\draw (600.33,200) node [anchor=north west][inner sep=0.75pt]   [align=left] {$\displaystyle y_{2n} =z_{2n+1}$};
			\draw (621.83,176) node [anchor=north west][inner sep=0.75pt]  [rotate=-90] [align=left] {$\displaystyle \dotsc $};
			\draw (363.5,144.33) node [anchor=north west][inner sep=0.75pt]   [align=left] {$\displaystyle q_{i}^{F}$};
			\draw (362.17,91) node [anchor=north west][inner sep=0.75pt]   [align=left] {$\displaystyle q_{i}^{T}$};
			\draw (360.5,439) node [anchor=north west][inner sep=0.75pt]   [align=left] {$\displaystyle q_{i}^{T}$};
			\draw (611.33,155) node [anchor=north west][inner sep=0.75pt]   [align=left] {$\displaystyle y_{1} =z_{2}$};

		\end{tikzpicture}
		
		\caption{The query $q_i$ (left) and two 
			collapsings.}
		\label{fig:q1}
	\end{center}
\end{figure}

We assemble $q_0,\dots,q_{2n-1}$ into the desired CQ $q_{j,\ell}$ by taking
variable disjoint copies of $q_0,\dots,q_{2n-1}$ and then identifying (i)~the
variable $y$ of all components, (ii)~the variable $z$ of all components, and
(iii)~the variable $x_0$ of all components, which is the answer variable.  To
see why $q^2_c$ achieves~($*$), first note that any homomorphism from a CQ
$q_{j,\ell}$ in $q^2_c$ into the torus trees must map the variable $x$ to a
leaf of the first tree and $z$ to a leaf of the second tree because of their
$M_1$- and $M_2$-label. Call these leaves $a$ and~$a'$, respectively.  Since
$y_0$ and $z_0$ are connected to $x$ in the query, both must then be mapped
either to $a$ or to its predecessor $H$-node; likewise, $y_{4n+4}$ and
$z_{4n+4}$ must be mapped either to $a'$ or to its predecessor.  Since
$G$-nodes and $H$-nodes are complementary regarding the bits of the counter, we
are actually even more constrained: exactly one of $y_0$ and $z_0$ must be
mapped to $a$, and exactly one of $y_{4n+4}$ and $z_{4n+4}$ to~$a'$.  Moreover,
if $y_0$ is mapped to $a$, then $z_{4n+4}$ must be mapped to $a'$ and if $z_0$
is mapped to $a$, then $y_{4n+4}$ must be mapped to $a'$.  This is because $q$
contains a path of length $4n+4$ between $y_0$ and $y_{4n+4}$, as well as
between $z_0$ and $z_{4n+4}$, while the shortest path between $a$ and $a'$ has
$2n+5$ edges. This shows that any homomorphism $h$ from $q_{j,\ell}$ into the
torus trees gives rise to one of the two variable identifications in each query
$q_i$ shown in Figure~\ref{fig:q1}. Note that the first case implies that
$h(x)$ and $h(z)$ are both labeled with $q_i^T$ while they are both labeled
with $q_i^F$ in the second case. In summary, $h(x)$ and $h(z)$ must thus agree
on all bit values of the counter and, moreover, $h(x)$ is labeled with $T_j$
while $h(z)$ is labeld with $T_\ell$. As intended, $h$ thus identifies a
$G$-node $h(x)$ in the first torus tree and a $G$-node $h(z)$ in the second
tree that agree on the grid position but are labeled with different tile types.

Using the arguments provided above, one can now prove the following,
which finishes the proof of Theorem~\ref{thm:cqhardness}.
\begin{lemma}
  $\Kmc \not\models q_c(a_0)$ iff there is a solution for $\Tmf$ and
  $c$.
\end{lemma}

We now discuss a bit further our conjecture that projective
separability in $(\text{GF},\text{UCQ})$ and (projective or
non-projective) separability in $(\text{GNFO},\text{UCQ})$ are
\TwoExpTime-hard in data complexity, see the remarks after
Theorem~\ref{thm:gnfo} and~\ref{thm:gfo}.  For this, it suffices to
show that there is a GF ontology $\Omc$ such that unary rooted
UCQ-evaluation on KBs with ontology $\Omc$ is \TwoExpTime-hard.
Recall that we have proved Theorem~\ref{thm:correctedresults2} by
adapting a reduction from \cite{LutzDL07,Lutz-IJCAR08} that was used
there to show that evaluating unary rooted CQs on \ALCI-KBs is
\coNExpTime-hard. The main challenge of the adaptation was to deal
with the fact that the ontology \Omc has to be fixed in our case while
it may depend on the initial condition $c$ for the tiling problem in
\cite{LutzDL07,Lutz-IJCAR08}, which also required us to replace the CQ
from the original reduction with a UCQ. In~\cite{LutzDL07,Lutz-IJCAR08}, it is also
shown that evaluating Boolean CQs on \ALCI-KBs is
\TwoExpTime-hard. The reduction is from the word problem of a fixed
exponentially space bounded alternating Turing machine (ATM) rather
than from a tiling problem.  In the same way in which we have adapted
the \coNExpTime-hard proof to use a fixed ontology, it seems very well
possible to also adapt the \TwoExpTime-hardness proof from
\cite{LutzDL07,Lutz-IJCAR08} in the same way, again at the expense of
replacing the CQ used in the original reduction with a UCQ. We only
refrain from doing so because the \TwoExpTime-hardness proof in
\cite{LutzDL07,Lutz-IJCAR08} is considerably more technical than the
\coNExpTime-hardness proof from that paper, and as we have seen above
even adapting the latter to the case of a fixed ontology has
introduced rather significant additional technicalities.

It then remains to lift this \TwoExpTime-hardness to GF. In fact,
it seems straightforward to reduce the evaluation of Boolean UCQs on
\ALCI-KBs to rooted UCQ-evaluation on GF-KBs in polynomial time, on
databases that use only a single individual $a_0$. This is the case in
all the mentioned reductions and would thus clearly yield the
conjectured result. The idea of such a reduction would be to increase
the arity of all involved relation symbols by one, introducing one
extra position in each symbol. One would then rewrite the
\ALCI-ontology \Omc into a GF-ontology $\Omc'$ so that when the
ontology generates new facts, the single individual $a_0$ used in an
input database \Dmc is `passed on' to all these facts in the extra
position. The original Boolean UCQ $q$ would be modified by
introducing a fresh answer variable $x_0$ and using it in the extra
position of each atom. Clearly, the resulting UCQ $q'$ is unary and rooted.
Moreover, $(\Omc,\Dmc) \models q$ iff $(\Omc',\Dmc) \models q'(a_0)$.

We also briefly comment on obtaining lower bounds for the data
complexity of RE-existence and entity distinguishability. Here, we
would have to improve the above proof of
Theorem~\ref{thm:correctedresults2} by replacing the UCQ used there
with a CQ. However, when transitioning from the constructions in
\cite{LutzDL07,Lutz-IJCAR08} to the constructions used here (that is,
from a non-fixed ontology to a fixed ontology), many
`responsibilities' formerly taken by the ontology had to be shifted
into the query. In fact, it is this shifting that forced us to replace
the CQ from the original reduction by a UCQ. It is rather unclear
to us how and whether all the `responsibilities' taken by the query in the
reduction presented above can be encoded into a CQ in place of a UCQ.

\bigskip

We next prove Claim~2 from the proof of Theorem~\ref{thm:char12}. Let $\Kmc=(\Omc,\Dmc)$ be an $\ALCI$-KB. Recall that a sequence $\sigma=t_{0}R_{0}\ldots R_{n}t_{n+1}$
of $\Kmc$-types $t_{0},\ldots,t_{n+1}$ and $\Sigma$-roles 
$R_{0},\ldots, R_{n}$ \emph{witnesses $\ALCI$-incompleteness of a $\Kmc$-type $t$ for $\Kmc$} 
if $t=t_{0}$, $n\geq 1$, and 
\begin{itemize}
	\item $t_i\rightsquigarrow_{R_{i}} t_{i+1}$ for $i\leq n$;
	\item there exists a model $\Amf$ of $\Kmc$ and nodes 
	$d_{n-1},d_{n}\in \text{dom}(\Amf)$ with $(d_{n-1},d_{n})\in R_{n-1}^{\Amf}$ such that $d_{n-1}$ and $d_{n}$ realize $t_{n-1}$ and $t_{n}$ in $\Amf$,
	respectively, and there does not exist $d_{n+1}$ in $\Amf$ realizing $t_{n+1}$ with $(d_{n},d_{n+1})\in R_{n}^{\Amf}$.
\end{itemize}
\begin{lemma}\label{lem:critalci}
	The following conditions are equivalent, for any $\Kmc$-type $t$:
	\begin{enumerate}
		\item $t$ is not $\ALCI$-complete for $\Kmc$;
		\item there is a sequence witnessing $\ALCI$-incompleteness of $t$ for $\Kmc$;
		\item there is a sequence of length not exceeding $2^{||\Omc||}+2$ witnessing 
		$\ALCI$-incompleteness of $t$ for $\Kmc$.
	\end{enumerate}
	It is decidable in \ExpTime whether a $\Kmc$-type $t$ is $\ALCI$-complete for $\Kmc$.
\end{lemma}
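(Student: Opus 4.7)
The plan is to prove (1) $\Leftrightarrow$ (2), then (2) $\Leftrightarrow$ (3), and finally derive the \ExpTime decidability by a search over the type graph. The key intuition is that a witness sequence $\sigma$ records a spoiler strategy in the bisimulation game that splits at the last step: the sequence can be fully realised in some model of $\Kmc$ (each coherence $t_i \rightsquigarrow_{R_i} t_{i+1}$ supplies a one-step witness, and standard $\ALCI$-style grafting of tree models at points of equal $\Kmc$-type lets us stitch the witnesses into a single tree model with root of type $t$ containing an $R_0\cdots R_n$-path whose vertices realise $t_1,\ldots,t_{n+1}$), while the model in the last bullet of the definition provides a second model of $\Kmc$ where the same root type $t$ can be realised, the path can be followed through $t_{n-1}$ and $t_n$, but the final $R_n$-step to $t_{n+1}$ is blocked. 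Since bisimilar points satisfy the same $\cl(\Kmc)$-concepts and hence have the same $\Kmc$-type, the spoiler's strategy ``play along the chain, role after role, type after type'' is unmatchable at the last round, giving non-bisimilarity of the two models at roots of type $t$; this is (2) $\Rightarrow$ (1).

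For (1) $\Rightarrow$ (2), I would start with two pointed models $\Amf_1,b_1$ and $\Amf_2,b_2$ of $\Kmc$ both realising $t$ with $\Amf_1,b_1 \not\sim_{\ALCI,\mn{sig}(\Kmc)} \Amf_2,b_2$. Passing to tree unravellings preserves both the $\Kmc$-type at the roots and non-bisimilarity, and by standard model-theoretic arguments for $\ALCI$ I may further reduce to image-finite models (e.g.\ by taking finitely branching tree models obtained from a tableau/mosaic construction). Hennessy-Milner then yields a least $n$ such that the two pointed models are not $n$-bisimilar. A winning spoiler strategy of length $n+1$ in the bounded game, played WLOG always in $\Amf_1$, traces a path $b_1 = d_0 R_0 d_1 R_1 \cdots R_n d_{n+1}$ whose vertices' $\Kmc$-types $t_0,\ldots,t_{n+1}$ give the desired sequence. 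The coherences follow from the existence of the path in $\Amf_1$, and the last bullet is witnessed by $\Amf_2$ together with the vertex $d_n'$ that the duplicator is forced to pick at round $n$: by minimality of $n$, no $R_n$-successor of $d_n'$ in $\Amf_2$ realises $t_{n+1}$, since any such successor would be forced to be bisimilar to $d_{n+1}$ (contradicting the game-loss at round $n+1$) and hence to have the same $\Kmc$-type.

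The equivalence (2) $\Leftrightarrow$ (3) is a pumping argument. Since $|\cl(\Kmc)|$ is linear in $||\Omc||$, the number of $\Kmc$-types is at most $2^{||\Omc||}$. In any witness sequence of length exceeding $2^{||\Omc||}+2$, two types $t_i = t_j$ with $i < j \leq n-1$ must coincide, and the sub-sequence $t_0 R_0 \cdots t_i R_j t_{j+1} \cdots t_{n+1}$ is again a witness: coherence $t_i \rightsquigarrow_{R_j} t_{j+1}$ inherits from $t_j \rightsquigarrow_{R_j} t_{j+1}$ because $t_i = t_j$, and the last bullet is untouched since we cut strictly before position $n-1$. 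For the \ExpTime algorithm, I precompute the set of $\Kmc$-types (each check is $\ALCI$-satisfiability, in \ExpTime), the relation $\rightsquigarrow_R$ between types (one \ExpTime satisfiability check per triple), and the ``blocked'' triples $(t_{n-1},t_n,t_{n+1})$ realising the last bullet (again a fresh concept-name trick plus $\ALCI$-satisfiability). Condition~(3) is then reachability in a graph of single-exponential size with single-exponential edge-tests, solvable in \ExpTime.

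The main obstacle I anticipate is the $n$-step extraction in (1) $\Rightarrow$ (2): unrestricted models of an $\ALCI$-KB need neither be image-finite nor tree-shaped, so Hennessy–Milner is not directly available, and one must argue that passing to well-chosen tree models preserves non-bisimilarity without collapsing the type. A secondary but non-trivial point is the stitching step in (2) $\Rightarrow$ (1): merging independently obtained models at elements of equal $\Kmc$-type must be done so that all ontology axioms remain satisfied at the merge points, which is folklore for $\ALCI$ via tree unravelling and ``copy-and-paste'' of subtrees but deserves explicit verification.
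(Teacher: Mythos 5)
Your pumping argument for $(2)\Leftrightarrow(3)$ and the type-graph reachability algorithm for the \ExpTime bound match the paper's proof and are fine. The genuine gaps are in the equivalence of (1) and (2), in both directions. For $(2)\Rightarrow(1)$: realizing the full chain $t_0R_0\cdots R_nt_{n+1}$ in one model and the blocked configuration in another does not by itself yield non-bisimilarity. If the spoiler walks the realized chain, the duplicator may answer with \emph{any} type-matching nodes in the other model and is never forced into the blocked $R_{n-1}$-edge; indeed a single model can contain both an unblocked and a blocked copy of the chain below the same root and is trivially bisimilar to itself. What makes the argument work is that one of the two models must be the canonical maximally branching tree model $\Amf_t$, in which every node of type $t_1$ has an $R$-successor of type $t_2$ for \emph{every} coherent $t_2$, so that no blocked configuration occurs anywhere in it; the spoiler then plays the blocked path in the \emph{other} model and wins at the last round. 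For $(1)\Rightarrow(2)$, your route through image-finite models and Hennessy--Milner has two problems: first, replacing the given non-bisimilar pair by finitely branching models of the same root type can destroy non-bisimilarity, and you give no argument that a non-bisimilar image-finite pair realizing $t$ exists; second, even granting a least $n$ with non-$n$-bisimilarity, the positions reached in the last rounds of the bounded game are only $(n-i)$-bisimilar, which for $n-i$ below the modal depth of $\cl(\Kmc)$ does not force equal $\Kmc$-types, so the extracted sequence need not consist of $\Kmc$-types at all (and ``WLOG always in $\Amf_1$'' is not WLOG, though that part is repairable since the game history is still a role-labelled path).

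The paper's single device repairs both directions and avoids Hennessy--Milner entirely: compare an arbitrary model $\Amf_t'$ realizing $t$ against the canonical $\Amf_t$, and observe that the relation ``same $\Kmc$-type'' between $\Amf_t$ and $\Amf_t'$ already satisfies (atom) and (back) by maximality of $\Amf_t$. Hence any failure of bisimilarity must be a (forth) failure at a pair reachable along a finite path of type-equal nodes, which is precisely a witnessing sequence, with $\Amf_t'$ serving as the blocking model. You also overlook the degenerate case where this (forth) failure occurs at the root itself, violating the requirement $n\geq 1$ in the definition of a witnessing sequence; the paper repairs it with the detour $t\,R\,t''\,R^{-}\,t\,R\,t'$, which exists because $\exists R.\top\in t$ forces an $R$-successor in $\Amf_t'$ as well.
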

\begin{proof} \
	``1. $\Rightarrow$ 2.'' 
	Let $\Sigma=\mn{sig}(\Kmc)$. Consider the tree-shaped `maximal' model $\Amf_{t}$ of
	$\Omc$ whose root $c$ realizes $t$ such that if a node $e\in \text{dom}(\Amf_{t})$ 
	realizes any $\Kmc$-type $t_{1}$ and is of depth $k\geq 0$, then for every $\Kmc$-type $t_{2}$
	with $t_{1}\rightsquigarrow_{R} t_{2}$ for some $\Sigma$-role $R$ there exists $e'$ realizing 
	$t_{2}$ of depth $k+1$ with $(e,e')\in R^{\Amf_{t}}$. (The construction of $\Amf_{t}$ is straightforward: its domain is the set of all sequence $t_{0}R_{0}\ldots R_{n-1}t_{n}$ with $n\geq 0$ such that $t_{0}=t$, all $t_{i}$ are
	$\Kmc$-types, all $R_{i}$ are $\Sigma$-roles, and $t_{i}\rightsquigarrow_{R_{i}} t_{i+1}$ for all $i<n$. We set $t_{0}R_{0}\ldots R_{n-1}t_{n}\in A^{\Amf_{t}}$ if $A\in t_{n}$ and $(w,wRt')\in R^{\Amf_{t}}$ if $wRt' \in \text{dom}(\Amf_{t})$.)
	If $t$ is not $\ALCI$-complete for $\Kmc$, 
	then there exists a model $\Bmf_{t}$ of
	$\Omc$ realizing $t$ in its root $c$ such that $\Amf_{t},c\not\equiv_{\ALCI,\Sigma}\Bmf_{t},c$.
	But then there exists a sequence $\sigma=t_{0}R_{0}\ldots R_{n}t_{n+1}$ with $n\geq 0$, $t=t_{0}$, $t_i\rightsquigarrow_{R_{i}} t_{i+1}$ for $i\leq n$, and 
        such that a sequence $c=d_{0},\ldots,d_{n}$ of nodes in $\Bmf_{t}$ realizes
	$t_{0}R_{0}\ldots R_{n-1}t_{n}$ in $\Bmf_{t}$ and there does not exist $d_{n+1}$ realizing $t_{n+1}$ with $(d_{n},d_{n+1})\in R_{n}^{\Bmf_{t}}$. Then $\sigma$ witnesses $\ALCI$-incompleteness 
	of $t$ for $\Kmc$ if $n>0$. If $n=0$, then it follows from  $\exists R_{0}.\top\in t$ that there exists $d'$
	realizing a $\Kmc$-type $t''$ in $\Bmf_{t}$ such that $(c,d')\in R_{0}^{\Bmf_{t}}$. Then
	the sequence $tR_{0}t''R_{0}^{-}tR_{0}t_{n+1}$ is as required.
	
	\medskip
	
	``2 $\Rightarrow$ 3''. This implication can be proved by a straightforward pumping argument. If $t_{0}R_{0}\ldots R_{n}t_{n+1}$ witnesses $\mathcal{ALCI}$-incompleteness of $t$ for $\Kmc$ and $n\geq 2^{||\Omc||}+2$, then there are $0\leq i<j< n$ such that $t_{i}=t_{j}$.
	Then take the sequence $t_{0}R_{0}\ldots t_{i}R_{j}t_{j+1}\ldots R_{n}t_{n+1}$ instead. 
	
	``3 $\Rightarrow$ 1'' holds by definition.
	
	\medskip
	
	To show that it is in \ExpTime to decide whether a $\Kmc$-type $t$ is $\ALCI$-complete for $\Kmc$,
	observe that one can construct a structure $\Amf$ whose domain consists of all $\Kmc$-types $t$
	and such that $t\in A^{\Amf}$ if $A\in t$ and $(t_{1},t_{2})\in R^{\Amf}$ if $t_{1}\rightsquigarrow_{R} t_{2}$.
	Then $t$ is not $\mathcal{ALCI}$-complete for $\Kmc$ iff there exists a path starting at $t$ in $\Amf$ 
	that ends with $R_{n-1}^{\Amf}t_{n}R_{n}^{\Amf}t_{n+1}$ such that the second 
	condition for sequences witnessing $\mathcal{ALCI}$-incompleteness holds. The existence of
	such a path can be decided in exponential time.
\end{proof}

\section{Proofs for Section~\ref{section:GF}}
The aim of this Section is to prove the equivalence of Points~1 to 4 and Point~5 of Theorem~\ref{thm:critGF1}. We introduce a few important notions required for the proof. 
A \emph{guarded tree decomposition} \cite{DBLP:books/daglib/p/Gradel014,tocl2020} of a structure
$\mathfrak{A}$ is a triple $(T,E,\text{bag})$ with
$(T,E)$ an undirected tree and $\text{bag}$ a function that assigns to every
$t\in T$ a guarded set $\text{bag}(t)$ in $\mathfrak{A}$ such that
\begin{enumerate}
	\item $\mathfrak{A} = \bigcup_{t\in T}\mathfrak{A}_{|\text{bag}(t)}$;
	\item $\{t \in T\mid a\in \text{bag}(t)\}$ is connected in
	$(T,E)$, for every $a\in \text{dom}(\mathfrak{A})$.
\end{enumerate}
When convenient, we assume that $(T,E)$ has a designated root $r$
which allows us to view $(T,E)$ as a directed tree. Also, it will be useful to
allow $\text{bag}(r)$ with $r$ the designated root of $T$ not to be guarded. The difference between a
classical tree decomposition 
and a
guarded one is that in the latter the elements in each bag, except the bag of the root, must be a
guarded set. While there is a classical tree decomposition of every
structure, albeit of potentially high width (that is, maximum bag
size), this is not the case for guarded tree decompositions. We say
that $\mathfrak{A}$ is \emph{guarded tree decomposable} if there
exists a guarded tree decomposition of~$\mathfrak{A}$.
Observe that for every GF-ontology $\Omc$ and GF-formula $\varphi(\vec{x})$ such that
$\Omc\not\models\varphi$ there exists a guarded tree decomposable model
$\Amf$ of $\Omc$ such that $\Amf\models\neg\varphi(\vec{a})$ for a tuple
$\vec{a}$ with $[\vec{a}]=\text{bag}(r)$, $r$ the designated rooted of the underlying tree~\cite{DBLP:books/daglib/p/Gradel014,tocl2020}.

We next provide a characterization of openGF-complete $\Kmc$-types in the same style as in Claim~2 of the proof of Theorem~\ref{thm:char12} for $\ALCI$. To this end we need some notation for paths in a structure and a way to transform
paths into strict paths without interfering with the realized openGF-types along the path. 
A \emph{path} of length $n$ from $a$ to $b$ in a structure $\Amf$ is
a sequence $R_{1}(\vec{b}_{1}),\ldots,R_{n}(\vec{b}_{n})$ with 
\begin{itemize}
	\item $\Amf\models R_{i}(\vec{b_{i}})$ and $|[\vec{b}_{i}]|\geq 2$ for all $i\leq n$; 
	\item $a\in [\vec{b}_{1}]$, $b\in [\vec{b}_{n}]$;
	\item $[\vec{b}_{i}]\cap [\vec{b}_{i+1}]\not=\emptyset$, for all $i<n$. 
\end{itemize}
Note that there is a path from $a$ to $b$ in $\Amf$ if there is a path from
$a$ to $b$ in the Gaifman graph of $\Amf$. We call a path \emph{strict} if
all $[\vec{b}_{i}]\cap [\vec{b}_{i+1}]$ are singletons
containing distinct points $c_{i}$ and there are sets $A_{1},\ldots,A_{n}\subseteq
\text{dom}(\Amf)$ covering $\text{dom}(\Amf)$ such that $[\vec{b}_{i}]\subseteq 
A_{i}$, $A_{i}\cap A_{i+1}=\{c_{i}\}$ and such that if $i < j$, then any path in the Gaifman graph of $\Amf$ from an element of $A_{i}$ to an element of $A_{j}$ contains $c_{k}$ for all $k \in \{i, \dots, j-1\}$.
We next introduce a transformation of paths into strict paths.
The \emph{partial unfolding} $\Amf_{\vec{a}}$ of a structure $\Amf$ along a tuple
$\vec{a}=(a_{1},\ldots,a_{n})$ in $\text{dom}(\Amf)$ such that 
$\text{dist}_{\Amf}(a_{i},a_{i+1})=1$ for all $i<n$
is defined as the following union of $n+1$ copies of $\Amf$. Denote the copies
by $\Amf_{1}$, $\Amf_{2},\ldots,\Amf_{n+1}$. The copies are mutually disjoint
except that $\Amf_{i}$ and $\Amf_{i+1}$ share a copy of $a_{i}$. Formally, the
domain of $\Amf_{i}$ is $\Amf\times \{i\}$ except that $(a_{i-1},i)$ is replaced by
$(a_{i-1},i-1)$, for all $i>1$. The constants are interpreted in $\Amf_{1}$ as before and we often
denote the elements $(a,1)$ of $\Amf_{1}$ simply by $a$.
We following figure illustrates this construction for a path $R_{1}(\vec{a}_{1}),\ldots,R_{n}(\vec{a}_{n})$ with $a_{i}\in [\vec{a}_{i}]\cap [\vec{a}_{i+1}]$ for $i\leq n$.

\begin{center}
\tikzset{every picture/.style={line width=0.5pt}} 

\begin{tikzpicture}[x=0.75pt,y=0.75pt,yscale=-1,xscale=1]

\draw   (277.5,71.6) -- (319.5,71.6) -- (319.5,115.64) -- (277.5,115.64) -- cycle ;
\draw  [fill={rgb, 255:red, 0; green, 0; blue, 0 }  ,fill opacity=1 ] (315.07,108.63) .. controls (316.6,108.62) and (317.86,109.76) .. (317.87,111.18) .. controls (317.89,112.6) and (316.66,113.76) .. (315.13,113.77) .. controls (313.6,113.78) and (312.34,112.64) .. (312.33,111.22) .. controls (312.31,109.8) and (313.54,108.64) .. (315.07,108.63) -- cycle ;
\draw   (244.38,106.8) -- (287.1,106.8) -- (287.1,145.2) -- (244.38,145.2) -- cycle ;
\draw  [fill={rgb, 255:red, 0; green, 0; blue, 0 }  ,fill opacity=1 ] (282.14,108.75) .. controls (283.67,108.74) and (284.92,109.88) .. (284.94,111.3) .. controls (284.95,112.72) and (283.72,113.88) .. (282.19,113.89) .. controls (280.66,113.91) and (279.41,112.77) .. (279.39,111.35) .. controls (279.38,109.93) and (280.61,108.76) .. (282.14,108.75) -- cycle ;
\draw   (343.1,72) -- (386.3,72) -- (386.3,115.64) -- (343.1,115.64) -- cycle ;
\draw   (310.3,106.4) -- (353.1,106.4) -- (353.1,144.8) -- (310.3,144.8) -- cycle ;
\draw    (282.16,111.32) -- (315.1,111.2) ;
\draw  [fill={rgb, 255:red, 0; green, 0; blue, 0 }  ,fill opacity=1 ] (380.95,108.38) .. controls (382.48,108.37) and (383.73,109.51) .. (383.75,110.93) .. controls (383.76,112.35) and (382.53,113.51) .. (381,113.53) .. controls (379.47,113.54) and (378.22,112.4) .. (378.2,110.98) .. controls (378.19,109.56) and (379.42,108.4) .. (380.95,108.38) -- cycle ;
\draw  [fill={rgb, 255:red, 0; green, 0; blue, 0 }  ,fill opacity=1 ] (348.01,108.51) .. controls (349.54,108.49) and (350.79,109.63) .. (350.81,111.05) .. controls (350.82,112.47) and (349.59,113.64) .. (348.06,113.65) .. controls (346.53,113.66) and (345.28,112.52) .. (345.26,111.1) .. controls (345.25,109.68) and (346.48,108.52) .. (348.01,108.51) -- cycle ;
\draw    (315.1,111.2) -- (348.04,111.08) ;
\draw    (348.04,111.08) -- (380.97,110.96) ;
\draw   (223,32) -- (423.5,32) -- (423.5,195.5) -- (223,195.5) -- cycle ;

\draw (258.26,147.88) node [anchor=north west][inner sep=0.75pt]   [align=left] {$\displaystyle \mathfrak{A}_{1}$};
\draw (390.67,108.13) node [anchor=north west][inner sep=0.75pt]  [font=\small] [align=left] {$\displaystyle \dots $};
\draw (321.11,148.18) node [anchor=north west][inner sep=0.75pt]   [align=left] {$\displaystyle \mathfrak{A}_{3}$};
\draw (354.43,56.08) node [anchor=north west][inner sep=0.75pt]   [align=left] {$\displaystyle \mathfrak{A}_{4}$};
\draw (338.8,116.94) node [anchor=north west][inner sep=0.75pt]  [font=\footnotesize] [align=left] {$\displaystyle a_{3}$};
\draw (311.4,116.94) node [anchor=north west][inner sep=0.75pt]  [font=\footnotesize] [align=left] {$\displaystyle a_{2}$};
\draw (273.7,116.54) node [anchor=north west][inner sep=0.75pt]  [font=\footnotesize] [align=left] {$\displaystyle a_{1}$};
\draw (290.46,56.08) node [anchor=north west][inner sep=0.75pt]   [align=left] {$\displaystyle \mathfrak{A}_{2}$};
\draw (373.4,116.74) node [anchor=north west][inner sep=0.75pt]  [font=\footnotesize] [align=left] {$\displaystyle a_{4}$};
\draw (323.3,92.94) node [anchor=north west][inner sep=0.75pt]  [font=\footnotesize] [align=left] {$\displaystyle R_{2}$};
\draw (292.1,97.74) node [anchor=north west][inner sep=0.75pt]  [font=\footnotesize] [align=left] {$\displaystyle R_{1}$};
\draw (357.9,97.74) node [anchor=north west][inner sep=0.75pt]  [font=\footnotesize] [align=left] {$\displaystyle R_{3}$};

\end{tikzpicture}
\end{center}

We use the following properties of $\Amf_{\vec{a}}$:
\begin{lemma}
\label{lem:partialunfolding}
\begin{enumerate}
	\item If $i < j$, then any path in $\mathfrak{A}_{\vec{a}}$ from an element of $\text{dom}(\mathfrak{A}_i)$ to an element of $\text{dom}(\mathfrak{A}_j)$ contains $(a_k,k)$ for all $k \in \{i, \dots, j-1\}$;
	\item Let $I$ contain for all $i$ with $1\leq i \leq n+1$ and all guarded $(b_{1},\ldots,b_{k})$ in $\Amf$ the mappings $p: (b_{1},\ldots,b_{k})\mapsto (c_{1},\ldots,c_{k})$, where $c_{j}=(b_{j},i)$ if $b_{j}\not=a_{i-1}$ and $c_{j}= (b_{j},i-1)$ if $b_{j}=a_{i-1}$. Then $I$ is a guarded bisimulation between $\Amf$ and $\Amf_{\vec{a}}$. 
	\item If $\Amf$ is a model of $\Kmc$, then $\Amf_{\vec{a}}$ is a model of $\Kmc$.
	\item The mapping $h$ from $\Amf_{\vec{a}}$ to $\Amf$ defined by setting
	$h(b,i)=b$ is a homomorphism from $\Amf_{\vec{a}}$ to $\Amf$.
\end{enumerate}
\end{lemma}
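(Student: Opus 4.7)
The plan is to prove the four statements roughly in the order (4), (1), (2), (3), since (2) is the most delicate and the others either feed into it or follow from it.

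\emph{Part (4).} This is immediate from the construction: every atom $R(\vec{c})$ true in some $\Amf_i$ is, by definition of $\Amf_i$ as an isomorphic copy of $\Amf$ (with only renamed elements), the image of a corresponding atom $R(\vec{c}')$ in $\Amf$ under the canonical identification, so $h$ sends each atom of $\Amf_{\vec{a}}$ to an atom of $\Amf$.

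\emph{Part (1).} I would argue by induction on $j-i$. The key observation is that $\mn{dom}(\Amf_i) \cap \mn{dom}(\Amf_{i+1}) = \{(a_i,i)\}$ and, more generally, $\mn{dom}(\Amf_i) \cap \mn{dom}(\Amf_j) = \emptyset$ for $|i-j| \geq 2$, and every atom in $\Amf_{\vec{a}}$ lives entirely inside some $\mn{dom}(\Amf_k)$. Thus any path from $\mn{dom}(\Amf_i)$ to $\mn{dom}(\Amf_{i+1})$ must pass through $(a_i,i)$, and concatenating these local observations gives the claim.

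\emph{Part (2).} I need to show that every $p \in I$ is a partial isomorphism and that the back-and-forth conditions hold. That $p$ is a partial isomorphism follows because $p$ is the composition of the identification $b_j \mapsto (b_j, i)$ (adjusted at $a_{i-1}$) with the isomorphism between $\Amf$ and the copy $\Amf_i$. For the forth condition, given $p: \vec{b} \mapsto \vec{c}$ in $I$ targeting a guarded tuple in $\Amf_i$, any guarded tuple $\vec{b}'$ in $\Amf$ can be matched with its copy in the same $\Amf_i$. For the back condition, a guarded tuple $\vec{c}'$ in $\Amf_{\vec{a}}$ lies inside some single $\Amf_k$ (this is where (1) is used to rule out tuples straddling two copies); I then take the preimage in $\Amf$ and the matching $p' \in I$ based on copy~$k$. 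Agreement on shared elements is straightforward from the definition of~$I$.

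\emph{Part (3).} Given (2), this is standard: $\Amf$ and $\Amf_{\vec{a}}$ are guarded bisimilar on every guarded tuple, and since GF-sentences are preserved under guarded bisimulation (Lemma~\ref{lem:guardedbisim}), every sentence of $\Omc$ that holds in $\Amf$ also holds in $\Amf_{\vec{a}}$. The database \Dmc is preserved because the constants are interpreted in $\Amf_1$ exactly as in $\Amf$ and $\Amf_1 \subseteq \Amf_{\vec{a}}$ is an induced substructure on its own domain.

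The main obstacle I anticipate is the back condition in Part (2), where one must use Part (1) to ensure that a guarded tuple in $\Amf_{\vec{a}}$ cannot straddle two copies—otherwise the matching $p'$ would not be well-defined. Everything else is bookkeeping that follows immediately from the copy-and-glue construction.
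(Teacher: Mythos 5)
Your proposal is correct, and the paper in fact states this lemma without proof, treating all four parts as immediate consequences of the copy-and-glue construction exactly as you spell them out. The only minor imprecision is in Part (2): the back condition does not really need Part (1) about paths, but only the more basic underlying fact that every relational atom of $\Amf_{\vec{a}}$ (and hence every guarded set) lies inside a single copy $\Amf_k$, with adjacent copies overlapping in exactly one point.
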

Assume that $R_{0}(\vec{a}_{0}),\ldots,R_{n}(\vec{a}_{n})$ is a path in $\Amf$
with $a_{i+1}\in [\vec{a}_{i}]\cap [\vec{a}_{i+1}]$ for $i\leq n$. Let $\vec{a}_{i}=(a_{i}^{1},\ldots,a_{i}^{n_{i}})$ and assume $a_{i}^{1}= a_{i+1}$.
Then $R_{0}(\vec{a}_{0},1),
\ldots,R_{n}(\vec{a}_{n},n+1)$ is a strict path in $\Amf_{\vec{a}}$ realizing the
same $\Kmc$-types as the original path, where 
\begin{eqnarray*}
(\vec{a}_{0},1) &:= &((a_{0}^{1},1),\ldots,(a_{0}^{n_{0}},1)) \\
(\vec{a}_{i},i+1) & := &((a_{i}^{1},i),(a_{i}^{2},i+1)\ldots,(a_{i}^{n_{i}},i+1))
\end{eqnarray*}

Let $\Kmc=(\Omc,\Dmc)$ be a GF-KB and $\Sigma=\mn{sig}(\Kmc)$.
We give a syntactic description of when a $\Kmc$-type $\Phi(x)$ is openGF-complete.
A \emph{guarded $\Kmc$-type} $\Phi(\vec{x})$ is a $\Kmc$-type
that contains an atom $R(\vec{x})$.
Call $\Kmc$-types $\Phi_{1}(\vec{x}_{1})$ and $\Phi_{2}(\vec{x}_{2})$ 
\emph{coherent} if there exists a model $\Amf$ of $\Kmc$ satisfying 
$\Phi_{1}\cup \Phi_{2}$ under an 
assignment $\mu$ for the variables in $[\vec{x}_{1}]\cup [\vec{x}_{2}]$. 
For a $\Kmc$-type $\Phi(\vec{x})$ and a subsequence $\vec{x}_{I}$ of $\vec{x}$
we denote by $\Phi_{|\vec{x}_{I}}$ the subset of $\Phi$ containing all formulas
in $\Phi$ with free variables from $\vec{x}_{I}$. $\Phi_{|\vec{x}_{I}}$ is called
the \emph{the restriction of $\Phi$ of $\vec{x}_{I}$}. Observe that $\Kmc$-types
$\Phi_{1}(\vec{x}_{1})$ and $\Phi_{2}(\vec{x}_{2})$ are coherent iff their restrictions to 
$[\vec{x}_{1}]\cap [\vec{x}_{2}]$ are logically equivalent. 
Assume a $\Kmc$-type $\Phi(x)$ is given. A sequence 
$$
\sigma = \Phi_{0}(\vec{x}_{0}),\ldots,\Phi_{n}(\vec{x}_{n}),\Phi_{n+1}(\vec{x}_{n+1})
$$
\emph{witnesses openGF-incompleteness of $\Phi$} if 
$\Phi$ is the restriction of $\Phi_{0}$ to $x$, $n\geq 0$,
and all $\Phi_{i}$, $0\leq i \leq n+1$, are guarded $\Kmc$-types each containing
the formula $\neg (x=y)$ for some variables $x,y$ (we say that the $\Phi_{i}$ are \emph{non-unary})
such that $[\vec{x}_{i}]\cap [\vec{x}_{i+1}]\not=\emptyset$,
all $\Phi_{i},\Phi_{i+1}$ are coherent, and
there exists a model $\Amf$ of $\Kmc$ and a tuple $\vec{a}_{n}$ in $\Amf$
such that $\Amf\models (\Phi_{n}\wedge \neg\exists\vec{x}_{n+1}'\Phi_{n+1})(\vec{a}_{n})$,
where $\vec{x}_{n+1}'$ is the sequence $\vec{x}_{n+1}$ without $[\vec{x}_{n}]\cap [\vec{x}_{n+1}]$.
\begin{lemma}\label{lem:GFincomplete}
	The following conditions are equivalent, for any $\Kmc$-type $\Phi(x)$:
	\begin{enumerate}
		\item $\Phi(x)$ is not openGF-complete;
		\item there is a sequence witnessing openGF-incompleteness of $\Phi(x)$;
		\item there is a sequence of length not exceeding $2^{2^{||\Omc||}}+2$ witnessing openGF 
		incompleteness of $\Phi(x)$.
	\end{enumerate}
	It is decidable in 2\ExpTime whether a $\Kmc$-type $\Phi(x)$ is openGF complete.
\end{lemma}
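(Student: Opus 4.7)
My plan is to mirror the structure of Lemma~\ref{lem:critalci}, proving the cycle $1 \Rightarrow 2 \Rightarrow 3 \Rightarrow 1$ and then the \TwoExpTime bound. The key conceptual change compared to the $\ALCI$ setting is that single role-successor steps give way to overlaps of guarded tuples, and $\ALCI$-bisimilarity is replaced by connected guarded $\Sigma$-bisimilarity. For $1 \Rightarrow 2$, I would first build a canonical model $\Amf_\Phi$ of $\Kmc$ by starting at a guarded tuple realizing some $\Kmc$-type whose restriction to $x$ equals $\Phi$ and then, inductively, at every already-placed guarded tuple of type $\Phi'$, adding a fresh tuple realizing every non-unary guarded $\Kmc$-type $\Phi''$ coherent with $\Phi'$, with shared variables identified appropriately. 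The resulting $\Amf_\Phi$ realizes all coherent extension walks starting from $\Phi$. Assuming $\Phi$ is not openGF-complete, Lemma~\ref{lem:guardedbisim} applied to $\omega$-saturated elementary extensions yields a second pointed model $\Amf', b'$ of $\Kmc$ realizing $\Phi$ with $\Amf_\Phi, a \not\sim_{\text{openGF},\Sigma} \Amf', b'$. Tracing a minimal back-and-forth failure produces a walk $\Phi_0, \ldots, \Phi_n, \Phi_{n+1}$ of coherent guarded types realized along overlapping tuples in $\Amf_\Phi$ whose last extension $\Phi_{n+1}$ fails at the matched tuple in $\Amf'$---precisely the required witness sequence.

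For $2 \Rightarrow 3$, I would use a standard pumping argument: $|\cl(\Kmc)[\vec{x}]|$ is single-exponential in $||\Omc||$, so the number of guarded $\Kmc$-types is at most $2^{2^{||\Omc||}}$. A longer witness sequence must repeat a type, and the cycle between two occurrences can be excised since coherence between adjacent types and the failure at the last step depend only on local data, so the resulting shorter sequence is still a witness. For $3 \Rightarrow 1$, given a witness sequence I would produce two pointed models of $\Kmc$ that both realize $\Phi$ but are not connected guarded $\Sigma$-bisimilar: the first realizes the whole sequence $\Phi_0, \ldots, \Phi_{n+1}$ along overlapping tuples (via the canonical saturation used for $\Amf_\Phi$), and the second, obtained by amalgamating realizations of the prefix $\Phi_0, \ldots, \Phi_n$ with the model supplied by the witness definition (which realizes $\Phi_n$ without being extendable to $\Phi_{n+1}$), realizes the prefix but blocks the last step. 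A back-and-forth argument then shows that any connected guarded $\Sigma$-bisimulation would force $\Phi_{n+1}$ to be realized also in the second model, giving a contradiction.

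For the \TwoExpTime upper bound, I would enumerate the at most doubly-exponentially many guarded $\Kmc$-types by iterating over subsets of $\cl(\Kmc)[\vec{x}]$ and checking realizability via the \TwoExpTime-complete satisfiability of GF-KBs. I would then build a directed graph whose nodes are guarded $\Kmc$-types and whose edges encode coherent overlap, and test by graph reachability whether a configuration matching the witness definition is reachable from $\Phi$; the ``dead-end'' condition at the final node reduces to one further GF-satisfiability check, so the overall procedure runs in \TwoExpTime. The hardest step will be $1 \Rightarrow 2$: the walk extraction must track precisely which variables witness the overlap between successive guarded tuples and ensure the final failure is a genuine nonrealizability rather than an artefact of variable renaming. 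This bookkeeping is invisible in the single-variable $\ALCI$ case and will require careful use of $\omega$-saturation together with the bisimulation characterization in Lemma~\ref{lem:guardedbisim}.
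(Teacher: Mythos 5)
Your proposal is correct and follows essentially the same route as the paper: the same saturated canonical model and back-and-forth failure trace for $1 \Rightarrow 2$, a pumping argument over the doubly exponentially many guarded $\Kmc$-types for $2 \Rightarrow 3$, and the type/coherence graph with a reachability plus dead-end test for the \TwoExpTime{} bound. The only differences are cosmetic: the paper executes the pumping by first converting the witness into a strict path via partial unfoldings rather than excising repeated types directly, and it resolves the depth-zero failure case in $1 \Rightarrow 2$ explicitly by prepending a realizable non-unary guarded type sharing the guard relation -- precisely the variable-tracking bookkeeping you flag as the delicate part.
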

\begin{proof} \
	The proof is an extension of the proof of Lemma~\ref{lem:critalci} above. Let $\Sigma=\text{sig}(\Kmc)$.
	It is straightforward to construct a guarded tree decomposable model $\Amf$ of $\Omc$ with tree
	decompositon $(T,E,\text{bag})$ and root $r$ such that $\Phi(x)$ is realized in
	$\text{bag}(r)$ by $a$ and for every $\Kmc$-type $\Psi_{1}(\vec{x})$
	realized in some $\text{bag}(t)$ by $\vec{a}$ and every $\Kmc$-type $\Psi_{2}(\vec{y})$ coherent with 
	$\Psi_{1}(\vec{x})$ there exists a successor $t'$ of $t$ in $T$ such that
	$\Psi_{1}(\vec{x})\cup \Psi_{2}(\vec{y})$ is realized in $\text{bag}(t)\cup \text{bag}(t')$
	in $\Amf$ under an assignment $\mu$ of the variables $[\vec{x}]\cup [\vec{y}]$
	such that $\mu(\vec{x})=\vec{a}$. Thus, $\Amf$ satisfies 
	$\forall \vec{x}(\Psi_{1}\rightarrow \exists \vec{y}'\Psi_{2})$ for any coherent
	pair $\Psi_{1}(\vec{x}),\Psi_{2}(\vec{y})$, where $\vec{y}'$ is $\vec{y}$ without 
	$[\vec{x}]\cap [\vec{y}]$.
	
	``1 $\Rightarrow$ 2''. If $\Phi(x)$ is not openGF-complete, then there exists
	a guarded tree decomposable model $\Amf'$ of $\Kmc$ with root $r$
	which realizes $\Phi(x)$ in $\text{bag}(r)$ at $a'$ such that $\Amf,a
	\not\sim_{\text{openGF},\Sigma}\Amf',a'$. But then $\Amf,a$ realizes a sequence $\sigma$ that witnesses
	openGF-incompleteness of $\Phi(x)$, except that possibly there exists already a guarded non-unary
	$\Kmc$-type $\Phi_{0}(\vec{x}_{0})$ which is realized in some $\vec{a}_{0}$ in $\Amf$ with $a\in [\vec{a}_{0}]$
	but there is no $\vec{a}_{0}'$ in $\Amf'$ containing $a'$ and realizing $\Phi_{0}(\vec{x}_{0})$.
	Let $R_{0}(\vec{x}_{0})\in \Phi_{0}$. Then, because we included the formulas 
	$\exists\vec{y}_{1}'(R_{0}(\vec{x}_{n})\wedge x\not=y)$ with $n$ the arity of $R_{0}$, $x,y$ distinct variable in $\vec{x}_{n}$, and $\vec{y}_{1}$ defined as $\vec{y}_{n}$ without $x$, in $\text{cl}(\Kmc)$, there exists a
	non-unary guarded $\Kmc$-type $\Phi'(\vec{x}_{0}')$ containing $R_{0}(\vec{x}_{0}')$ such that
	there exists a tuple $\vec{a}_{0}'$ in $\Amf'$ containing $a'$ realizing $\Phi'$. We obtain a
	sequence $\sigma$ of any length by first taking $\Phi'(\vec{x}_{0})$ an arbitrary number of times 
	and then appending $\Phi_{0}$.
	
	\medskip
	``2 $\Rightarrow$ 3''. This can be proved by a straightforward pumping argument. This is particularly
	straightforward if one works with a sequence $\sigma$ realized by a strict path. Consider a sequence 
	$$
	\sigma = \Phi_{0}(\vec{x}_{0}),\ldots,\Phi_{n}(\vec{x}_{n}),\Phi_{n+1}(\vec{x}_{n+1})
	$$
	that witnesses openGF-incompleteness of $\Phi(x)$ and
	a model $\Amf$ of $\Kmc$ satisfying 
	$\Amf\models (\Phi_{n}\wedge \neg\exists\vec{x}_{n+1}'\Phi_{n+1})(\vec{a}_{n})$. We may assume (by possibly repeating $\Phi_{n}$
	once in the sequence) that there is a model $\Amf$ of $\Kmc$ with a path $R_{0}(\vec{a}_{0}),\ldots,R_{n}(\vec{a}_{n})$
	such that $\vec{a}_{i}$ realizes $\Phi_{i}$ and 
	$\Amf\models (\Phi_{n}\wedge \neg\exists\vec{x}_{n+1}'\Phi_{n+1})(\vec{a}_{n})$.  
	We now modify $\Amf$ in such a way that we obtain a sequence witnessing openGF-incompletensss
	of $\Phi(x)$ which is realized by a strict path. 
	Choose a sequence $\vec{a}=(a_{1},\ldots,a_{m})$ such that $a_{1}=a$ for the
	node $a$ in $\vec{a}_{0}$ realizing $\Phi(x)$, $a_{i}\not=a_{i+1}$ and
	$a_{i},a_{i+1}\in [\vec{a}_{j}]$ for some $j\leq n$, for all $i<m$, and $a_{m}\in \vec{a}_{n}$.
	Clearly one can find such a sequence for some $m\leq 2n$.
	Then take the partial unfolding $\Amf_{\vec{a}}$ of $\Amf$ along $\vec{a}$.
	In $\Amf_{\vec{a}}$ we find the required strict path (Lemma~\ref{lem:partialunfolding}). Pumping on this path is straightforward.
	
	``3 $\Rightarrow$ 1''. Straightforward.
	
	The 2\ExpTime upper bound for deciding whether a $\Kmc$-type is openGF-complete
	can now be proved similarly to the \ExpTime upper bound for deciding whether a 
	type defined by an $\ALCI$-KB is $\ALCI$-complete.
\end{proof}

\begin{lemma}\label{lem:openGF2}
	A $\Kmc$-type $\Phi(\vec{x})$ is openGF-complete iff all restrictions 
	$\Phi(x)$ of $\Phi$ to some variable $x$ in $\vec{x}$ are openGF-complete.
\end{lemma}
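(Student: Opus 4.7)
The statement is an iff, so I would prove each direction separately; the $(\Leftarrow)$ direction is the main content.

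For $(\Rightarrow)$, assume $\Phi(\vec{x})$ is openGF-complete. The key projection property is that any partial $\Sigma$-isomorphism $p:\vec{c}_1 \mapsto \vec{c}_2$ together with a connected guarded bisimulation $I$ witnessing openGF-bisimilarity of the tuples restricts to a witness for the $i$-th coordinates, since the forth-back conditions for the singleton $\{c_{1,i}\}$ are special cases of those for the whole tuple. Given two realizations $b,b'$ of the unary restriction $\Phi(x_i)$ in models of $\Kmc$, the plan is to embed each into a realization of $\Phi(\vec{x})$ at a tuple whose $i$-th component is openGF-bisimilar to the given point, and then apply openGF-completeness of $\Phi(\vec{x})$. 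The embedding is obtained by taking guarded unfoldings of the given pointed models and of a fixed witness of $\Phi(\vec{x})$, and gluing them at the $i$-th singleton bag: because the unary types of the two glued points agree, and because GF-ontologies on guarded tree-decomposable structures are checked bag by bag, the result is still a model of $\Kmc$.

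For $(\Leftarrow)$, let $\Amf_1,\vec{b}_1$ and $\Amf_2,\vec{b}_2$ both realize $\Phi(\vec{x})$. Using Lemma~\ref{lem:guardedbisim} I would pass to $\omega$-saturated elementary extensions, so that openGF-equivalence entails openGF-bisimilarity and the unary bisimilarities $\Amf_1,b_{1,i}\sim_{\text{openGF},\Sigma}\Amf_2,b_{2,i}$, provided by openGF-completeness of the $\Phi(x_i)$, are available at every point. Define $I$ to be the set of partial $\Sigma$-isomorphisms $h:\vec{a}\mapsto \vec{a}'$ between guarded tuples of $\Amf_1$ and $\Amf_2$ such that $\vec{a},\vec{a}'$ realize the same $\Kmc$-type and, whenever $[\vec{a}]\cap [\vec{b}_1]\neq\emptyset$, $h$ sends each $b_{1,j}\in[\vec{a}]$ to $b_{2,j}$. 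The map $\vec{b}_1\mapsto\vec{b}_2$ is itself a partial $\Sigma$-isomorphism because $\Phi$ encodes the full atomic diagram of the tuple, and this partial isomorphism together with $I$ is the intended witness of openGF-bisimilarity.

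The main obstacle is the verification of the forth-back clauses, specifically when a guarded tuple $\vec{c}$ extending a matched tuple in $\Amf_1$ meets $[\vec{b}_1]$ in several coordinates $b_{1,i_1},\dots,b_{1,i_k}$ that must be sent in one step to the prescribed $b_{2,i_j}$. I would produce the required $\vec{c}'$ by realizing, via $\omega$-saturation, the $\Kmc$-type of $(\vec{b}_1,\vec{c})$ in $\Amf_2$ with the positions of $\vec{b}_2$ held fixed. Finite satisfiability of that type in $\Amf_2$ reduces, by openGF-completeness of each $\Phi(x_{i_j})$, to finding local extensions around each $b_{2,i_j}$ that jointly realize any finite conjunct of type formulas; the unary bisimulations supply these extensions coordinate by coordinate, while the only additional cross-coordinate constraints are atomic facts or formulas of guarded-quantifier form that are already recorded in $\Phi(\vec{x})$ and hence automatically satisfied by $\vec{b}_2$.
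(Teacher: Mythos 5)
Your route is genuinely different from the paper's, which proves only the ``$\Leftarrow$'' direction and does so by contraposition and purely syntactically: from a failure of openGF-completeness of $\Phi(\vec{x})$ it extracts a witnessing configuration in the style of Lemma~\ref{lem:GFincomplete} (cases (i) and (ii)) and converts it, using the closure of $\cl(\Kmc)$ under the formulas $\exists\vec{y}_2\,R(\vec{x}_n)$, into a witnessing sequence for the incompleteness of some unary restriction. You argue semantically in both directions, and both arguments have genuine gaps. In ``$\Rightarrow$'', the gluing is circular: after you glue a realization $\Bmf,b$ of $\Phi(x_i)$ to a fixed witness $\Amf,\vec{a}$ of $\Phi(\vec{x})$ at the $i$-th singleton bag, the glued point carries the $\Amf$-side guarded extensions of $a_i$ in addition to the $\Bmf$-side ones. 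For the glued point to be openGF-bisimilar to $b$ in $\Bmf$ alone, every $\Amf$-side extension must already be matched inside $\Bmf$ around $b$, i.e.\ you need $\Amf,a_i\sim_{\text{openGF},\Sigma}\Bmf,b$ --- which is exactly the openGF-completeness of $\Phi(x_i)$ being proved. Equality of the finite unary $\Kmc$-types of $a_i$ and $b$ does not give this.

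In ``$\Leftarrow$'' you correctly locate the obstacle, but the proposed resolution does not work. The critical step is already the first forth condition for $p:\vec{b}_1\mapsto\vec{b}_2$: a guarded set of $\Amf_1$ may contain several coordinates $b_{1,i_1},\dots,b_{1,i_k}$, and you must exhibit a single guarded tuple of $\Amf_2$ containing $b_{2,i_1},\dots,b_{2,i_k}$ in the corresponding positions that launches a connected guarded bisimulation. The coordinatewise bisimulations supplied by openGF-completeness of the $\Phi(x_{i_j})$ yield suitable extensions around each $b_{2,i_j}$ separately, but nothing forces them to be realized jointly, and your claim that the remaining cross-coordinate constraints ``are already recorded in $\Phi(\vec{x})$'' is not true: $\Phi(\vec{x})$ is a finite subset of $\cl(\Kmc)[\vec{x}]$ and does not determine the openGF-type of a joint guarded extension of several coordinates. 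Nor does $\omega$-saturation rescue the step, since it only realizes types that are finitely satisfiable over $\vec{b}_2$, and that finite satisfiability is precisely what is in question. This multi-coordinate situation is exactly what the paper's case (i) with $|I|\geq 2$ is designed to handle, by pushing the branching two steps away from a single coordinate via the sequence $\Phi_1,\Phi_0$ so that the obstruction is witnessed by a unary restriction; some argument of this kind is needed and is missing from your proof.
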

\begin{proof} \
	The direction from left to right is straightforward. Conversely, assume that $\Phi(\vec{x})$ is not openGF-complete. One can show similarly
	to the proof of Lemma~\ref{lem:GFincomplete} that (i) or (ii) holds:
	
	\medskip
	(i) there exists a guarded $\Kmc$-tuple $\Phi_{0}(\vec{x}_{0})$ sharing with $\vec{x}$ 
	the variables $\vec{x}_{I}$ for some nonempty $I\subseteq \{1,\ldots,n\}$ such that for $\vec{x}_{0}'$
	the variables in $\vec{x}_{0}$ without $\vec{x}_{I}$ the following hold:
	there exists a model $\Amf$ of $\Kmc$ realizing $\Phi$ in a tuple $\vec{a}$ such that 
		(a) $\Amf\models (\exists \vec{x}_{0}'\Phi_{0})(\vec{a}_{I})$ and there	also exists a model $\Bmf$ of $\Kmc$ realizing $\Phi$ in a tuple $\vec{a}$ such that (b)
		$\Bmf\not\models (\exists \vec{x}_{0}'\Phi_{0})(\vec{a}_{I})$.
	
	\medskip
	(ii) there exists a guarded $\Kmc$-tuple $\Phi_{0}(\vec{x}_{0})$ sharing with $\vec{x}$ 
	the variables $\vec{x}_{I}$ for some nonempty $I\subseteq \{1,\ldots,n\}$ and a sequence
	of guarded $\Kmc$-tuples $\Phi_{1}(\vec{x}_{1}),\ldots,\Phi_{n}(\vec{x}_{n}),\Phi_{n+1}(x_{n+1})$ 
	with $n\geq 1$ such that $\Phi(\vec{x})\cup \Phi_{0}(\vec{x}_{0})$ is satisfiable
	in a model of $\Kmc$ and 
	$\Phi_{0}(\vec{x}_{0}),\Phi_{1}(\vec{x}_{1}),\ldots,\Phi_{n}(\vec{x}_{n}),\Phi_{n+1}(x_{n+1})$ 
	satisfy the conditions of a sequence witnessing non openGF-completeness, except that no type $\Phi(x)$
	of which it witnesses non openGF-completeness is given.
	
	\medskip
	If (ii), then we are done by taking any variable $x$ in $x_{I}$ and the restriction $\Phi_{|x}$
	of $\Phi$ to $x$. Then $\Phi_{|x}$ is not openGF-complete. Now assume that (i) holds.
	We are again done if $I$ contains at most one element (we can simply take the type
	of $\text{tp}_{\Kmc}(\Amf,a_{I})$ then). Otherwise consider a relation $R_{0}$ with $R_{0}(\vec{x}_{0})\in \Phi_{0}$.
	By the closure condition on $\Kmc$-types, we have 
	for the model $\Bmf$ of $\Kmc$ satisfying (b) that $\Bmf\models \exists \vec{x}_{0}'R_{0}(\vec{x}_{0})(\vec{a}_{I})$. Take an extension $\vec{a}_{1}$
	of $\vec{a}_{I}$ such that $\Bmf\models R_{0}(\vec{a}_{1})$. Take any $a\in \vec{a}_{I}$, the unary
	$\Kmc$-type $\Phi(x)=\text{tp}_{\Kmc}(\Bmf,a)$, and the $\Kmc$-type $\Phi_{1}(\vec{x}_{1}):=\text{tp}_{\Kmc}(\Bmf,\vec{a}_{1})$. 
	Then the sequence $\Phi_{1},\Phi_{0}$ shows that $\Phi(x)$ is not openGF-complete.
\end{proof}
We next introduce \emph{guarded embeddings} as an intermediate step between
homomorphisms witnessing $\Dmc_{\text{con}(\vec{a}),\vec{a}}\rightarrow \Amf,\vec{b}^{\Amf}$ and the existence of models $\Bmf$ of $\Kmc$ such
that there exists a bounded guarded bisimulation between some $\Bmf,\vec{a}^{\Bmf}$ and $\Amf,\vec{b}^{\Amf}$. 
Let $\Dmc,\vec{a}$ be a pointed database, $\Amf,\vec{b}$ a pointed structure, $\ell\geq 0$, 
and $\Sigma \supseteq \mn{sig}(\Dmc)$ a signature. A \emph{partial embedding} is an injective partial homomorphism.
A pair $(e,H)$ is a \emph{guarded $\Sigma$ $\ell$-embedding between $\Dmc,\vec{a}$ and $\Amf,\vec{b}$}
if $e$ is a homomorphism from $\Dmc$ onto a database $\Dmc'$ and $H$
is a set of partial embeddings from $\Dmc'$ to $\Amf$ containing $h_{0}: e(\vec{a})\mapsto \vec{b}$ and 
a partial embedding $h$ from any guarded set in $\Dmc'$ to $\Amf$ such that the following condition hold:
\begin{itemize}
\item if $h_{i}:\vec{a}_{i}\mapsto \vec{b}_{i}\in H$ for $i=1,2$, then there exists a 
partial isomorphism $p:h_{1}([\vec{a}_{1}]\cap [\vec{a}_{2}]) \mapsto h_{2}([\vec{a}_{1}]\cap [\vec{a}_{2}])$
such that $p\circ h_{1}$ and $h_{2}$ coincide on $[\vec{a}_{1}]\cap [\vec{a}_{2}]$ and for any 
$\vec{c}$ with $[\vec{c}]= h_{1}([\vec{a}_{1}]\cap [\vec{a}_{2}])$,
$\Amf,\vec{c} \sim_{\text{openGF},\Sigma}^{\ell} \Amf,p(\vec{c})$.
\end{itemize}
We write $\Dmc,\vec{a} \preceq_{\text{openGF},\Sigma}^{\ell} \Amf,\vec{b}^{\Amf}$ 
if there exists a guarded $\Sigma$ $\ell$-embedding $H$
between $\Dmc,\vec{a}$ and $\Amf,\vec{b}$.

The following lemma shows that guarded $\Sigma$ $\ell$-embeddings determine a
sequence $H_{\ell},\ldots,H_{0}$ of partial embeddings satisfying the 
(forth) condition of guarded $\Sigma$ $\ell$-bisimulations.

\begin{lemma}\label{lem:guardemb}
Let $(\mathcal{D},\vec{a})$ be a pointed database and $(\mathfrak{A},\vec{b}^\mathfrak
A)$ be a pointed model such that $(\mathcal{D},\vec{a}) \preceq^\ell_{openGF, \Sigma} (\mathfrak{A},\vec{b}^\mathfrak{A})$. Then there there exist a surjective homomorphism $e : \mathcal{D} \rightarrow \mathcal{D}'$ for some database $\mathcal{D}'$ and sets $H_\ell, \dots, H_0$ of partial embeddings $\mathcal{D}' \rightarrow \mathfrak{A}$ such that
\begin{enumerate}
	\item for all $k\leq \ell$, all $h \in H_k$ and all guarded sets $\vec{c}$ in $\mathcal{D}'$ such that
	$[\vec{c}] \ \cap \ \text{dom}(h) \neq \emptyset$, there exists $h' \in H_{k-1}$ with domain $[\vec{c}]$ such that
	$h'$ coincides with $h$ on $[\vec{c}] \cap \text{dom}(h)$.  
	\item for all $k_{1},k_{2}\leq \ell$, all $h_1 \in H_{k_1}, h_2 \in H_{k_2}$, and all tuples $\vec{c}_{1},\vec{c}_2$ in $\mathcal{D}'$ such that $[\vec{c}_i]=\text{dom}(h_i)$, we have $h_1(\vec{c}) \sim^{\min(k_1,k_2)}_{openGF, \Sigma} h_2(\vec{c})$ for all $\vec{c}$ such that $[\vec{c}] = [\vec{c}_1] \cap [\vec{c}_2]$. 
\end{enumerate}	
\end{lemma}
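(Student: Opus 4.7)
The plan is to take $e$ to be the surjective homomorphism provided by the given guarded $\Sigma$ $\ell$-embedding $(e,H)$ and to construct the sets $H_\ell \subseteq H_{\ell-1} \subseteq \cdots \subseteq H_0$ by downward induction on $k$, with base case $H_\ell := H$. Property~2 at $k_1 = k_2 = \ell$ is immediate from the defining condition of $\preceq^\ell_{\text{openGF},\Sigma}$: for any $h_1,h_2 \in H$, the partial isomorphism $p$ supplied by that definition directly witnesses $\Amf,h_1(\vec{c}) \sim_{\text{openGF},\Sigma}^\ell \Amf,h_2(\vec{c})$ for every $\vec{c}$ enumerating $\text{dom}(h_1) \cap \text{dom}(h_2)$.

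For the inductive step from $H_k$ to $H_{k-1}$, I first include $H_k \subseteq H_{k-1}$ and then, for each $h \in H_k$ and each guarded set $G$ of $\Dmc'$ with $G \cap \text{dom}(h) \neq \emptyset$, add a new partial embedding $h' : G \to \Amf$ agreeing with $h$ on $G \cap \text{dom}(h)$. To construct $h'$: let $h_G \in H$ be the partial embedding for $G$ already present in $H$, let $\vec{d}$ enumerate $G \cap \text{dom}(h)$, and let $\vec{c}$ enumerate $G$. The inductive hypothesis (Property~2 applied at levels $k$ and $\ell$ to $h \in H_k$ and $h_G \in H_\ell$) gives $\Amf,h(\vec{d}) \sim_{\text{openGF},\Sigma}^k \Amf,h_G(\vec{d})$, so there exists a bisimulation hierarchy $I_k, \dots, I_0$ on $\Amf$ with $p_0 : h_G(\vec{d}) \mapsto h(\vec{d})$ in $I_k$. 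Applying the forth condition~(i) at level~$k$ to $p_0$ together with the guarded tuple $h_G(\vec{c})$, whose image set contains $[h_G(\vec{d})]$, yields a partial isomorphism $p' \in I_{k-1}$ with domain $h_G(\vec{c})$ that agrees with $p_0$ on $[h_G(\vec{d})]$. Setting $h' := p' \circ h_G$ then produces a partial embedding with $\text{dom}(h')=G$ extending $h$ on the intersection, and the membership $p' \in I_{k-1}$ additionally certifies $\Amf,h_G(\vec{c}) \sim_{\text{openGF},\Sigma}^{k-1} \Amf,h'(\vec{c})$ via the truncated hierarchy $I_{k-1}, \dots, I_0$.

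Property~1 is then built into the construction. The main technical obstacle is establishing Property~2 at level $k-1$ for pairs involving the freshly added $h'$: given any $h'' \in H_{k''}$ with $k'' \geq k-1$ and any $\vec{c}^*$ enumerating $G \cap \text{dom}(h'')$, the goal is $\Amf,h'(\vec{c}^*) \sim_{\text{openGF},\Sigma}^{\min(k-1,k'')} \Amf,h''(\vec{c}^*)$. On components of $\vec{c}^*$ in $G \cap \text{dom}(h)$ one falls back on the inductive hypothesis for $h$ and $h''$, since $h'$ and $h$ agree there; on the remaining components one chains through $h_G$, composing the freshly established $\sim_{\text{openGF},\Sigma}^{k-1}$ between $h'$ and $h_G$ (restricted to the relevant sub-tuple of $\vec{c}^*$) with the inductive hypothesis for $h_G \in H_\ell$ and $h'' \in H_{k''}$. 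The required composition rests on two standard facts about bounded openGF-bisimilarity: invariance of $\sim^m_{\text{openGF},\Sigma}$ under sub-tupling (the hierarchy $I_m, \dots, I_0$ can be closed under domain restrictions without violating back-and-forth), and transitivity at a fixed level, which is obtained by forming the natural product of two bisimulation hierarchies, together with level-weakening. Assembling these pieces yields Property~2 at level $k-1$ and closes the induction.
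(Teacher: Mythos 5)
Your construction of the hierarchy $H_\ell,\dots,H_0$ coincides with the paper's: you set $H_\ell:=H$ and obtain each new member of $H_{k-1}$ as $h'=p'\circ h_G$, where $h_G\in H$ is the embedding with guarded domain $G$, $p_0$ is the partial isomorphism witnessing $\Amf,h_G(\vec{d})\sim^{k}_{\text{openGF},\Sigma}\Amf,h(\vec{d})$ on $[\vec{d}]=G\cap\text{dom}(h)$, and $p'$ is produced by one forth step along the guarded tuple $h_G(\vec{c})$. Your $h_G$, $p_0$, $p'$ are exactly the paper's $h_2$, $p$, $q_{h_1,h_2}$, and Property~1 indeed falls out of this construction.

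The gap is in your verification of Property~2 for the freshly added $h'$. You propose to split the tuple $\vec{c}^*$ enumerating $G\cap\text{dom}(h'')$ into the components lying in $G\cap\text{dom}(h)$ (handled via the inductive hypothesis for $h$ and $h''$) and the remaining components (handled by chaining through $h_G$), and then to combine the two. That combination step is unsound: bounded openGF-bisimilarity established separately on two sub-tuples of $\vec{c}^*$ does not yield bisimilarity of the full tuples --- the two witnessing hierarchies need not be compatible, and the union of the two partial maps need not even be a partial isomorphism, let alone satisfy back-and-forth jointly. The split is also unnecessary: since $p'$ witnesses $\Amf,h_G(\vec{c})\sim^{k-1}_{\text{openGF},\Sigma}\Amf,h'(\vec{c})$ for $\vec{c}$ enumerating \emph{all} of $G$, restriction to the sub-tuple $\vec{c}^*$ gives $h'(\vec{c}^*)\sim^{k-1}_{\text{openGF},\Sigma}h_G(\vec{c}^*)$, while the inductive hypothesis for $h_G\in H_\ell$ and $h''$ applies to exactly this tuple because $G\cap\text{dom}(h'')=\text{dom}(h_G)\cap\text{dom}(h'')$; composing the two yields $\sim^{\min(k-1,k'')}_{\text{openGF},\Sigma}$ on the whole of $\vec{c}^*$ in one step. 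This is what the paper does: it maintains only the weaker invariant that every $h_1\in H_k$ is $k$-bisimilar, on the intersection of domains, to every member of $H_\ell$ with guarded domain, and recovers Property~2 in general by a single composition through $H_\ell$. Adopting this also repairs a second hole in your induction, namely the case where $h''$ is itself freshly added at level $k-1$, for which no inductive hypothesis about $h''$ is available; both new embeddings then chain to their anchors in $H_\ell$, which are related by the base case.
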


\begin{proof} \indent 
%
%
Let $H$ be the set of partial embeddings witnessing 
$(\mathcal{D},\vec{a}) \preceq_{\text{openGF},\Sigma}^\ell (\mathfrak{A},\vec{b}^\mathfrak{A})$. 
Define $H_\ell := H$. We define $H_k$ for $k<\ell$ by induction.
Suppose $H_k$ has been defined. We define $H_{k-1}$. We assume that for all $h_1 \in H_k, h_2 \in H_\ell$ having intersecting domains $[\vec{c}_1], [\vec{c}_2]$, with $\vec{c}_2$ being guarded the following condition holds:
\begin{itemize}
	\item[(*)] for any tuple $\vec{c}$ in $\mathcal{D}'$ such that $[\vec{c}]=[\vec{c}_1] \cap [\vec{c}_2]$, there
	is a partial isomorphism $p : h_1(\vec{c})
	\mapsto h_2(\vec{c})$ witnessing $h_1(\vec{c})
	\sim^k_{\text{openGF}, \Sigma} h_2(\vec{c})$. The following figure illustrates our claim. 
\end{itemize}

\begin{center}

	\tikzset{every picture/.style={line width=0.5pt}} 
	
	\begin{tikzpicture}[x=0.75pt,y=0.75pt,yscale=-1,xscale=1]
	
	\draw   (208.7,68.16) .. controls (214.62,68.16) and (219.41,76.6) .. (219.41,87.02) .. controls (219.41,97.44) and (214.62,105.89) .. (208.7,105.89) .. controls (202.79,105.89) and (198,97.44) .. (198,87.02) .. controls (198,76.6) and (202.79,68.16) .. (208.7,68.16) -- cycle ;
	\draw   (208.7,91.74) .. controls (214.62,91.74) and (219.41,100.18) .. (219.41,110.6) .. controls (219.41,121.02) and (214.62,129.47) .. (208.7,129.47) .. controls (202.79,129.47) and (198,121.02) .. (198,110.6) .. controls (198,100.18) and (202.79,91.74) .. (208.7,91.74) -- cycle ;
	\draw   (290.77,54.01) .. controls (296.68,54.01) and (301.47,62.45) .. (301.47,72.87) .. controls (301.47,83.29) and (296.68,91.74) .. (290.77,91.74) .. controls (284.86,91.74) and (280.06,83.29) .. (280.06,72.87) .. controls (280.06,62.45) and (284.86,54.01) .. (290.77,54.01) -- cycle ;
	\draw  [draw opacity=0] (282.51,85.41) .. controls (284.82,79.44) and (287.74,75.88) .. (290.94,75.86) .. controls (294.03,75.85) and (296.9,79.13) .. (299.24,84.69) -- (291.24,121.2) -- cycle ; \draw   (282.51,85.41) .. controls (284.82,79.44) and (287.74,75.88) .. (290.94,75.86) .. controls (294.03,75.85) and (296.9,79.13) .. (299.24,84.69) ;
	\draw   (291.02,144.59) .. controls (285.1,144.53) and (280.41,136.04) .. (280.54,125.63) .. controls (280.67,115.21) and (285.56,106.81) .. (291.48,106.86) .. controls (297.39,106.92) and (302.08,115.41) .. (301.95,125.83) .. controls (301.82,136.25) and (296.93,144.65) .. (291.02,144.59) -- cycle ;
	\draw  [draw opacity=0] (299.52,113.61) .. controls (297.16,119.35) and (294.25,122.75) .. (291.11,122.73) .. controls (288.14,122.72) and (285.41,119.66) .. (283.18,114.45) -- (291.37,77.4) -- cycle ; \draw   (299.52,113.61) .. controls (297.16,119.35) and (294.25,122.75) .. (291.11,122.73) .. controls (288.14,122.72) and (285.41,119.66) .. (283.18,114.45) ;
	\draw    (208.7,87.02) -- (276.88,75.34) ;
	\draw [shift={(279.83,74.83)}, rotate = 530.28] [fill={rgb, 255:red, 0; green, 0; blue, 0 }  ][line width=0.08]  [draw opacity=0] (4,-2) -- (0,0) -- (4,2) -- cycle    ;
	\draw    (208.7,110.6) -- (280,126.8) ;
	\draw [shift={(279.17,127.5)}, rotate = 193.48] [fill={rgb, 255:red, 0; green, 0; blue, 0 }  ][line width=0.08]  [draw opacity=0] (4,-2) -- (0,0) -- (4,2) -- cycle    ;
	\draw    (291.48,85.45) -- (291.32,112.32) ;
	\draw [shift={(291.3,115.32)}, rotate = 270.34000000000003] [fill={rgb, 255:red, 0; green, 0; blue, 0 }  ][line width=0.08]  [draw opacity=0] (4,-2) -- (0,0) -- (4,2) -- cycle    ;
	\draw  [dash pattern={on 3pt off 3pt}] (305.78,58.15) .. controls (310.93,61.05) and (310.96,70.76) .. (305.85,79.84) .. controls (300.74,88.92) and (292.42,93.93) .. (287.27,91.03) .. controls (282.12,88.12) and (282.09,78.41) .. (287.2,69.33) .. controls (292.31,60.26) and (300.63,55.25) .. (305.78,58.15) -- cycle ;
	\draw    (295.5,127.83) .. controls (341.29,107.29) and (318.03,81.66) .. (309.12,72.24) ;
	\draw [shift={(307.17,70.17)}, rotate = 407.86] [fill={rgb, 255:red, 0; green, 0; blue, 0 }  ][line width=0.08]  [draw opacity=0] (4,-2) -- (0,0) -- (4,2) -- cycle    ;
	\draw   (166,43) -- (376,43) -- (376,159) -- (166,159) -- cycle ;
	
	\draw (179.8,75.6) node [anchor=north west][inner sep=0.75pt]  [font=\small] [align=left] {$\displaystyle \vec{c}_{1}$};
	\draw (179.2,100.8) node [anchor=north west][inner sep=0.75pt]  [font=\small] [align=left] {$\displaystyle \vec{c}_{2}$};
	\draw (220.62,65.1) node [anchor=north west][inner sep=0.75pt]  [font=\small,rotate=-349.35] [align=left] {$\displaystyle h_{1} \in H_{k}$};
	\draw (223.66,117.49) node [anchor=north west][inner sep=0.75pt]  [font=\small,rotate=-13.87] [align=left] {$\displaystyle h_{2} \in H_{\ell }$};
	\draw (294.33,90.67) node [anchor=north west][inner sep=0.75pt]  [font=\small] [align=left] {$\displaystyle \sim ^{k}$};
	\draw (278,94.67) node [anchor=north west][inner sep=0.75pt]  [font=\small] [align=left] {$\displaystyle p$};
	\draw (327.67,95.33) node [anchor=north west][inner sep=0.75pt]  [font=\small] [align=left] {$\displaystyle q_{h_{1} ,h_{2}}$};
	\draw (323.67,77.67) node [anchor=north west][inner sep=0.75pt]  [font=\small] [align=left] {$\displaystyle \sim ^{k-1}$};

	\end{tikzpicture}

\end{center} 		
Now assume that $h_{1},h_{2}$ satisfying the conditions above are given. As $[h_2(\vec{c}_2)]$ is guarded (by $\vec{c}_2$ being guarded and $h$ a partial homomorphism) and intersects The $h_2[[\vec{c}_1] \cap [\vec{c}_2]]$, and as $p$ witnesses a openGF $\Sigma$ $k$-bisimulation, there exists a partial isomorphism $q_{h_1,h_2}$ with domain $[h_2(\vec{c}_2)]$ witnessing
$h_2(\vec{c}_2) \sim^{k-1}_{\text{openGF}, \Sigma} q_{h_1,h_2}(h_2(\vec{c}_2))$ and that coincides with $p^{-1}$ on $h_2[[\vec{c}_1] \cap
[\vec{c}_2]]$. We then include $q_{h_1,h_2} \circ \circ h_2$ in $H_{k-1}$.

\noindent This is well-defined, as the assumption (*) holds for all $k \leq \ell$:

\begin{itemize}
	\item If $k=\ell$, then (*) is stated in the definition of $\Sigma$ $\ell$-guarded
	embeddings.
	\item If $0<k<\ell$ and (*) holds for $k$, let $h_1 \in H_{k-1}, h_2 \in
	H_\ell$ with intersecting domains $[\vec{c}_1], [\vec{c}_2]$ and 
	$\vec{c}_2$ guarded be given. Then $h_1 =
	q_{\eta_1,\eta_2} \circ \eta_2$ for some $\eta_1 \in H_k, \eta_2 \in H_\ell$, by definition of $H_{k-1}$. By
	definition of $\Sigma$ $\ell$-guarded embeddings, as
	$\eta_2$ and $h_2$ are both in
	$H_\ell$ and have intersecting domains $[\vec{c}_1]$ and $[\vec{c}_2]$,
	there exists a partial isomorphism $p'$ witnessing
	$\eta_2(\vec{c}) \sim^\ell_{\text{openGF}, \Sigma} h_2(\vec{c})$ for any $\vec{c}$ such that $[\vec{c}] = [\vec{c}_1] \cap [\vec{c}_2]$. Then, by composition of bisimulations, $p := p'_{|\eta_2[\vec{c}]} \circ
	(q_{\eta_1,\eta_2}^{-1})_{|h_1[\vec{c}]}$ is a partial isomorphism  witnessing $h_1(\vec{c}) \sim^{k-1}_{\text{openGF}, \Sigma} h_2(\vec{c})$ i.e. (*) holds for $k-1$. The situation is illustrated by the following figure.
\end{itemize}
\begin{center}

	\tikzset{every picture/.style={line width=0.5pt}} 
	
	\begin{tikzpicture}[x=0.75pt,y=0.75pt,yscale=-1,xscale=1]
	
	\draw   (385.7,86.16) .. controls (391.62,86.16) and (396.41,94.6) .. (396.41,105.02) .. controls (396.41,115.44) and (391.62,123.89) .. (385.7,123.89) .. controls (379.79,123.89) and (375,115.44) .. (375,105.02) .. controls (375,94.6) and (379.79,86.16) .. (385.7,86.16) -- cycle ;
	\draw   (385.7,109.74) .. controls (391.62,109.74) and (396.41,118.18) .. (396.41,128.6) .. controls (396.41,139.02) and (391.62,147.47) .. (385.7,147.47) .. controls (379.79,147.47) and (375,139.02) .. (375,128.6) .. controls (375,118.18) and (379.79,109.74) .. (385.7,109.74) -- cycle ;
	\draw   (469.77,87.34) .. controls (475.68,87.34) and (480.47,95.79) .. (480.47,106.21) .. controls (480.47,116.62) and (475.68,125.07) .. (469.77,125.07) .. controls (463.86,125.07) and (459.06,116.62) .. (459.06,106.21) .. controls (459.06,95.79) and (463.86,87.34) .. (469.77,87.34) -- cycle ;
	\draw  [draw opacity=0] (461.51,117.74) .. controls (463.82,111.77) and (466.74,108.21) .. (469.94,108.2) .. controls (473.03,108.18) and (475.9,111.46) .. (478.24,117.02) -- (470.24,153.53) -- cycle ; \draw   (461.51,117.74) .. controls (463.82,111.77) and (466.74,108.21) .. (469.94,108.2) .. controls (473.03,108.18) and (475.9,111.46) .. (478.24,117.02) ;
	\draw   (469.54,176.17) .. controls (463.63,176.11) and (458.94,167.62) .. (459.06,157.21) .. controls (459.19,146.79) and (464.09,138.39) .. (470,138.44) .. controls (475.91,138.5) and (480.6,146.99) .. (480.47,157.41) .. controls (480.34,167.83) and (475.45,176.23) .. (469.54,176.17) -- cycle ;
	\draw  [draw opacity=0] (477.85,145.28) .. controls (475.5,151.02) and (472.58,154.41) .. (469.44,154.4) .. controls (466.47,154.39) and (463.74,151.32) .. (461.51,146.12) -- (469.7,109.07) -- cycle ; \draw   (477.85,145.28) .. controls (475.5,151.02) and (472.58,154.41) .. (469.44,154.4) .. controls (466.47,154.39) and (463.74,151.32) .. (461.51,146.12) ;
	\draw    (386.04,100.35) -- (461.5,100.49) ;
	\draw [shift={(464.5,100.5)}, rotate = 180.11] [fill={rgb, 255:red, 0; green, 0; blue, 0 }  ][line width=0.08]  [draw opacity=0] (4,-2) -- (0,0) -- (4,2) -- cycle    ;
	\draw    (385.7,128.6) -- (460.36,157.1) ;
	\draw [shift={(463.17,158.17)}, rotate = 200.89] [fill={rgb, 255:red, 0; green, 0; blue, 0 }  ][line width=0.08]  [draw opacity=0] (4,-2) -- (0,0) -- (4,2) -- cycle    ;
	\draw   (556.77,87.34) .. controls (562.68,87.34) and (567.47,95.79) .. (567.47,106.21) .. controls (567.47,116.62) and (562.68,125.07) .. (556.77,125.07) .. controls (550.86,125.07) and (546.06,116.62) .. (546.06,106.21) .. controls (546.06,95.79) and (550.86,87.34) .. (556.77,87.34) -- cycle ;
	\draw  [draw opacity=0] (548.51,117.74) .. controls (550.82,111.77) and (553.74,108.21) .. (556.94,108.2) .. controls (560.03,108.18) and (562.9,111.46) .. (565.24,117.02) -- (557.24,153.53) -- cycle ; \draw   (548.51,117.74) .. controls (550.82,111.77) and (553.74,108.21) .. (556.94,108.2) .. controls (560.03,108.18) and (562.9,111.46) .. (565.24,117.02) ;
	\draw    (475.04,100.35) -- (550.5,100.49) ;
	\draw [shift={(553.5,100.5)}, rotate = 180.11] [fill={rgb, 255:red, 0; green, 0; blue, 0 }  ][line width=0.08]  [draw opacity=0] (4,-2) -- (0,0) -- (4,2) -- cycle    ;
	\draw    (386.04,100.35) .. controls (403,64.2) and (505.1,41.29) .. (552.09,98.73) ;
	\draw [shift={(553.5,100.5)}, rotate = 232.17000000000002] [fill={rgb, 255:red, 0; green, 0; blue, 0 }  ][line width=0.08]  [draw opacity=0] (4,-2) -- (0,0) -- (4,2) -- cycle    ;
	\draw    (470.37,119.35) -- (470.49,143.83) ;
	\draw [shift={(470.5,146.83)}, rotate = 269.73] [fill={rgb, 255:red, 0; green, 0; blue, 0 }  ][line width=0.08]  [draw opacity=0] (4,-2) -- (0,0) -- (4,2) -- cycle    ;
	\draw    (470.5,146.83) .. controls (533.17,164.69) and (549.36,134.63) .. (556.61,120.16) ;
	\draw [shift={(470.5,146.83)}, rotate = 15] [fill={rgb, 255:red, 0; green, 0; blue, 0 }  ][line width=0.08]  [draw opacity=0] (4,-2) -- (0,0) -- (4,2) -- cycle    ;
	\draw   (343,41.52) -- (593,41.52) -- (593,192.52) -- (343,192.52) -- cycle ;
	
	\draw (356.8,93.6) node [anchor=north west][inner sep=0.75pt]  [font=\small] [align=left] {$\displaystyle \vec{c}_{1}$};
	\draw (356.2,118.8) node [anchor=north west][inner sep=0.75pt]  [font=\small] [align=left] {$\displaystyle \vec{c}_{2}$};
	\draw (402.08,102.83) node [anchor=north west][inner sep=0.75pt]  [font=\small,rotate=0] [align=left] {$\displaystyle \eta _{2} \in H_{\ell }$};
	\draw (402.67,138.31) node [anchor=north west][inner sep=0.75pt]  [font=\small,rotate=-21.32] [align=left] {$\displaystyle h_{2} \in H_{\ell }$};
	\draw (498.58,103.71) node [anchor=north west][inner sep=0.75pt]  [font=\small,rotate=-359.52] [align=left] {$\displaystyle q_{\eta _{1} ,\eta _{2}}$};
	\draw (494.91,84.38) node [anchor=north west][inner sep=0.75pt]  [font=\small,rotate=-359.52] [align=left] {$\displaystyle \sim ^{k-1}$};
	\draw (439.58,47.05) node [anchor=north west][inner sep=0.75pt]  [font=\small,rotate=-359.52] [align=left] {$\displaystyle h_{1} \in H_{k-1}$};
	\draw (473.25,121.71) node [anchor=north west][inner sep=0.75pt]  [font=\small,rotate=-359.52] [align=left] {$\displaystyle \sim ^{\ell }$};
	\draw (454.25,123.71) node [anchor=north west][inner sep=0.75pt]  [font=\small,rotate=-359.52] [align=left] {$\displaystyle p'$};
	\draw (493.51,154.3) node [anchor=north west][inner sep=0.75pt]  [font=\small,rotate=-353.27] [align=left] {$\displaystyle p' \circ q_{\eta _{1} ,\eta _{2}}^{-1}$};
	\draw (492.45,134.45) node [anchor=north west][inner sep=0.75pt]  [font=\small,rotate=-359.52] [align=left] {$\displaystyle \sim ^{k-1}$};

	\end{tikzpicture}

\end{center}

Elements of $H_{k-1}$ are partial embeddings, as compositions of partial isomorphisms with partial embeddings.
We thus have a homomorphism $e: \mathcal{D} \rightarrow \mathcal{D}'$ and sets $H_\ell, \dots, H_0$ of partial embeddings $\mathcal{D}' \rightarrow \mathfrak{A}$. We now prove that Conditions~1 and 2 hold. 
\begin{enumerate}
	\item Let $0\leq k \leq \ell$ and $h_1 \in H_k$ with domain $[\vec{c}_1]$. Let $\vec{c}_2$ be guarded in $\mathcal{D}'$ such that $[\vec{c}_1] \cap [\vec{c}_2] \neq \emptyset$. 
	By definition of $\ell$-guarded embeddings, every
	guarded tuple is the domain of some embedding in $H = H_\ell$. In
	particular there exists $h_2 \in H_\ell$ with domain $[\vec{c}_2]$. 
	Then Condition~(*) holds, with matching notation.
	Consider $q_{h_1,h_2}$ and $p$ as defined above. A witnessing partial
	homomorphism $h'$ can be defined as $h' := q_{h_1,h_2} \circ h_2 \in H_{k-1}$.
	Since $p^{-1} \circ h_2$ coincides with $h_1$ on $[\vec{c}_1] \cap
	[\vec{c}_2]$, and $q_{h_1,h_2}$ coincides with $p^{-1}$ on $h_2[[\vec{c}_1]
	\cap [\vec{c}_2]]$, it follows that $h'$ coincides with $h_1$ on
	$[\vec{c}_1] \cap [\vec{c}_2]$.
	\item Let $h_1 \in H_{k_1}, h_2 \in H_{k_2}$ with intersecting domains $[\vec{c}_1], [\vec{c}_2]$. By definition of $\ell$-guarded embeddings, there exists $h'_2$ in $H=H_\ell$ with domain $[\vec{c}_2]$. By (*), for every $\vec{c}$ in $\mathcal{D}'$ such that $[\vec{c}] = [\vec{c}_1] \cap [\vec{c}_2]$ we have $h_1(\vec{c}) \sim^{k_1}_{\text{openGF}, \Sigma} h'_2(\vec{c})$ and $h_2(\vec{c}) \sim^{k_2}_{\text{openGF}, \Sigma} h'_2(\vec{c})$, thus $h_1(\vec{c}) \sim^{\min(k_1,k_2)}_{\text{openGF}, \Sigma} h_2(\vec{c})$ by composition of bisimulations.
\end{enumerate}
\end{proof}

Observe that if $H_\ell, \dots, H_0$ satisfying the conditions of Lemma~\ref{lem:guardemb} exist, then $H_\ell \subseteq \dots \subseteq H_0$: let $k \leq \ell$ and $\vec{c} \mapsto \vec{d} \in H_k$. By condition (1), since $[\vec{c}] \cap [\vec{c}] \neq \emptyset$ there exists $\vec{c} \mapsto \vec{d}' \in H_{k-1}$ that coincides with $\vec{c} \mapsto \vec{d}$ on $[\vec{c}]$, i.e. $\vec{c} \mapsto \vec{d} \in H_{k-1}$.

We are now in a position to show the first main lemma linking characterizations of separability using bounded guarded bisimulations, bounded guarded  embeddings, and bounded connected guarded bisimulations. 
\begin{lemma}\label{thm:gfcrit1}
Let $(\Kmc,P,\{\vec{b}\})$ be a labeled GF-KB and $\Sigma= \text{sig}(\Kmc)$.
Then the following conditions are equivalent:
\begin{enumerate}
	\item $(\Kmc,P,\{\vec{b}\})$ is openGF-separable.
	\item $(\Kmc,P,\{\vec{b}\})$ is GF-separable.
	\item there exists a (finite) model $\Amf$ of $\Kmc$ and $\ell \geq 0$ such that for all 
	models 
	$\Bmf$ of $\Kmc$ and $\vec{a}\in P$: $\Bmf,\vec{a}^{\Bmf}\not\sim_{\text{GF},\Sigma}^{\ell}\Amf,\vec{b}^{\Amf}$.
	\item there exists a (finite) model $\Amf$ of $\Kmc$ and $\ell \geq 0$ such that for 
	all $\vec{a}\in P$: $\Dmc_{\text{con}(\vec{a})},\vec{a}\not\preceq_{\text{openGF},\Sigma}^{\ell}\Amf,\vec{b}^{\Amf}$.
	\item there exists a (finite) model $\Amf$ of $\Kmc$ and $\ell \geq 0$ such that for all models 
	$\Bmf$ of $\Kmc$ and all $\vec{a}\in P$: $\Bmf,\vec{a}^{\Bmf}\not\sim_{\text{openGF},\Sigma}^{\ell}\Amf,\vec{b}^{\Amf}$.
\end{enumerate}
\end{lemma}
\begin{proof} \
The implications ``1. $\Rightarrow$ 2.'', ``2. $\Rightarrow$ 3.'', and ``5. $\Rightarrow$ 1.'' are straightforward. We prove
``3. $\Rightarrow$ 4.'' and ``4. $\Rightarrow$ 5.''. 

``3. $\Rightarrow$ 4.'' Take a model $\Amf$ of $\Kmc$ and $\ell\geq 0$ witnessing  Condition~3. We may assume that $\ell$ exceeds the maximum guarded quantifier rank of formulas in $\Kmc$.
We show that Condition~4 holds for $\Amf$ and $\ell$.
Assume for a proof by contradiction that there exists 
$\vec{a}_{0}\in P$ such that there exists
a guarded $\Sigma$ $\ell$-embedding $(e,H)$ from $\Dmc_{\text{con}(\vec{a}_{0})},\vec{a}_{0}$ to 
$\Amf,\vec{b}^{\Amf}$. 
Assume $e:\Dmc_{\text{con}(\vec{a}_{0})}\mapsto \Dmc'$ and that $e(\vec{a}_{0})=\vec{a}_{0}'$.
We construct a model $\Bmf$ as follows: first take a copy $\Bmf'$ of $\Amf$. 
For the constants $c\in \text{cons}(\Dmc)\setminus
\text{cons}(\Dmc_{\text{cons}(\vec{a_{0}})})$, we define $c^{\Bmf'}$ as the copy of $c^{\Amf}$
in $\Bmf'$. The interpretation of the constants in $\Dmc_{\text{cons}(\vec{a}_{0})}$ will be defined later. 
We define $\Bmf$ as the disjoint union of $\Bmf'$ and $\Bmf''$, where
$\Bmf''$ is defined next. 
We denote by $H'$ the set obtained from $H$ with $\vec{a}_{0}'\mapsto \vec{b}^{\Amf}$ removed if 
$\vec{a}_{0}$ is not guarded. Now let 
$$
\text{dom}(\Bmf'')= (H'\times \text{dom}(\Amf))/_{\sim},
$$
where $\sim$ identifies all $(h,d), (h',d')$ such that $(h,d)=(h',d')$ or there exists 
$c\in \text{dom}(h)\cap \text{dom}(h')$ such that $h(c)=d$ and $h'(c)=d'$. 
Denote the equivalence class of 
$(h,d)$ w.r.t.~$\sim$ by $[h,d]$. For any constant $c$ in $\Dmc_{\text{con}(\vec{a}_{0})}$, 
we set $c^{\Bmf''}= [h,h(e(c))]$, where $h\in H'$ is such that $e(c)\in \text{dom}(h)$. 
Observe that this is well defined as $(h',h'(e(c)))\sim (h,h(e(c)))$ for any $h'\in H'$ with 
$e(c)\in \text{dom}(h')$. We define the interpretation $R^{\Bmf''}$
of the relation symbol $R$ by setting for $e_1, \dots, e_n \in \mathrm{dom}(\mathfrak{B}'')$, 
$\mathfrak{B}'' \vDash R(e_1, \dots, e_n)$ if there exists $h \in H'$ and $c_1, \dots, c_n \in \mathrm{dom}(\mathfrak{A})$ 
such that $e_i = [h,c_i]$ and $\mathfrak{A} \vDash R(c_1, \dots c_n)$. Then, the map 
\begin{align*}
f_h : \ \   \mathrm{dom}(\mathfrak{A})&\rightarrow (H' \times \mathrm{dom}(\mathfrak{A}))/_{\sim} \\
c  &\mapsto [h,c]
\end{align*}
is an embedding from $\mathfrak{A}$ to $\mathfrak{B}''$, by definition.

We show that $\Bmf,\vec{a}_{0}^{\Bmf} \sim_{\text{GF},\Sigma}^{\ell} \Amf,\vec{b}^{\Amf}$. By construction and the assumption that
$\ell$ exceeds the guarded quantifier rank of $\Kmc$ it also follows that 
$\Bmf$ is a model of $\Kmc$. It thus follows that we have derived a 
contradiction to the assumption that $\Amf$ and $\ell$ witness Condition~3.

To define a guarded $\Sigma$ $\ell$-bisimulation $\hat{H}_{\ell},\ldots,\hat{H}_{0}$, 
let $S_{i}$ be the set of $p:\vec{c}\mapsto \vec{d}$
witnessing that $\Amf,\vec{c} \sim_{\text{openGF},\Sigma}^{i} \Amf,\vec{d}$, where $\vec{c}$ is guarded. 
Then include in $\hat{H}_{i}$
\begin{itemize}
	\item all $\vec{c}' \mapsto \vec{c}$, where $\vec{c}'$ is the copy in $\Bmf'$ of the guarded tuple $\vec{c}$ in $\Amf$;
	\item all compositions $p\circ (f^{-1}_{h})_{|[\vec{d}]}$ for any guarded tuple $\vec{d}$ in the range of $f_{h}$ and $p\in S_{i}$;
\end{itemize}
In addition, include in $\vec{a}_{0}^{\Bmf}\mapsto \vec{b}^{\Amf}$ in all $\hat{H}_{i}$, $0\leq i \leq \ell$. We show that $\hat{H}_{\ell},\ldots,\hat{H}_{0}$ is a guarded $\Sigma$ $\ell$-bisimulation. 

For any $i\leq \ell$ any $g\in \hat{H}_i$ is clearly a partial
$\Sigma$-isomorphism, either trivially if $\text{dom}(g) \subseteq \mathfrak{B}'$ or by composition of partial $\Sigma$-isomorphisms if $\text{dom}(g) \subseteq \mathfrak{B}''$. By definition, $\hat{H}_\ell$ contains $\vec{a}_0^\mathfrak{B}
\mapsto \vec{b}^\mathfrak{A}$. We thus only need to check the ``Forth" and
``Back" conditions for guarded $\ell$-bisimulations. Let $g \in \hat{H}_k$ 
for some $k$ with $1\leq k \leq \ell$.
By definition of $\hat{H}_k$, we have either $\text{dom}(g) \subseteq \mathfrak{B}'$ or $\text{dom}(g) \subseteq \mathfrak{B}''$. In each case, we show that for any guarded $\vec{c}$ in $\mathfrak{B}$ and guarded $\vec{d}$
in $\mathfrak{A}$, there exists $g'_0 \in \hat{H}_{k-1}$ with domain $[\vec{c}]$ that
coincides with $g$ on $[\vec{c}] \cap \text{dom}(g)$ (Forth) and there exists $g'_1 \in
\hat{H}_{k-1}$ such that $\text{dom}((g'_1)^{-1})=[\vec{d}]$ and $(g'_1)^{-1}$ coincides with $g^{-1}$ on
$[\vec{d}] \cap \text{im}(g)$ (Back). 

First assume that $[\vec{c}] \cap \text{dom}(g) = \emptyset$. Then, as $\vec{c}$ is guarded in $\mathfrak{B}$, it is either included in $\mathfrak{B}'$ or included in $\mathfrak{B}''$. If $\vec{c}$ is included in $\mathfrak{B}'$, then the partial isomorphism mapping $\vec{c}$ to its copy in $\mathfrak{A}$ is in $\hat{H}_{k-1}$, as required. If $\vec{c}$ is in $\mathfrak{B}''$, then $\vec{c}$ can be written $([h,c_1], \dots, [h,c_n])$ for some $h \in H'$ and $c_1, \dots, c_n \in \text{dom}(\mathfrak{A})$ as it is guarded. But then $(f_h)^{-1}_{|[\vec{c}]} \in \hat{H}_{k-1}$ is as required.

The case $[\vec{d}] \cap \text{im}(g) = \emptyset$ is similar. Assume $\vec{d}= (d_1, \dots, d_n)$ and let $[h,\vec{d}] := ([h,d_1], \dots, [h,d_n]) \in
\mathfrak{B}''$ for any $h \in H'$. 
Then $(f_h)^{-1}_{|[h,\vec{d}]} \in \hat{H}_{k-1}$ is as
required, for any $h \in H'$. We now focus on proving (Forth) and (Back) assuming intersections are not empty.

\medskip

(1) Suppose $\text{dom}(g) \subseteq \mathfrak{B}'$. 

(Forth) Let $\vec{c}$ be guarded in $\mathfrak{B}$ such that
$[\vec{c}] \cap \text{dom}(g) \neq \emptyset$. 
We show there exists $g' \in \hat{H}_{k-1}$ that coincides with $g$ on $\text{dom}(g) \cap
\text{dom}(g')$, such that $[\vec{c}] = \text{dom}(g')$.
By construction of $\mathfrak{B}$,  $[\vec{c}] \cap \text{dom}(g) \neq \emptyset$ and $\text{dom}(g) \subseteq \mathfrak{B}'$ imply $[\vec{c}] \subseteq \mathfrak{B}'$.
By definition of $\hat{H}_k$, $\text{dom}(g)$ is the copy of $\text{im}(g)$
in $\mathfrak{B}'$. Therefore simply take $g'$ to be the partial isomorphism $\vec{c} \mapsto \vec{d}$ such that $\vec{c}$ is the copy in $\mathfrak{B}'$ of $\vec{d}$; it clearly coincides with $g$ on the intersection of their domains, and is in $\hat{H}_{k-1}$ which contains every ``copying" function, by definition.

(Back) Let $\vec{d}$ be guarded in $\mathfrak{A}$ such that
$[\vec{d}] \cap \text{im}(g) \neq \emptyset$. Take $\vec{c}$ to be the copy in $\mathfrak{B}'$ of $\vec{d}$. Then,
the partial isomorphism $g' := \vec{c} \mapsto \vec{d}$ is in $\hat{H}_{k-1}$ by definition, and is such that $(g')^{-1}$ coincides with $g^{-1}$ on $\text{im}(g) \cap \text{im}(g')$.

\medskip

(2) Suppose $\text{dom}(g) \subseteq \mathfrak{B}''$. 

(Forth) 
Write $\text{dom}(g)$ as $([h_1,c_1], \dots, [h_n,c_n])$ with $h_1, \dots, h_n \in H'$ and $(c_1, \dots, c_n) =: \vec{c}$ a tuple in $\mathfrak{A}$. We
want to prove that for any $([h'_1,c'_1], \dots, [h'_m,c'_m])$ guarded in $\mathfrak{B}$ that intersects $\text{dom}(g)$ 
there exists $g' \in \hat{H}_{k-1}$ with domain $([h',c'_1], \dots, [h',c'_m])$ that coincides 
with $g$ on $\text{dom}(g) \cap \text{dom}(g')$.
As $([h'_1,c'_1], \dots, [h'_m,c'_m])$ is guarded in $\mathfrak{B}$ and intersects $\text{dom}(g)$ which is
in $\mathfrak{B}''$, it also has to be contained in $\mathfrak{B}''$, 
by construction of
$\mathfrak{B}$. The fact it is guarded implies we can write it as 
$([h', c'_1], \dots, [h', c'_m])$ for some $h' \in H'$, with $(c'_1, \dots, c'_m)$ being guarded in $\mathfrak{A}$, again by construction of
$\mathfrak{B}$.
As for $([h_1,c_1], \dots, [h_n,c_n])$, we can write it as $([h,c_1], \dots, [h,c_n])$ for some $h \in H'$,
either because it is guarded or because it is equal to $\vec{a}^\mathfrak{B}$, i.e. $([h, h(a_1)], \dots,
[h,h(a_n)])$ for some $h \in H'$. By definition of $\hat{H}_k$, we can write
$g$ as $p \circ (f_h^{-1})_{|\text{dom}(g)}$ for some $p \in S_k$, and we know $g'$ has to be in the form $p'
\circ (f_{h'}^{-1})_{|\text{dom}(g')}$ for some $p' \in S_{k-1}$. 
For notation purposes, we write $[h,\vec{c}]=([h,c_1], \dots, [h,c_n])$
and $[h',\vec{c}']=([h',c'_1], \dots, [h',c'_m])$. 

\medskip
\noindent
Case 1. $h = h'$. Then $[h,\vec{c}] \cap [h,\vec{c}'] \neq \emptyset$ implies
$[\vec{c}] \cap [\vec{c}'] \neq \emptyset$. Then, since $p \in S_k$ and $\text{dom}(p) =
[\vec{c}]$, by definition of guarded $k$-bisimulations there exists $p' \in S_{k-1}$ with
domain $[\vec{c}']$ that coincides with $p$ on $[\vec{c}] \cap [\vec{c}']$. Then $g' := p'
\circ (f_{h}^{-1})_{|[h,\vec{c}']} \in \hat{H}_{k-1}$ is as required.

\medskip
\noindent
Case 2. $h \neq h'$. Figure~\ref{fig:3implies4a} illustrates the following construction.
For all $[h,c_i] \in [[h,\vec{c}]] \cap [[h',\vec{c}']]$ we have that $c_i = h(d_i)$ and
$c'_i = h'(d_i)$ for some $d_i \in
\text{dom}(h) \cap \text{dom}(h')$. 
For any tuple $\vec{d}$ in $\mathcal{D}'$ such that
$[\vec{d}] = \text{dom}(h) \cap \text{dom}(h')$, we have $\mathfrak{A},h'(\vec{d}) \sim^\ell_{\text{GF}, \Sigma}
\mathfrak{A},h(\vec{d})$, witnessed by some partial isomorphism $q : [h'(\vec{d})] \rightarrow
[h(\vec{d})]$. Also, via $p$, we have $\mathfrak{A}, \vec{d}' \sim^k_{\text{GF}, \Sigma} \mathfrak{A},
p(\vec{d}')$ for any $\vec{d}'$ such that $[\vec{d}'] = [h(\vec{d})] \cap [\vec{c}]$. By composition, for any $\vec{d}''$ such that $[\vec{d}''] = [h'(\vec{d})] \cap [\vec{c}']$ we have
$\mathfrak{A}, \vec{d}'' \sim^k_{\text{GF}, \Sigma} \mathfrak{A},p(q(\vec{d}''))$. Because $p \circ q$ is in $S_k$ (by definition of $S_k$) and because $[\vec{c}']$ 
trivially intersects $[h'(\vec{d})] \cap [\vec{c}']$, there exists, by definition of guarded $k$-bisimulations, a partial
isomorphism $p' \in S_{k-1}$ of domain $[\vec{c}']$ that coincides with $p \circ q$ on $ [h'(\vec{d})] \cap [\vec{c}']$. Then, $g' := p' \circ (f_{h'}^{-1})_{|[h',\vec{c}']}$ is the desired partial isomorphism in $\hat{H}_{k-1}$.

\begin{figure*}
	\begin{center}
		\includegraphics[scale=0.335]{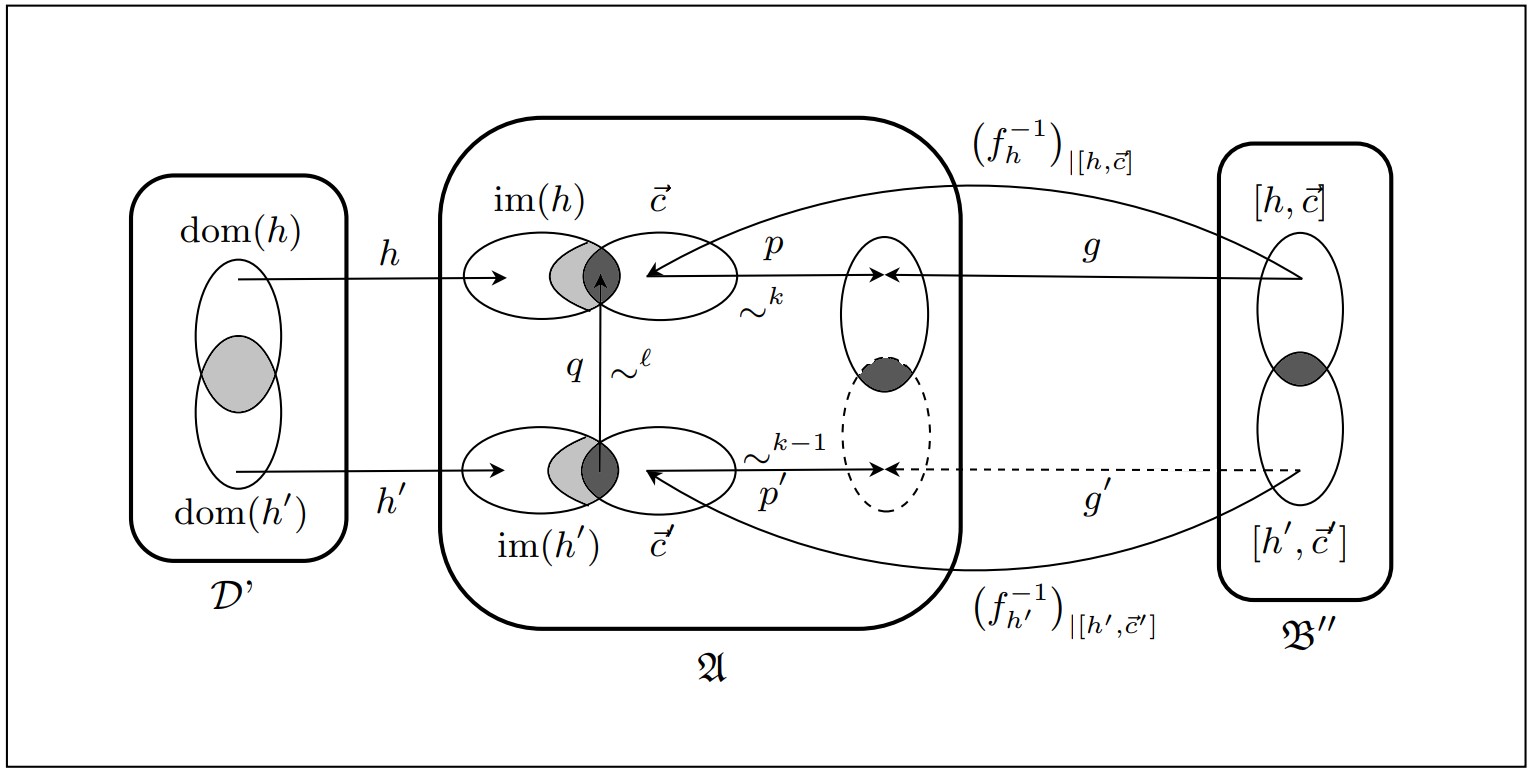}
		\caption{Illustration of proof of (Forth) condition for  $\hat{H}_{\ell},\ldots,\hat{H}_{0}$.} \label{fig:3implies4a}
	\end{center}
\end{figure*}

\medskip

(Back) The construction is illustrated in Figure~\ref{fig:3implies4b}.
Let $\vec{d}$ be guarded in $\mathfrak{A}$ such that $[\vec{d}] \cap
\text{im}(g)$. We show there exists $g' \in \hat{H}_{k-1}$ with image $[\vec{d}]$ such
that $(g')^{-1}$ coincides with $g^{-1}$ on $[\vec{d}] \cap \text{im}(g)$. By
definition of $\hat{H}_k$ we can write $g = p \circ (f_h^{-1})_{|\text{dom}(g)}$ for
some $h \in H'$ and $p \in S_k$, and we know $g'$ has to be of the form $p' \circ
(f_{h'}^{-1})_{|[\vec{d}]}$ for some $h' \in H'$. By definition of guarded
$k$-bisimulations there exists $p' \in S_{k-1}$ such that $\text{im}(p') = [\vec{d}]$
and $p'^{-1}$ coincides with $p^{-1}$ on $\text{im}(p) \cap \text{im}(p')$. Given
that $\text{im}(g) = \text{im}(p)$, if we
write $\vec{d} = (d_1, \dots, d_n)$ and $[h,p'^{-1}(\vec{d})] := ([h,p'^{-1}(d_1)], \dots, [h,p'^{-1}(d_n)])
\in
\mathfrak{B}''$, then $g' = p' \circ (f_h)^{-1}_{|[h,p'^{-1}(\vec{d})]} \in \hat{H}_{k-1}$ is as required.

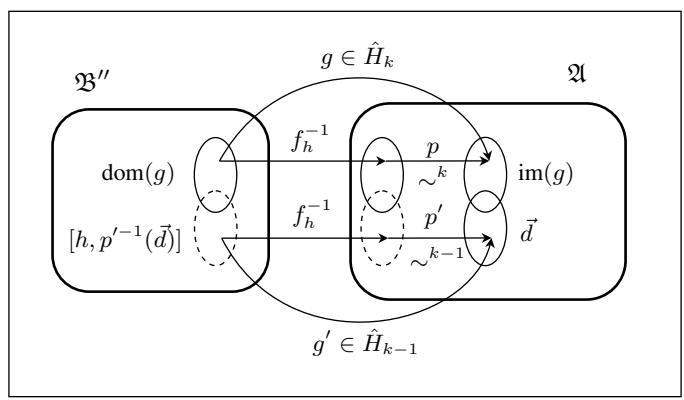
\begin{figure*}

\begin{center}
	
	\tikzset{every picture/.style={line width=0.5pt}} 
	
	\begin{tikzpicture}[x=0.75pt,y=0.75pt,yscale=-1,xscale=1]
	
	\draw   (405.7,129.74) .. controls (411.62,129.74) and (416.41,138.18) .. (416.41,148.6) .. controls (416.41,159.02) and (411.62,167.47) .. (405.7,167.47) .. controls (399.79,167.47) and (395,159.02) .. (395,148.6) .. controls (395,138.18) and (399.79,129.74) .. (405.7,129.74) -- cycle ;
	\draw   (489.7,129.74) .. controls (495.62,129.74) and (500.41,138.18) .. (500.41,148.6) .. controls (500.41,159.02) and (495.62,167.47) .. (489.7,167.47) .. controls (483.79,167.47) and (479,159.02) .. (479,148.6) .. controls (479,138.18) and (483.79,129.74) .. (489.7,129.74) -- cycle ;
	\draw    (407.7,141.6) -- (488.7,141.6) ;
	\draw [shift={(491.7,141.6)}, rotate = 180] [fill={rgb, 255:red, 0; green, 0; blue, 0 }  ][line width=0.08]  [draw opacity=0] (5,-2.5) -- (0,0) -- (5,2.5) -- (3.5,0) -- cycle    ;
	\draw    (491.7,141.6) -- (540.8,141.51) ;
	\draw [shift={(543.8,141.5)}, rotate = 539.89] [fill={rgb, 255:red, 0; green, 0; blue, 0 }  ][line width=0.08]  [draw opacity=0] (5,-2.5) -- (0,0) -- (5,2.5) -- (3.5,0) -- cycle    ;
	\draw   (541.8,129.63) .. controls (547.71,129.63) and (552.5,138.08) .. (552.5,148.5) .. controls (552.5,158.92) and (547.71,167.37) .. (541.8,167.37) .. controls (535.88,167.37) and (531.09,158.92) .. (531.09,148.5) .. controls (531.09,138.08) and (535.88,129.63) .. (541.8,129.63) -- cycle ;
	\draw   (541.8,156.63) .. controls (547.71,156.63) and (552.5,165.08) .. (552.5,175.5) .. controls (552.5,185.92) and (547.71,194.37) .. (541.8,194.37) .. controls (535.88,194.37) and (531.09,185.92) .. (531.09,175.5) .. controls (531.09,165.08) and (535.88,156.63) .. (541.8,156.63) -- cycle ;
	\draw  [dash pattern={on 2pt off 2pt}] (489.8,156.63) .. controls (495.71,156.63) and (500.5,165.08) .. (500.5,175.5) .. controls (500.5,185.92) and (495.71,194.37) .. (489.8,194.37) .. controls (483.88,194.37) and (479.09,185.92) .. (479.09,175.5) .. controls (479.09,165.08) and (483.88,156.63) .. (489.8,156.63) -- cycle ;
	\draw  [dash pattern={on 2pt off 2pt}] (405.8,156.63) .. controls (411.71,156.63) and (416.5,165.08) .. (416.5,175.5) .. controls (416.5,185.92) and (411.71,194.37) .. (405.8,194.37) .. controls (399.88,194.37) and (395.09,185.92) .. (395.09,175.5) .. controls (395.09,165.08) and (399.88,156.63) .. (405.8,156.63) -- cycle ;
	\draw    (408.8,180.5) -- (489.8,180.5) ;
	\draw [shift={(492.8,180.5)}, rotate = 180] [fill={rgb, 255:red, 0; green, 0; blue, 0 }  ][line width=0.08]  [draw opacity=0] (5,-2.5) -- (0,0) -- (5,2.5) -- (3.5,0) -- cycle    ;
	\draw    (492.8,180.5) -- (541.89,180.4) ;
	\draw [shift={(544.89,180.4)}, rotate = 539.89] [fill={rgb, 255:red, 0; green, 0; blue, 0 }  ][line width=0.08]  [draw opacity=0] (5,-2.5) -- (0,0) -- (5,2.5) -- (3.5,0) -- cycle    ;
	\draw    (407.7,141.6) .. controls (442.78,77.8) and (529.92,94.77) .. (543.08,138.77) ;
	\draw [shift={(543.8,141.5)}, rotate = 257.38] [fill={rgb, 255:red, 0; green, 0; blue, 0 }  ][line width=0.08]  [draw opacity=0] (5,-2.5) -- (0,0) -- (5,2.5) -- (3.5,0) -- cycle    ;
	\draw    (408.8,180.5) .. controls (442.15,246.16) and (534.06,226.34) .. (544.36,183.07) ;
	\draw [shift={(544.89,180.4)}, rotate = 458.89] [fill={rgb, 255:red, 0; green, 0; blue, 0 }  ][line width=0.08]  [draw opacity=0] (5,-2.5) -- (0,0) -- (5,2.5) -- (3.5,0) -- cycle    ;
	\draw  [line width=1]  (473.8,132.3) .. controls (473.8,121.36) and (482.66,112.5) .. (493.6,112.5) -- (592.7,112.5) .. controls (603.64,112.5) and (612.5,121.36) .. (612.5,132.3) -- (612.5,191.7) .. controls (612.5,202.64) and (603.64,211.5) .. (592.7,211.5) -- (493.6,211.5) .. controls (482.66,211.5) and (473.8,202.64) .. (473.8,191.7) -- cycle ;
	\draw  [line width=1]  (323.5,133.9) .. controls (323.5,123.74) and (331.74,115.5) .. (341.9,115.5) -- (414.1,115.5) .. controls (424.26,115.5) and (432.5,123.74) .. (432.5,133.9) -- (432.5,189.1) .. controls (432.5,199.26) and (424.26,207.5) .. (414.1,207.5) -- (341.9,207.5) .. controls (331.74,207.5) and (323.5,199.26) .. (323.5,189.1) -- cycle ;
	\draw   (301.5,66) -- (640.5,66) -- (640.5,260.5) -- (301.5,260.5) -- cycle ;
	
	\draw (347,141) node [anchor=north west][inner sep=0.75pt]  [font=\small] [align=left] {$\displaystyle \text{dom}( g)$};
	\draw (442,122) node [anchor=north west][inner sep=0.75pt]  [font=\small] [align=left] {$\displaystyle f^{-1}_{h}$};
	\draw (511,131) node [anchor=north west][inner sep=0.75pt]  [font=\small] [align=left] {$\displaystyle p$};
	\draw (506.41,142.6) node [anchor=north west][inner sep=0.75pt]  [font=\small] [align=left] {$\displaystyle \sim ^{k}$};
	\draw (510,162) node [anchor=north west][inner sep=0.75pt]  [font=\small] [align=left] {$\displaystyle p'$};
	\draw (502.41,183.6) node [anchor=north west][inner sep=0.75pt]  [font=\small] [align=left] {$\displaystyle \sim ^{k-1}$};
	\draw (558,169) node [anchor=north west][inner sep=0.75pt]  [font=\small] [align=left] {$\displaystyle \vec{d}$};
	\draw (557,141) node [anchor=north west][inner sep=0.75pt]  [font=\small] [align=left] {$\displaystyle \text{im}( g)$};
	\draw (330,172) node [anchor=north west][inner sep=0.75pt]  [font=\small] [align=left] {$\displaystyle [h, p'^{-1}(\vec{d})]$};
	\draw (457.45,80) node [anchor=north west][inner sep=0.75pt]  [font=\small] [align=left] {$\displaystyle g\in \hat{H}_{k}$};
	\draw (453.33,224.67) node [anchor=north west][inner sep=0.75pt]  [font=\small] [align=left] {$\displaystyle g'\in \hat{H}_{k-1}$};
	\draw (333,94) node [anchor=north west][inner sep=0.75pt]   [align=left] {$\displaystyle \mathfrak{B} ''$};
	\draw (582,91) node [anchor=north west][inner sep=0.75pt]   [align=left] {$\displaystyle \mathfrak{A}$};
	\draw (443,160) node [anchor=north west][inner sep=0.75pt]  [font=\small] [align=left] {$\displaystyle f^{-1}_{h}$};

	\end{tikzpicture}
	
\end{center}
\caption{Illustration of proof of (Back) condition for  $\hat{H}_{\ell},\ldots,\hat{H}_{0}$.} \label{fig:3implies4b}            
\end{figure*}




\medskip
\noindent
``4. $\Rightarrow$ 5.'' For an indirect proof, suppose $I_{2\ell},\ldots,I_{0}$ is a guarded 
$\Sigma$ $2\ell$-bisimulation between 
$\Bmf,\vec{a}^{\Bmf}$ and $\Amf,\vec{b}^{\Amf}$ for a model $\Bmf$ of $\Kmc$, where $\ell\geq |\Dmc|$. 
We may assume that $I_{i+1}\subseteq I_{i}$ for $i< 2\ell$.
Let $\Dmc'$ be the restriction of $\Bmf$ to $\{ c^{\Bmf} \mid c\in \text{cons}(\Dmc_{\text{con}(\vec{a})})\}$,
where we regard the elements $c^{\Bmf}$ as constants. Define $e:\Dmc_{\text{con}(\vec{a})} \rightarrow \Dmc'$
by setting $e(c)=c^{\Bmf}$. Let $H$ contain $h_{0}:\vec{a}^{\Bmf}\mapsto \vec{b}^{\Amf}$ and, for 
every guarded tuple $\vec{d}$ in $\Dmc'$ any $h:\vec{d}\mapsto \vec{c}\in I_{\ell}$.
It is easy to show that $(e,H)$ is a guarded $\Sigma$ $\ell$-embedding:
assume that $h_{i}:\vec{c}_{i}\mapsto \vec{d}_{i}\in H$ for $i=1,2$.
Let $X_{1},X_{2}$ be the images of $[\vec{c}_{1}]\cap [\vec{c}_{2}]$ under $h_{i}$
and $\vec{d}$ such that $[\vec{d}]=X_{1}$. Then we have 
$h_{i}:\vec{c}_{i}\mapsto \vec{d}_{i}\in I_{\ell}$, for $i=1,2$.
Let $p$ be the restriction of $h_{2}\circ h_{1}^{-1}$ to $X_{1}$. By definition 
$p$ is a partial isomorphism from $X_{1}$ to $X_{2}$. It is as required as 
$$
\Amf,\vec{d} \sim_{\text{openGF},\Sigma}^{\ell}\Bmf,h_{1}^{-1}(\vec{d})
\sim_{\text{openGF},\Sigma}^{\ell}\Amf,h_{2}(h^{-1}(\vec{d})).
$$
\end{proof}
We finally observe the following link between bounded guarded embeddings and homomorphisms. 
\begin{lemma}\label{lem:homguarded}
Let $\mathcal{D}, \vec{a}$ be a pointed database, $\mathfrak{A}, \vec{b}$ be a pointed structure, and $\ell \geq |\mathcal{D}|$. If $\mathcal{D}_{con(\vec{a})}, \vec{a} \preceq^\ell_{\text{openGF}, \Sigma} \mathfrak{A}, \vec{b}$ and $\mathcal{D}_{con(\vec{a})}, \vec{a} \not\rightarrow \mathfrak{A}, \vec{b}$, then there exist $d,d'$ with $d \neq d'$ in $\mathfrak{A}^{\leq |\mathcal{D}|}_{\vec{b}}$ such that $\mathfrak{A},d \sim^{\ell - |\mathcal{D}|}_{\text{openGF}, \Sigma} \mathfrak{A}, d'$.
\end{lemma}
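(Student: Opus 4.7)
The plan is to use Lemma~\ref{lem:guardemb} to unpack the guarded $\Sigma$ $\ell$-embedding into a surjective homomorphism $e\colon \mathcal{D}_{\text{con}(\vec{a})} \to \mathcal{D}'$ together with a chain $H_\ell \subseteq H_{\ell-1} \subseteq \cdots \subseteq H_0$ of partial embeddings from $\mathcal{D}'$ to $\mathfrak{A}$, where $H_\ell$ contains the seed embedding $h_0\colon e(\vec{a}) \mapsto \vec{b}$ as well as a partial embedding for every guarded set in $\mathcal{D}'$.

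The next step is to attempt to build a homomorphism from $\mathcal{D}'$ to $\mathfrak{A}$ extending $h_0$ by a breadth-first traversal of the Gaifman graph of $\mathcal{D}'$ starting from $[e(\vec{a})]$. I assign to every guarded set $[\vec{c}]$ in $\mathcal{D}'$ a \emph{level} $k(\vec{c})$, defined as the minimum length of a chain $G_0, G_1, \ldots, G_{k} = [\vec{c}]$ of guarded sets with $G_0 \cap [e(\vec{a})] \neq \emptyset$ and consecutive $G_i$ sharing at least one element (and $k(\vec{c}) = 0$ if $[\vec{c}]$ itself meets $[e(\vec{a})]$). Since every element of $\mathcal{D}'$ lies at Gaifman distance at most $|\mathcal{D}|$ from $[e(\vec{a})]$, we have $k(\vec{c}) \leq |\mathcal{D}|$. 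By induction on $k(\vec{c})$, using Condition~1 of Lemma~\ref{lem:guardemb}, I obtain a partial embedding $h_{\vec{c}} \in H_{\ell - k(\vec{c})}$ with domain $[\vec{c}]$ that agrees with the parent embedding on their shared elements. By the monotonicity $H_\ell \subseteq \cdots \subseteq H_0$, every $h_{\vec{c}}$ lies in $H_{\ell - |\mathcal{D}|}$.

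Now I claim that the union $h := \bigcup_{\vec{c}} h_{\vec{c}}$ over all guarded $[\vec{c}] \subseteq \text{dom}(\mathcal{D}')$ (including the singletons) would be a homomorphism $\mathcal{D}' \to \mathfrak{A}$ whenever it is well defined on overlaps: since singletons are guarded every element is covered, every atom of $\mathcal{D}'$ comes from a guarded tuple and is preserved by the corresponding $h_{\vec{c}}$, and well-definedness forces $h(e(a_i)) = b_i$ because the embedding $h_{\{e(a_i)\}}$ obtained from $h_0$ agrees with $h_0$ on $e(a_i)$. Thus $h \circ e$ would be a homomorphism from $\mathcal{D}_{\text{con}(\vec{a})}, \vec{a}$ to $\mathfrak{A}, \vec{b}$, contradicting the hypothesis. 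Consequently, there exist guarded $[\vec{c}_1], [\vec{c}_2]$ in $\mathcal{D}'$ and $c \in [\vec{c}_1] \cap [\vec{c}_2]$ with $d := h_{\vec{c}_1}(c) \neq h_{\vec{c}_2}(c) =: d'$. Since both $h_{\vec{c}_1}, h_{\vec{c}_2} \in H_{\ell - |\mathcal{D}|}$, Condition~2 of Lemma~\ref{lem:guardemb} (with $k_1 = k_2 = \ell - |\mathcal{D}|$) yields $\mathfrak{A}, d \sim^{\ell - |\mathcal{D}|}_{\text{openGF}, \Sigma} \mathfrak{A}, d'$. Moreover, each BFS extension step forces the image of the new guarded set to share an element with the image of the previous one, so $d, d' \in \mathfrak{A}^{\leq |\mathcal{D}|}_{\vec{b}}$.

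The main obstacle is the bookkeeping: one has to match the BFS depth in $\mathcal{D}'$ to the index decrement in $H_\ell, \ldots, H_0$ carefully, handle the fact that $\vec{a}$ itself need not be guarded (so the base case of the induction flows out of $h_0 \in H_\ell$ rather than from a guarded embedding), and verify that the agreement property propagated by Condition~1 suffices to keep all images within Gaifman distance $|\mathcal{D}|$ of $\vec{b}$.
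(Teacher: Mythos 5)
Your proposal is correct and follows essentially the same route as the paper's proof: both invoke Lemma~\ref{lem:guardemb}, propagate partial embeddings outward from $e(\vec{a})\mapsto\vec{b}$ level by level via Condition~1 (your per-guarded-set BFS choice versus the paper's accumulated sets $S_0,\dots,S_{|\mathcal{D}|}$ whose union $\overline{h}$ may be non-functional), and then derive the bisimilar pair $d\neq d'$ from the forced failure of functionality via Condition~2, with the same distance bookkeeping for $\mathfrak{A}^{\leq|\mathcal{D}|}_{\vec{b}}$.
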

\begin{proof} \
Let $e, H_\ell, \dots, H_0$ witness $\mathcal{D}_{con(\vec{a})}, \vec{a} \preceq^\ell_{\text{openGF}, \Sigma} \mathfrak{A}, \vec{b}$ as in Lemma \ref{lem:guardemb}.
We define a sequence of mappings $S_{0},\ldots,S_{\ell}$ with $S_{k}\subseteq H_{\ell-k}$ for $k\leq \ell$ as follows.
Define $S_0 = \{e(\vec{a}) \mapsto \vec{b}\}$ and assume that $S_{k}$ has been defined for some $k<\ell$. To define $S_{k+1}$, choose for every $h \in S_k$ 
and all guarded $\vec{c}$ intersecting $\text{dom}(h)$ an $h' \in
H_{\ell - k -1}$ with domain $[\vec{c}]$ that coincides with $h$ on $[\vec{c}] \cap \text{dom}(h)$ (this is possible by Condition~1 of Lemma \ref{lem:guardemb})
and add it to $S_{k+1}$. Define $\overline{h} = \bigcup(\bigcup_{k \leq
|\mathcal{D}|}S_k)$. 
We can see $\overline{h}$ as a set of pairs from
$$
\text{dom}(\mathcal{D}'_{\text{con}(e(\vec{a}))}) \times \text{dom}(\mathfrak{A}),
$$
which may or may not be functional.
We know $h$ is not a homomorphism
from $\mathcal{D}'_{\text{con}(e(\vec{a}))}, e(\vec{a})$ to $\mathfrak{A}, \vec{b}^{\Amf}$ because otherwise $h \circ e$ would witness $\mathcal{D}_{\text{con}(\vec{a})},
\vec{a} \rightarrow \mathfrak{A}, \vec{b}^{\Amf}$. However,
\begin{itemize}[nolistsep]
\item[*] $\mathcal{D}' \vDash R(c_1,\dots,c_n)$ implies $\mathfrak{A} \vDash R(d_1, \dots,
d_n)$ for every $(c_1,d_1), \dots, (c_n,d_n) \in \overline{h}$ and every $n$-ary $R \in \Sigma$, since $\overline{h}$ is a
union of partial homomorphisms.
\item[*] for every $c \in \mathcal{D}'_{\text{con}(e(\vec{a}))}$ there exists $h \in
\bigcup_{k \leq \text{dist}_{\mathcal{D}'}(c, e(\vec{a}))}S_{k}$ such that $c \in \text{dom}(h)$, so $\overline{h}$ is defined on the
entire underlying set of $\mathcal{D}'_{\text{con}(e(\vec{a}))}$, as $\text{dist}_{\mathcal{D}'}(c,
e(\vec{a})) \leq |\mathcal{D}'|$ trivially and $|\mathcal{D}'| \leq |\mathcal{D}|$ as $e$ is surjective.
\item[*] $e(\vec{a}) \mapsto \vec{b}$ is included in $\overline{h}$.
\end{itemize}
Therefore the only possible reason as to why $\overline{h}$ is not a homomorphism witnessing $\mathcal{D}_{\text{con}(\vec{a})}, \vec{a} \rightarrow \mathfrak{A}, \vec{b}$ is that $\overline{h}$ is not functional, i.e. there exist $c \in \text{dom}(\mathcal{D}'_{\text{con}(e(\vec{a}))})$ and $d,d' \in \text{dom}(\mathfrak{A})$ such that $d
\neq d'$ and $(c,d), (c,d') \in \overline{h}$. As every $h$
included in $\overline{h}$ is functional, that implies there
exist $h,h' \in \bigcup_{k \leq |\mathcal{D}|} S_k$ such that $h(c) = d$ and
$h'(c) = d'$. There exist $k_1, k_2 \geq \ell -
|\mathcal{D}|$ such that $h \in H_{k_1}$ and $h' \in
H_{k_2}$. By condition (2) of Lemma \ref{lem:guardemb} we
get $\mathfrak{A},d
\sim^{\min(k_1,k_2)}_{\text{openGF},\Sigma}
\mathfrak{A},d'$, hence $\mathfrak{A},d \sim^{\ell -
|\mathcal{D}|}_{\text{openGF},\Sigma} \mathfrak{A},d'$.
Finally, $d,d' \in \mathfrak{A}^{\leq
|\mathcal{D}|}_{\vec{b}}$ follows from the fact that 
$\text{dist}_{\mathfrak{A}}(h(c), \vec{b}) \leq k$
for any $c \in \text{dom}(h)$ such that $h \in S_k$. This can be proved by
induction on $k$.
\end{proof}
We are in a position now to prove the equivalence of Points~1 to 4 and Point~5 of Theorem~\ref{thm:critGF1}. In fact, the equivalence follows from the following lemma since Point~1 of Lemma~\ref{lem:fin} is equivalent to Point~1 of Theorem~\ref{thm:critGF1} (by Lemma~\ref{thm:gfcrit1}) and Point~2 of Lemma~\ref{lem:fin} coincides with Point~5 of Theorem~\ref{thm:critGF1}.
\begin{lemma}\label{lem:fin}
   Assume that $(\Kmc,P,\{\vec{b}\})$ is a labeled GF-KB with $\Kmc=(\Omc,\Dmc)$, $\Sigma=\text{sig}(\Kmc)$, and $\vec{b}=(b_{1},\ldots,b_{n})$. Then the following conditions are equivalent:
   \begin{enumerate}
   	\item there exists a model $\Amf$ of $\Kmc$ and $\ell\geq 0$ such for all $\vec{a}\in P$: 
   	$\Dmc_{\text{con}(\vec{a})},\vec{a} \not\preceq_{\text{openGF},\Sigma}^{\ell}\Amf,\vec{b}^{\Amf}$;
   	\item there exists a model $\Amf$ of $\Kmc$ such that for all $\vec{a}\in P$:
   	%
   	\begin{enumerate}
   		\item $\Dmc_{\text{con}(\vec{a})},\vec{a}\not\rightarrow \Amf,\vec{b}^{\Amf}$ and
   		\item if the set $I$ of all $i$ such that
   		$\text{tp}_{\Kmc}(\Amf,b_{i}^{\Amf})$ is connected and openGF-complete is not empty,
   		then 
   		\begin{enumerate}
   			
   			\item $J=\{1,\dots,n\}\setminus I \neq \emptyset$
   			and
   			$\Dmc_{\text{con}(\vec{a}_{J})},\vec{a}_{J}\not\rightarrow \Amf,\vec{b}_{J}^{\Amf}$ or
   			
   			\item $\text{tp}_{\Kmc}(\Amf,\vec{b}^{\Amf})$ is not realizable in 
   			$\Kmc,\vec{a}$.
   			
   		\end{enumerate}
   	\end{enumerate}
   	\end{enumerate}
\end{lemma}

\begin{proof} \
``1. $\Rightarrow$ 2.'' Assume that Point~1 holds for $\Amf$ and $\ell_{0}\geq 0$. Assume $\vec{a}\in P$ is given.
As $\Dmc_{\text{con}(\vec{a})},\vec{a} \rightarrow \Amf,\vec{b}^{\Amf}$ implies
$\Dmc_{\text{con}(\vec{a})},\vec{a} \preceq_{\text{openGF},\Sigma}^{\ell}\Amf,\vec{b}^{\Amf}$
for all $\ell\geq 0$, we obtain that (a) holds.
To show that (b) holds for $\Amf$ and $\vec{a}$, 
assume that $I$ as defined in the lemma is not empty
and that $\text{tp}_{\Kmc}(\Amf,\vec{b}^{\Amf})$ is realizable in 
$\Kmc,\vec{a}$. Take a model $\Bmf$ witnessing this.  
Consider the maximal sets $I_{1},\ldots,I_{k}\subseteq \{1,\ldots,n\}$ such that 
$\vec{b}_{I_{j}}^{\Bmf}$ is in a connected component $\Bmf_{j}$ of $\Bmf$. 
Then there exists at least one $j$ such that 
$\text{tp}_{\Kmc}(\Amf,\vec{b}_{I_{j}}^{\Amf})$ is not openGF-complete: otherwise
$\Bmf,\vec{a}_{I_{j}}^{\Bmf}\sim_{\text{openGF},\Sigma}\Amf,\vec{b}_{I_{j}}^{\Amf}$ for all $j$
and so $\Dmc_{\text{con}(\vec{a}_{I_{j}})},\vec{a}_{I_{j}} \preceq_{\text{openGF},\Sigma}^{\ell}\Amf,\vec{b}_{I_{j}}^{\Amf}$
for all $\ell\geq 0$, thus $\Dmc_{\text{con}(\vec{a})},\vec{a} \preceq_{\text{openGF},\Sigma}^{\ell}\Amf,\vec{b}^{\Amf}$,
for all $\ell\geq 0$, a contradiction.

For any $j$ such that $\text{tp}_{\Kmc}(\Amf,\vec{b}_{I_{j}}^{\Amf})$ is not openGF-complete 
we have by Lemma~\ref{lem:openGF2}
that $\text{tp}_{\Kmc}(\Amf,\vec{b}_{i}^{\Amf})$ is not openGF-complete for some $i\in I_{j}$.
Therefore $J\not=\emptyset$.
Assume now for a proof by contradiction that 
$\Dmc_{\text{con}(\vec{a}_{J})},\vec{a}_{J}\rightarrow \Amf,\vec{b}_{J}^{\Amf}$.
Then $\Dmc_{\text{con}(\vec{a}_{J})},\vec{a}_{J}\preceq_{\text{openGF},\Sigma}^{\ell} \Amf,\vec{b}_{J}^{\Amf}$,
for any $\ell\geq 0$. By Lemma~\ref{lem:openGF2}, $\text{tp}_{\Kmc}(\Amf,\vec{b}_{I}^{\Amf})$ is openGF-complete
and so $\Bmf,\vec{a}_{I}^{\Bmf}\sim_{\text{openGF},\Sigma}\Amf,\vec{b}_{I}^{\Amf}$,
and therefore
$\Dmc_{\text{con}(\vec{a}_{I})},\vec{a}_{I} \preceq_{\text{openGF},\Sigma}^{\ell}\Amf,\vec{b}_{I}^{\Amf}$, for every 
$\ell\geq 0$.
As $\Dmc_{\text{con}(\vec{a}_{I})}$ and $\Dmc_{\text{con}(\vec{a}_{J})}$ must be disjoint, it follows
that $\Dmc_{\text{con}(\vec{a})},\vec{a} \preceq_{\text{openGF},\Sigma}^{\ell}\Amf,\vec{b}^{\Amf}$ for any $\ell\geq 0$,
and we have derived a contradiction.

\medskip

``2. $\Rightarrow$ 1.'' Assume Conditions~(a) and (b) hold for some model $\Amf$ of $\Kmc$ for all $\vec{a}\in P$. 
As GF is finitely controllable there exists a finite such model $\Amf$. 
Assume that the set $I$ defined in the lemma is empty. The case $I\not=\emptyset$ is considered in exactly the same way and therefore omitted.
Our aim is to show that $\Dmc_{\text{con}(\vec{a})},\vec{a} \not\preceq_{\text{openGF},\Sigma}^{\ell}\Cmf,\vec{b}^{\Amf}$ for a variant 
$\Cmf$ of $\Amf$ and for sufficiently
large $\ell$, where $\Sigma=\text{sig}(\Kmc)$. 

Let $X$ be the set of $i\leq n$ such that $\Phi_{i}(x)=\text{tp}_{\Kmc}(\Amf,b_{i}^{\Amf})$ is disconnected. If $X=\{1,\ldots,n\}$,
then $\neg \bigwedge_{i\in X}\Phi_{i}(x_{i})$ separates $(\Kmc,P,\{\vec{b}\})$ and we are done. Otherwise, let $\Amf_{i}$, $i\in X$, 
be the maximal connected components of $\Amf$ containing the singleton $b_{i}^{\Amf}$. Since $\text{tp}_{\Kmc}(\Amf,b_{i}^{\Amf})$ is disconnected we have $\text{dom}(\Amf_{i})=\{b_{i}^{\Amf}\}$, for all $i\in X$.
We partition the remaining part of $\Amf$ without $\Amf_{i}$, $i\in X$, into components as follows. 
Define a binary relation
$E$ on the class of $\Kmc$-types $\Phi(x)$ with one free variable $x$ such that $(\Phi(x),\Psi(x))\in E$ iff there exists
a model $\Amf$ of $\Kmc$ and nodes $a,b$ in $\text{dom}(\Amf)$ such that $a,b$ are in the same connected component
in $\Amf$ and $a$ and $b$ realize $\Phi$ and $\Psi$, respectively. It is easy to see that $E$ is an equivalence relation. Let $\Amf'$ and $\{\Emf \mid \Emf\in K\}$ be the maximal components of $\Amf$ 
without $\{b_{i}^{\Amf}\mid i\in X\}$ such that:
\begin{itemize}
\item all nodes in any $\Emf$ are connected to a node in $\{c^{\Amf}\mid c\in \text{cons}(\Dmc)\}$ and all $\Kmc$-types $\Phi(x)$ 
realized in $\Emf$ are $E$-equivalent;
\item no node in $\Amf'$ is connected to a node in $\{c^{\Amf}\mid c\in \text{cons}(\Dmc)\}$.
\end{itemize}
Observe that $\Amf$ is the disjoint union of $\Amf_{i}$, $i\in X$, $\Amf'$, and the structures in $K$.
Let $\Emf\in K$ and let $\Dmc'$ be the restriction of $\Dmc$ to the constants $c\in \text{cons}(\Dmc)$ such that $c^{\Emf}\in \text{dom}(\Emf)$.
Let $I_{0}$ be the set of $i$ with $b_{i}^{\Amf}\in \text{dom}(\Emf)$. 
We aim to construct a model $\Cmf$ of $(\Omc,\Dmc')$ such that 

\begin{figure*}
\begin{center}

	\tikzset{every picture/.style={line width=0.5pt}} 
	

	\caption{Construction of $\mathfrak{C}$.} \label{fig:GFthm7}
\end{center}
\end{figure*}

\medskip
\noindent
($\ast$) if $\Dmc_{\text{con}(\vec{a}_{I_{0}})},\vec{a}_{I_{0}} \not\rightarrow \Amf,\vec{b}_{I_{0}}^{\Amf}$, then there exists $\ell\geq 0$ such that
$\Dmc_{\text{con}(\vec{a}_{I_{0}})},\vec{a}_{I_{0}} \not\preceq_{\text{openGF},\Sigma}^{\ell} \Cmf,\vec{b}_{I_{0}}^{\Amf}$.

\medskip
For any model $\Cmf$ of $\Dmc'$ and $d\in \text{dom}(\Cmf)$ we let the distance $\text{dist}_{\Cmf}(\Dmc',d)=\ell$ iff
$\ell$ is minimal such $\text{dist}(c^{\Cmf},d)\leq \ell$ for at least one $c\in \text{cons}(\Dmc')$. We denote by 
$\Cmf^{\leq \ell}_{\Dmc'}$ the substructure of $\Cmf$ induced by the set of 
nodes $d$ in $\Cmf$ with $\text{dist}_{\Cmf}(\Dmc',d)\leq \ell$.
We construct for any $\ell\geq 0$ a model $\Cmf$ of $\Omc$ that coincides with $\Emf$ on $\{c^{\Emf}\mid c\in \text{cons}(\Dmc')\}$
such that $\Cmf^{\leq \ell}_{\Dmc'}$ is finite and there exists $\ell'\geq \ell$ with
\begin{itemize}
\item[(a)] $\Cmf^{\leq \ell}_{\Dmc'},\vec{b}_{I_{0}}^{\Amf}\rightarrow \Amf,\vec{b}_{I_{0}}^{\Amf}$;
\item[(b)] for any two distinct $d_{1},d_{2}\in \text{dom}(\Cmf^{\leq \ell}_{\Dmc'})$, 
$\Cmf,d_{1}\not\sim_{\text{openGF},\Sigma}^{\ell'}\Cmf,d_{2}$.
\end{itemize} 
We first show that ($\ast$) follows. Assume $\ell'$ is such that (b) holds. Let $\ell''=\ell'+|\Dmc|$ and $\ell\geq |\Dmc|$. 
Assume that $\Dmc_{\text{con}(\vec{a}_{I_{0}})},\vec{a}_{I_{0}}\preceq_{\text{openGF},\Sigma}^{\ell''} \Cmf,\vec{b}_{I_{0}}^{\Amf}$
but $\Dmc_{\text{con}(\vec{a}_{I_{0}})},\vec{a}_{I_{0}}\not\rightarrow \Amf,\vec{b}_{I_{0}}^{\Amf}$.
By Condition~(a),  
$$
\Dmc_{\text{con}(\vec{a}_{I_{0}})},\vec{a}_{I_{0}}\not\rightarrow \Cmf^{\leq \ell''}_{\Dmc'},\vec{b}_{I_{0}}^{\Cmf}
$$
By Lemma~\ref{lem:homguarded}, there exist $d,d'$ with $d\not=d'$ in $\Cmf^{\leq |\Dmc|}_{\Dmc'}$
such that $\Cmf,d \sim_{\text{openGF},\Sigma}^{\ell'} \Cmf,d'$ and we have 
derived a contradiction to Condition~(b).

\medskip

We come to the construction of $\Cmf$. It is helpful to consider in parallel the simpler construction of $\Cmf$ in the proof of Theorem~\ref{thm:char12}. Note that the tasks are almost the same as Condition (a) corresponds to Condition (i) in Theorem~\ref{thm:char12} and Condition (b) is a bounded version of Condition (ii) in Theorem~\ref{thm:char12}. Assume $\ell\geq 0$ is given. To construct $\Cmf$, let $T_{\Emf}$ be the set of $\Kmc$-types $\Phi(x)$
that are $E$-equivalent to some $\Kmc$-type realized in $\Emf$ with $\Emf\in K$. Observe that $T_{\Emf}$ is an equivalence class for the relation $E$, by construction of $\Emf$.
No $\Kmc$-type in $T_{\Emf}$ is openGF-complete and, by construction, we find a
sequence
$$
\sigma = \Phi_{0}^{\sigma},\Phi_{1}^{\sigma},\Phi_{2}^{\sigma}
$$
such that for any $\Kmc$-type $\Phi(x)\in T_{\Emf}$ we find a sequence witnessing
openGF-incompleteness of $\Phi(x)$ that ends with $\sigma$.
As a first step of the construction of $\Cmf$, 
we define a model $\Bmf$ of $\Kmc$ by repeatedly 
forming the partial unfolding of $\Emf$ so that 

\medskip
\noindent
$(\text{path})$ from any $f_{0}\in \Bmf^{\leq \ell}_{\Dmc'}$ there exists
a strict path $R_{1}^{f_{0}}(\vec{d}_{1}),\ldots,R_{k}^{f_{0}}(\vec{d}_{k})$ from $f_{0}$ to some $f_{1}$
such that $\text{dist}_{\Bmf}(\Dmc',f_{1}))=\ell$.

\medskip

For the inductive construction of $\Bmf$, let $\Bmf_{0}=\Amf$ and include all 
$d\in \text{dom}(\Amf^{\leq \ell}_{\Dmc'})$ into the frontier $F_{0}$.
Assume $\Bmf_{i}$ and frontier $F_{i}$ have been constructed. If $F_{i}$ is empty, we are done
and set $\Bmf=\Bmf_{i}$. Otherwise take $d \in F_{i}$ and let $d'\not=d$ be any element contained
in a joint guarded set with $d$ in $\Bmf_{i}$. Assume $k=\text{dist}_{\Bmf_{i}}(\Dmc',d)$.
Then let $\Bmf_{i+1}$ be the partial unfolding $(\Bmf_{i})_{\vec{d}}$ 
of $\Bmf_{i}$ for the tuple $\vec{d}=(d,d',d,d',\ldots)$ of length $\ell-k$, and obtain $F_{i+1}$ by removing $d$ from
$F_{i}$ and adding all new nodes in 
$\text{dom}((\Bmf_{i+1})^{\leq \ell}_{\Dmc'})$.
Clearly this construction terminates after finitely many steps and $(\text{path})$ holds, see Lemma~\ref{lem:partialunfolding}.
\medskip

Let $L$ denote the set of all $d$ in $\Bmf$ with $\text{dist}_{\Bmf}(\Dmc',d)=\ell$
and let $L'$ denote the set of all $\vec{d}$ of arity $\geq 2$
in $\Bmf$ such that there exist $R$ with $\Bmf\models R(\vec{d})$
and $d\in [\vec{d}]$ with $\text{dist}_{\Bmf}(\Dmc',d)=\ell$.
We obtain $\Cmf$ by keeping $\Bmf^{\leq \ell}_{\Dmc'}$ and the guarded sets that intersect with
it and attaching to every $d\in L$ and $\vec{d}\in L'$ 
guarded tree decomposable $\Fmf_{d}$ and $\Fmf_{\vec{d}}'$
such that in the resulting model no $d$ in $L$ is guarded $\Sigma$ $\ell'$-bisimilar to any other 
$d'$ in $\Bmf^{\leq \ell}_{\Dmc'}$ for a sufficiently large $\ell'$. It then directly follows that $\Cmf$
satisfies Conditions~(a) and (b). 

The construction of $\mathfrak{F}_{\vec{d}}'$ is straightforward. Fix $\vec{d}\in L'$. Let $\Phi_{0}':=\text{tp}_{\Kmc}(\Bmf,\vec{d})$.
Then $\mathfrak{F}_{\vec{d}}'$ is defined as the tree decomposable model $\Amf_{r}'$ of
$\Omc$ with tree decompositon $(T',E',\text{bag}')$ and root $r$ such that
$\Amf_{r}'\models \Phi_{0}'(\vec{d})$ and $\text{bag}(r)=[\vec{d}]$ and for every $\Kmc$-type 
$\Psi_{1}(\vec{x}_{1})$ realized by some $\vec{c}$ with $[\vec{c}]=\text{bag}(t)$ 
and $\Kmc$-type $\Psi_{2}(\vec{x}_{2})$ coherent with $\Psi_{1}(\vec{x}_{1})$ there exists a successor 
$t'$ of $t$ in $T$ such that $\Psi_{1}(\vec{x}_{1})\cup \Psi_{2}(\vec{x}_{2})$ is 
realized in $\text{bag}(t) \cup \text{bag}(t')$ under an assignment $\mu$ of the variables 
$[\vec{x}_{1}]\cup [\vec{x}_{2}]$ such that $\mu(\vec{x}_{1})=\vec{c}_{1}$.
The only property of $\mathfrak{F}_{\vec{d}}'$ we need is that 
$$
\mathfrak{F}_{\vec{d}}'\models \forall \vec{x}_{1}(\Phi_{1}^{\sigma}\rightarrow \exists \vec{x}'\Phi_{2}^{\sigma})
$$
where here and in what follows $\vec{x}_{1}$ are the variables in $\Phi_{1}^{\sigma}$ and
$\vec{x}'$ are the variables in $\Phi_{2}^{\sigma}$ that are not in $\Phi_{1}^{\sigma}$.

The construction of $\mathfrak{F}_{d}$ is more involved, but follows closely the construction of $\mathfrak{F}_{d}$ in the proof of Theorem~\ref{thm:char12}. Let $L_{\Omc}=2^{2^{||\Omc||}}+1$ and take for any $d\in L$ a number 
$$
N_{d}>|\Bmf^{\leq \ell+1}_{\Dmc'}|+2(L_{\Omc}+1)
$$
such that $|N_{d}-N_{d'}|>2(L_{\Omc}+1)$ for $d\not=d'$. 
Fix $d\in L$ and let $\Phi_{0}(x)=\text{tp}_{\Kmc}(\Amf,d)$. Then $\Phi_{0}(x)\in T_{\Emf}$ and we find a sequence
$\Phi_{0}(\vec{x}_{0}),\ldots,\Phi_{n_{d}}(\vec{x}_{n_{d}}),\Phi_{n_{d}+1}(\vec{x}_{n_{d}+1})$ 
that witnesses openGF incompleteness of $\Phi_{0}(x_{0})$
and ends with $\Phi_{0}^{\sigma}\Phi_{1}^{\sigma}\Phi_{2}^{\sigma}$. 
By Lemma~\ref{lem:critalci} we may assume that $1\leq n_{d}\leq L_{\Omc}+1$. 
Let
$$
\Psi(x)= \exists \Sigma^{L_{\Omc}+1}.(\Phi_{1}^{\sigma}\wedge \neg \exists \vec{x}'\Phi_{2}^{\sigma}),
$$
where $\exists \Sigma^{k}.\chi$ stands for the disjunction of all openGF formulas stating 
that the exists a path from $x$ along relations in $\Sigma$ of length at most $k$
to a tuple where $\chi$ holds.
To construct $\Fmf_{d}$ consider the guarded tree decomposable model $\Amf_{r}$ of
$\Omc$ with guarded tree decomposition $(T,E,\text{bag})$ and root $r$ such that
$\Amf_{r}\models \Phi_{0}(c_{0})$ for some 
constant $c_{0}$ with $\text{bag}(r)=\{c_{0}\}$ and for every $\Kmc$-type 
$\Psi_{1}(\vec{x}_{1})$ realized by some $\vec{c}$ with $[\vec{c}]=\text{bag}(t)$ 
and $\Kmc$ type $\Psi_{2}(\vec{x}_{2})$ coherent with $\Psi_{1}(\vec{x}_{1})$ there exists a successor 
$t'$ of $t$ in $T$ such that $\Psi_{1}(\vec{x}_{1})\cup \Psi_{2}(\vec{x}_{2})$ is 
realized in $\text{bag}(t) \cup \text{bag}(t')$ under an assignment $\mu$ of the variables 
$[\vec{x}_{1}]\cup [\vec{x}_{2}]$ such that $\mu(\vec{x}_{1})=\vec{c}_{1}$, except if 
$\Psi_{1}\wedge \neg \exists \vec{x}'\Psi_{2}$ (with $\vec{x}'$ the sequence of variables
in $\vec{x}_{2}$ which are not in $\vec{x}_{1}$) is equivalent to 
$\Phi_{1}^{\sigma}\wedge \neg \exists \vec{x}'\Phi_{2}^{\sigma}$ and 
$\text{dist}_{\Amf_{r}}(\text{bag}(t),\text{bag}(r))\leq N_{d}+ L_{\Omc}+1$.
Observe that 
\begin{itemize}
	\item $\Amf_{r}\models \Psi(e)$ for all $e$ with $\text{dist}_{\Amf_{r}}(c_{0},e)\leq N_{d}$;
	\item $\Amf_{r}\models \neg \Psi(e)$ for all $e$ with $\text{dist}_{\Amf_{r}}(c_{0},e)> 
	N_{d}+2(L_{\Omc}+1)$.
\end{itemize}
Moreover, $\Amf_{r}$ contains a strict path 
$$
R_{1}(\vec{e}_{1}),\ldots,R_{n_{d}}(\vec{e}_{n_{d}}),\ldots ,R_{n_{d}}(\vec{e}_{n_{d}+2N_{d}})
$$
from $e_{0}\in [\vec{e}_{1}]$ to $c_{0}\in [\vec{e}_{n_{d}+2N_{d}}]$ such that $\Phi_{0}(x)$
is realized in $e_{0}$. Then $\Fmf_{d}$ is obtained from $\Amf_{r}$ by renaming $e_{0}$ to $d$. 
Finally $\Cmf$ is obtained by hooking $\Fmf_{d}$ at $d$ to $\Bmf^{\leq \ell}_{\Dmc'}$ for all $d\in L$.
$\Cmf$ is a model of $\Kmc$ since $\Phi_{0}(x)$ is realized in $e_{0}$ and $d$. 
Moreover, it clearly satisfies Condition~(a). For Condition~(b) assume $d\in L$ is 
as above. Let
$$
\varphi_{d}(x)=\forall \Sigma^{N_{d}}.\Psi 
$$
where $\forall \Sigma^{k}.\chi$ stands for $\neg \exists \Sigma^{k}.\neg \chi$.
Then $\Cmf\models \varphi_{d}(c_{0})$ and by construction no node that is not in $\text{dom}(\Fmf_{d})$ satisfies $\varphi_{d}$.
Condition~(b) now follows from the fact that there exists a path from $d$ to a node
satisfying $\varphi_{d}$ that is shorter than any such path in $\Cmf$ from any other node in 
$\Bmf^{\leq \ell}_{\Dmc'}$ to a node satisfying $\varphi_{d}$.

\medskip
We have proved ($\ast$). We now aim to extend ($\ast$) and show that $\Dmc_{\text{con}(\vec{a})},\vec{a}
\not\preceq_{\text{openGF},\Sigma}^{\ell} \Cmf,\vec{b}^{\Amf}$, for appropriately defined $\Cmf$, all $\vec{a}\in P$, and
for sufficiently large $\ell$.
Let $\vec{a}\in P$ be fixed.
If $\Dmc_{\text{con}(\vec{a}_{I_{0}})},\vec{a}_{I_{0}} \not\rightarrow \Amf,\vec{b}_{I_{0}}^{\Amf}$
for some $I_{0}$ associated to some $\Emf\in K$, then, by ($\ast$),
$\Dmc_{\text{con}(\vec{a}_{I_{0}})},\vec{a}_{I_{0}} \not\preceq_{\text{openGF},\Sigma}^{\ell} \Cmf,\vec{b}_{I_{0}}^{\Cmf}$,
for some $\ell$, and therefore $\Dmc_{\text{con}(\vec{a})},\vec{a}
\not\preceq_{\text{openGF},\Sigma}^{\ell} \Cmf,\vec{b}^{\Amf}$, for some $\ell$, and we are done.
Now assume that $\Dmc_{\text{con}(\vec{a}_{I_{0}})},\vec{a}_{I_{0}} \rightarrow \Amf,\vec{b}_{I_{0}}^{\Amf}$
for all $I_{0}$ associated with any $\Emf\in K$. We know that 
$\Dmc_{\text{con}(\vec{a})},\vec{a} \not\rightarrow \Amf,\vec{b}^{\Amf}$. But then either
(i) $\Dmc_{\text{con}(a_{i})},a_{i}\not\preceq_{\text{openGF},\Sigma}^\ell \Cmf,b_{i}^{\Cmf}$ for some $i\in X$
(and we are done) or (ii) some $a_{i},a_{j}$ with $i\not=j$ and $i,j\in X$ are connected in $\Dmc$,
or (iii) some $a_{i}$, $i\in X$ and $a\in [\vec{a}_{I_{0}}]$ with $I_{0}$ linked to some $\Emf\in K$
are connected in $\Dmc$ or (iv) some $a\in [\vec{a}_{I_{0}}]$ and $a'\in [\vec{a}_{I_{0}'}]$ with $I_{0}$ 
and $I_{0}'$ linked to distinct $\Emf\in K$
are connected in $\Dmc$. In all these cases it follows that $\Dmc_{\text{con}(\vec{a})},\vec{a}
\not\preceq_{\text{openGF},\Sigma}^{\ell} \Cmf,\vec{b}^{\Amf}$, for sufficiently large $\ell$.
\end{proof}

\section{Proofs for Section~\ref{sec:discussion}}

\thmundeuna*

\begin{proof}\ The proof is by reduction from the infinite tiling
  problem. Recall the definition of a tiling system $(T,H,V)$ and a
  solution to $(T,H,V)$ from the proof of Theorem~\ref{thm:fo2fo}.
  Given a tiling system $(T,H,V)$, we construct a labeled
  $\mathcal{ALCI}$-KB $(\Kmc,P,\{b\})$ with $\Kmc=(\Omc,\Dmc)$ as
  follows:
\begin{align}
  \Omc & = \{B \sqsubseteq \exists U.P_{0},\label{eq:succ1}\\
  & \phantom{ {}={} } P_{0} \sqsubseteq \exists U^{-}.\top \sqcap
  \exists R_{h}. P_{1} \sqcap \exists R_{v}. P_{2}, \label{eq:succ2}\\
  & \phantom{ {}={} } P_{1} \sqsubseteq \exists U^{-}.\top \sqcap
  \exists R_{h}. P_{0} \sqcap \exists R_{v}. P_{3},{}
  \label{eq:succ3}\\
  & \phantom{ {}={} } P_{2} \sqsubseteq \exists U^{-}.\top \sqcap
  \exists R_{h}. P_{3} \sqcap \exists R_{v}. P_{0},{}
  \label{eq:succ4}\\
  & \phantom{ {}={} } P_{3} \sqsubseteq \exists U^{-}.\top \sqcap
  \exists R_{h}. P_{2} \sqcap \exists R_{v}.
  P_{1},{}\label{eq:succ5}\\
  & \phantom{ {}={} } P_{i} \sqcap P_{j}\sqsubseteq \bot, & \text{for
  $0\leq i < j \leq 3$} \label{eq:succ6}\\
  & \phantom{ {}={} } B \sqcap P_{i} \sqsubseteq \bot, & \text{for
  $0\leq i \leq 3$}\label{eq:succ7}\\ & \phantom{ {}={} } P_{0} \sqcup
  P_{1} \sqcup P_{2} \sqcup P_{3} \sqsubseteq \bigsqcup_{t\in T}
  (A_{t} \sqcap \bigsqcap_{t'\in T\setminus\{t\}} \neg
  A_{t'}),{}\label{eq:succ8}\\ & \phantom{ {}={} } A_t \sqsubseteq
  \forall R_v.\bigsqcup_{(t,t')\in V} A_{t'}, & \text{for all $t\in
  T$} \label{eq:succ9}\\
  & \phantom{ {}={} } A_t \sqsubseteq \forall R_h.\bigsqcup_{(t,t')\in
  H} A_{t'} & \text{for all $t\in T$} \label{eq:succ10}\\ &\phantom{
    {}={} }\! \} \notag \end{align}
Let $\Dmc=\Dmc_{a_{0}}\cup \Dmc_{a_{1}}\cup \Dmc_{a_{2}} \cup
\Dmc_{b}$, where 
\begin{eqnarray*}
  \Dmc_{a_{0}} & = &
  \{U(a_{0},c_{1}),R_{h}(c_{1},c_{2}),R_{v}(c_{1},c_{3}),R_{v}(c_{2},c_{4}),R_{h}(c_{3},c_{5})\}
  \\
  \Dmc_{a_{1}} & = &
  \{U(a_{1},d_{1}),R_{h}(d_{1},d_{2}),U(d_{3},d_{2})\}\\
  \Dmc_{a_{2}} & = &
  \{U(a_{2},e_{1}),R_{v}(e_{1},e_{2}),U(e_{3},e_{2})\}\\
  \Dmc_{b}     & = & \{B(b)\}
\end{eqnarray*}
and let $P=\{a_{0},a_{1},a_{2}\}$. The connected components of the positive examples $P$ can be depicted as follows:
\begin{center}
	
\tikzset{every picture/.style={line width=0.5pt}} 

\begin{tikzpicture}[x=0.75pt,y=0.75pt,yscale=-1,xscale=1]
	
	\draw    (271.22,275.24) -- (271.55,247.87) ;
	\draw [shift={(271.58,244.88)}, rotate = 450.68] [fill={rgb, 255:red, 0; green, 0; blue, 0 }  ][line width=0.08]  [draw opacity=0] (5,-2.5) -- (0,0) -- (5,2.5) -- (3.5,0) -- cycle    ;
	\draw    (277.01,279.78) -- (303.58,279.86) ;
	\draw [shift={(306.58,279.88)}, rotate = 180.19] [fill={rgb, 255:red, 0; green, 0; blue, 0 }  ][line width=0.08]  [draw opacity=0] (5,-2.5) -- (0,0) -- (5,2.5) -- (3.5,0) -- cycle    ;
	\draw    (276.01,239.79) -- (296.36,239.67) ;
	\draw [shift={(299.36,239.65)}, rotate = 539.6700000000001] [fill={rgb, 255:red, 0; green, 0; blue, 0 }  ][line width=0.08]  [draw opacity=0] (5,-2.5) -- (0,0) -- (5,2.5) -- (3.5,0) -- cycle    ;
	\draw    (311.16,275.19) -- (311.5,256.65) ;
	\draw [shift={(311.56,253.65)}, rotate = 451.07] [fill={rgb, 255:red, 0; green, 0; blue, 0 }  ][line width=0.08]  [draw opacity=0] (5,-2.5) -- (0,0) -- (5,2.5) -- (3.5,0) -- cycle    ;
	\draw    (386.34,278.62) -- (406.69,278.5) ;
	\draw [shift={(409.69,278.49)}, rotate = 539.6700000000001] [fill={rgb, 255:red, 0; green, 0; blue, 0 }  ][line width=0.08]  [draw opacity=0] (5,-2.5) -- (0,0) -- (5,2.5) -- (3.5,0) -- cycle    ;
	\draw    (443.29,257.06) .. controls (455.49,266.09) and (464.45,279.91) .. (421.7,278.09) ;
	\draw [shift={(419.01,277.96)}, rotate = 363.14] [fill={rgb, 255:red, 0; green, 0; blue, 0 }  ][line width=0.08]  [draw opacity=0] (5,-2.5) -- (0,0) -- (5,2.5) -- (3.5,0) -- cycle    ;
	\draw    (481.34,278.31) -- (501.69,278.2) ;
	\draw [shift={(504.69,278.18)}, rotate = 539.6700000000001] [fill={rgb, 255:red, 0; green, 0; blue, 0 }  ][line width=0.08]  [draw opacity=0] (5,-2.5) -- (0,0) -- (5,2.5) -- (3.5,0) -- cycle    ;
	\draw    (538.29,256.75) .. controls (550.49,265.79) and (559.45,279.6) .. (516.7,277.79) ;
	\draw [shift={(514.01,277.65)}, rotate = 363.14] [fill={rgb, 255:red, 0; green, 0; blue, 0 }  ][line width=0.08]  [draw opacity=0] (5,-2.5) -- (0,0) -- (5,2.5) -- (3.5,0) -- cycle    ;
	\draw  [fill={rgb, 255:red, 0; green, 0; blue, 0 }  ,fill opacity=1 ] (273.3,279.05) .. controls (273.3,278.08) and (272.52,277.3) .. (271.55,277.3) .. controls (270.58,277.3) and (269.8,278.08) .. (269.8,279.05) .. controls (269.8,280.02) and (270.58,280.8) .. (271.55,280.8) .. controls (272.52,280.8) and (273.3,280.02) .. (273.3,279.05) -- cycle ;
	\draw  [fill={rgb, 255:red, 0; green, 0; blue, 0 }  ,fill opacity=1 ] (272.9,240.05) .. controls (272.9,239.08) and (272.12,238.3) .. (271.15,238.3) .. controls (270.18,238.3) and (269.4,239.08) .. (269.4,240.05) .. controls (269.4,241.02) and (270.18,241.8) .. (271.15,241.8) .. controls (272.12,241.8) and (272.9,241.02) .. (272.9,240.05) -- cycle ;
	\draw  [fill={rgb, 255:red, 0; green, 0; blue, 0 }  ,fill opacity=1 ] (304.9,239.8) .. controls (304.9,238.83) and (304.12,238.05) .. (303.15,238.05) .. controls (302.18,238.05) and (301.4,238.83) .. (301.4,239.8) .. controls (301.4,240.77) and (302.18,241.55) .. (303.15,241.55) .. controls (304.12,241.55) and (304.9,240.77) .. (304.9,239.8) -- cycle ;
	\draw  [fill={rgb, 255:red, 0; green, 0; blue, 0 }  ,fill opacity=1 ] (313.4,249.3) .. controls (313.4,248.33) and (312.62,247.55) .. (311.65,247.55) .. controls (310.68,247.55) and (309.9,248.33) .. (309.9,249.3) .. controls (309.9,250.27) and (310.68,251.05) .. (311.65,251.05) .. controls (312.62,251.05) and (313.4,250.27) .. (313.4,249.3) -- cycle ;
	\draw  [fill={rgb, 255:red, 0; green, 0; blue, 0 }  ,fill opacity=1 ] (312.9,279.05) .. controls (312.9,278.08) and (312.12,277.3) .. (311.15,277.3) .. controls (310.18,277.3) and (309.4,278.08) .. (309.4,279.05) .. controls (309.4,280.02) and (310.18,280.8) .. (311.15,280.8) .. controls (312.12,280.8) and (312.9,280.02) .. (312.9,279.05) -- cycle ;
	\draw  [fill={rgb, 255:red, 0; green, 0; blue, 0 }  ,fill opacity=1 ] (383.75,278.55) .. controls (383.75,277.58) and (382.97,276.8) .. (382,276.8) .. controls (381.03,276.8) and (380.25,277.58) .. (380.25,278.55) .. controls (380.25,279.52) and (381.03,280.3) .. (382,280.3) .. controls (382.97,280.3) and (383.75,279.52) .. (383.75,278.55) -- cycle ;
	\draw  [fill={rgb, 255:red, 0; green, 0; blue, 0 }  ,fill opacity=1 ] (415.5,278.05) .. controls (415.5,277.08) and (414.72,276.3) .. (413.75,276.3) .. controls (412.78,276.3) and (412,277.08) .. (412,278.05) .. controls (412,279.02) and (412.78,279.8) .. (413.75,279.8) .. controls (414.72,279.8) and (415.5,279.02) .. (415.5,278.05) -- cycle ;
	\draw  [fill={rgb, 255:red, 0; green, 0; blue, 0 }  ,fill opacity=1 ] (479,278.05) .. controls (479,277.08) and (478.22,276.3) .. (477.25,276.3) .. controls (476.28,276.3) and (475.5,277.08) .. (475.5,278.05) .. controls (475.5,279.02) and (476.28,279.8) .. (477.25,279.8) .. controls (478.22,279.8) and (479,279.02) .. (479,278.05) -- cycle ;
	\draw  [fill={rgb, 255:red, 0; green, 0; blue, 0 }  ,fill opacity=1 ] (511,278.05) .. controls (511,277.08) and (510.22,276.3) .. (509.25,276.3) .. controls (508.28,276.3) and (507.5,277.08) .. (507.5,278.05) .. controls (507.5,279.02) and (508.28,279.8) .. (509.25,279.8) .. controls (510.22,279.8) and (511,279.02) .. (511,278.05) -- cycle ;
	\draw    (358.17,328.75) .. controls (297.23,329.73) and (272.03,319.35) .. (271.27,286.39) ;
	\draw [shift={(271.26,283.81)}, rotate = 450.82] [fill={rgb, 255:red, 0; green, 0; blue, 0 }  ][line width=0.08]  [draw opacity=0] (5,-2.5) -- (0,0) -- (5,2.5) -- (3.5,0) -- cycle    ;
	\draw    (381.92,323.04) -- (382.05,285.69) ;
	\draw [shift={(382.06,282.69)}, rotate = 450.2] [fill={rgb, 255:red, 0; green, 0; blue, 0 }  ][line width=0.08]  [draw opacity=0] (5,-2.5) -- (0,0) -- (5,2.5) -- (3.5,0) -- cycle    ;
	\draw    (404.42,328.29) .. controls (478.91,325.9) and (478.48,305.49) .. (477.42,285.8) ;
	\draw [shift={(477.28,283.03)}, rotate = 447.21] [fill={rgb, 255:red, 0; green, 0; blue, 0 }  ][line width=0.08]  [draw opacity=0] (5,-2.5) -- (0,0) -- (5,2.5) -- (3.5,0) -- cycle    ;
	\draw  [fill={rgb, 255:red, 0; green, 0; blue, 0 }  ,fill opacity=1 ] (366.3,328.05) .. controls (366.3,327.08) and (365.52,326.3) .. (364.55,326.3) .. controls (363.58,326.3) and (362.8,327.08) .. (362.8,328.05) .. controls (362.8,329.02) and (363.58,329.8) .. (364.55,329.8) .. controls (365.52,329.8) and (366.3,329.02) .. (366.3,328.05) -- cycle ;
	\draw  [fill={rgb, 255:red, 0; green, 0; blue, 0 }  ,fill opacity=1 ] (383.47,328.05) .. controls (383.47,327.08) and (382.68,326.3) .. (381.72,326.3) .. controls (380.75,326.3) and (379.97,327.08) .. (379.97,328.05) .. controls (379.97,329.02) and (380.75,329.8) .. (381.72,329.8) .. controls (382.68,329.8) and (383.47,329.02) .. (383.47,328.05) -- cycle ;
	\draw  [fill={rgb, 255:red, 0; green, 0; blue, 0 }  ,fill opacity=1 ] (400.3,328.05) .. controls (400.3,327.08) and (399.52,326.3) .. (398.55,326.3) .. controls (397.58,326.3) and (396.8,327.08) .. (396.8,328.05) .. controls (396.8,329.02) and (397.58,329.8) .. (398.55,329.8) .. controls (399.52,329.8) and (400.3,329.02) .. (400.3,328.05) -- cycle ;
	\draw  [fill={rgb, 255:red, 0; green, 0; blue, 0 }  ,fill opacity=1 ] (441.92,254.52) .. controls (441.92,253.55) and (441.14,252.77) .. (440.17,252.77) .. controls (439.2,252.77) and (438.42,253.55) .. (438.42,254.52) .. controls (438.42,255.48) and (439.2,256.27) .. (440.17,256.27) .. controls (441.14,256.27) and (441.92,255.48) .. (441.92,254.52) -- cycle ;
	\draw  [fill={rgb, 255:red, 0; green, 0; blue, 0 }  ,fill opacity=1 ] (537.17,254.07) .. controls (537.17,253.1) and (536.38,252.32) .. (535.42,252.32) .. controls (534.45,252.32) and (533.67,253.1) .. (533.67,254.07) .. controls (533.67,255.04) and (534.45,255.82) .. (535.42,255.82) .. controls (536.38,255.82) and (537.17,255.04) .. (537.17,254.07) -- cycle ;
\draw   (350.53,336.67) .. controls (350.53,341.34) and (352.86,343.67) .. (357.53,343.67) -- (372.73,343.67) .. controls (379.4,343.67) and (382.73,346) .. (382.73,350.67) .. controls (382.73,346) and (386.06,343.67) .. (392.73,343.67)(389.73,343.67) -- (407.93,343.67) .. controls (412.6,343.67) and (414.93,341.34) .. (414.93,336.67) ;
	
	\draw (357.61,332) node [anchor=north west][inner sep=0.75pt]  [font=\footnotesize] [align=left] {$\displaystyle a_{0}$};
	\draw (436.94,262.89) node [anchor=north west][inner sep=0.75pt]  [font=\footnotesize] [align=left] {$\displaystyle U$};
	\draw (256.19,253.47) node [anchor=north west][inner sep=0.75pt]  [font=\footnotesize] [align=left] {$\displaystyle R_{v}$};
	\draw (286.94,281.89) node [anchor=north west][inner sep=0.75pt]  [font=\footnotesize] [align=left] {$\displaystyle R_{h}$};
	\draw (314.44,259.89) node [anchor=north west][inner sep=0.75pt]  [font=\footnotesize] [align=left] {$\displaystyle R_{v}$};
	\draw (277.78,225.22) node [anchor=north west][inner sep=0.75pt]  [font=\footnotesize] [align=left] {$\displaystyle R_{h}$};
	\draw (255.28,271.09) node [anchor=north west][inner sep=0.75pt]  [font=\footnotesize] [align=left] {$\displaystyle c_{1}$};
	\draw (315.82,278.19) node [anchor=north west][inner sep=0.75pt]  [font=\footnotesize] [align=left] {$\displaystyle c_{2}$};
	\draw (256.09,230.19) node [anchor=north west][inner sep=0.75pt]  [font=\footnotesize] [align=left] {$\displaystyle c_{3}$};
	\draw (316.59,238.22) node [anchor=north west][inner sep=0.75pt]  [font=\footnotesize] [align=left] {$\displaystyle c_{4}$};
	\draw (303.76,228.56) node [anchor=north west][inner sep=0.75pt]  [font=\footnotesize] [align=left] {$\displaystyle c_{5}$};
	\draw (388.34,263.56) node [anchor=north west][inner sep=0.75pt]  [font=\footnotesize] [align=left] {$\displaystyle R_{h}$};
	\draw (368.09,265.52) node [anchor=north west][inner sep=0.75pt]  [font=\footnotesize] [align=left] {$\displaystyle d_{1}$};
	\draw (409.09,263.06) node [anchor=north west][inner sep=0.75pt]  [font=\footnotesize] [align=left] {$\displaystyle d_{2}$};
	\draw (376.94,332) node [anchor=north west][inner sep=0.75pt]  [font=\footnotesize] [align=left] {$\displaystyle a_{1}$};
	\draw (299.94,328.72) node [anchor=north west][inner sep=0.75pt]  [font=\footnotesize] [align=left] {$\displaystyle U$};
	\draw (433.36,238.3) node [anchor=north west][inner sep=0.75pt]  [font=\footnotesize] [align=left] {$\displaystyle d_{3}$};
	\draw (484.84,263.25) node [anchor=north west][inner sep=0.75pt]  [font=\footnotesize] [align=left] {$\displaystyle R_{h}$};
	\draw (468.59,264.22) node [anchor=north west][inner sep=0.75pt]  [font=\footnotesize] [align=left] {$\displaystyle e_{1}$};
	\draw (503.09,264.25) node [anchor=north west][inner sep=0.75pt]  [font=\footnotesize] [align=left] {$\displaystyle e_{2}$};
	\draw (394.94,332) node [anchor=north west][inner sep=0.75pt]  [font=\footnotesize] [align=left] {$\displaystyle a_{2}$};
	\draw (531.94,262.92) node [anchor=north west][inner sep=0.75pt]  [font=\footnotesize] [align=left] {$\displaystyle U$};
	\draw (528.61,239) node [anchor=north west][inner sep=0.75pt]  [font=\footnotesize] [align=left] {$\displaystyle e_{3}$};
	\draw (449.44,326.72) node [anchor=north west][inner sep=0.75pt]  [font=\footnotesize] [align=left] {$\displaystyle U$};
	\draw (385.94,299.22) node [anchor=north west][inner sep=0.75pt]  [font=\footnotesize] [align=left] {$\displaystyle U$};
	\draw (377.97,354.6) node [anchor=north west][inner sep=0.75pt]   [align=left] {$\displaystyle P$};

\end{tikzpicture}

\end{center}	
 We aim to show that $(T,H,V)$ admits a
solution iff $\Kmc$ is $\text{FO}$-separable. By Point~3 of
Theorem~\ref{critFOwithoutUNA3} it suffices to show the following.

\smallskip\noindent{\textit{Claim.}} $(T,H,V)$
admits a solution iff there exists a model $\Amf$ of $\Kmc$ such that
$\Dmc_{\text{con}(a)},a\nrightarrowtail \Amf,b^{\Amf}$, for all
$a\in P$.   

\smallskip\noindent{\textit{Proof of the Claim.}} 
For $(\Rightarrow)$, assume there exists
a solution $\tau$ to $(T,H,V)$. Then define a model $\Amf$ of $\Kmc$
by taking $\Dmc$ viewed as a structure and connecting $b$ via $U$ to
all pairs in $\mathbbm{N}\times \mathbbm{N}$. On $\mathbbm{N}\times
\mathbbm{N}$ we replicate the solution $\tau$ and make sure that the
concept names $P_{i}$ are interpreted in a suitable way. In detail,
let
\begin{align*}
	B^\Amf & = \{b\} \\
	R_v^{\Amf} & = \{(c,c') \mid R_{v}(c,c') \in \Dmc\} \cup \{(
	  (i,j), (i,j+1))\mid i,j\in \mathbbm{N}\} \\
	R_h^{\Amf} & = \{(c,c') \mid R_{h}(c,c') \in \Dmc\} \cup \{(
	  (i,j), (i+1,j))\mid i,j\in \mathbbm{N}\}\\
	A_t^{\Amf} & = \{(i,j)\in \mathbbm{N} \times \mathbbm{N}\mid
      \tau(i,j)=t\} \hspace{5cm}\text{for all $t\in T$}\\
	U^{\Amf} & = \{(c,c') \mid U(c,c') \in \Dmc\} \cup \{
	  (b,(i,j))\mid i,j\in \mathbbm{N}\} \\
	P_{0}^{\Amf} & = \{(2i,2j) \mid i,j\in \mathbbm{N}\}\\
	P_{1}^{\Amf} & = \{(2i+1,2j) \mid i,j\in \mathbbm{N}\}\\
	P_{2}^{\Amf} & = \{(2i,2j+1) \mid i,j\in \mathbbm{N}\}\\
	P_{3}^{\Amf} & = \{(2i+1,2j+1) \mid i,j\in \mathbbm{N}\}
\end{align*} 
and interpret all constants in $\Dmc$ by themselves. It
is easy to see that $\Amf$ is a model of $\Kmc$ and
$\Dmc_{\text{con}(a)},a\nrightarrowtail \Amf,b^{\Amf}$, for all
$a\in P$. 

For $(\Leftarrow)$, assume that $\Amf$ is a model of $\Kmc$ such
that $\Dmc_{\text{con}(a)},a\nrightarrowtail \Amf,b^{\Amf}$, for
all $a\in P$. We can then inductively show that $\Amf$ contains an
infinite grid by using the CIs~\eqref{eq:succ1} to~\eqref{eq:succ7}.
In more detail, by~\eqref{eq:succ1}, there is a $U$-successor $b_0$ of
$b$ that satisfies $P_0$. Moreover,
by~\eqref{eq:succ2}-\eqref{eq:succ5}, there are elements
$b_1,b_2,b_3,b_3'$ such that 
\begin{itemize}

  \item $b_i$ satisfies $P_i$, for each $i\in\{1,2,3\}$ and $b_3'$
    satisfies $P_3$;

  \item $b_1$ is an $R_h$-successor of $b_0$ and $b_2$ is an
    $R_v$-successor of $b_0$;
    
  \item $b_3$ is an $R_h$-successor of $b_2$ and $b_3'$ is an
    $R_v$-successor of $b_1$.
\end{itemize}
Moreover, by~\eqref{eq:succ6} and~\eqref{eq:succ7}, $b,b_0,b_1,b_2$
are pairwise distinct and also different from $b_3,b_3'$.  Note that
all elements $b_0,b_1,b_2,b_3,b_3'$ have an $U$-predecessor,
by~\eqref{eq:succ2}-\eqref{eq:succ5}. Due to
$\Dmc_{\text{con}(a)},a_i\nrightarrowtail \Amf,b^{\Amf}$, for $i\in
\{1,2\}$, each such $U$-predecessor is in fact $b$.
Moreover, $\Dmc_{\text{con}(a)},a_0\nrightarrowtail
\Amf,b^{\Amf}$ entails that $b_3=b_3'$, closing the grid cell. We can
continue the argument in this way to obtain the full infinite grid. 
The remaining CIs of $\Omc$ ensure that the grid is labeled with a
solution to $(T,H,V)$. This finishes the proof of the Claim.
\end{proof} 

\end{document}